\def\draft{1}  % 1 = turn author notes on
 \gdef\tfn@fnt{0}% 
\definecolor{weborange}{rgb}{.8,.3,.3}
\definecolor{webblue}{rgb}{0,0,.8}
\definecolor{internallinkcolor}{rgb}{0,.5,0}
\definecolor{externallinkcolor}{rgb}{0,0,.5}
\theoremstyle{plain}
\newtheorem{theorem}{Theorem}[chapter]
\newtheorem{lemma}[theorem]{Lemma}
\newtheorem{corollary}[theorem]{Corollary}
\newtheorem{claim}[theorem]{Claim}
\newtheorem{proposition}[theorem]{Proposition}
\newtheorem{folklore}{Folklore Belief}
\theoremstyle{definition}
\newtheorem{fact}[theorem]{Fact}
\newtheorem{definition}[theorem]{Definition}
\newtheorem{example}[theorem]{Example}
\newtheorem{remark}[theorem]{Remark}
\newcommand{\R}{{\mathbb R}}
\newcommand{\poly}{\mathrm{poly}}
\newcommand{\wm}[1][k]{\textsc{Welfare-Maximization}($#1$)}
\newcommand\mdisj{\textsc{Multi-Disjointness}\xspace}
\newcommand{\NE}{\mathsf{NE}}
\newcommand{\eNE}{\epsilon\text{-}\mathsf{NE}}
\newcommand{\NP}{\mathsf{NP}}
\newcommand{\E}{\mathsf{E}}
\newcommand{\RP}{\mathsf{RP}}
\newcommand{\BPP}{\mathsf{BPP}}
\newcommand{\ptime}{\mathsf{P}}
\newcommand{\pspace}{\mathsf{PSPACE}}
\newcommand{\coNP}{\mathsf{co}\mbox{-}\mathsf{NP}}
\newcommand{\FNP}{\mathsf{FNP}}
\newcommand{\TFNP}{\mathsf{TFNP}}
\newcommand{\PLS}{\mathsf{PLS}}
\newcommand{\PPA}{\mathsf{PPA}}
\newcommand{\PPP}{\mathsf{PPP}}
\newcommand{\PPADS}{\mathsf{PPADS}}
\newcommand{\PPAD}{\mathsf{PPAD}}
\newcommand{\brouwer}{{\sc Brouwer}}
\newcommand{\cclass}[1]{\mathsf{#1}}
\newcommand\eol{{\sc EoL}\xspace}
\newcommand\cceol{{\sc 2EoL}\xspace}
\newcommand\disj{{\sc Disjointness}\xspace}
\newcommand\ind{{\sc Index}\xspace}
\newcommand\bfp{{\sc $\eps$-BFP}\xspace}
\newcommand\ccbfp{$\eps$-{\sc 2BFP}\xspace}
\newcommand{\PH}{\cclass{PH}}
\newcommand{\sharpP}{\cclass{\#P}}
\newcommand{\Snote}[1]{\textbf{[Salil's Note: #1]}}
\newcommand{\Snote}[1]{\textbf{}}
\newcommand{\prob}[2][]{\mathbf{Pr}\ifthenelse{\not\equal{}{#1}}{_{#1}}{}\!\left[#2\right]}
\newcommand{\expect}[2][]{\mathbf{E}\ifthenelse{\not\equal{}{#1}}{_{#1}}{}\!\left[#2\right]}
\newcommand{\eps}{\epsilon}
\newcommand{\tO}{\tilde{O}}
\newcommand{\zo}{\{0,1\}}
\newcommand{\xhat}{\hat{x}}
\newcommand{\yhat}{\hat{y}}
\newcommand{\mne}{(\xhat,\yhat)}
\newcommand{\xbar}{\bar{x}}
\newcommand{\ybar}{\bar{y}}
\newcommand{\zbar}{\bar{z}}
\DeclareMathOperator{\argmax}{argmax}
\newcommand{\strat}{s}
\def\discH{H_{\eps}}
\def\p{p}
\def\r{r}
\def\bfz{\mathbf{z}}
\def\x{x}
\def\bfx{\mathbf{x}}
\def\boldp{\mathbf{p}}
\def\minputs{(\bfx_1,\ldots,\bfx_k)}
\def\w{w}
\def\y{y}
\def\Alg{\mathcal{A}}
\def\A{\mathcal{A}}
\renewcommand{\P}{\mathcal{P}}
\def\V{\mathcal{V}}
\def\F{\mathcal{F}}
\def\sse{\subseteq}
\newcommand{\n}[1]{
{\| {#1} \|}
}
\newcommand{\mixed}{x}
\titleformat{\chapter}[display]
{\normalfont\Large\filcenter\rmfamily\scshape}
{\titlerule[1pt]%
 \vspace{1pt}%
 \titlerule
 \vspace{1ex}%
 \LARGE{Solar Lecture} \thechapter
}
{.5ex}
{\normalfont\large\slshape}
\newcommand{\lecture}[1]{%
  \chapter{#1}%
%  \vspace{-5ex}%
%  \textit{Lecturer: #1 \hfill Scribe: #2}\par%
%  \vspace{1ex}
\titlerule
%\vspace{2ex}
}
\begin{document}

%% Titlepage
\thispagestyle{empty}
\begin{center}
  {\Huge\textsc{Barbados Lectures on Complexity Theory, Game Theory, and Economics}\par}
  \bigskip
  \bigskip
  {\Large Tim Roughgarden\\Columbia University\par}
  \bigskip
  \bigskip
  {\LARGE 29th McGill Invitational\\ Workshop on Computational Complexity\par}
  \bigskip
  \bigskip
  {\Large Bellairs Institute\\ Holetown, Barbados\par}
  \vfill
\end{center}

%% Foreward
\newpage
\addcontentsline{toc}{section}{Foreward}

\renewcommand{\thefootnote}{\fnsymbol{footnote}}

\begin{center}
  {\Large \textsl{Foreward}\par}
\end{center}
\vspace{1.25\baselineskip}

This monograph is based on lecture notes from my mini-course
``Complexity Theory, Game Theory, and Economics,'' taught at
the Bellairs Research Institute of McGill University, Holetown,
Barbados, February 19--23,
2017, as the 29th  McGill Invitational Workshop on Computational
Complexity.

The goal of this monograph is twofold:
\begin{itemize}

\item [(i)] to explain how complexity
theory has helped illuminate several barriers in economics and game
theory; and 

\item [(ii)] to illustrate how game-theoretic questions have led
to new and interesting complexity theory, including several
very recent breakthroughs.

\end{itemize}
It consists of two five-lecture sequences: the {\em Solar Lectures,}
focusing on the communication and
computational complexity of computing equilibria; and the {\em Lunar
Lectures,} focusing on applications of complexity theory in game
theory and economics.\footnote{Cris Moore: ``So when are the
  {\em stellar} lectures?''}
No background in game theory is assumed.

Thanks are due to many people: Denis Therien and Anil Ada for
organizing the workshop and for inviting me to lecture; Omri
Weinstein, for giving a guest lecture on simulation theorems in
communication complexity; Alex Russell, for coordinating the scribe
notes; the scribes\footnote{Anil Ada,
Amey Bhangale,
Shant Boodaghians,
Sumegha Garg,
Valentine Kabanets, 
Antonina Kolokolova,
Michal Kouck\'y,
Cristopher Moore,
Pavel Pudl\'ak,
Dana Randall,
Jacobo Tor\'an,
Salil Vadhan,
Joshua R. Wang, and
Omri Weinstein.}, for putting together a terrific first draft;
and all of the workshop attendees, for making the experience so
unforgettable (if intense!).  I also thank 
Yakov Babichenko,
Mika G\"o\"os,
Aviad Rubinstein,
Eylon Yogev,
and an anonymous reviewer for
numerous helpful comments on earlier drafts of this monograph.

The writing of this monograph was supported in part by NSF award
CCF-1524062, a Google Faculty Research Award, and a Guggenheim Fellowship.
I would be very happy to receive any comments or
corrections from readers.

\vspace{2\baselineskip}

\noindent Tim Roughgarden\\
\noindent Bracciano, Italy\\ 
December 2017\\
(Revised December 2019)

\newpage
\tableofcontents

\newpage
%\begin{center}
%  {\Large \textsl{Overview}\par}
%\end{center}
%\chapter*{Overview}
%\vspace{1.25\baselineskip}

%The topic of this monograph is Complexity Theory, Game Theory, and
%Economics.  The theme is two-fold:
%\begin{itemize}
%    \item[(i)] how complexity theory has illuminated barriers in economics %and game theory;
%    \item[(ii)] how studying fundamental complexity questions about
%      game-theoretic concepts has led to the development of new and
%      interesting complexity theory (including some major
%      breakthroughs in the past few months!).
%\end{itemize}

\newpage
\addcontentsline{toc}{section}{Overview}
\setcounter{footnote}{0}

\begin{center}
  {\Large \textsl{Overview}\par}
\end{center}
\vspace{1.25\baselineskip}

There are 5 solar lectures and 5 lunar lectures.  The solar
lectures focus on the communication and computational complexity of
computing an (approximate) Nash equilibrium.  The lunar lectures are less
technically intense and 
meant to be understandable even after consuming a rum punch; they
focus on applications of computational complexity theory to game
theory and economics.

\subsection*{The Solar Lectures: Complexity of Equilibria}

\paragraph{Lecture 1: Introduction and wish list.}
The goal of the first lecture is to get the lay of the land.
We'll focus on the types of positive results about equilibria that we
want, like fast algorithms and quickly converging distributed
processes.  Such positive results are possible in special cases (like
zero-sum games), and the challenge for complexity theory is to prove
that they cannot be extended to the general case.
The topics in this lecture are mostly classical.

\paragraph{Lectures 2 and 3: The communication complexity of Nash
  equilibria.} 
These two lectures cover the main ideas in the 
%brand-new (STOC '17)
recent paper of \citet{BR17}, which proves strong communication complexity
lower bounds for computing an approximate Nash equilibrium.
Discussing the proof also gives us an excuse to talk about
``simulation theorems'' in the spirit of
\citet{DBLP:journals/combinatorica/RazM99}, which lift query
complexity lower bounds to communication complexity lower bounds and
have recently found a number of exciting applications.

\paragraph{Lecture 4: $\TFNP$, $\PPAD$, and all that.}
In this lecture we begin our study of the {\em computational} complexity of
computing a Nash equilibrium, where we want conditional but
super-polynomial lower bounds.  Proving analogs of
$\NP$-completeness results requires developing
customized complexity classes appropriate for the study of equilibrium
computation.\footnote{Why can't we use the tried-and-true theory of
  $\NP$-completeness?
Because the guaranteed existence
  (Theorem~\ref{t:nash}) and
  efficient verifiability of a Nash equilibrium imply that computing
  one is an easier task than solving an $\NP$-complete problem, under appropriate
  complexity assumptions (see Theorem~\ref{t:mp91}).}
This lecture also discusses the existing evidence for the intractability of
these complexity classes, including some very recent developments.

\paragraph{Lecture 5: The computational complexity of computing an
  approximate Nash equilibrium of a bimatrix game.}
The goal of this lecture is to give a high-level overview of 
Rubinstein's recent breakthrough result~\cite{R16} that an ETH-type
assumption for $\PPAD$ implies a quasi-polynomial-time lower bound for
the problem of computing an approximate Nash equilibrium (which is
tight, by Corollary~\ref{cor:lmm2}).

\subsection*{The Lunar Lectures: Complexity-Theoretic Barriers in Economics}

Most of the lunar lectures have the flavor of ``applied complexity
theory.''\footnote{Not an oxymoron!}  While the solar lectures build
on each other to some extent, the lunar lectures are episodic and
can be read independently of each other.

\paragraph{Lecture 1: The 2016 FCC Incentive Auction.}
The recent FCC Incentive Auction is a great case study of how computer
science has influenced real-world auction design.  This lecture
provides our first broader glimpse of the vibrant field called {\em
  algorithmic game theory}, at most 10\% of which concerns the
complexity of computing equilibria.

\paragraph{Lecture 2: Barriers to near-optimal equilibria.}
This lecture concerns the ``price of anarchy,'' meaning the extent to
which the Nash equilibria of a game approximate an optimal outcome.  It
turns out that nondeterministic communication complexity lower bounds
can be translated, in black-box fashion, to lower bounds on the price
of anarchy.  We'll see how this translation enables a theory of
``optimal simple auctions.''

\paragraph{Lecture 3: Barriers in markets.}
You've surely heard of the idea of ``market-clearing prices,'' which
are prices in a market such that supply equals demand.  When the goods
are divisible (milk, wheat, etc.), market-clearing prices exist under
relatively mild technical assumptions.  With indivisible goods (houses,
spectrum licenses, etc.), market-clearing prices may or may not
exist.  It turns out that complexity considerations can be used to explain
when such prices exist and when they do not.
This is cool and surprising because the issue of equilibrium existence
seems to have nothing to do with computation (in contrast to the Solar
Lectures, where the questions studied are explicitly about
computation).

\paragraph{Lecture 4: The borders of Border's theorem.}
Border's theorem is a famous result in auction theory from~1991, about
single-item auctions.  Despite its fame, no one has been able to
extend it to significantly more general settings.  We'll see that
complexity theory explains this mystery: significantly generalizing
Border's theorem would imply that the polynomial hierarchy collapses!

\paragraph{Lecture 5: Tractable relaxations of Nash equilibria.}
With the other lectures focused largely on negative results for
computing Nash equilibria, for an epilogue we'll conclude with
positive algorithmic results for relaxations of Nash equilibria, such
as correlated equilibria.

\part{Solar Lectures}

\renewcommand{\thefootnote}{\arabic{footnote}}

\setcounter{chapter}{0}

\newpage
\lecture{Introduction, Wish List, and Two-Player Zero-Sum Games}

\vspace{1cm}

%% Your lecture LaTeX goes here.

%\section{The Plan}

\section{Nash Equilibria in Two-Player Zero-Sum Games}

\subsection{Preamble}

To an algorithms person (like the author), complexity theory is the
science of why you can't get what you want.  So what is it we want?
Let's start with some cool positive results for a very special class
of games---two-player zero-sum games---and then we can study whether
or not they extend to more general games.  For the first positive
result, we'll review the famous Minimax theorem, and see how it leads
to a polynomial-time algorithm for computing a Nash equilibrium of a
two-player zero-sum game.  Then we'll show that there are natural
``dynamics'' (basically, a distributed algorithm) that converge
rapidly to an approximate Nash equilibrium.

\subsection{Rock-Paper-Scissors}

Recall the game of rock-paper-scissors (or roshambo, if you
like)\footnote{\url{https://en.wikipedia.org/wiki/Rock-paper-scissors}}:
there are two players, each simultaneously picks a strategy from
$\{ \text{rock}, \text{paper}, \text{scissors} \}$.  If both players
choose the same strategy then the game is a draw; otherwise, rock
beats scissors, scissors beats paper, and paper beats rock.\footnote{Here are some
  fun facts
  about rock-paper-scissors.  There's a World Series of RPS every
  year, with a top prize of at least \$50K.  If you watch some videos
  from the event, you will see pure psychological warfare.  Maybe this
  explains why some of the same players seem to end up in the later
  rounds of the tournament every year.

There's also a robot hand, built at the University of Tokyo, that
plays rock-paper-scissors with a winning probability of 100\% (check
out the video).  No surprise, a very high-speed camera is involved.}

Here's an idea: how about we play rock-paper-scissors, and you go
first?  This is clearly unfair---no matter what strategy you choose, I
have a response that guarantees victory.  But what if you only have to
commit to a
{\em probability distribution} over your three strategies (called a
\emph{mixed strategy})?  To be clear, the order of operations is: (i)
you pick a distribution; (ii) I pick a response; (iii) nature flips
coins to sample a strategy from your distribution.  Now you can
protect yourself---by picking a strategy uniformly at random, no
matter what I do, you have an equal chance of a win, a loss, or a
draw.

The {\em Minimax theorem} states that, in any game of ``pure
competition'' like rock-paper-scissors, a player can always protect
herself with a suitable randomized strategy---there is no disadvantage
of having to move first.  The proof of the Minimax
theorem also gives as a byproduct a polynomial-time algorithm for
computing a Nash equilibrium (by linear programming).

\subsection{Formalism}

We specify a two-player zero-sum game with an $m \times n$ payoff
matrix $A$ of numbers. The rows correspond to the possible choices of
Alice (the ``row player'') and the columns correspond to possible
choices for Bob (the ``column player''). Entry $A_{ij}$ contains
Alice's payoff when Alice chooses row $i$ and Bob chooses column
$j$. In a zero-sum game, Bob's corresponding payoff is automatically
defined to be $-A_{ij}$.
%(when Alice chooses row $i$ and Bob chooses column $j$). 
Throughout the solar lectures, we normalize the payoff matrix so
that $|A_{ij}| \leq 1$ for all $i$ and $j$.\footnote{This is without loss of
generality, by scaling.}

For example, the payoff matrix corresponding to rock-paper-scissors is:
\\
\begin{center}
\begin{tabular}{ r|c|c|c| }
\multicolumn{1}{r}{}
 &  \multicolumn{1}{c}{R}
 & \multicolumn{1}{c}{P}
 & \multicolumn{1}{c}{S} \\
\cline{2-4}
R & 0 & -1 & 1 \\
\cline{2-4}
P & 1 & 0 & -1 \\
\cline{2-4}
S & -1 & 1 & 0 \\
\cline{2-4}
\end{tabular}
\end{center}
\

Mixed strategies for Alice and Bob correspond to probability
distributions $x$ and $y$ over rows and columns,
respectively.\footnote{A {\em pure strategy} is the special case of a
  mixed strategy that is deterministic (i.e., allots all its
  probability to a single strategy).}

When
speaking about Nash equilibria, one always assumes that players
randomize independently.  For a two-player zero-sum game $A$ and mixed
strategies $x,y$, we can write Alice's expected payoff as
\[
    x^{\top} A y = \sum_{i,j} A_{ij} x_i y_j\,.
\]
Bob's expected payoff is the negative of this quantity, so 
his goal is to minimize the expression above.

\subsection{The Minimax Theorem}

The question that the Minimax theorem addresses is the following: 
\begin{quote}
  If two players make choices \emph{sequentially} in a zero-sum
  game, is it better to go first or second?
\end{quote}
In a zero-sum game, there can only be a first-mover disadvantage.
Going second gives a player the opportunity to adapt to what the other
player does first.  And the second player always has the option of
choosing whatever mixed strategy she would have chosen had she gone
first.  But does going second ever strictly help?  The Minimax theorem
gives an amazing answer to the question above: {\em it doesn't matter!}

\begin{theorem}[Minimax Theorem] \label{t:minmax}
Let $A$ be the payoff matrix of a two-player zero-sum game. Then
\begin{equation} \label{eq:min-max}
    \max_x \left ( \min_y  \; x^{\top} A y \right ) = \min_y \left ( \max_x \; x^{\top} A y \right )\,,
\end{equation}
where $x$ and $y$ range over probability distributions over the rows
and columns of $A$, respectively.
% and $y$ ranges over probability distributions over the columns of $A$.
\end{theorem}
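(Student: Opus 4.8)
The plan is to establish the two inequalities in \eqref{eq:min-max} separately. The direction $\max_x \min_y x^{\top} A y \le \min_y \max_x x^{\top} A y$ (``weak duality,'' i.e.\ there can only be a first-mover disadvantage) is elementary: for any fixed distributions $x_0$ over rows and $y_0$ over columns we have $\min_y x_0^{\top} A y \le x_0^{\top} A y_0 \le \max_x x^{\top} A y_0$; fixing $y_0$ and taking the max over $x_0$ on the left gives $\max_{x_0} \min_y x_0^{\top} A y \le \max_x x^{\top} A y_0$, and then taking the min over $y_0$ on the right (the left side no longer depends on $y_0$) yields the claim.

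The substance is the reverse inequality, which I would derive from strong linear programming duality. First I would simplify the two inner optimizations: since $y$ ranges over the simplex and $x^{\top} A y$ is linear in $y$, the minimum is attained at a vertex, so $\min_y x^{\top} A y = \min_j (A^{\top} x)_j$, and symmetrically $\max_x x^{\top} A y = \max_i (A y)_i$. Hence the left-hand side of \eqref{eq:min-max} equals the optimal value of the linear program ``maximize $v$ subject to $(A^{\top} x)_j \ge v$ for every column $j$, $\sum_i x_i = 1$, $x \ge 0$,'' and the right-hand side equals the optimal value of ``minimize $w$ subject to $(A y)_i \le w$ for every row $i$, $\sum_j y_j = 1$, $y \ge 0$.''

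Next I would write out the LP dual of the first program and verify, by matching its variables to the constraints of the first program and vice versa, that it is literally the second program. Both programs are feasible (any point of the simplex works) and have objective bounded in $[-1,1]$ by the normalization $|A_{ij}| \le 1$, so strong LP duality applies and the two optimal values coincide, which is exactly \eqref{eq:min-max}. As a byproduct, any pair of optimal solutions $(x^\star,v^\star)$ and $(y^\star,w^\star)$ gives a profile $(x^\star,y^\star)$ that is a Nash equilibrium with common payoff $v^\star = w^\star$ (the \emph{value} of the game), computable in polynomial time since linear programs are.

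The step I expect to require the most care is the reduction of the bilinear saddle-point quantity to a matched primal/dual LP pair: getting every inequality direction and sign convention right so that the dual really is the ``min--max'' program and not something slightly off. If one is unwilling to invoke strong LP duality as a black box, then the genuine obstacle is proving it (equivalently, the separating-hyperplane theorem / a theorem of the alternative), and one would instead argue the needed special case directly in von~Neumann/Ville style; alternatively — foreshadowing the no-regret dynamics later in this lecture — one could have both players run a vanishing-regret online algorithm and show that the time-averaged strategies drive the gap between $\max_x\min_y$ and $\min_y\max_x$ to zero. I would nonetheless present the LP-duality route, since it is the shortest and makes the algorithmic corollary transparent.
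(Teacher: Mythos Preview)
Your proposal is correct and follows essentially the same route as the paper: reduce the inner min/max to pure strategies, cast each side as a linear program, observe the two programs are LP duals, and invoke strong duality. The only difference is that you first argue the easy ``weak duality'' direction separately, which is fine but redundant once strong LP duality is applied.
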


On the left-hand side of~\eqref{eq:min-max}, the row player moves first and
the column player second.  The column player plays optimally given the
strategy chosen by the row player, and the row player plays optimally
anticipating the column player's response.
On the right-hand side of
\eqref{eq:min-max}, the roles of the two players are reversed.
The Minimax theorem asserts that, under optimal play, the expected
payoff of each player is the same in both scenarios.
%The quantity~\eqref{eq:min-max}
%is called the
%{\em value} of the game~$A$.  

The first proof of the Minimax theorem was due to
von Neumann~\cite{vN28} and used fixed-point-type arguments (which
we'll have much more to say about later).
von Neumann and Morgenstern~\cite{vNM44}, inspired by Ville~\cite{V38}, 
later realized that the Minimax theorem can be deduced from strong
linear programming duality.\footnote{Dantzig \cite[p.5]{D81} describes
  meeting John von Neumann on October 3, 1947: ``In under a minute I
  slapped the geometric and the algebraic version of the [linear
  programming] problem on the blackboard.  Von Neumann stood up and
  said `Oh that!'  Then for the next hour and a half, he proceeded to
  give me a lecture on the mathematical theory of linear programs.

``At one point seeing me sitting there with my eyes popping and my
mouth open (after all I had searched the literature and found
nothing), von Neumann said: `I don't want you to think I am pulling
all this out of my sleeve on the spur of the moment like a magician.  I
have just recently completed a book with Oskar Morgenstern on the
Theory of Games.  What I am doing is conjecturing that the two problems
are equivalent.''

This equivalence between strong linear programming duality and the
Minimax theorem is made precise in Dantzig~\cite{D51}, Gale et al.~\cite{GKT51}, and Adler
\cite{A13}.}

\begin{proof}%[Proof Sketch]
The idea is to formulate the problem faced by the first player as a
linear program.  The theorem will then follow from linear programming
duality.

First, the player who moves second always has an optimal pure (i.e., deterministic)
strategy---given the probability distribution chosen by the first
player, the second player can simply play the strategy with the highest
expected payoff.  This means the inner $\min$ and $\max$
in~\eqref{eq:min-max} may as well range over columns and rows,
respectively, rather than over all probability distributions.  The
expression on the left-hand side of~\eqref{eq:min-max} then translates
to the following linear program:
%that is optimum. We can therefore write a linear program that solves for an %optimal $x$. 
\begin{align*}
&&\max_{x,v}  &\quad v  \\
&&\text{s.t.} &\quad v \le \quad\sum_{i=1}^m A_{ij} x_i \quad  \text{for
                all columns $j$}, \\
&& & \quad x \text{ is a probability distribution over rows.}
\end{align*}
If the optimal point is $(v^*, x^*)$, then $v^*$ equals the
left-hand-side of~\eqref{eq:min-max} and $x^*$ belongs to the
corresponding arg-max.  In plain terms, $x^*$ is what Alice should
play if
she has to move first, and $v^*$ is the consequent expected payoff 
(assuming Bob responds optimally).

Similarly, we can write a second linear program that computes the
optimal point $(w^*, y^*)$ from Bob's perspective, where $w^*$ equals
the right-hand-side of~\eqref{eq:min-max} and $y^*$ is in the
corresponding arg-min:
\begin{align*}
&& \min_{y,w} &\quad w  \\
&& \text{s.t.} &\quad w \ge \sum_{j=1}^n A_{ij} y_j \quad  \text{for
                 all rows $i$}, \\
&& &\quad y \text{ is a probability distribution over columns.}
\end{align*}
It is straightforward to verify that these two linear programs are in
fact duals of each other (left to the reader, or see
Chv\'atal~\cite{chvatal}).  By strong linear programming duality, we
know that the two linear programs have equal optimal objective
function values and hence $v^* = w^*$.  This means that the payoff that
Alice can guarantee herself when she goes first is the same as 
when Bob goes first (and plays optimally), completing the proof.
%the payoff that
%Bob can guarantee himself if he goes first, and this completes the proof.
\end{proof}

\begin{definition}[Values and Min-Max Pairs]
Let $A$ be the payoff matrix of a two-player zero-sum game. The
\emph{value} of the game is defined as the common value of 
\[
\max_x \left ( \min_y  \; x^{\top} A y \right) 
\text{~~and~~} \min_y \left ( \max_x \; x^{\top} A y \right ).
\]
A {\em min-max strategy} is a strategy $x^*$ in the arg-max of the left-hand
side or a strategy $y^*$ in the arg-min of the right-hand side.
A \emph{min-max pair} is a pair $(x^*,y^*)$ where $x^*$ and $y^*$ are
both min-max strategies.
% where $x^*$ is in the arg-max
%on the left-hand side and $y^*$ is in the arg-min of the right-hand
%side. 
\end{definition}
For example,
%As an example, consider the rock-paper-scissors game. T
the value of the rock-paper-scissors game is $0$ and $(u,u)$ is its
unique min-max pair, where~$u$ denotes the uniform probability
distribution.
%For example, the value of the
%rock-paper-scissors game is 0 (the optimal strategy for the first
%player is to randomize uniformly, and then it doesn't matter what the
%second player does).

The min-max pairs are the optimal solutions of the two linear programs
in the proof of Theorem~\ref{t:minmax}.  Because the optimal solution
of a linear program can be computed in polynomial time, so can a min-max
pair.

\subsection{Nash Equilibrium}

In zero-sum games, a min-max pair is closely related to the notion of
a Nash equilibrium, defined next.\footnote{If you think you learned this
  definition from the movie \emph{A Beautiful Mind}, it's time to
  learn the correct definition!}

\begin{definition}[Nash Equilibrium in a Two-Player Zero-Sum
  Game]\label{d:ne}
Let $A$ be the payoff matrix of a two-player zero-sum game. The pair
$(\hat{x}, \hat{y})$ is a \emph{Nash equilibrium} if:
\begin{itemize}

\item[(i)] $\hat{x}^{\top} A \hat{y} \geq x^{\top} A \hat{y} \;$ for all $x$
  (given that Bob plays $\hat{y}$, Alice cannot increase her expected
  payoff by deviating unilaterally to a strategy different from
  $\hat{x}$,
%  strategy from $\hat{x}$ to something else, 
  i.e., $\hat{x}$ is optimal given $\hat{y}$);

\item[(ii)] $\hat{x}^{\top} A \hat{y} \leq \hat{x}^{\top} A y \;$ for all $y$
%  (given Alice plays $\hat{x}$, it does not help Bob to change his
%  strategy from $\hat{y}$ to something else, i.e., 
(given $\hat{x}$,   $\hat{y}$ is an optimal strategy for Bob).

\end{itemize}
\end{definition}
The pairs in Definition~\ref{d:ne} are sometimes called \emph{mixed}
Nash equilibria, to stress that players are allowed to randomize.  (As
opposed to a {\em pure} Nash equilibrium, where both players
play deterministically.)  Unless otherwise noted, we will
always be concerned with mixed Nash equilibria.

\begin{proposition}[Equivalence of Nash Equilibria and Min-Max Pairs]\label{claim:ne}
  In a two-player zero-sum game, a pair $(x^*,y^*)$ is a min-max pair
  if and only if it is a Nash equilibrium.
\end{proposition}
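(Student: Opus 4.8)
The plan is to prove both directions by relating the defining inequalities of a Nash equilibrium (Definition~\ref{d:ne}) to the min-max quantities, using the Minimax Theorem (Theorem~\ref{t:minmax}) to tie the two values together. Throughout, write $v$ for the common value of the game.

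For the direction ``min-max pair $\Rightarrow$ Nash equilibrium,'' suppose $(x^*,y^*)$ is a min-max pair. Since $x^*$ is in the arg-max of $\max_x \min_y x^\top A y$, we have $\min_y (x^*)^\top A y = v$; in particular $(x^*)^\top A y \ge v$ for every $y$, and so $(x^*)^\top A y^* \ge v$. Symmetrically, since $y^*$ is in the arg-min of $\min_y \max_x x^\top A y$, we get $x^\top A y^* \le v$ for every $x$, hence $(x^*)^\top A y^* \le v$. Combining, $(x^*)^\top A y^* = v$. Now condition~(i) of Definition~\ref{d:ne} says $(x^*)^\top A y^* \ge x^\top A y^*$ for all $x$, which is immediate from $x^\top A y^* \le v = (x^*)^\top A y^*$; similarly condition~(ii) says $(x^*)^\top A y^* \le (x^*)^\top A y$ for all $y$, which follows from $(x^*)^\top A y \ge v = (x^*)^\top A y^*$. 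So $(x^*,y^*)$ is a Nash equilibrium.

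For the converse, suppose $(\hat x,\hat y)$ is a Nash equilibrium. From condition~(i), $\hat x^\top A \hat y = \max_x x^\top A \hat y \ge \min_y \max_x x^\top A y = v$ (the inequality because $\hat y$ is one particular choice of the minimizing variable). From condition~(ii), $\hat x^\top A \hat y = \min_y \hat x^\top A y \le \max_x \min_y x^\top A y = v$. Hence $\hat x^\top A \hat y = v$, and moreover $\max_x x^\top A \hat y = v$ witnesses that $\hat y$ attains the outer min on the right-hand side of~\eqref{eq:min-max}, i.e., $\hat y$ is a min-max strategy; symmetrically $\min_y \hat x^\top A y = v$ shows $\hat x$ is a min-max strategy. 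Therefore $(\hat x,\hat y)$ is a min-max pair.

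The only real content here is bookkeeping: the single nontrivial input is the equality of the two sides of~\eqref{eq:min-max} from the Minimax Theorem, which is what lets the ``value from Alice's side'' and the ``value from Bob's side'' coincide and makes the chains of inequalities close up. I expect the main thing to be careful about is the direction of each inequality when passing from a universally quantified Nash condition to a statement about a $\min$ or $\max$ over distributions (and the harmless fact, already noted in the proof of Theorem~\ref{t:minmax}, that the second mover may restrict to pure strategies, so these $\min$/$\max$ are well-defined). No genuine obstacle is anticipated.
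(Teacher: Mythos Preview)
Your proof is correct and follows essentially the same approach as the paper: both directions hinge on the value~$v$ of the game and the inequalities it forces, with the Minimax Theorem supplying the equality of the two sides. The only cosmetic difference is that the paper argues the converse by contrapositive (if one player is not playing a min-max strategy, exhibit a profitable deviation), whereas you prove it directly by squeezing $\hat x^\top A \hat y$ between the two min-max values; your version is arguably cleaner and makes the role of Theorem~\ref{t:minmax} more explicit.
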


\begin{proof}
  Suppose $(x^*,y^*)$ is a min-max pair, and so Alice's expected
  payoff is~$v^*$,
  the value of the game.  Because Alice plays her min-max strategy, Bob
  cannot make her payoff smaller than~$v^*$ via some other strategy.
  Because Bob plays his min-max strategy, Alice cannot make her payoff
larger than~$v^*$.  Neither player can do better with a
unilateral deviation, and so $(x^*,y^*)$ is a Nash equilibrium.

Conversely, suppose $(x^*,y^*)$ is not a min-max pair with, say, Alice
not playing a min-max strategy.  If Alice's expected payoff is less
than $v^*$, then $(x^*,y^*)$ is not a Nash equilibrium (she could do
better by deviating to a min-max strategy).
Otherwise, because $x^*$ is not a min-max strategy,
Bob has a response~$y$ such that Alice's expected payoff would be
strictly less than $v^*$.  Here, Bob could do better by deviating
unilaterally to $y$.
In any case, $(x^*,y^*)$ is not a Nash equilibrium.
\end{proof}

There are several interesting consequences of 
Theorem~\ref{t:minmax} and Proposition~\ref{claim:ne}:
%the discussion so far:
\begin{enumerate}

    \item The set of all Nash equilibria of a two-player zero-sum game
     is convex, as the
      optimal solutions of a linear program form a convex set.

    \item All Nash equilibria $(x,y)$ of a two-player zero-sum game lead to the same value of $x^{\top} A
      y$.  That is, each player receives the same expected payoff
      across all Nash equilibria.

    \item Most importantly, because the proof of
      Theorem~\ref{t:minmax} provides
      a polynomial-time algorithm to compute a min-max pair
      $(x^*, y^*)$, we have a polynomial-time algorithm to compute a
      Nash equilibrium of a two-player zero-sum game.

\end{enumerate}

\begin{corollary}\label{cor:zerosum}
A Nash equilibrium of a two-player zero-sum game can be computed in polynomial time.
\end{corollary}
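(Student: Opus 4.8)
The plan is to observe that this corollary is an immediate consequence of Proposition~\ref{claim:ne} together with the polynomial-time solvability of linear programming, as was already foreshadowed at the end of the proof of Theorem~\ref{t:minmax}. By Proposition~\ref{claim:ne}, the Nash equilibria of a two-player zero-sum game with payoff matrix $A$ are exactly its min-max pairs $(x^*,y^*)$; and by the proof of Theorem~\ref{t:minmax}, a min-max strategy $x^*$ for Alice is precisely an optimal solution of the first linear program displayed there, while a min-max strategy $y^*$ for Bob is precisely an optimal solution of the second (dual) linear program. Since the sets of min-max strategies for the two players are determined independently by these two programs, it suffices to produce one optimal solution of each.

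Concretely, the first step is to write down these two linear programs explicitly for the given $m \times n$ matrix $A$. Each has $O(m+n)$ variables and $O(m+n)$ constraints, and every coefficient is either an entry of $A$ or a constant in $\{0,\pm 1\}$; hence the total bit-length of the description is polynomial in the input size (the dimensions of $A$ together with the bit-complexity of its entries). The second step is to invoke any polynomial-time linear programming algorithm---the ellipsoid method or an interior-point method, say---to compute an optimal solution $x^*$ of the first program and an optimal solution $y^*$ of the second, and output the pair $(x^*,y^*)$. By the equivalence above this pair is a Nash equilibrium, and the whole procedure runs in time polynomial in the input size. (Alternatively, one could solve just one of the two programs and recover the other min-max strategy from an optimal \emph{dual} solution, which standard LP solvers return alongside the primal optimum.)

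The only point demanding a little care---and thus the closest thing to an obstacle---is ensuring that what is output is an honest polynomial-size object: one should use an LP algorithm that returns an exact \emph{basic} optimal solution (a vertex of the feasible polytope), whose entries are rationals of bit-complexity polynomial in the input by Cramer's rule, rather than a floating-point approximation. With that understood the argument is complete, and it is worth emphasizing that no fixed-point machinery and no approximation are needed here---a sharp contrast with the general-sum games taken up in the later lectures.
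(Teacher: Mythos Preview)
Your proposal is correct and takes essentially the same approach as the paper: the corollary is stated without a standalone proof, being an immediate consequence of Proposition~\ref{claim:ne} (Nash equilibria $=$ min-max pairs) together with the observation, made right after Theorem~\ref{t:minmax}, that min-max pairs are optimal solutions of the two linear programs there and hence computable in polynomial time. Your added remarks about the LP size and returning an exact basic solution are helpful elaborations but not a departure from the paper's reasoning.
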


\subsection{Beyond Zero-Sum Games (Computational Complexity)}\label{ss:beyond}

%There's no reason to be content with our positive results so far.
%Everything is nice in two-player zero-sum games. 
Can we generalize Corollary~\ref{cor:zerosum} to more general classes
of games?  After all, while
two-player zero-sum games are important---von Neumann was
largely focused on them, with applications ranging from poker to
war---most game-theoretic situations are not purely
zero-sum.\footnote{Games can even have a collaborative aspect, for example if
  you and I want to meet at some intersection in Manhattan.  Our
  strategies are intersections, and either we both get a high payoff
  (if we choose the same strategy) or we both get a low payoff
  (otherwise).}  
%Can we generalize any of their nice properties to
%games with more players or without the zero-sum property?
%more general classes of games?  More general could mean more players
%or relaxing the zero-sum condition.  
For example, what about {\em
  bimatrix games}, in which there are still two players but the game is
not necessarily zero-sum?\footnote{Notice that three-player zero-sum
  games are already more general than bimatrix games---to turn one of
  the latter into one of the former, add a dummy third player with
  only one strategy whose payoff is the negative of the combined payoff of
  the original two players.  Thus the most compelling negative results
  would be for the case of bimatrix games.}
Solar Lectures~4 and~5 are devoted to this question, and provide
evidence that there is no polynomial-time algorithm for computing a Nash
equilibrium (even an approximate one) of a bimatrix game.

% Unfortunately, things are not so nice for bimatrix game.  No
% polynomial-time algorithms are known to compute a Nash equilibrium in
% this case, and we can also conditionally rule out such algorithms based
% on assumptions about certain customized complexity classes (see Solar
% Lectures~4 and~5).  We will end on a positive note, however: Lunar
% Lecture~5 considers relaxed notions of equilibria that can be computed
% efficiently in non-zero-sum games.

% going to be as nice. In
% particular, we don't have efficient algorithms to compute Nash
% equilibria in bimatrix games. To (conditionally) rule out such
% efficient algorithms, we define customized complexity classes that are
% suitable for proving hardness results for computing Nash
% equilibrium. The 4th and 5th solar lectures will cover these
% topics. The workshop will end on a positive note however. In the last
% lunar lecture we will look at relaxed notions of equilibria which are
% efficiently computable.

\subsection{Who Cares?}\label{ss:whocares}

Before proceeding to our second cool fact about two-player zero-sum
games, let's take a step back and be clear about what we're trying to
accomplish.  
Why do we care about computing equilibria of games, anyway?

\begin{enumerate}

    \item We might want fast algorithms to use in practice.
%In the real-world we might be interested in computing equilibria of games.
      The demand for equilibrium computation algorithms is
      significantly less than that for, say, linear programming
      solvers, but the author regularly meets researchers who would
      make good use of better off-the-shelf solvers for computing an
      equilibrium of a game.

\item Perhaps most relevant for this monograph's audience, the study of
  equilibrium computation naturally leads to interesting and new
  complexity theory (e.g., definitions of new complexity classes, such
  as $\PPAD$).  We will see that the most celebrated results in the
  area are quite deep and draw on ideas from all across theoretical
  computer science.

\item Complexity considerations can be used to support or critique the
  practical relevance of an equilibrium concept such as the Nash
  equilibrium.  It is tempting to interpret a polynomial-time
  algorithm for computing an equilibrium as a plausibility argument
  that players can figure one out quickly, and an intractability
  result as evidence that players will not generally reach an
  equilibrium in a reasonable amount of time.

  Of course, the real story is more complex.  First, 
computational intractability is
  not necessarily first on the list of the Nash equilibrium's issues.
  For example, its non-uniqueness in non-zero-sum games already limits
  its predictive power.\footnote{Recall our ``meeting
    in Manhattan'' example---every intersection is a Nash
    equilibrium!}  
%But the novel computational intractability critique
%  in Theorem~\ref{t:ppad} is one that theoretical computer science is
%  particularly well suited to contribute.

Second, it's not particularly helpful to critique a definition without
suggesting an alternative.  Lunar Lecture~5 partially addresses this
issue by discussing two tractable equilibrium concepts, correlated
equilibria and coarse correlated equilibria.

Third, does an arbitrary polynomial-time algorithm, such as one based
on solving a non-trivial linear program, really suggest that
independent play by strategic players will actually converge to an
equilibrium?  Algorithms for linear programming do not resemble how
players typically make decisions in games.  A stronger positive result
would involve a behaviorally plausible distributed algorithm that
players can use to efficiently converge to a Nash equilibrium through
repeated play over time.  We discuss such a result for two-player
zero-sum games next.

\end{enumerate}

% To expand on the last point, let's take the two-player zero-sum games
% as an example. In this setting, we have a polynomial-time algorithm
% for computing a Nash equilibrium, which is a good first step (without
% this, it is hard to consider Nash equilibria to be practically
% relevant). On the other hand, we would like to supplement this result
% with a plausible method by which players can arrive at one of the
% equilibria. The algorithm for computing Nash equilibrium is a
% centralized linear program which does not resemble a method by which
% players would arrive at a Nash equilibrium. 

\section{Uncoupled Dynamics}\label{s:uncoupled}

In the first half of the lecture, we saw that a Nash equilibrium
of a two-player zero-sum game  can be computed in polynomial time
using linear programming.
%This was done by interpreting the Minimax Theorem in the framework of
%linear programming duality.
%
It would be more compelling, however, to come up with a definition of
a plausible process by which players can learn a Nash equilibrium.
Such a result requires a behavioral model for what players do when not
at equilibrium.  
The goal is then to investigate whether or not the process
converges to a Nash equilibrium (for an appropriate notion of
convergence), and if so, how quickly.

% It seems more appropriate, however, to come 
% design an economically motivated
% polynomial-time algorithm, as this would imply that natural gameplay
% would eventually lead to some Nash equilibrium. This has historically
% been achieved via gameplay dynamics, where the game is played
% repeatedly, and the strategies are learned over the history of play.

\subsection{The Setup}

{\em Uncoupled dynamics} refers to a class of processes 
with the properties mentioned above.
The idea is that each player initially knows only her own payoffs (and
not those of the other players), \`{a} la the number-in-hand model in
communication complexity.\footnote{If a player knows the game is
  zero-sum and also her own payoff matrix, then she automatically knows the
  other player's payoff matrix. %; for this reason, uncoupled dynamics
%  are discussed primarily in the case of non-zero-sum games.
  Nonetheless, it is non-trivial and illuminating to investigate the
  convergence properties of general-purpose uncoupled dynamics 
in the zero-sum case, thereby identifying 
%the best-case scenario
an aspiration point
for the analysis of general games.}
The game is then played repeatedly,
with each player picking a strategy in each time step as a function
only of her own payoffs and what transpired in the past.

% \begin{definition}[Uncoupled Dynamics] 
% The dynamics of play are \emph{uncoupled} if both players choose their strategy based on their own payoff, and on the observed history of their opponent, but not based on their opponent's payoff, which we can assume are unknown, or adversarially chosen. 

% Hence, \emph{uncoupled dynamics} denotes a setting where some game is played repeatedly in multiple rounds, and at each round, every player reveals a pure strategy based on the observed behavior of the other player.
%\end{definition}
\begin{mdframed}[style=offset,frametitle={Uncoupled Dynamics
    (Two-Player Version)},nobreak=true]
At each time step $t=1,2,3,\ldots$:
\begin{enumerate}
\item Alice chooses a strategy $x^t$ as a function only of her own
  payoffs and the previously chosen strategies $x^1,\ldots,x^{t-1}$
and $y^1,\ldots,y^{t-1}$.
\item Bob simultaneously chooses a strategy $y^t$ as a function only of his own
  payoffs and the previously chosen strategies $x^1,\ldots,x^{t-1}$
and $y^1,\ldots,y^{t-1}$.
\item Alice learns $y^t$ and Bob learns $x^t$.\tablefootnote{When
    Alice and Bob use mixed strategies, there are two
    natural feedback models, one where each
    player learns the actual mixed strategy chosen by the other
    player, and one where each learns
only a sample (a pure strategy) from the other player's chosen
    distribution.  It's generally easier to prove results in the first
    model, but such proofs usually can be extended with some
    additional work to hold
(with high probability over the strategy realizations) in the second model
as well.\label{foot:feedback}}
\end{enumerate}
\end{mdframed}
Uncoupled dynamics have been studied 
at length in both the game theory and computer science literatures
(often under different names).
Specifying such dynamics boils down to a definition of how Alice and
Bob choose strategies as a function of their payoffs and the joint
history of play.  Let's look at some famous examples.

\subsection{Fictitious Play}

% The following are some examples of methods-of-play in uncoupled
% dynamics, which are proposed as methods to converge to Nash equilibria
% in zero-sum, two-player games.

One natural idea is to best respond to the observed behavior of your
opponent. 
\begin{example}[Fictitious Play] 
%  It seems only natural to try and best-respond to the observed
%  behavior of your opponent. 
In \emph{fictitious play}, each player
  assumes that the other player will mix according to the relative
  frequencies of their past actions (i.e., the empirical distribution
  of their past play), and plays a best response.\footnote{In
the first time step, Alice and Bob both choose a default strategy, such
as the     uniform distribution.}
\begin{mdframed}[style=offset,frametitle={Fictitious Play
    (Two-Player Version)}] 
At each time step $t=1,2,3,\ldots$:
\begin{enumerate}
\item Alice chooses a strategy $x^t$ that is a best response against
$\yhat^{t-1} = \tfrac{1}{t-1} \sum_{s=1}^{t-1} y^s$, the past actions of Bob
(breaking ties arbitrarily).

\item Bob simultaneously chooses a strategy $y^t$ that is a best
  response  against
$\xhat^{t-1} = \tfrac{1}{t-1} \sum_{s=1}^{t-1} x^s$, the past actions of Alice
(breaking ties arbitrarily).

\item Alice learns $y^t$ and Bob learns $x^t$.

\end{enumerate}
\end{mdframed}
Note that each
 player picks a pure strategy in each time step (modulo tie-breaking
  in the case of multiple best responses).
One way to interpret fictitious play is to imagine that
each player assumes that the other is using the same mixed strategy
every time step, and estimates this time-invariant mixed strategy with
the empirical distribution of the strategies chosen in the past.
%uses the empirical distribution of past play as a 
%
% Equivalently, each player assumes
%   that the opponent is repeatedly playing some mixed strategy without
%   variation, and is choosing to estimate this mixed strategy by
%   observing the relative frequencies in past behavior.
\end{example}

Fictitious play has an interesting history:
\begin{enumerate}

\item It was first proposed by G.\ W.\ Brown in 1949 (published in
  1951~\cite{B51}) as a computer algorithm to compute a Nash
  equilibrium of a two-player zero-sum game.  This is not so long
  after the birth of either game theory or computers!

\item In 1951, Julia Robinson (better known for her contributions to
  the resolution of Hilbert's tenth problem about Diophantine
  equations) proved that, in
  two-player zero-sum games, the time-averaged payoffs of the players
  converge to the value of the game~\cite{Rob51}.  
Robinson's proof gives only an
  exponential (in the number of strategies) bound on the number of
  iterations required for convergence.  In 1959, 
%Samuel 
  Karlin~\cite{K59} conjectured that a polynomial bound should be
  possible (for two-player zero-sum games).  Fast forward to~2014, and
  \citet{DP14} refuted Karlin's conjecture and proved an exponential
  lower bound for the case of adversarial (and not necessarily
  consistent) tie-breaking.

\item It is still an open question whether or not fictitious play
  converges quickly in two-player zero-sum games
for natural (or even just consistent) tie-breaking
  rules!  The goal here would be to show that 
$\poly(n,1/\epsilon)$ time steps suffice for the time-averaged payoffs
to be within $\epsilon$ of the value of the game (where $n$ is the
total number of rows and columns).

\item The situation for non-zero-sum games was murky until 1964, when
  Lloyd Shapley discovered a $3 \times 3$ game (a non-zero-sum
  variation on rock-paper-scissors) where fictitious play never
  converges to a Nash equilibrium~\cite{S64}.  Shapley's counterexample
foreshadowed future separations between the tractability of zero-sum
and non-zero-sum games.

\end{enumerate}
Next we'll look at a different choice of dynamics with better
convergence properties.

  % \citet{Rob51}, in 1951 showed that this fictitious play must
  % eventually converge to a Nash equilibrium, at least in the
  % expectation of payoff.

% It remains open to determine whether fictitious play can get within
% $\epsilon$ of the minimax payoff for both players in
% $\poly(n,1/\epsilon)$ rounds, where $n$ is the number of
% strategies. It is important to note that the convergence rate of
% fictitious play depends on the choice of tie-breaking rule, when some
% player is faced with multiple equally-profitable strategies. In fact,
% \citet{DP14} showed that fictitious play could require exponential
% time to converge if the tie-breaking rules were chosen arbitrarily.

\subsection{Smooth Fictitious Play}

Fictitious play is ``all-or-nothing''---even if two strategies have
almost the same expected payoff against the opponent's empirical
distribution, the slightly worse one is completely ignored in favor of
the slightly better one.  
%A drawback of fictitious play is that the chosen strategy can be very
%unstable if the past history of
%the opponent allows for multiple strategies with similar payoffs. 
A more stable approach, and perhaps a more behaviorally plausible one,
is to assume that players randomize, biasing their decision toward the
strategies with the highest expected payoffs (again, against the
empirical distribution of the opponent).  In other words, each player
plays a ``noisy best response'' against the observed play of the other
player.

For example, already in 1957 \citet{Han57} considered dynamics where
each player chooses a strategy with probability proportional to her
expected payoff (against the empirical distribution of the other
player's past play), and proved polynomial convergence to the Nash
equilibrium payoffs in two-player zero-sum games.
Even better convergence properties are possible if poorly performing
strategies are abandoned more aggressively, corresponding to
%Smooth fictitious play can be thought of as 
a ``softmax'' version of fictitious play.

% In fact if each strategy is played with probability
% proportional to its expected payoff, then the dynamics are known to
% converge in polynomially many rounds~\cite{Han57}.

\begin{example}[Smooth Fictitious Play] 
% Another variation with
%   response dynamics inspired from no-regret learning algorithms was
%   introduced. This version has much better convergence results.
%
  In time $t$ of \emph{smooth fictitious play}, a player (Alice, say)
  computes the empirical distribution
  $\yhat^{t-1} = \sum_{s=1}^{t-1} y^s$ of the other player's past
  play, computes the expected payoff $\pi^t_i$ of each pure strategy
  $i$ under the assumption that Bob plays $\yhat^{t-1}$, and chooses $x^t$
  by playing each strategy with probability proportional to
  $e^{\eta^t \pi^t_i}$.  (When~$t=1$, interpret the $\pi^t_i$'s as~0
  and hence the player chooses the uniform distribution.)
%we need to introduce some
%  notation. Suppose Alice is responding to Bob, and let $\hat{y}$ be
%  the observed, empirical distribution of Bob's strategies. For each
%  of Alice's strategies $i$, let
%\[%
%	\pi_i:=\Exp[\text{Alice's payoff when playing } i \mid  \hat{y}]\,.
%\]
%Furthermore, let 
Here $\eta^t$ is a tunable parameter that interpolates between always playing
uniformly at random (when $\eta = 0$) and fictitious play with random
tie-breaking (when $\eta = +\infty$).
%between 0 and
%$+\infty$. Then Alice responds to Bob's past behavior by mixing
%between her strategies with probabilities proportional to
%$\exp({\eta \pi_i})$. If $\eta=0$, this is simply responding uniformly
%at random, and if $\eta=\infty$, this is simply the fictitious play
%seen above. 
The choice $\eta^t \approx \sqrt{t}$ is often the best one for proving
convergence results.
%It is standard, in the setting of learning methods, to set
%$\eta=\sqrt{t\,}$, where $t$ is the number of rounds played.
\end{example}

\begin{mdframed}[style=offset,frametitle={Smooth Fictitious Play
    (Two-Player Version)}] 
\textbf{Given: } parameter family $\{ \eta^t \in [0,\infty) \,:\,
t=1,2,3,\ldots\}$.

\vspace{\baselineskip}

At each time step $t=1,2,3,\ldots$:
\begin{enumerate}
\item Alice chooses a strategy $x^t$ by playing each strategy $i$ with
  probability proportional to $e^{\eta^t\pi^t_i}$, where $\pi^t_i$
  denotes the expected payoff of
  strategy $i$ when Bob plays the mixed strategy
$\yhat^{t-1} = \tfrac{1}{t-1} \sum_{s=1}^{t-1} y^s$.

\item Bob simultaneously chooses a strategy $y^t$ by playing each
  strategy $j$ with probability proportional to $e^{\eta^t\pi^t_j}$,
  where $\pi^t_j$ is the expected payoff of strategy $j$ when Alice
  plays the mixed strategy
  $\xhat^{t-1} = \tfrac{1}{t-1} \sum_{s=1}^{t-1} x^s$.

\item Alice learns $y^t$ and Bob learns $x^t$.
\end{enumerate}
\end{mdframed}
Versions of smooth fictitious play have been studied independently in
the game theory literature (beginning with \citet{FL95}) and the
computer science
literature (beginning with~\citet{FS99}).  
It converges extremely quickly. 
\begin{theorem}[Fast Convergence of Smooth Fictitious Play~\cite{FL95,FS99}]\label{t:sfp}
For a zero-sum two-player game with $m$ rows and $n$ columns and a
parameter $\epsilon > 0$, 
%with high probability 
after $T=O(\log(n+m)/\epsilon^2)$ time steps of smooth fictitious play
with $\eta^t=\Theta(\sqrt{t})$ for each $t$, the empirical
distributions $\xhat = \tfrac{1}{T} \sum_{t=1}^T x^t$ and
$\yhat = \tfrac{1}{T} \sum_{t=1}^T y^t$ constitute an
$\epsilon$-approximate Nash equilibrium.
\end{theorem}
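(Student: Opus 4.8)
The plan is to recognize smooth fictitious play as an instance of the exponential-weights (a.k.a.\ multiplicative-weights or ``Hedge'') algorithm for online learning, and then to combine a standard regret bound for that algorithm with the classical observation that when both players in a zero-sum game incur low external regret, their time-averaged strategies form an approximate equilibrium. Indeed, Alice's rule ``play row $i$ with probability proportional to $e^{\eta^t\pi^t_i}$'' is exactly the exponential-weights update for the online problem in which, in round $t$, row $i$ earns reward $(Ay^t)_i\in[-1,1]$; likewise for Bob, who is minimizing and so runs the algorithm on the losses $((x^t)^\top A)_j$.

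First I would invoke (or re-derive in a line or two via the usual potential-function / telescoping argument) the guarantee that exponential weights over $N$ actions with rewards bounded in $[-1,1]$ and step size $\eta^t=\Theta(\sqrt t)$ has external regret at most $C\sqrt{T\log N}$ after $T$ rounds, for an absolute constant $C$. Writing $\xbar=\tfrac1T\sum_{t=1}^T x^t$ and $\ybar=\tfrac1T\sum_{t=1}^T y^t$ and setting $R:=C\sqrt{(\log(n+m))/T}$, this says $\max_x x^\top A\ybar-\tfrac1T\sum_t(x^t)^\top Ay^t\le R$ for Alice and, symmetrically, $\tfrac1T\sum_t(x^t)^\top Ay^t-\min_y\xbar^\top Ay\le R$ for Bob. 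Adding these two inequalities makes the $\tfrac1T\sum_t(x^t)^\top Ay^t$ terms cancel, leaving $\max_x x^\top A\ybar-\min_y\xbar^\top Ay\le 2R$. Now by the Minimax Theorem (Theorem~\ref{t:minmax}) the value $v$ satisfies $\min_y\xbar^\top Ay\le v\le\max_x x^\top A\ybar$, and trivially $\min_y\xbar^\top Ay\le\xbar^\top A\ybar\le\max_x x^\top A\ybar$; hence Alice gains at most $\max_x x^\top A\ybar-\xbar^\top A\ybar\le 2R$ by deviating against $\ybar$, and Bob can push Alice's payoff down by at most $\xbar^\top A\ybar-\min_y\xbar^\top Ay\le 2R$ by deviating against $\xbar$. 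So $(\xbar,\ybar)$ is a $2R$-approximate Nash equilibrium, and choosing $T=\Theta(\log(n+m)/\epsilon^2)$ makes $2R\le\epsilon$, as desired.

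The one step that needs genuine care is the regret bound with the \emph{anytime} step-size schedule $\eta^t=\Theta(\sqrt t)$, as opposed to a horizon-dependent constant $\eta=\Theta(\sqrt{(\log(n+m))/T})$: with the latter the regret bound is the textbook one-line potential argument, while with the former one must either telescope the potential carefully across rounds with varying $\eta^t$ or run a doubling trick over geometrically growing phases. Everything else — the passage from the two regret inequalities to the approximate-equilibrium guarantee — is elementary and relies only on the Minimax Theorem already proved above; one should also keep track of the fact that the payoffs lie in $[-1,1]$ rather than $[0,1]$, which only affects the absolute constant $C$.
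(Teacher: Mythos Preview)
Your proposal is correct and follows essentially the same approach as the paper: identify smooth fictitious play with exponential weights, invoke the $O(\sqrt{(\log N)/T})$ regret bound, and convert the two regret inequalities into the $\eps$-NE condition for $(\xhat,\yhat)$. The paper organizes the last step slightly differently---it pivots on the time-averaged payoff $v=\tfrac1T\sum_t (x^t)^\top Ay^t$ directly rather than adding the two regret inequalities---but this is cosmetic and yields the same factor of~2. One small note: your invocation of the Minimax theorem is superfluous; you only use the trivial sandwich $\min_y \xbar^\top Ay \le \xbar^\top A\ybar \le \max_x x^\top A\ybar$, which needs no appeal to duality.
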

The $\eps$-approximate Nash equilibrium condition in Theorem~\ref{t:sfp} is
exactly what it sounds like:
neither player can improve their expected payoff by more than
$\epsilon$ via a unilateral deviation (see also Definition
\ref{d:ene}, below).\footnote{Recall our assumption that payoffs have been
scaled to lie in $[-1,1]$.}
% %It
% %This has been shown~\cite{FL95} to 
% converges very rapidly in the
% following sense: in $O(\log(n+m)/\epsilon^2)$ iterations,
% the empirical distributions $(\hat{x},\hat{y})$ of Alice and Bob
% respectively form an $\epsilon$-Nash-equilibrium (meaning no
% unilateral deviation from $\xhat$ or $\yhat$ can improve the
% deviator's expected payoff by more than $\eps$).

There are two steps in the proof of Theorem~\ref{t:sfp}: (i) the noisy
best response in smooth fictitious play is equivalent to the
``Exponential Weights'' algorithm, which has ``vanishing regret'';
and (ii) in a two-player zero-sum game, vanishing-regret guarantees
translate to (approximate) Nash equilibrium convergence.  
The optional Sections \ref{ss:sfp1}--\ref{ss:sfp3} provide more details for the
interested reader.

\subsection{Beyond Zero-Sum Games (Communication Complexity)}\label{ss:implycomm}

Theorem \ref{t:sfp} implies that smooth fictitious play can be used to
define a randomized $O(\log^2 (n+m)/\epsilon^2)$-bit communication
protocol for computing an $\epsilon$-$\NE$ of a two-player zero sum game.\footnote{This
  communication bound applies to
 the variant of smooth fictitious play where Alice (respectively,
  Bob) learns only a random sample from $y^t$ (respectively,
  $x^t$); see footnote~\ref{foot:feedback}.  Each such sample can be  communicated to the other
  player in $\log (n+m)$ bits.  Theorem \ref{t:sfp} continues to hold
  (with  high
  probability over the samples) for this variant of smooth fictitious
  play \cite{FL95,FS99}.\label{foot:uncoupled}}
The goal of Solar Lectures~2 and~3 is to prove that there is no
analogously efficient communication protocol for computing an approximate
Nash equilibrium of a general bimatrix game.\footnote{The communication
  complexity of computing anything about a
  two-player zero-sum game is zero---Alice knows the entire
  game at the beginning (as Bob's payoff is the negative of hers)
  and can unilaterally compute whatever she wants.  But it still makes
  sense to ask if the communication bound implied by smooth
  fictitious play can be replicated in non-zero-games (where Alice and
  Bob initially know only their own payoff matrices).}
Ruling out low-communication protocols will in particular rule out any
type of quickly converging uncoupled dynamics.\footnote{The relevance
  of communication complexity to fast learning in games was
  first pointed out by \citet{CS04}.}

\subsection{Proof of Theorem~\ref{t:sfp}, Part 1: Exponential
  Weights (Optional)}\label{ss:sfp1}

To elaborate on the first step of the proof of Theorem~\ref{t:sfp}, we
need to explain the standard setup
for online decision-making.
\begin{mdframed}[style=offset,frametitle={Online Decision-Making}]
\begin{algorithm}[H]
At each time step $t=1,2,\ldots,T$:\;
\vspace{.25\baselineskip}
\Indp
\hangindent=.5\skiptext\hangafter=1
a decision-maker picks a probability distribution  $\p^t$
over her actions $\Lambda$\;
\vspace{.25\baselineskip}
an adversary picks a reward vector $\r^t:\Lambda \rightarrow
  [-1,1]$\;
\vspace{.25\baselineskip}
\hangindent=.5\skiptext\hangafter=1
an action $a^t$ is chosen according to the distribution $\p^t$,
  and the decision-maker receives reward $r^t(a^t)$\;
\vspace{.25\baselineskip}
\hangindent=.5\skiptext\hangafter=1
the decision-maker learns $\r^t$, the entire reward vector
\end{algorithm}
\end{mdframed}
In smooth fictitious play, each of Alice and Bob are in effect solving
the online decision-making problem (with actions corresponding to the
game's strategies).  For Alice, the reward vector $\r^t$ is induced by
Bob's action at time step $t$ (if Bob plays strategy $j$, then $r^t$
is the $j$th column of the game matrix $A$), and similarly
for Bob (with reward vector equal to the $i$th row multiplied by
$-1$).  Next we interpret
Alice's and Bob's behavior in smooth fictitious play as algorithms for
online decision-making.

An {\em online decision-making algorithm} specifies for each~$t$ the
probability distribution~$\p^t$, as a function of the reward
vectors~$\r^1,\ldots,\r^{t-1}$ and realized
actions~$a^1,\ldots,a^{t-1}$ of the first $t-1$ time steps.
An {\em adversary} for such an algorithm $\Alg$ specifies for each~$t$ the
reward vector~$\r^t$, as a function of
the probability distributions $\p^1,\ldots,\p^t$ used by $\Alg$ on
the first $t$ days and the realized actions $a^1,\ldots,a^{t-1}$ of the first
  $t-1$ days.  

Here is a famous online decision-making algorithm, the
``Exponential Weights (EW)'' algorithm
(see~\cite{LW94,FS97}).\footnote{Also known as the ``Hedge''
  algorithm.  The closely related ``Multiplicative Weights'' algorithm
  uses the update rule $w^{t+1}(a) = w^t(a) \cdot (1 + \eta^t r^t(a))$ 
instead of $w^{t+1}(a) = w^t(a) \cdot (e^{\eta^t r^t(a)})$~\cite{CBMS07}.}
\begin{mdframed}[style=offset,frametitle={Exponential Weights (EW)
    Algorithm}]
\begin{algorithm}[H]
initialize $w^1(a) = 1$ for every $a \in \Lambda$\;
\vspace{.25\baselineskip}
\For{each time step $t=1,2,3,\ldots$}{
\vspace{.25\baselineskip}
\hangindent=.5\skiptext\hangafter=1
use the distribution $\p^t :=
  \w^t/\Gamma^t$ over actions, where $\Gamma^t = \sum_{a \in \Lambda} w^t(a)$
  is the sum of the actions' current weights\;
\vspace{.25\baselineskip}
\hangindent=.5\skiptext\hangafter=1
given the reward vector $\r^t$, 
update the weight of each action $a \in \Lambda$
using the formula
  $w^{t+1}(a) = w^t(a) \cdot (e^{\eta^t r^t(a)})$ 
(where $\eta^t$ is a
  parameter, canonically $\approx \sqrt{t}$)
}
\end{algorithm}
\end{mdframed}
The EW algorithm maintains a
weight, intuitively a ``credibility,'' for each action.  At each
time step the algorithm chooses an action with
probability proportional to its current weight.  The weight of each
action evolves over time according to the action's past performance.

Inspecting the descriptions of smooth fictitious play and the EW
algorithm, we see that we can rephrase the former as follows:
\begin{mdframed}[style=offset,frametitle={Smooth Fictitious Play
    (Rephrased)}] 
\textbf{Given: } parameter family $\{ \eta^t \in [0,\infty) \,:\,
t=1,2,3,\ldots\}$.

\vspace{\baselineskip}

At each time step $t=1,2,3,\ldots$:
\begin{enumerate}
\item Alice uses an instantiation of the EW algorithm to choose a mixed strategy $x^t$.
% by playing each strategy $i$ with
%  probability proportional to $e^{\eta\pi^t_i}$, where $\pi^t_i$
%  denotes the expected payoff of
%  strategy $i$ when Bob plays the mixed strategy
%$\yhat^{t-1} = \tfrac{1}{t-1} \sum_{s=1}^{t-1} y^s$
%(breaking ties arbitrarily).

\item Bob uses a different instantiation of the EW algorithm to choose a mixed strategy $y^t$.
%simultaneously chooses a strategy $y^t$ by playing each strategy $j$ with
%  probability proportional to $e^{\eta\rho^t_j}$, where $\rho^t_j$ is the %expected payoff of
%  strategy $j$ when Alice plays the mixed strategy
%$\xhat^{t-1} = \tfrac{1}{t-1} \sum_{s=1}^{t-1} x^s$
%(breaking ties arbitrarily).

\item Alice learns $y^t$ and Bob learns $x^t$.

\item Alice feeds her EW algorithm a reward vector $r^t$ with $r^t(i)$
  equal to the expected payoff of playing row $i$, given Bob's mixed
  strategy $y^t$ over columns; and similarly for Bob.

\end{enumerate}
\end{mdframed}

How should we assess the performance of an online decision-making
algorithm like the EW algorithm, and do guarantees for the algorithm
have any implications for smooth fictitious play?

\subsection{Proof of Theorem~\ref{t:sfp}, Part 2: Vanishing Regret (Optional)}\label{ss:sfp2}

%Rather than comparing the expected reward of an algorithm to that of the
%best action {\em sequence} in hindsight, we compare it to the 
One of the big ideas in online learning is to compare the 
time-averaged reward earned by an online algorithm with that
earned by the best {\em fixed action} in hindsight.\footnote{There
  is no hope of competing with the best action {\em sequence} in
  hindsight:  consider two
  actions and an
  adversary that flips a coin each time step to choose between the
  reward vectors $(1,0)$ and $(0,1)$.}
\begin{definition}[(Time-Averaged) Regret]\label{d:regreta}
Fix reward vectors~$\r^1,\ldots,\r^T$.
The {\em regret}
of the action sequence
$a^1,\ldots,a^T$
is
\begin{equation}\label{eq:regreta}
\underbrace{\frac{1}{T} \max_{a \in \Lambda} \sum_{t=1}^T r^t(a)}_{\text{best fixed action}}-
 \underbrace{\frac{1}{T} \sum_{t=1}^T r^t(a^t)}_{\text{our algorithm}}.
\end{equation}
\end{definition}
Note that, by linearity, there is no difference between considering
the best fixed action  and the best fixed distribution over actions
(there is always an optimal pure action in hindsight).

We aspire to an online decision-making algorithm that achieves low
regret, as close to 0 as possible.
%Notice that the worst-possible regret in~$2$ (s
Because rewards lie in $[-1,1]$, the regret can never be larger than~2.
We think of regret $\Omega(1)$ (as $T \rightarrow \infty$)
as an epic fail for an algorithm.

%The design principles behind this algorithm areL
%(i) past performance of actions should guide which action is chosen
%at each time step, 
%  with the probability of choosing an action increasing in
%  its cumulative reward; and (ii)
%the probability of choosing a
%  poorly performing action should decrease at an exponential rate.
%These principles
%With the right weight update formula, this idea leads to a 
It turns out that the EW algorithm has the best-possible worst-case
regret guarantee (up to constant factors).\footnote{For the matching
  lower bound, with
  $n$ actions, consider an adversary that sets the reward of each
  action uniformly at random from $\{-1,1\}$ at each time step.  Every
  online algorithm earns expected cumulative reward~0,
%$\tfrac{T}{2}$,
  while the expected cumulative reward of the best action in hindsight is
  $\Theta(\sqrt{T} \cdot \sqrt{\log n})$.}

\begin{theorem}[Regret Bound for the EW Algorithm]\label{t:noregret}
For every adversary, the EW algorithm has
expected regret $O(\sqrt{(\log n)/T})$, where $n=|\Lambda|$.
\end{theorem}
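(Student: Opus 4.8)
The plan is to run the standard potential-function argument, tracking the normalizing constant $\Gamma^t = \sum_{a \in \Lambda} w^t(a)$ over time. First I would establish the multiplicative recurrence relating $\Gamma^{t+1}$ to $\Gamma^t$: since $w^{t+1}(a) = w^t(a) e^{\eta^t r^t(a)}$, we get $\Gamma^{t+1} = \sum_a w^t(a) e^{\eta^t r^t(a)} = \Gamma^t \sum_a p^t(a) e^{\eta^t r^t(a)}$, where $p^t(a) = w^t(a)/\Gamma^t$ is exactly the distribution EW plays. Using the inequality $e^z \le 1 + z + z^2$ valid for $z \in [-1,1]$ (taking $\eta^t$ small enough, or more carefully $\eta^t \le 1$ so that $\eta^t r^t(a) \in [-1,1]$), followed by $1 + x \le e^x$, I get $\Gamma^{t+1} \le \Gamma^t \exp\bigl(\eta^t \sum_a p^t(a) r^t(a) + (\eta^t)^2 \sum_a p^t(a) r^t(a)^2\bigr) \le \Gamma^t \exp\bigl(\eta^t \langle p^t, r^t\rangle + (\eta^t)^2\bigr)$, since rewards lie in $[-1,1]$. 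For simplicity of exposition I would first present the argument with a fixed $\eta^t = \eta$ for all $t$ and optimize at the end; the $\eta^t \approx \sqrt{t}$ (really $\eta^t \propto 1/\sqrt{t}$ for a regret bound — I'd double-check the normalization, since the theorem wants decreasing step sizes) time-varying version follows with a slightly more careful telescoping.

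Next I would chain the recurrence over $t = 1, \dots, T$: starting from $\Gamma^1 = n$, this yields $\Gamma^{T+1} \le n \exp\bigl(\eta \sum_{t=1}^T \langle p^t, r^t\rangle + \eta^2 T\bigr)$. For the lower bound on $\Gamma^{T+1}$, I observe that $\Gamma^{T+1} \ge w^{T+1}(a^*) = \exp\bigl(\eta \sum_{t=1}^T r^t(a^*)\bigr)$ for any fixed action $a^*$, in particular for the best action in hindsight $a^* = \argmax_a \sum_t r^t(a)$. Combining the two bounds and taking logarithms gives $\eta \sum_t r^t(a^*) \le \ln n + \eta \sum_t \langle p^t, r^t\rangle + \eta^2 T$. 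Dividing by $\eta T$ and rearranging:
\[
\frac{1}{T}\sum_{t=1}^T r^t(a^*) - \frac{1}{T}\sum_{t=1}^T \langle p^t, r^t\rangle \le \frac{\ln n}{\eta T} + \eta.
\]
Choosing $\eta = \sqrt{(\ln n)/T}$ balances the two terms and yields a bound of $2\sqrt{(\ln n)/T} = O(\sqrt{(\log n)/T})$ on the left-hand side.

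Finally, I need to pass from $\langle p^t, r^t\rangle$ to the actual realized reward $r^t(a^t)$ that appears in the definition of regret. Since $a^t$ is drawn from $p^t$, we have $\expect{r^t(a^t) \mid p^t} = \langle p^t, r^t\rangle$; taking expectations over all the algorithm's randomness and using linearity, $\expect{\frac{1}{T}\sum_t r^t(a^t)} = \expect{\frac{1}{T}\sum_t \langle p^t, r^t\rangle}$. One subtlety to be careful about: the adversary is adaptive, so $r^t$ depends on $p^1, \dots, p^t$ and $a^1, \dots, a^{t-1}$ but crucially not on $a^t$; this is exactly what makes the conditional-expectation step legitimate, and it is the only place where the precise order of play in the online decision-making model matters. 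Taking expectations throughout the displayed inequality then gives $\expect{\text{regret}} = O(\sqrt{(\log n)/T})$, as claimed. The main (minor) obstacle is bookkeeping the time-varying $\eta^t$ case cleanly and getting the $e^z \le 1 + z + z^2$ range condition to interact correctly with the chosen step size; none of this is deep, but it is where an off-by-a-constant error would creep in.
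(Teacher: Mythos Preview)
The paper does not actually prove this theorem: it simply writes ``See e.g.~the book of \citet{CBL06} for a proof of Theorem~\ref{t:noregret}, which is not overly difficult.'' Your potential-function argument tracking $\Gamma^t$, upper-bounding via $e^z \le 1+z+z^2$ and $1+x \le e^x$, lower-bounding by the best action's weight, and then optimizing $\eta$ is exactly the standard proof one finds in that reference, and it is correct. Your observation about the adaptive adversary and the conditional-expectation step is the right way to handle the passage from $\langle p^t, r^t\rangle$ to the realized reward.

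One small remark on the point you flagged yourself: the paper's EW description says $\eta^t \approx \sqrt{t}$, which looks wrong for the per-step update $w^{t+1}(a) = w^t(a)e^{\eta^t r^t(a)}$, but in the smooth-fictitious-play formulation the exponent is $\eta^t$ times the \emph{time-averaged} payoff $\pi_i^t$, so the effective per-step learning rate is $\eta^t/t \approx 1/\sqrt{t}$, consistent with what you suspected. For the fixed-horizon statement of the theorem, your choice $\eta = \sqrt{(\ln n)/T}$ is the clean one and is all that is needed.
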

See e.g.~the book of \citet{CBL06} for a proof of Theorem
\ref{t:noregret}, which is not overly difficult.

An immediate corollary is that the number of time steps needed
to drive the expected regret down to a small constant is
only logarithmic in the number of actions---this is surprisingly fast!
\begin{corollary}\label{cor:noregret}
There is an online decision-making algorithm that, for every
adversary and $\eps > 0$, has expected regret at most
$\eps$ after $O((\log n)/\eps^2)$ time steps, where $n=|\Lambda|$.
\end{corollary}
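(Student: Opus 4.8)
The plan is to simply invoke the regret bound for the EW algorithm (Theorem~\ref{t:noregret}) and then solve for the number of time steps. Concretely, take the online decision-making algorithm $\Alg$ to be the EW algorithm run with its canonical parameters. Theorem~\ref{t:noregret} guarantees an absolute constant $C > 0$ such that, for every adversary, the expected regret of $\Alg$ after $T$ time steps is at most $C\sqrt{(\log n)/T}$, where $n = |\Lambda|$.

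The next step is to choose $T$ large enough to make this quantity at most $\eps$. The inequality $C\sqrt{(\log n)/T} \le \eps$ is equivalent to $T \ge C^2 (\log n)/\eps^2$, so setting $T = \lceil C^2 (\log n)/\eps^2 \rceil = O((\log n)/\eps^2)$ suffices. Since the bound in Theorem~\ref{t:noregret} concerns the time-averaged regret of Definition~\ref{d:regreta} and that bound is decreasing in $T$, running $\Alg$ for this many (or more) steps yields expected regret at most $\eps$, as claimed.

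There is essentially no obstacle here: the corollary is a one-line rearrangement of Theorem~\ref{t:noregret}. The only mild point worth stating is the choice of the step-size parameters $\eta^t$. The anytime schedule $\eta^t = \Theta(\sqrt{t})$ delivers the $O(\sqrt{(\log n)/T})$ bound uniformly in $T$, so no advance knowledge of $\eps$ (hence of $T$) is needed; alternatively, once $\eps$ is fixed one may compute $T = O((\log n)/\eps^2)$ up front and use the horizon-optimized constant step size $\eta = \Theta(\sqrt{(\log n)/T})$, which gives the same guarantee.
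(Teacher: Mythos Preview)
Your proposal is correct and matches the paper's approach exactly: the paper states this corollary as ``immediate'' from Theorem~\ref{t:noregret} without further proof, and your derivation---setting the $O(\sqrt{(\log n)/T})$ regret bound to $\eps$ and solving for $T$---is precisely the intended one-line argument.
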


% player is essentially learning the opponent's behaviour, conditioning
% on both players' past history. The assumption of zero-sum is necessary
% for the no-regret feature to result in good payoff for both players at
% equilibrium.

\subsection{Proof of Theorem~\ref{t:sfp}, Part 3: Vanishing Regret Implies
  Convergence (Optional)}\label{ss:sfp3}

% Recall that a two-player zero-sum game can be specified by an $m
% \times n$ matrix $\A$, where $a_{ij}$ denotes the payoff of the row
% player and the negative payoff of the column player when row~$i$ and
% column~$j$ are chosen.  It is easy to see that going first in a
% zero-sum game can only be worse than going second --- in the latter case,
% a player has the opportunity to adapt to the first player's strategy.
% Last lecture we derived the minimax theorem from strong LP duality.
% It states that, provided the players randomize optimally, it makes no
% difference who goes first.
%\begin{theorem}[Minimax Theorem]\label{t:mm}
%For every two-player zero-sum game~$\bfA$,
%\begin{equation}\label{eq:minimax}
%\max_{\rows} \left( \min_{\cols} \rows^{\top}\bfA\cols \right)
%=
%\min_{\cols} \left( \max_{\rows} \rows^{\top}\bfA\cols \right).
%\end{equation}
%\end{theorem}

% We next sketch an argument for deriving Theorem~\ref{t:mm} directly
% from the guarantee provided by the multiplicative weights algorithm
% (Theorem~\ref{t:noregret}).  Exercise Set \#6 asks you to provide the
% details.

%We now sketch the rest of the proof of Theorem \ref{t:sfp}.
Consider a zero-sum game $A$ with payoffs in $[-1,1]$
and some $\eps > 0$.
Let $n$ denote the number of rows or the number of columns, whichever
is larger, and set $T = \Theta((\log n)/\eps^2)$ so that the guarantee
in Corollary \ref{cor:noregret} holds with error $\eps/2$.
Let $x^1,\ldots,x^T$ and $y^1,\ldots,y^T$ be the mixed strategies used
by Alice and Bob throughout~$T$ steps of smooth fictitious play.
Let $\hat{\x} = \tfrac{1}{T} \sum_{t=1}^T \x^t$
and
$\hat{\y} = \tfrac{1}{T} \sum_{t=1}^T \y^t$
denote the time-averaged strategies of Alice and Bob, respectively.
We claim that $(\xhat,\yhat)$ is an $\eNE$.

In proof, let 
\[
v = \frac{1}{T} \sum_{t=1}^T (\x^t)^{\top}A\y^t
\]
denote Alice's time-averaged payoff.  
Alice and Bob both used (in effect) the EW algorithm to choose
their strategies, so we can apply the vanishing regret guarantee in
Corollary \ref{cor:noregret} once for each player 
and use linearity to obtain
\begin{equation}\label{eq:noregret1}
v \ge 
\left( \max_{\x} \frac{1}{T} \sum_{t=1}^T \x^{\top}A\y^t \right) - \frac{\eps}{2}
=
\left( \max_{\x} \x^{\top}A\hat{\y} \right) - \frac{\eps}{2}
\end{equation}
and
\begin{equation}\label{eq:noregret2}
v 
\le \left( \min_{\y} \frac{1}{T} \sum_{t=1}^T ({\x}^t)^{\top}A\y
\right) +  \frac{\eps}{2}
= \left( \min_{\y} \hat{\x}^{\top}A\y \right) + \frac{\eps}{2}.
\end{equation}
In particular, taking 
$\x = \xhat$ in~\eqref{eq:noregret1}
and $\y = \yhat$ in~\eqref{eq:noregret2} shows that
\begin{equation}\label{eq:noregret3}
\xhat^{\top}A\yhat \in \left[ v - \frac{\eps}{2},
v + \frac{\eps}{2} \right].
\end{equation}
Now consider a (pure) deviation from $(\xhat,\yhat)$, say by Alice to
the row~$i$.  Denote this deviation by $e_i$.  By 
%linearity and
inequality~\eqref{eq:noregret1} (with $\x = e_i$) we have
\begin{equation}\label{eq:noregret4}
e_i^{\top}A\yhat %= \frac{1}{T} \sum_{t=1}^T e_i^{\top}Ay^t
\le v+\frac{\eps}{2}.
\end{equation}
Because Alice receives expected payoff at least $v-\tfrac{\eps}{2}$ in
$(\xhat,\yhat)$ (by~\eqref{eq:noregret3}) and at most
$v+\tfrac{\eps}{2}$ from any deviation (by~\eqref{eq:noregret4}), her
$\eNE$ conditions are
satisfied.  A symmetric argument applies to Bob, completing the proof.

%%
%
%
%Given this, a short derivation shows that 
%\[
%\max_{\p} \left( \min_{\q} \p^{\top}A\q\right) \ge 
%\min_{\q} \left( \min_{\p} \p^{\top}A\q\right) -2\eps.
%\]
%
% Letting $\eps \rightarrow 0$ and recalling the easy direction of the
% minimax theorem ($\max_{\p} \min_{\q} \p^{\top}\bfA\q 
% \le \min_{\q} \max_{\p} \p^{\top}\bfA\q$) completes the proof.

\section{General Bimatrix Games}\label{s:bimatrix}

A general bimatrix game is defined by two independent
payoff matrices, an $m \times n$ matrix $A$ for Alice and an $m \times
n$ matrix $B$ for Bob.
% For convenience, we
%will always denote these $A$ and $B$.  
(In a zero-sum game, $B=-A$.)
The definition of an
(approximate) Nash equilibrium is what you'd think it would be:
\begin{definition}[$\eps$-Approximate Nash Equilibrium]\label{d:ene}
For a bimatrix game $(A,B)$, row and column mixed strategies $\hat{x}$
and  $\hat{y}$ constitute an
$\epsilon$-$\NE$ if
\begin{align}
	\hat{x}^{\top} A\hat{y}\ &\geq\  x^{\top} A\hat{y}- \epsilon
                                   \qquad \forall x \,, \text{ and }\\
	\hat{x}^{\top} B\hat{y}\ &\geq\ \hat{x}^{\top} By - \epsilon\qquad \forall y\,.
\end{align}
\end{definition}

It has long been known that many of the nice properties of zero-sum
games break down in general bimatrix games.\footnote{We already mentioned
 Shapley's 1964 example showing that fictitious play need not
  converge \cite{S64}.}
\begin{example}[Strange Bimatrix Behavior]\label{ex:bimatrix}
  Suppose two friends, Alice and Bob, want to go for dinner, and are
  trying to agree on a restaurant. Alice prefers Italian
  over Thai, and Bob prefers Thai over Italian, but both would rather
  eat together than eat alone.\footnote{In older game theory texts,
    this example is called the ``Battle of the Sexes.''}
Supposing the rows and columns are
  indexed by Italian and Thai, in that order, and Alice is the
  row player, we get the following payoff matrices:
	\[
		A=\left[\begin{matrix}2&0\\0&1\end{matrix}\right], \qquad
		B=\left[\begin{matrix}1&0\\0&2\end{matrix}\right], \qquad
		\text{or, in shorthand, }\quad
		(A,B)=\left[\begin{matrix}(2,1)&(0,0)\\(0,0)&(1,2)\end{matrix}\right]\enspace .
	\]
There are two obvious Nash equilibria, both pure: either Alice and Bob
go to the Italian restaurant, or they both go to the Thai restaurant.
But there's a third Nash equilibrium, a mixed one\footnote{Fun fact:
  outside of degenerate cases, every game has an {\em odd} number of
  Nash equilibria (see also Solar Lecture 4).}:
Alice chooses
        Italian over Thai with probability $\tfrac23$, and Bob chooses
        Thai over Italian with probability $\tfrac23$.  This is an
        undesirable Nash equilibrium, with Alice and Bob eating alone
        more than half the time.
\end{example}
Example \ref{ex:bimatrix} shows that, unlike in zero-sum games,
different Nash equilibria can result in different expected player
payoffs.  Similarly, the Nash equilibria of a bimatrix game do not
generally form a convex set (unlike in the zero-sum case).

Nash equilibria of bimatrix games are not completely devoid of nice
properties, however.  For starters, we have guaranteed
existence.
\begin{theorem}[Nash's Theorem~\cite{Nas51,Nas50}]\label{t:nash} 
  Every bimatrix game has at least one (mixed) Nash equilibrium. 
%Furthermore,
%  it must have an odd-number of them.
\end{theorem}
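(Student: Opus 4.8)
The plan is to derive Nash's theorem from Brouwer's fixed-point theorem, following Nash's original argument. First I would set up the geometry: a mixed strategy for Alice is a point of the simplex $\Delta_m = \{x \in \R^m : x_i \ge 0 \text{ for all } i,\ \sum_i x_i = 1\}$, and Bob's strategies live in $\Delta_n$; the joint strategy space $\Delta_m \times \Delta_n$ is a nonempty compact convex subset of $\R^{m+n}$, so Brouwer's fixed-point theorem applies to any continuous self-map of it.

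Next I would define the \emph{Nash map} $f : \Delta_m \times \Delta_n \to \Delta_m \times \Delta_n$. Given $(x,y)$, for each row $i$ set $g_i(x,y) = \max\{0,\ e_i^{\top} A y - x^{\top} A y\}$, the ``gain'' to Alice from shifting weight onto the pure strategy $i$, and for each column $j$ set $h_j(x,y) = \max\{0,\ x^{\top} B e_j - x^{\top} B y\}$. Then define
\[
f(x,y)^{\text{row}}_i = \frac{x_i + g_i(x,y)}{1 + \sum_{k} g_k(x,y)}, \qquad f(x,y)^{\text{col}}_j = \frac{y_j + h_j(x,y)}{1 + \sum_{\ell} h_\ell(x,y)}\,.
\]
Since the payoffs are multilinear and $t \mapsto \max\{0,t\}$ is continuous, each $g_i$ and $h_j$ is continuous; the numerators are nonnegative and the denominators are exactly the normalizing constants making each coordinate block a probability distribution, so $f$ is a continuous self-map and Brouwer yields a fixed point $(x^*,y^*)$.

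Then I would show that any fixed point is a Nash equilibrium. The key claim is that at a fixed point $g_i(x^*,y^*) = 0$ for every $i$ (and symmetrically every $h_j = 0$). Suppose instead $G := \sum_k g_k(x^*,y^*) > 0$. Alice's expected payoff $x^{*\top} A y^*$ is a convex combination of the numbers $e_i^{\top} A y^*$ over the $i$ in the support of $x^*$, so some such $i$ has $e_i^{\top} A y^* \le x^{*\top} A y^*$, hence $g_i(x^*,y^*) = 0$ while $x_i^* > 0$; but the fixed-point equation then reads $x_i^* = x_i^*/(1+G)$, forcing $x_i^* = 0$, a contradiction. So $G = 0$, i.e. $e_i^{\top} A y^* \le x^{*\top} A y^*$ for all $i$, and taking convex combinations gives $x^{\top} A y^* \le x^{*\top} A y^*$ for all $x$ --- the first defining inequality of a Nash equilibrium (Definition~\ref{d:ene} with $\epsilon = 0$). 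The symmetric argument on the $h_j$'s gives the second inequality, so $(x^*,y^*)$ is a Nash equilibrium.

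The one genuinely delicate point is this last step: showing a fixed point forces all gains to vanish. The crux is the averaging observation that a player's expected payoff cannot be strictly smaller than the payoff of every pure strategy she plays with positive probability; this supplies an index $i$ with $g_i = 0$ and $x_i^* > 0$, which is exactly what turns the fixed-point equation into a contradiction when $G > 0$. An alternative route would replace Brouwer and the explicit Nash map by Kakutani's fixed-point theorem applied to the best-response correspondence $(x,y) \mapsto \mathrm{BR}_A(y) \times \mathrm{BR}_B(x)$, whose fixed points are by definition the Nash equilibria; that version trades the explicit continuous map for verifying nonemptiness, convexity, and upper hemicontinuity of the best-response sets, all of which are routine given multilinearity of the payoffs.
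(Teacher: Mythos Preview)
Your proof is correct and follows Nash's original 1951 argument. The paper (Section~\ref{ss:nashpf}) also reduces to Brouwer's fixed-point theorem but via a different continuous self-map of $\Delta_m \times \Delta_n$: following Geanakoplos, each player's new strategy is the \emph{regularized best response} $\argmax_{x'_i}\bigl\{\text{(expected payoff of $x'_i$ against $x_{-i}$)} - \|x'_i - x_i\|_2^2\bigr\}$, which is well-defined and single-valued by strict concavity. Your explicit gain-function formula makes continuity immediate, at the cost of the averaging argument you correctly flagged for the ``fixed point $\Rightarrow$ equilibrium'' direction; the paper's map is defined implicitly via an $\argmax$, so continuity needs a short check, but the converse direction is then a one-line computation (a small step toward any profitable deviation strictly increases the objective, so a non-equilibrium cannot be a fixed point). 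The paper's choice also pays off downstream in the monograph: the regularized-best-response map passes cleanly to approximations, so an $\eps$-approximate fixed point yields an $O(\eps)$-approximate Nash equilibrium --- exactly the reduction used later to place equilibrium computation in $\PPAD$.
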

The proof is a fixed-point argument that we will have more to say
about in Solar Lecture~2.\footnote{Von Neumann's alleged reaction when
  Nash 
  told him his theorem \cite[P.94]{nasar}: ``That's trivial, you
  know. That's  just a
  fixed point theorem.''}  
Nash's theorem holds more generally for
games with any finite number of players and strategies.

Nash equilibria of bimatrix games have nicer structure than
those in games with three or more players.  First, in bimatrix games
with 
integer payoffs, there is a Nash equilibrium in which all
probabilities are 
rational numbers with bit complexity polynomial in that of the
game.\footnote{Exercise: prove this by showing that, after you've
  guessed the two support sets of a Nash equilibrium, you can recover
  the exact probabilities using two linear programs.}  Second, there is a
simplex-type pivoting algorithm, called the {\em Lemke-Howson
  algorithm} \cite{LH64}, which computes a Nash equilibrium of a bimatrix
game in a finite number of steps (see~\citet{vS07} for a survey).  
Like the simplex method, the
Lemke-Howson algorithm takes an exponential number of steps in the
worst case \cite{M94,SvS04}.
The similarities between Nash equilibria of bimatrix
games and optimal solutions of linear programs 
initially led to some optimism that computing the former might be as
easy as computing the latter (i.e., might be a polynomial-time
solvable problem).
%gave some researchers hope 
%that there might be a polynomial-time algorithm for computing a Nash
%equilibrium in the general two-player case.  
Alas, as we'll
see, this does not seem to be the case.

% This was shown via a reduction to a Brouwer-fixed-point problem, firmly placing $\NE$ in the category of total search problems in continuous space. 

\section{Approximate Nash Equilibria in Bimatrix Games}

The last topic of this lecture is
some semi-positive results about {\em approximate}
Nash equilibria in general bimatrix games.  While simple, these
results are important and will show up repeatedly in the rest of the
lectures.

\subsection{Sparse Approximate Nash Equilibria}

Here is a crucial result for us: there are always {\em sparse}
approximate Nash equilibria.\footnote{\citet{A94} and \citet{LY94}
  independently proved a precursor to this result in the special case
  of zero-sum games.  The focus of the latter paper is applications in
  complexity theory (like ``anticheckers'').}\footnote{Exercise: there
  are arbitrarily large games where every exact Nash equilibrium has
  full support.  Hint: generalize rock-paper-scissors.  Alternatively,
  see Section~\ref{ss:althofer} of Solar Lecture~5.}
\begin{theorem}[Existence of Sparse Approximate Nash Equilibria (\citet{LMM03})]\label{t:lmm}
For every $\eps > 0$ and
every $n\times n$ bimatrix game,
there exists an $\epsilon$-$\NE$ in which each
player randomizes uniformly 
over a multi-set of $O((\log n)/\epsilon^2)$ pure
strategies.\footnote{By a padding argument, there is no loss of
  generality in assuming that Alice and Bob have the same number of strategies.}
\end{theorem}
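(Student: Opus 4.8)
The plan is to prove the statement by a sampling argument: start with an exact Nash equilibrium and show that independently subsampling each player's mixed strategy a logarithmic number of times yields, with positive probability, a sparse approximate equilibrium. Concretely, by Nash's theorem (Theorem~\ref{t:nash}) the game has an exact Nash equilibrium $\mne$. Set $k = \lceil C (\log n)/\eps^2 \rceil$ for a sufficiently large absolute constant $C$. Draw $i_1,\dots,i_k$ i.i.d.\ from $\xhat$ and, independently, $j_1,\dots,j_k$ i.i.d.\ from $\yhat$, and let $\xbar$ (resp.\ $\ybar$) be the uniform distribution over the multi-set $\{i_1,\dots,i_k\}$ (resp.\ $\{j_1,\dots,j_k\}$). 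By construction $\xbar$ and $\ybar$ are uniform distributions over multi-sets of size $k = O((\log n)/\eps^2)$, so it remains only to show that with positive probability $(\xbar,\ybar)$ is an $\eNE$.

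The key estimates are Chernoff--Hoeffding bounds, using the normalization $|A_{ij}|,|B_{ij}| \le 1$. For a fixed pure row $i$, the quantity $e_i^{\top} A \ybar = \tfrac1k \sum_{t=1}^k A_{i j_t}$ is an average of $k$ i.i.d.\ random variables in $[-1,1]$ with mean $e_i^{\top} A \yhat$, so $\prob{ |e_i^{\top} A \ybar - e_i^{\top} A \yhat| > \eps/4 } \le 2 e^{-k\eps^2/32}$; the analogous bound holds for each pure column played against $B$ and $\xbar$. For the ``value'' terms I would argue in two stages: conditioned on $\ybar$, $\xbar^{\top} A \ybar = \tfrac1k \sum_{s=1}^k e_{i_s}^{\top} A \ybar$ is again an average of $k$ i.i.d.\ $[-1,1]$-valued variables with conditional mean $\xhat^{\top} A \ybar$, while separately $\xhat^{\top} A \ybar = \tfrac1k\sum_t \xhat^{\top} A e_{j_t}$ concentrates around $\xhat^{\top} A \yhat$ by the same token; combining the two and using the triangle inequality gives $\prob{ |\xbar^{\top} A \ybar - \xhat^{\top} A \yhat| > \eps/2 } \le 4 e^{-k\eps^2/32}$, and likewise for $B$. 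There are $2n+2$ events of this kind (one per pure deviation in each matrix, plus the two value terms), so a union bound shows that for $C$ large enough all of them hold simultaneously with probability strictly greater than $0$; fix an outcome in this good event.

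On the good event I verify Definition~\ref{d:ene}. Since $\mne$ is an exact Nash equilibrium, $e_i^{\top} A \yhat \le \xhat^{\top} A \yhat$ for every row $i$. Chaining the concentration bounds gives $e_i^{\top} A \ybar \le e_i^{\top} A \yhat + \eps/4 \le \xhat^{\top} A \yhat + \eps/4 \le \xbar^{\top} A \ybar + 3\eps/4 \le \xbar^{\top} A \ybar + \eps$ for every $i$, and by linearity the same inequality holds with $e_i$ replaced by an arbitrary mixed strategy $x$; this is exactly Alice's $\eNE$ condition, and Bob's follows by a symmetric argument applied to $B$. The only genuinely delicate point is the concentration of the bilinear form $\xbar^{\top} A \ybar$, which is not a sum of independent terms; this is why one conditions on one player's samples before invoking Hoeffding and then pays an extra union-bound factor. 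This conditioning step is the main technical obstacle, though a mild one — alternatively one could appeal to a direct tail bound for bilinear forms or $U$-statistics, but the conditioning trick is elementary and loses nothing beyond constants.
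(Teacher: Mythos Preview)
Your proof is correct and follows essentially the same sampling-from-an-exact-NE argument as the paper. One minor difference worth noting: the paper sidesteps the bilinear concentration step you flag as delicate by observing that every pure strategy in the support of $\xbar$ is (being sampled from $\xhat$) an exact best response to $\yhat$, so once the linear terms $e_i^{\top}A\ybar$ concentrate around $e_i^{\top}A\yhat$ for all~$i$, each such strategy is already within~$\eps$ of a best response to~$\ybar$---this \emph{well-supported} condition suffices for an $\eNE$, and $\xbar^{\top}A\ybar$ never needs to be controlled directly.
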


\begin{proof}[Proof idea.] 
Fix an $n \times n$ bimatrix game $(A,B)$.
\begin{enumerate}

\item Let $(x^*,y^*)$ be an exact Nash equilibrium of $(A,B)$.  (One
  exists, by Theorem \ref{t:nash}.)

\item As a thought experiment, sample $\Theta((\log n)/\eps^2)$ pure
  strategies for Alice i.i.d.\ (with replacement) from~$x^*$, and
  similarly for Bob i.i.d.\ from $y^*$.  

\item Let $\xhat,\yhat$ denote the empirical distributions of the
  samples (with probabilities equal to frequencies in the
  sample)---equivalently, the uniform distributions over the two
  multi-sets of pure strategies.
%  Note that only $O((\log n)/\eps^2)$ entries of $\xhat,\yhat$ are
%  non-zero.

\item Use Chernoff bounds to argue that $(\xhat,\yhat)$ is an $\eNE$
  (with high probability).  Specifically, because of our choice of the
  number of samples,
the expected payoff of each
  row strategy w.r.t.\ $\yhat$ differs from that w.r.t.\ $y^*$ by at
  most $\eps/2$ (w.h.p.).  Because every strategy played with non-zero
  probability in $x^*$ is an exact best response to $y^*$, every
  strategy played with non-zero probability in $\xhat$ is within $\eps$
  of a best response to $\yhat$.  (The same argument applies with the roles of
  $\xhat$ and $\yhat$ reversed.)  This is a sufficient condition for
  being an $\eNE$.\footnote{This sufficient condition has its own name: a {\em
    well-supported $\eNE$.}}

\end{enumerate}
\end{proof}

\subsection{Implications for Communication Complexity}

Theorem~\ref{t:lmm} immediately implies the existence of an $\eNE$ of an
$n \times n$ bimatrix game with description length
%that can be described using 
$O((\log^2 n)/\eps^2)$, with $\approx \log n$ bits used to describe each of the
$O((\log n)/\eps^2)$ pure strategies in the multi-sets promised by the
theorem.
Moreover, if an all-powerful prover writes down an alleged such
description on a publicly observable blackboard, then Alice and Bob
can privately verify that the described pair of mixed strategies is
indeed an $\eNE$.  For example, Alice can use the (publicly viewable)
description of Bob's mixed strategy to compute the expected payoff of 
her best response and check that it is at most $\eps$ more than her
expected payoff when playing the mixed strategy suggested by the
prover.  Summarizing:
\begin{corollary}[Polylogarithmic Nondeterministic
  Communication Complexity]\label{cor:lmm1}
The nondeterministic communication complexity of computing an $\eNE$ of
an $n \times n$ bimatrix game is $O((\log^2 n)/\eps^2)$.
\end{corollary}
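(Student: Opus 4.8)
The plan is to combine the existence of sparse approximate equilibria (Theorem~\ref{t:lmm}) with the observation that such an equilibrium has a short description that each player can verify privately, with no actual Alice--Bob communication required. So the whole ``protocol'' consists of a prover writing a certificate on the public blackboard, after which the players locally check it.

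First I would invoke Theorem~\ref{t:lmm}: in the $n \times n$ bimatrix game $(A,B)$ there is an $\eNE$ $(\xhat,\yhat)$ in which Alice randomizes uniformly over a multi-set $S_A$ of $k = O((\log n)/\eps^2)$ pure rows and Bob randomizes uniformly over a multi-set $S_B$ of $k$ pure columns. The all-powerful prover guesses such an equilibrium and writes $S_A$ and $S_B$ on the blackboard. Each element of a multi-set is an index in $\{1,\dots,n\}$ and costs $\lceil \log_2 n \rceil$ bits, so the certificate has total length $2k \lceil \log_2 n \rceil = O((\log^2 n)/\eps^2)$. For the verification step: using the public description of $S_B$ (equivalently, of $\yhat$), Alice privately computes from her own payoff matrix $A$ both her expected payoff $\xhat^{\top} A \yhat$ and her best-response payoff $\max_i e_i^{\top} A \yhat$, and accepts iff the latter exceeds the former by at most $\eps$. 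Bob performs the symmetric check using $B$ and $S_A$. The players jointly accept iff both accept; since the board is public, no bits are exchanged between them. Completeness is immediate because the certificate promised by Theorem~\ref{t:lmm} passes both local checks, and soundness holds because $(\xhat,\yhat)$ is an $\eNE$ exactly when both of these inequalities hold, so a cheating prover is caught by at least one player.

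The only point that needs care --- and the main (mild) obstacle --- is pinning down the right notion of ``nondeterministic communication complexity of computing an $\eNE$'' for a total search problem: the players must \emph{output} the certificate written on the board, and one has to argue that an honest prover always leads both players to accept and output a genuine $\eNE$, while against any dishonest prover at least one player rejects. Once this bookkeeping is in place, the claimed bound $O((\log^2 n)/\eps^2)$ follows directly from the certificate length together with the fact that the Alice--Bob communication in the verification phase is zero.
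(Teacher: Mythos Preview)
Your proposal is correct and follows essentially the same approach as the paper: invoke Theorem~\ref{t:lmm} to obtain a sparse $\eNE$, have the prover write its $O((\log^2 n)/\eps^2)$-bit description on the public blackboard, and let each player privately verify her own $\eNE$ condition using only her own payoff matrix. Your treatment is, if anything, slightly more careful than the paper's in spelling out completeness/soundness and in flagging the definitional bookkeeping for nondeterministic protocols for total search problems.
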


Thus, if there {\em is} a polynomial lower bound on the deterministic
or randomized communication complexity of computing an approximate
Nash equilibrium, the only way to prove it is via techniques that
don't automatically apply also to the problem's nondeterministic
communication complexity.  This observation rules out many of the most
common lower bound techniques.  In Solar Lectures~2 and~3, we'll see how
to thread the needle using a {\em simulation theorem}, which lifts
a deterministic or random query (i.e., decision tree) lower bound to
an analogous communication complexity lower bound.

\subsection{Implications for Computational Complexity}

% %Note that the~\cite{FL95} result is an analogous result for the
% %zero-sum case.  
% As a corollary, we have an
% $O((\log n)^2/\epsilon^2)$-bit \textit{nondeterministic} communication
% protocol to compute an $\epsilon$-$\NE$. Furthermore, we also have a
% quasipolynomial $O(n^{\log n/\epsilon^2})$ brute-force search
% algorithm.

The second important consequence of Theorem~\ref{t:lmm} is a limit on
the strongest-possible computational hardness we could hope to prove for
the problem of
computing an approximate Nash equilibrium of a bimatrix game: at
worst, the problem is quasi-polynomial-hard.

\begin{corollary}[Quasi-Polynomial Computational Complexity]\label{cor:lmm2}
There is an algorithm that, given as input a description of an $n
\times n$ bimatrix game and a parameter $\eps$, outputs an $\eNE$ in
$n^{O((\log n)/\eps^2)}$ time.
\end{corollary}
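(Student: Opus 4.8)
The plan is to brute-force over all candidate support structures promised by Theorem~\ref{t:lmm}. First, by the padding argument we may assume the game $(A,B)$ is $n \times n$. Let $k = \Theta((\log n)/\eps^2)$ be the bound from Theorem~\ref{t:lmm}, so that $(A,B)$ has some $\eps$-$\NE$ in which each player randomizes uniformly over a size-$k$ multi-set of pure strategies. The algorithm enumerates every pair $(S,T)$, where $S$ is a size-$k$ multi-set of Alice's rows and $T$ is a size-$k$ multi-set of Bob's columns. Since a size-$k$ multi-set over an $n$-element ground set corresponds to a non-decreasing $k$-tuple, there are at most $n^k$ choices for each of $S$ and $T$, hence at most $n^{2k} = n^{O((\log n)/\eps^2)}$ pairs to consider.

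For each pair $(S,T)$, form the uniform distribution $\xhat$ over $S$ and the uniform distribution $\yhat$ over $T$, and test whether $(\xhat,\yhat)$ is an $\eps$-$\NE$ in the sense of Definition~\ref{d:ene}. This test is a polynomial-time computation: compute the payoffs $\xhat^{\top} A \yhat$ and $\xhat^{\top} B \yhat$; compute Alice's best-response value $\max_i (A\yhat)_i$ and Bob's best-response value $\max_j (\xhat^{\top} B)_j$, using the fact that each player always has a pure best response so the maximum ranges over only $n$ options; and check that each player's payoff is within $\eps$ of that player's best-response value. Output the first pair $(S,T)$ whose induced $(\xhat,\yhat)$ passes the test.

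Correctness is immediate from Theorem~\ref{t:lmm}: that theorem guarantees at least one enumerated pair passes, and we use only its existence statement, not the probabilistic argument behind it. The running time is the number of pairs times the per-pair verification cost, namely $n^{O((\log n)/\eps^2)} \cdot \poly(n) = n^{O((\log n)/\eps^2)}$, the polynomial factor being absorbed into the exponent. There is no genuinely hard step here; the only mild points to get right are that enumerating \emph{multi-sets} (rather than sets) is what matches the statement of Theorem~\ref{t:lmm} while still giving the $n^k$ count, and that verifying the $\eps$-$\NE$ condition for an explicitly given pair of mixed strategies reduces to comparing against the best \emph{pure} deviation.
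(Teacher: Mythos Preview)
Your proof is correct and follows essentially the same approach as the paper's: enumerate all $n^{O((\log n)/\eps^2)}$ pairs of multi-sets promised by Theorem~\ref{t:lmm} and check each induced pair of uniform mixed strategies for the $\eNE$ condition via best-response computations. Your write-up is simply more explicit about the counting of multi-sets and the per-pair verification cost.
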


\begin{proof}
The algorithm enumerates all $n^{O((\log n)/\eps^2)}$ possible choices
for the multi-sets promised by Theorem~\ref{t:lmm}.  It is easy to
check whether or not the mixed strategies induced by such a choice
constitute an $\eNE$---just compute the expected payoffs of each
strategy and of the players' best responses, as in the proof of Corollary~\ref{cor:lmm1}.
\end{proof}

Because of the apparent paucity of natural problems with
quasi-polynomial complexity,
the quasi-polynomial-time approximation scheme (QPTAS) in
Corollary~\ref{cor:lmm2} initially led to optimism that there should
be a PTAS for the problem.
%with the reasoning that
%on the basis that there
%are not many 
% (as there are not many problems for which
%the best-possible running time is quasi-polynomial.
Also, if there {\em were} a reduction showing quasi-polynomial-time
hardness for 
%the problem,
computing an approximate Nash equilibrium, what would be the
appropriate complexity assumption, and what would the reduction look
like?  Solar Lectures~4 and~5 answer this question.

%%%%%%%%%%%%%%%%%%%%%%%%%%%%%

%% Please fill out the details below
\lecture{Communication Complexity Lower Bound for Computing an
  Approximate Nash Equilibrium of a Bimatrix Game (Part I)}

\vspace{1cm}

%% Your lecture LaTeX goes here.

This lecture and the next consider the communication complexity of
computing an approximate Nash equilibrium, culminating with a proof of
the recent breakthrough polynomial lower bound of \citet{BR17}.
This lower bound rules out the possibility of quickly converging
uncoupled dynamics in general bimatrix games (see
Section~\ref{s:uncoupled}).

%The reason to look at approximate Nash Equilibrium (NE) is uncoupled %dynamics. Communication complexity of exact NE is $n$. 

\section{Preamble}\label{s:ccpreamble}

Recall the setup: there are two players, Alice and Bob, each with
their own payoff matrices $A$ and $B$.  Without loss of generality (by
padding), the
two players have the same number~$N$ of strategies.  
We consider a two-party model where,
initially, Alice knows only~$A$ and Bob knows only~$B$.
The goal is then for Alice and Bob to
compute an approximate Nash equilibrium (Definition~\ref{d:ene}) 
%in a
%two-party model (with Alice and Bob initially knowing only their own
%payoff matrices) 
with as little communication as possible.

This lecture and the next explain all of the main ideas behind the
following result:
\begin{theorem}[\citet{BR17}]\label{t:br17}
There is a constant $c > 0$ such that, for all sufficiently small
constants $\eps > 0$ and sufficiently large~$N$, the randomized
communication complexity of computing an $\eNE$ is $\Omega(N^c)$.\footnote{This $\Omega(N^c)$ lower bound was recently
  improved to $\Omega(N^{2-o(1)})$ by G\"o\"os and
  Rubinstein~\cite{GR18} (for constant $\eps > 0$ and $N
  \rightarrow \infty$).  The proof follows the same
high-level road
  map 
 used here (see Section~\ref{s:map}), with a number of additional optimizations.} 
\end{theorem}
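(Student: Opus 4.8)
The plan is to prove the bound by a chain of reductions that begins with a query-complexity lower bound and transports it, via a simulation (``lifting'') theorem, all the way to the bimatrix-game problem. Concretely I would go through \eol\ (the query problem of finding a degree-one vertex other than a given source in an implicitly described line on $M$ vertices), then \cceol\ (a two-party communication version of it), then \ccbfp\ (the two-party problem of finding an $\eps$-approximate Brouwer fixed point of a map whose description is split between Alice and Bob), and finally the $\eps$-$\NE$ problem for an $N \times N$ bimatrix game --- this last arrow being the classical reduction from fixed points to Nash equilibria that also underlies the $\PPAD$-hardness of Nash. The reason to organize things this way is that each arrow loses only a polynomial factor, so a polynomially large query lower bound for \eol\ survives as an $\Omega(N^c)$ communication lower bound.

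For the first two arrows: a standard adversary argument --- hiding the line with a random labelling so that, short of essentially traversing it, one cannot locate its far endpoint --- gives randomized query complexity $\widetilde\Omega(M)$ for \eol. I would then compose the \eol\ instance with a small two-party gadget on $\Theta(\log M)$ bits per player, chosen so that learning any single bit of the \eol\ oracle answers already requires communication, and invoke a Raz--McKenzie-style simulation theorem in its search-problem form to conclude that the randomized communication complexity of \cceol\ is $\widetilde\Omega(M)$. This is the step where the communication-complexity machinery built up in these two lectures gets used.

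For the last two arrows I would follow the Hirsch--Papadimitriou--Vavasis recipe: embed the line of a \cceol\ instance as a thickened path inside $[0,1]^k$ with $k = \Theta(\log M)$, and build an $O(1)$-Lipschitz, ``communication-decoupled'' map $F = F_A \circ F_B : [0,1]^k \to [0,1]^k$ (Alice can evaluate $F_A$, Bob $F_B$) whose $\eps$-approximate fixed points all lie near the endpoints of the path; a cheap protocol for \ccbfp\ would then let the two players locate an endpoint cheaply, contradicting the previous step. Finally discretize $[0,1]^k$ to a grid of side $\delta = \Theta(\eps)$, which has $N = \delta^{-k} = M^{O(1)}$ points since $k$ is only logarithmic in $M$, and encode $F$ in the usual imitation game: Alice's strategies are grid points and she is rewarded for matching Bob's announced point, while Bob's strategies are grid points and he is rewarded for announcing a rounding of $F$ applied to Alice's point. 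Verifying that every $\eps$-$\NE$ of this $N \times N$ game is concentrated near a grid point that is an approximate fixed point of $F$ closes the loop; composing the polynomial losses along the chain then yields $\Omega(N^c)$ for a suitable constant $c$ and all sufficiently small constant $\eps$.

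The delicate point, and the technical heart of the argument, is combining the third arrow with the insistence that $\eps$ be an absolute constant: the Hirsch--Papadimitriou--Vavasis-type construction must be made genuinely robust, so that a fixed point of \emph{constant} accuracy --- rather than accuracy shrinking polynomially in $M$ --- still pins down a line endpoint, while the map stays $O(1)$-Lipschitz and factors cleanly between the two players. It is exactly this robust, low-dimensional, decoupled encoding that keeps $N$ polynomial in $M$, hence makes the exponent $c$ a positive constant; the query adversary argument, the application of the simulation theorem, and the propagation of approximation parameters through the imitation game are comparatively routine by contrast.
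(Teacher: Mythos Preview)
Your four-step roadmap matches the paper's exactly, and you correctly flag the constant-$\eps$ robust embedding as delicate. But you have misplaced the difficulty, and there is a genuine gap in your last step. You describe the \ccbfp $\le$ $\eNE$ reduction as ``the usual imitation game'' in which Bob is ``rewarded for announcing a rounding of $F$ applied to Alice's point'' --- but Bob \emph{cannot compute} $F$ at Alice's point: $F$ depends on both players' private inputs, and in your own decomposition $F = F_A \circ F_B$ Bob knows only $F_B$. So Bob cannot evaluate his own payoff matrix, and the reduction does not produce a well-defined bimatrix game. The paper's fix --- which it flags as one of the two main innovations of \cite{BR17}, not as ``comparatively routine'' --- is not to factor $F$ as a composition at all, but to exploit the \emph{local decodability} of the \eol-derived Brouwer function: evaluating $F(x)$ requires only the predecessor/successor data of at most three \eol\ vertices, which Alice and Bob can exchange via an $O(\log M)$-bit protocol~$P$. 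The strategy sets in the imitation game are then \emph{enlarged} so that each player announces, alongside a grid point, a short transcript of her side of~$P$; the payoffs penalize inconsistent transcripts and replace the true $f(x)$ in Bob's payoff by a speculative $f_\alpha(x)$ that Bob can compute from Alice's announced advice~$\alpha$ alone. Only with this extra layer does the McLennan--Tourky argument go through with constant~$\eps$.

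A second omission bears on your ``technical heart'': the Hirsch--Papadimitriou--Vavasis construction you invoke is tailored to the $\ell_\infty$ norm, but extracting an $\ell_\infty$-approximate fixed point from an $\eps$-$\NE$ of the imitation game loses a factor of the dimension (the $\tfrac{1}{d}$ normalization in the payoffs \eqref{eq:apayoff}--\eqref{eq:bpayoff} lets a player misbehave badly on $o(d)$ coordinates). The paper therefore works throughout in the \emph{normalized $\ell_2$ norm}, which in turn forces the vertex embedding into the hypercube to have constant pairwise distance --- essentially an error-correcting-code property --- rather than the na\"{\i}ve HPV embedding. This $\ell_2$ modification, due to \cite{R16}, is precisely what makes the robust constant-$\eps$ step work, and it is not separable from the design of the final game.
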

For our purposes, a randomized protocol with communication cost~$b$
always uses at most $b$ bits of communication, and 
terminates with
at least one player knowing an $\eNE$ of the game 
with probability at
least $\tfrac{1}{2}$ (over the protocol's coin flips).

% Recall bimatrix games: Alice and Bob each has an $N \times N$ matrix of strategies (without loss of generality, assume that they have the same number of strategies, otherwise just pad with dummy strategies).  In the communication model,  Alice knows only $A$ (not $B$)  and Bob knows only $B$ (not $A$). Assume that matrices are normalized. 

% An $\epsilon$-Nash Equilibrium ($\epsilon$-$\NE$) is a strategy profile $(\hat{x}, \hat{y})$,  where  $\hat{x}$ is a probability distribution over Alice's strategies (rows), and $\hat{y}$ a distribution over Bob's strategies (columns), satisfying the following conditions:
% %and each player chooses (up to an additive small constant $\epsilon$, e.g. $\epsilon=10^{-6}$) their best response: \\ 
%\begin{enumerate}
%    \item $\hat{x}^{\top} A \hat{y} \geq x^{\top} A \hat{y} - \epsilon, \ %\forall x$;
%    \item $\hat{x}^{\top} B \hat{y}  \geq \hat{x}^{\top} B y  - \epsilon, \ %\forall y$.
%\end{enumerate} 
%Here, at the end of the protocol, one player should know the NE.

%\begin{theorem}[\cite{BR17}] 
%There exists $c > 0$ such that, for all sufficiently small $\epsilon>0$ and %sufficiently large $N>0$, the randomized communication complexity of %$\epsilon$-$\NE$ is  $\Omega(N^c)$. 
%\end{theorem}

Thus, while there are lots of obstacles to players reaching an
equilibrium of a game (see also Section~\ref{ss:whocares}),
communication alone is already a significant bottleneck.
A corollary of Theorem~\ref{t:br17} is that there can be no uncoupled
dynamics (Section~\ref{s:uncoupled}) that converge to an approximate
Nash equilibrium in a sub-polynomial number of rounds 
in general bimatrix games (cf.,
the guarantee in Theorem~\ref{t:sfp} for smooth fictitious play in
zero-sum games).  This is because uncoupled dynamics can be simulated
by a randomized communication protocol with logarithmic overhead (to
communicate which strategy gets played each round).\footnote{See also
  footnote~\ref{foot:uncoupled} in Solar Lecture~1.}
This corollary should be regarded as a fundamental contribution to
pure game theory and economics.
% (perhaps even more so than the
%computational hardness results we'll see later).

The goal of this and the next lecture is to sketch a full proof of the
lower bound in Theorem~\ref{t:br17} for deterministic communication
protocols.
We do really care about randomized protocols, however, as these are
the types of protocols induced by uncoupled dynamics (see
Section~\ref{ss:implycomm}).  The good news is that the argument for the
deterministic case will already showcase all of the 
conceptual ideas in the proof of Theorem~\ref{t:br17}.  Extending the proof to
randomized protocols requires substituting a simulation theorem for
randomized protocols (we'll use only a simulation theorem for
deterministic protocols, see
Theorem~\ref{t:rm}) and a few other minor
tweaks.\footnote{When \citet{BR17} first proved their result (in late
  2016), the state-of-the-art in simultaneous theorems for randomized
  protocols was   much more primitive than for deterministic protocols.  This
  forced \citet{BR17} to use a relatively weak simulation theorem for
  the randomized case
  (by \citet{G+16}), which led to a number of additional technical
  details in the proof.  Amazingly, a full-blown randomized simulation
  theorem  was published shortly thereafter~\cite{A+17,GPW17}!
%  announced during this workshop~\cite{GPW17}!  
With this in hand, 
extending the argument here for deterministic protocols to randomized
protocols is relatively straightforward.}

\section{Naive Approach: Reduction From \disj}\label{s:naive}

To illustrate the difficulty of proving a result like
Theorem~\ref{t:br17}, consider a
naive attempt that tries to reduce, say, the \disj problem to
the problem of computing an $\epsilon$-$\NE$, with
YES-instances mapped to games in which
all equilibria have some property $\Pi$, and NO-instances mapped
to games in which no equilibrium has property~$\Pi$
(Figure~\ref{f:naive}).\footnote{Recall the \disj function: Alice and
  Bob have input strings $a,b \in \{0,1\}^n$, and the output of the
  function is ``0'' if there is a coordinate $i \in \{1,2,\ldots,n\}$
  with $a_i = b_i = 1$ and ``1'' otherwise.  One of the first things
  you learn in communication complexity is that the nondeterministic
  communication complexity of \disj (for certifying 1-inputs) is
  $n$ (see e.g.~\cite{KN96,w15}).  And of course one of the
  most famous and useful results in communication complexity is that
  the function's randomized communication complexity (with two-sided
  error) is $\Omega(n)$~\cite{KS92,R92}.}
For the reduction to be useful, $\Pi$ needs to be some property that
can be checked with little to no communication, such as ``Alice plays
her first strategy with positive probability'' or ``Bob's strategy has
full support.''
The only problem is that {\em this is impossible!}
The reason is that the problem of computing an approximate Nash
equilibrium has polylogarithmic {\em nondeterministic}
communication  complexity
(because of the existence of sparse approximate equilibria, see
Theorem~\ref{t:lmm} and Corollary~\ref{cor:lmm1}), while the \disj
function does not (for 1-inputs).  A reduction of the proposed form
would translate a nondeterministic lower bound for the latter problem
to one for the former, and hence cannot exist.\footnote{Mika G\"o\"os
  (personal communication, January 2018) points out that there are more
  clever reductions from \disj, starting with Raz and
  Wigderson~\cite{RW90}, that {\em can} imply strong lower bounds on the
  randomized communication complexity of certain problems with low
  nondeterministic communication complexity;
and that it is plausible that a Raz-Wigderson-style proof, such as that for
search problems in G\"o\"os and Pitassi~\cite{GP18}, could be adapted
to give an alternative proof of Theorem~\ref{t:br17}.}

\begin{figure}
\centering
\includegraphics[width=.6\textwidth]{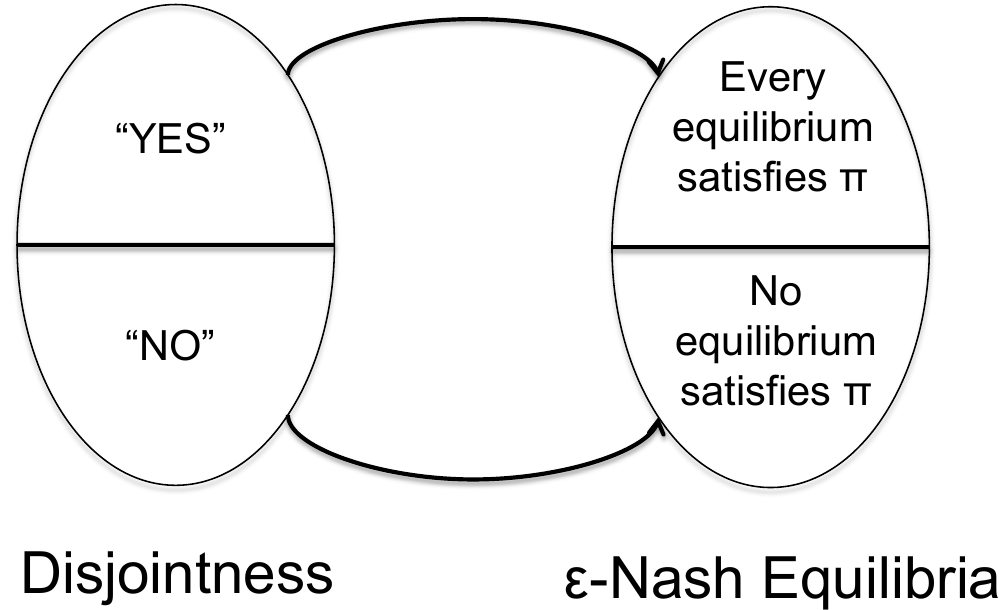}
\caption{A naive attempt to reduce the \disj problem to the
  problem of computing an approximate Nash equilibrium.}
\label{f:naive}
\end{figure}

Our failed reduction highlights two different challenges.  The first
is to resolve the typechecking error that we encountered between a
standard decision problem, where there might or might not be a
witness (like \disj, where a witness is an element in the intersection),
and a total search problem where there is always a witness
(like computing an approximate Nash equilibrium, which is guaranteed
to exist by Nash's theorem).
The second challenge is to figure out how to prove a strong lower
bound on the
deterministic or randomized communication complexity of computing an
approximate Nash equilibrium without inadvertently proving the same
(non-existent) lower bound for nondeterministic protocols.  To
resolve the second challenge, we'll make use of simulation theorems
that lift query complexity lower bounds to communication complexity
lower bounds (see Section~\ref{s:cceol}); these are tailored to a
specific computational model, like deterministic or randomized
protocols.  For the first challenge, we need to identify a total
search problem with high communication complexity.  That is, for total
search problems, which should be the analog of {\sc 3SAT} or \disj?  
%At this
%point, we know 
The correct answer turns out to be {\em fixed-point computation}.

% But then we would get a
% \emph{nondeterministic} communication protocol for DISJOINTNESS of
% cost $O(\poly\log N)$, contradicting the known lower bound
% $\Omega(N)$.

\section{Finding Brouwer Fixed Points (The \bfp Problem)}\label{s:bfp}

This section and the next describe reductions from computing Nash
equilibria to computing fixed points, and from computing fixed points
to a path-following problem.  These reductions are classical.  The
ingredients of the proof in Theorem~\ref{t:br17} are {\em reductions in
  the opposite direction}; these are discussed in Solar Lecture~3.

\subsection{Brouwer's Fixed-Point Theorem}

{\em Brouwer's fixed-point theorem} states that whenever you stir your 
coffee, there will be a point that ends up exactly where it began.
Or if you prefer a more formal statement:

%Thus we need to use \emph{total} problems! We'll use the following.

\begin{theorem}[Brouwer's Fixed-Point Theorem (1910)]\label{t:bfp}
If $C$ is a compact convex subset of~$\R^d$, and $f\colon C\to C$ is
continuous, then 
there exists a {\em fixed point}: a point $x\in C$ with $f(x)=x$.
\end{theorem}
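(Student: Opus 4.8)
The plan is to prove \brouwer's theorem via \emph{Sperner's Lemma}, the combinatorial core of the statement; this route is completely elementary and also previews the path-following arguments that appear later in the lecture. The first move is to reduce to the standard simplex. Any compact convex $C \subseteq \R^d$ with nonempty interior is homeomorphic to the closed unit ball, and hence to the standard simplex $\Delta^n = \{(t_0,\ldots,t_n) : t_i \ge 0,\ \sum_{i=0}^n t_i = 1\}$ with $n = d$; if $C$ has empty interior it lies in a proper affine subspace, so by induction on $d$ (with the trivial base case $d = 0$) we may assume $C = \Delta^n$. Since a fixed point transports across a homeomorphism, it suffices to show that every continuous $f \colon \Delta^n \to \Delta^n$ has one.

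Next I would prove Sperner's Lemma: for any triangulation of $\Delta^n$ and any coloring $\chi$ of its vertices by $\{0,1,\ldots,n\}$ that is \emph{proper} — meaning $v_{\chi(v)} > 0$ for every triangulation vertex $v$ — the number of \emph{rainbow} cells (the $n$-simplices of the triangulation whose $n+1$ vertices carry all $n+1$ colors) is odd, in particular positive. The proof is by induction on $n$ through a parity count: tally the pairs $(\sigma, \tau)$ where $\sigma$ is an $n$-cell and $\tau$ is a facet of $\sigma$ whose $n$ vertices carry exactly the colors $\{0,\ldots,n-1\}$. Each such facet lying in the interior of $\Delta^n$ is shared by exactly two $n$-cells, while each boundary one must lie in the face $\{t_n = 0\}$ (properness rules out the faces $\{t_i = 0\}$ for $i \le n-1$), and the inductive hypothesis applied to that face says there is an odd number of them. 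On the other hand, an $n$-cell contributes $0$ or $2$ such facets unless it is rainbow, in which case it contributes exactly $1$; combining the two counts modulo $2$ forces the number of rainbow cells to be odd.

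Finally I would combine the two ingredients. Fix a sequence of triangulations of $\Delta^n$ with mesh tending to $0$, and color each vertex $v = (v_0,\ldots,v_n)$ by any index $i$ with $v_i > 0$ and $f(v)_i \le v_i$; such an $i$ exists because otherwise $f(v)_i > v_i$ for every $i$ in the support of $v$, forcing $\sum_i f(v)_i > \sum_i v_i = 1$, a contradiction. This coloring is proper by construction, so Sperner's Lemma supplies a rainbow cell in each triangulation. By compactness of $\Delta^n$ the vertices of these shrinking rainbow cells converge, along a subsequence, to a single point $x^\star$, and continuity of $f$ then gives $f(x^\star)_i \le x^\star_i$ for every $i$. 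Since $\sum_i f(x^\star)_i = \sum_i x^\star_i = 1$, equality must hold coordinatewise, i.e.\ $f(x^\star) = x^\star$.

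I expect the main obstacle to be Sperner's Lemma itself: the inductive boundary-facet parity bookkeeping is where essentially all of the genuine content sits, whereas the topological reduction to $\Delta^n$ and the compactness-plus-continuity limiting step are routine. A secondary point deserving a little care is verifying that the Sperner coloring used in the last step is genuinely proper on \emph{every} face of $\Delta^n$, not merely at the original vertices.
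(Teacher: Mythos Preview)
Your proof is correct and follows essentially the same Sperner-to-Brouwer route as the paper: reduce to the simplex, color each triangulation vertex by a coordinate that does not increase under $f$, invoke Sperner's Lemma, and pass to a limit. The one noteworthy difference is in the proof of Sperner's Lemma itself---the paper gives the 2D graph-theoretic/handshake argument (vertices are sub-triangles, edges cross bichromatic sides, trichromatic triangles are the odd-degree nodes), which is precisely the version that previews the \eol path-following used later in the lecture, whereas your inductive facet-count, while perfectly valid, does not make that connection as explicit despite your opening remark.
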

All of the hypotheses are necessary.\footnote{If convexity is dropped,
  consider rotating an annulus centered at the origin.  If boundedness is
  dropped, consider $x \mapsto x+1$ on $\R$.  If closedness is dropped,
  consider $x \mapsto \tfrac{x}{2}$ on $(0,1]$.  If continuity is
  dropped, consider $x \mapsto (x+ \tfrac{1}{2}) \bmod 1$ on $[0,1]$.
Many more general fixed-point theorems are known, and find
applications in economics and elsewhere; see e.g.~\cite{B85,M15}.}
We will be interested in a computational version of Brouwer's
fixed-point theorem, the {\em \bfp problem}: 
%\begin{quote}
\begin{mdframed}[style=offset,frametitle={The \bfp Problem (Generic Version)}]
given a description of a compact
convex set $C \subseteq \R^d$ and a continuous function $f:C \rightarrow C$,
output an {\em $\eps$-approximate fixed point}, meaning a point $x \in
C$ such that $\n{f(x)-x} < \eps$.
\end{mdframed}
The \bfp problem, in its many different forms, plays a starring
role in the study of equilibrium computation.
%There are many different versions of the \bfp problem.  
The set $C$ is typically fixed in advance, for example to the
$d$-dimensional hypercube.  
While much of the work on the \bfp problem has
focused on the $\ell_{\infty}$ norm (e.g.~\cite{HPV89}), one
innovation in the proof of Theorem~\ref{t:br17} is to instead use a
normalized version of the~$\ell_2$ norm (following \citet{R16}).  

Nailing down the problem precisely
requires committing to a family of succinctly described continuous
functions~$f$.  The description of the family used in the proof of
Theorem~\ref{t:br17} is technical and best left to
Section~\ref{s:ccbfp}.  
Often (and in these lectures),
the family of functions considered contains only $O(1)$-Lipschitz
functions.\footnote{Recall that a function $f$ mapping a metric
  space $(X,d)$ to itself is {\em $\lambda$-Lipschitz} if
  $d(f(x),f(y)) \le \lambda \cdot d(x,y)$ for all $x,y \in X$.  That
  is, the function can only amplify distances between points by a
  $\lambda$ factor.}  In particular, this guarantees the existence of
an $\eps$-approximate fixed point with description length polynomial in the
dimension and $\log \tfrac{1}{\eps}$ (by rounding an
exact fixed point to its nearest neighbor on a suitably defined grid).

%\noindent
%The Brower Fixed-Point (BFP) Problem is: 
%\begin{quote}
%Given $f$, compute $x\in C$ such that $\|f(x)-x\|<\epsilon$.
%\end{quote}

%(Here different $C$'s are possible, but a $C$ is fixed in advance. Also, %different norms are possible; for our application, the norm is to be decided.) 

\subsection{From Brouwer to Nash}\label{ss:nashpf}

Fixed-point theorems have long been used to prove
equilibrium existence results, including the original proofs of
the Minimax theorem (Theorem~\ref{t:minmax}) and Nash's theorem
(Theorem~\ref{t:nash}).\footnote{In fact, the story behind von
  Neumann's original proof of
  the Minimax theorem is a little more complicated and nuanced; see
  \citet{K01} for a fascinating and detailed discussion.}
Analogously, algorithms for computing
(approximate) fixed points can be used to compute (approximate) Nash
equilibria.

\begin{fact}
Existence/computation of $\epsilon$-$\NE$ reduces to that of
$\epsilon$-BFP.
\end{fact}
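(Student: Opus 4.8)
The plan is to exhibit the standard Nash--Brouwer reduction, due to Nash himself, which recasts the equilibrium condition of a bimatrix game $(A,B)$ as the fixed-point condition of an explicit continuous self-map of a compact convex set. First I would take $C$ to be the product of the two strategy simplices, $C = \Delta_m \times \Delta_n$, which is compact and convex, and I would define a displacement map $f : C \to C$ that, given a strategy profile $(x,y)$, increases the weight on each pure strategy in proportion to how much that strategy would gain by deviating. Concretely, for the row player set $g_i(x,y) = \max\{0,\, e_i^{\top} A y - x^{\top} A y\}$ (the ``regret'' of row $i$), and define the updated probability of row $i$ to be $(x_i + g_i(x,y)) / (1 + \sum_{k} g_k(x,y))$; do the symmetric thing for the column player using $B$. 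This $f$ is continuous (each $g_i$ is a max of two affine functions, and the normalization denominator is bounded away from zero), so Brouwer's theorem (Theorem~\ref{t:bfp}) guarantees a fixed point, and it is routine to check that a fixed point of $f$ is exactly a Nash equilibrium: at a fixed point no strategy has positive regret, which is precisely Definition~\ref{d:ne}'s condition (generalized to bimatrix games).

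For the \emph{approximate} version I would keep the same map but argue quantitatively: an $\eps'$-approximate fixed point of $f$ (in the relevant norm) must have all regrets bounded by some $\delta(\eps')$ with $\delta(\eps') \to 0$ as $\eps' \to 0$, because if some pure strategy had regret exceeding $\delta$ then the normalized update would move that coordinate of $(x,y)$ by a quantity bounded below in terms of $\delta$, contradicting near-stationarity. Hence choosing $\eps'$ small enough as a function of $\eps$ (and of the Lipschitz constant and dimension) yields a point whose every pure-strategy regret is at most $\eps$, which is exactly an $\eps$-$\NE$ in the sense of Definition~\ref{d:ene}. This gives the claimed reduction of \emph{both} existence and computation of $\eps$-$\NE$ to \bfp: an algorithm (or oracle) producing an $\eps'$-approximate fixed point of the explicitly constructed $f$ can be postprocessed, with no further communication or computation beyond evaluating $f$, into an $\eps$-$\NE$.

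The step I expect to require the most care is the quantitative translation between ``approximate fixed point of $f$'' and ``approximate equilibrium'': one has to be careful about which norm is used to measure $\n{f(x)-x}$ (the excerpt flags that the proof of Theorem~\ref{t:br17} uses a normalized $\ell_2$ norm rather than $\ell_\infty$, following \citet{R16}), about the dependence of $\delta(\eps')$ on the dimension, and about the fact that a coordinate with small displacement can still correspond to a strategy with non-negligible regret if that strategy currently has tiny probability. The clean way around the last point is to observe that the fixed-point equation forces \emph{every} coordinate's displacement to be small simultaneously, and to sum the displacement contributions; alternatively one can pass through the notion of a well-supported approximate equilibrium, as in the proof sketch of Theorem~\ref{t:lmm}. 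Everything else---continuity of $f$, compactness and convexity of $C$, and the equivalence at exact fixed points---is a direct verification.
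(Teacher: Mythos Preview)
Your proposal is correct but uses a different self-map of the strategy space than the paper does. You construct Nash's original gain-function map, where each coordinate is boosted by its positive regret and then renormalized. The paper instead follows the Geanakoplos~\cite{G03} construction: each component $f_i$ is the \emph{regularized best response}
\[
f_i(x_1,x_2) \;=\; \argmax_{x'_i \in \Delta_i}\Bigl\{ (x'_i)^{\top}\text{(payoff)} \;-\; \|x'_i - x_i\|_2^2 \Bigr\},
\]
i.e., the best response softened by a quadratic penalty for moving. Both maps are continuous, both have exactly the Nash equilibria as their fixed points, and both yield the fact as stated. The practical difference is precisely the issue you flag at the end: with the Geanakoplos map, the passage from an $\eps$-approximate fixed point (in $\ell_\infty$) to an $O(\eps)$-approximate Nash equilibrium is essentially immediate, because the strict concavity of $g_i$ in $x'_i$ directly controls the suboptimality of $x_i$ once $f_i(x) \approx x_i$. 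With Nash's original map, the normalization denominator couples all coordinates and a tiny-probability strategy can have large regret while contributing little displacement, so you need the extra argument you sketch (summing displacements, or detouring through well-supported equilibria). Your route is the historically first one; the paper's route is the one that makes the $\eps$-to-$\eps$ bookkeeping cleanest.
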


%This fact is true in every computational model that we'll consider.
To provide further details, let's sketch why
Nash's theorem (Theorem~\ref{t:nash}) reduces to Brouwer's
fixed-point theorem (Theorem~\ref{t:bfp}), following the version of
the argument in \citet{G03}.\footnote{This discussion is
  borrowed from \cite[Lecture 20]{f13}.}
Consider a bimatrix game $(A,B)$ and let $S_1,S_2$ denote the strategy
sets of Alice and Bob (i.e., the rows and columns).
The relevant convex compact
set is $C = \Delta_1 \times \Delta_2$, where $\Delta_i$
is the simplex representing the mixed strategies over~$S_i$.  We want
to define a continuous function $f:C \rightarrow C$, from mixed
strategy profiles to mixed strategy profiles, such that
the fixed points of~$f$ are the Nash equilibria of this game.  
We define $f$ separately for each component $f_i:C \rightarrow
\Delta_i$ for $i=1,2$.
A natural idea is to set~$f_i$ to be a best response of player~$i$ to
the mixed strategy of the other player.  This does
not lead to a continuous, or even well defined, function.
We can instead use a ``regularized'' version of this idea, defining
\begin{align}\label{eq:nash1}
f_1(\mixed_1,\mixed_2) = 
\underset{\mixed'_1 \in \Delta_1}{\argmax} \,\,
g_1(\mixed'_1,x_2),
\end{align}
where
\begin{align}\label{eq:nash2}
g_1(\mixed'_1,x_2) = 
%\underbrace{\expect[\strati \sim \mixed'_i, \stratsmi \sim
%    \mixedsmi]{\pi_i(\strats)}}_{\text{linear in $\mixed'_i$}}
\underbrace{%
%\expect[j \sim \mixed'_1, \ell  \sim \mixed_2]{A_{j\ell}}
(x'_1)^{\top}Ax_2}_{\text{linear in $\mixed'_1$}}
%\underbrace{\mathbf{E}_{\substack{\strati \sim \mixed'_i\\ \stratsmi \sim
%\mixedsmi}}\left[ \pi_i(\strats) \right]}_{\text{linear in $\mixed'_i$}}
- \underbrace{\|\mixed'_1 - \mixed_1 \|^2_2}_{\text{strictly convex}},
\end{align}
and similarly for $f_2$ and $g_2$ (with Bob's payoff matrix $B$).
The first term of the function $g_i$ encourages a best response while
the second ``penalty term'' discourages big changes to player $i$'s mixed
strategy.  Because the function $g_i$ is strictly concave in
$\mixed'_i$, $f_i$ is well defined.  The function $f=(f_1,f_2)$ is
continuous (as you should check).  By definition, every Nash
equilibrium of the given game is a fixed point of $f$.  For the
converse, suppose that $(x_1,x_2)$ is not a Nash equilibrium, with
Alice (say) able to
increase her expected payoff by deviating unilaterally from $\mixed_1$
to $\mixed'_1$.  A simple computation shows that, for sufficiently
small $\eps > 0$,
$g_1((1-\eps)\mixed_1 + \eps \mixed'_1,\mixed_2) >
g_1(\mixed_1,\mixed_2)$,
and hence $(x_1,x_2)$ is not a fixed point of $f$ (as you should check).

Summarizing, an oracle for computing a Brouwer fixed point immediately
gives an oracle for computing a Nash equilibrium of a bimatrix game.
The same argument applies to games with any (finite) number of
players.  The same argument also shows that an oracle for computing an
$\eps$-approximate fixed point in the~$\ell_{\infty}$ norm can be used
to compute an $O(\eps)$-approximate Nash equilibrium of a
game.
The first high-level goal of the proof of
Theorem~\ref{t:br17} is to reverse the direction of the reduction---to
show that the problem of computing an approximate Nash equilibrium is
as general as computing an approximate fixed point, rather than merely
being a special case.
\begin{mdframed}[style=offset,frametitle={Goal \#1}]
\centering
\bfp $\leq$ $\epsilon$-$\NE$
\end{mdframed}
This goal follows in the tradition of a sequence of celebrated
computational hardness results last decade for computing an exact Nash
equilibrium (or an $\eps$-approximate Nash equilibrium with $\eps$ 
polynomial in $\tfrac{1}{n}$)~\cite{DGP09,CDT09}.

There are a couple of immediate issues.  First, it's not clear how to
meaningfully define the \bfp problem in a two-party communication
model---what are Alice's and Bob's inputs?  We'll address this issue
in Section~\ref{s:ccbfp}.  Second, even if we figure out how to define
the \bfp problem and implement goal \#1, so that the $\eNE$ problem is
at least as hard as the \bfp problem, what makes us so sure that the
%\bfp problem 
latter is hard?  This brings us to our next topic---a
``generic'' total search problem that is hard almost by definition and
can be used to transfer hardness to other problems (like \bfp) via
reductions.\footnote{For an analogy, a ``generic'' hard decision problem for
  the complexity class $\NP$ is: given a description of a
  polynomial-time verifier, does there exist a witness (i.e., an
  input accepted by   the verifier)?}

%Goal 1 is to show: 

\section{The End-of-the-Line (\eol) Problem}\label{s:eol}

\subsection{Problem Definition}

For equilibrium and fixed-point computation problems, it turns out
that the appropriate ``generic'' problem involves following a path in
%an exponential-size 
a large graph; see also Figure~\ref{f:ppad}.

\begin{figure}
\centering
\includegraphics[width=.85\textwidth]{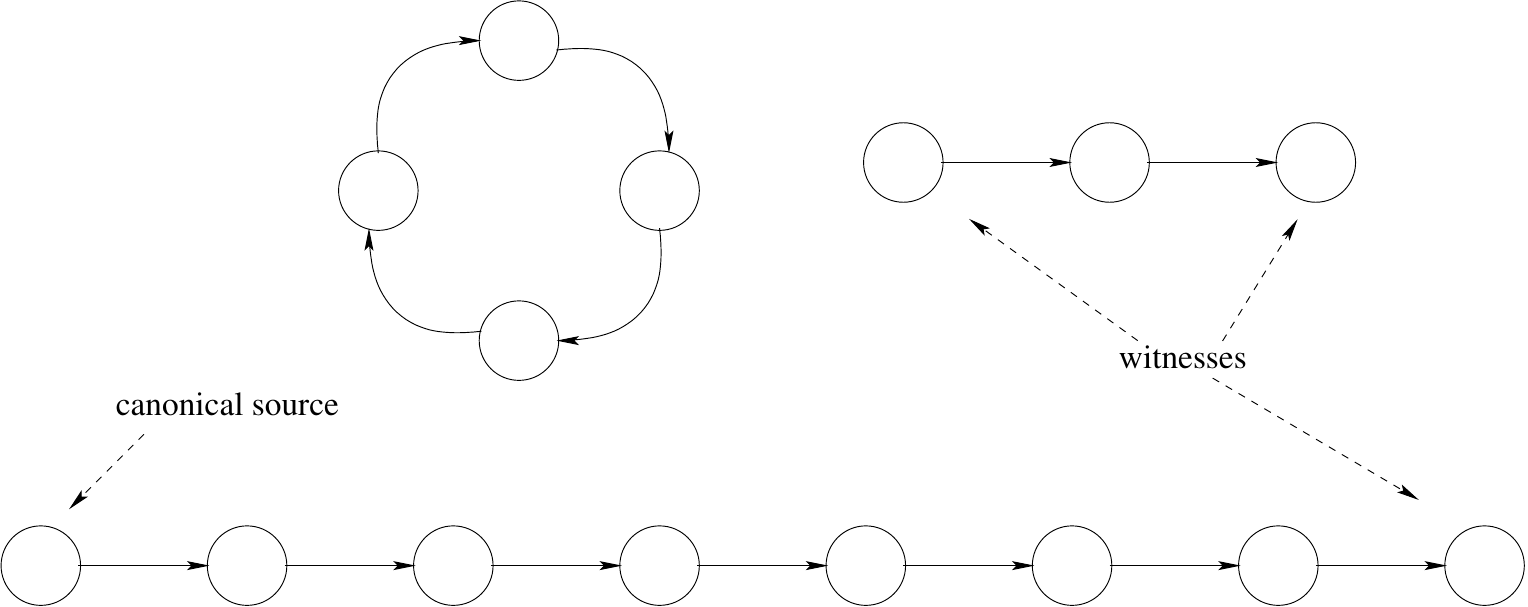}
\caption{An instance of the \eol problem
  corresponds to a directed graph
  with all in- and out-degrees at most 1.  Solutions correspond to
  sink vertices and source vertices other than the given one.}\label{f:ppad}
\end{figure}

%We recall the following problem, called End of Line (EoL):
%\begin{quote}
%Given a graph $G=(V,E)$ on $N$ nodes, with in- and out-degree $\leq 1$, and a %source $s\in V$, find a source (or sink) other than the default. 
%\end{quote}
\begin{mdframed}[style=offset,frametitle={The \eol Problem (Generic Version)}]
given a description of a directed graph~$G$ with maximum in- and
out-degree~1, and a source vertex $s$ of $G$, find either a sink
vertex of $G$ or a source vertex other than~$s$.
\end{mdframed}
The restriction on the in- and out-degrees forces the graph $G$ to consist
of vertex-disjoint paths and cycles, with at least one path (starting
at the source $s$).  The \eol problem is a total search
problem---there is always a solution, if nothing else the other end of
the path that starts at~$s$.  Thus an instance of \eol can always be
solved by rotely following the path from $s$; the question is whether
or not there is a more clever algorithm that always avoids searching
the entire graph.

It should be plausible that the \eol problem is hard, in the sense
that there is no algorithm that always improves over rote
path-following; see also Section~\ref{s:eollb}.
But what does it have to do with the \bfp problem?  A
lot, it turns out.

% For example, if $G$ consists of two disjoint paths $s\to u\to\dots\to t$, $v\to\dots\to w$, a cycle $x\to y\to z\to x$, and an isolated node $r$, then the solution to EoL could be any of: $t$, $v$, $w$, or $r$. 

\begin{fact}
The problem of computing an approximate Brouwer fixed point reduces to
the \eol problem (i.e., $\epsilon$-BFP $\leq$ EoL).
\end{fact}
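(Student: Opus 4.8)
The plan is to construct a classical Sperner-style / simplicial-subdivision reduction from the $\eps$-BFP problem (on the hypercube $C = [0,1]^d$, with the $\ell_\infty$ norm, and $O(1)$-Lipschitz $f$) to the \eol problem. The overall strategy: triangulate the cube at a sufficiently fine scale $\delta$, color each grid vertex by a ``direction'' label derived from $f$, and then use the combinatorial structure of a proper coloring to produce an \eol instance whose line-endpoints correspond to (small simplices containing) approximate fixed points. The key parameters to track are: the Lipschitz constant $\lambda$ of $f$, the target accuracy $\eps$, the dimension $d$, and the grid spacing $\delta$, which must be chosen so that $\lambda \delta + \delta = O(\delta) < \eps$, i.e.\ $\delta = \Theta(\eps)$ (treating $d$ as a parameter absorbed into constants, or carrying an explicit $\poly(d)$ dependence).

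\textbf{Step 1 (labeling).} First I would assign to each grid point $x$ a label in $\{0,1,\dots,d\}$ recording a coordinate in which $f(x) - x$ does not point ``outward'' in a Sperner-consistent way — the standard Kuhn/Scarf labeling used to prove Brouwer from Sperner's lemma. The boundary condition on $f$ (that it maps $C$ into $C$) guarantees that the labeling is ``proper'' on the boundary of the cube, so Sperner's lemma applies: there exists a fully-labeled (panchromatic) small simplex, and any point in such a simplex is an $O(\delta)$-approximate fixed point, hence an $\eps$-approximate fixed point for $\delta = \Theta(\eps)$.

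\textbf{Step 2 (path structure / the \eol graph).} The heart of the reduction is to realize the existence proof for Sperner's lemma as a path-following argument. I would use the standard ``door-to-door'' (Kuhn's path) construction: define a graph whose vertices are the simplices (or appropriate facets) of the subdivision, with an edge between two simplices that share a ``colorful'' facet (one carrying labels $\{0,1,\dots,d-1\}$). The degree constraints hold because a colorful facet is shared by at most two simplices, and on the boundary there is a canonically-identified starting simplex, giving a distinguished source $s$. Orienting these edges consistently (using the sign of an orientation/parity of the facet's vertex labels) yields a directed graph with in- and out-degree at most $1$ — exactly an \eol instance — whose non-$s$ endpoints are panchromatic simplices, i.e.\ approximate fixed points. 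Crucially, the \eol graph is given \emph{succinctly}: given a simplex (encoded by the integer coordinates of a base grid point plus a permutation specifying the Kuhn triangulation ordering, $O(d \log(d/\eps))$ bits), its \eol-successor and \eol-predecessor are computable by evaluating $f$ at the $O(d)$ vertices of the two incident simplices — so the reduction is efficient and respects whatever input conventions the \eol problem demands.

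\textbf{The main obstacle} will be bookkeeping the orientation/parity argument carefully enough to get a genuine \emph{directed} \eol instance (in-degree \emph{and} out-degree $\le 1$) rather than just an undirected graph of degree $\le 2$ — the undirected version only gives the weaker \textsc{PPA}-type problem, whereas \eol as defined here (with a source $s$ and directed edges) is the \textsc{PPAD}-flavored object, and the reduction from Brouwer genuinely needs the orientations to line up. This is the classical content behind ``\textsc{Brouwer} is in \textsc{PPAD}'' and is somewhat delicate; the rest (fineness of the grid, Lipschitz bound, boundary coloring) is routine. I would either present the orientation argument in full for the $2$-dimensional case and sketch the inductive extension, or cite the standard treatment (e.g.\ Papadimitriou's original argument) and focus the exposition on how the succinct description of $f$ yields a succinct \eol instance, since that succinctness is what actually matters for the communication-complexity application downstream.
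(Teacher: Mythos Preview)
Your proposal is correct and takes essentially the same approach as the paper: reduce \bfp to Sperner via a direction-based coloring of grid vertices (the paper colors by a coordinate that strictly decreases under $f$), then realize Sperner's lemma as a path-following argument through simplices sharing almost-panchromatic facets, yielding an \eol instance whose endpoints are panchromatic simplices and hence $O(\eps)$-approximate fixed points. The paper presents the argument on a subdivided $2$-simplex (with the higher-dimensional case deferred to a footnote) rather than on a Kuhn-triangulated hypercube, and it likewise flags---in a footnote rather than as the ``main obstacle''---exactly the orientation issue you raise, namely that the na\"ive Sperner graph is undirected and one must canonically direct the edges to land in \eol rather than its undirected (\textsc{PPA}-type) analogue.
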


\subsection{From \eol to Sperner's Lemma}

The basic reason that fixed-point computation reduces to
path-following is {\em Sperner's lemma}, which we recall next (again
borrowing from~\cite[Lecture 20]{f13}).
Consider a subdivided triangle in the plane
(Figure~\ref{f:sperner}).  A {\em legal coloring} of its vertices colors
the top corner vertex red, the left corner vertex green, and the right
corner vertex blue.  A vertex on the boundary must have one of the two
colors of the endpoints of its side.  Internal vertices are allowed to
possess any
of the three colors.  
A small triangle is {\em trichromatic}
if all three colors are represented at its vertices.
\begin{figure}
\centering
\includegraphics[width=.3\textwidth]{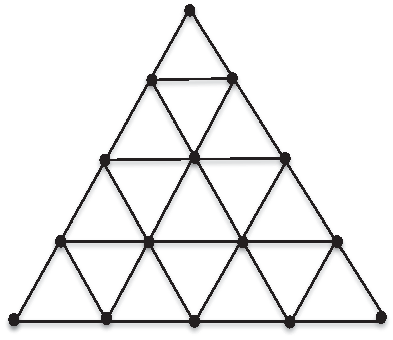}
\caption[Sperner's lemma]{A subdivided triangle in the plane.}
\label{f:sperner}
\end{figure}
Sperner's lemma then asserts that for every legal coloring, there is 
at least one trichromatic triangle.\footnote{The same result and
    proof extend by  induction to higher 
  dimensions.  Every subdivided simplex in $\R^n$ with vertices
  legally colored with $n+1$ colors has an odd number of panchromatic
  subsimplices, with a different color at each vertex.\label{foot:sperner}}
%make exercise/problem?  no, issue of what the subdivision is makes it too
%annoying
\begin{theorem}[Sperner's Lemma~\cite{sperner}]\label{t:sperner}
For every legal coloring of a subdivided triangle, there is an odd
number of trichromatic triangles.
\end{theorem}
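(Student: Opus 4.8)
The plan is to use the classical combinatorial parity (``door-counting'') argument, which simultaneously proves existence and the stronger odd-count statement, and which has the pedagogical virtue of exhibiting the connection to path-following in the \eol problem that motivates this section. First I would set up the auxiliary graph $G$ whose vertices are the small triangles of the subdivision together with one extra vertex representing the ``outside'' region beyond the bottom (green--blue) side of the big triangle. I connect two regions by an edge exactly when they share a small boundary edge whose two endpoints are colored green and blue (the two colors appearing on the bottom side). This is the standard ``door'' relation: a green--blue edge is a door between the two regions it separates.

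The key steps, in order: (i) Observe that every small triangle has either zero or exactly two green--blue edges on its boundary, \emph{unless} it is trichromatic, in which case it has exactly one. Hence in $G$ the trichromatic triangles are precisely the vertices of degree $1$, every other triangle-vertex has degree $0$ or $2$, and the outside vertex has degree equal to the number of green--blue edges along the bottom side. (ii) Show that the outside vertex has \emph{odd} degree: walking along the bottom side from the green corner to the blue corner, the color sequence starts at green and ends at blue, so the number of green$\to$blue plus blue$\to$green switches is odd; each switch is one green--blue door. (iii) Apply the handshake lemma: in any finite graph the number of odd-degree vertices is even. Since the outside vertex is odd, the number of odd-degree \emph{triangle} vertices must be odd as well; but those are exactly the trichromatic triangles. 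This gives the odd count, and in particular a nonzero count, proving the lemma.

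I would then remark (as in footnote~\ref{foot:sperner}) that the argument generalizes verbatim to $\R^n$ by induction on dimension: the role of ``green--blue edges on the bottom side'' is played by panchromatic $(n-1)$-subsimplices on the face using the first $n$ colors, the inductive hypothesis supplies the odd count on that face, and the same handshake argument on the analogous adjacency graph yields an odd number of panchromatic $n$-subsimplices. The only mild subtlety to be careful about—the step I expect to require the most attention in writing—is verifying the exhaustive case analysis in step (i): that a small triangle whose three vertices use only the colors green and blue has an \emph{even} number ($0$ or $2$) of green--blue sides, while a trichromatic triangle has exactly one, and that a triangle using green and red, or blue and red, or all red, contributes no doors. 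This is a finite check over the possible color multisets on three vertices, but it is the crux on which the whole parity argument rests.
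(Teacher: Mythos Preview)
Your proposal is correct and follows essentially the same door-counting/handshake argument as the paper's proof; the only difference is cosmetic---the paper uses red--green edges (along the left side) as doors rather than your green--blue edges (along the bottom), which is just a relabeling of the colors.
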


\begin{proof}
The proof is constructive.  
Define an undirected graph $G$ that has one vertex corresponding to each
small triangle, plus a 
source vertex that corresponds to the region
outside the big triangle.  The graph~$G$ has
one edge for each pair of small triangles that share a side with
one red and one green endpoint.
%red-green side of a triangle
%(Figure~\ref{f:sperner}).  
Every trichromatic small triangle corresponds to a
degree-one vertex of $G$.  Every small triangle with one green and two red
corners or two green and one red corners corresponds to a vertex with
degree two in $G$.  The source vertex of $G$ has degree equal to the
number of red-green segments on the left side of the big triangle,
which is an
odd number.  Because every undirected graph has an even number of
vertices with odd degree, there is an odd number of trichromatic
triangles.
\end{proof}

The proof of Sperner's lemma shows that following a path
from a canonical source vertex in a suitable graph leads to a trichromatic
triangle.  Thus, computing a trichromatic triangle of a legally
colored subdivided triangle reduces to the \eol
problem.%
%
% \footnote{We're  glossing over
%   some details of the reduction.  First, 
% there is a canonical way to direct the edges of the graph induced by a
% coloring, resulting in a directed graph as required for the \eol problem.
% %For example, edges reachable from the source vertex are
% %  directed away from the source.
%
% Second, the source vertex outside the triangle can have odd degree
% $2k-1$ larger than 1; splitting it into $k$ vertices 
% (a source vertex with out-degree 1 and $k-1$ vertices with in- and
% out-degree 1) yields a graph of the required form.}
%
\footnote{We're glossing over some details.  The graph in an
  instance of \eol is directed, while the graph $G$ defined in the
  proof of Theorem~\ref{t:sperner} is
  undirected.  There is, however, a canonical way to direct the edges of the
  graph~$G$.  
Also, the canonical source vertex in an \eol instance has out-degree
1, while the source of the graph $G$ has degree $2k-1$ for some
positive integer~$k$.  This can be rectified by splitting the source
vertex of $G$ into $k$ vertices, a 
source vertex with out-degree 1 and $k-1$ vertices with in- and
out-degree 1.}

\subsection{From Sperner to Brouwer}

Next we'll use Sperner's lemma to prove Brouwer's fixed-point theorem for a
2-dimensional simplex $\Delta$; higher-dimensional versions of
Sperner's lemma (see footnote~\ref{foot:sperner}) similarly imply
Brouwer's fixed-point theorem for simplices of arbitrary
dimension.\footnote{Every compact convex subset of
finite-dimensional Euclidean space is homeomorphic to a simplex of the
same dimension (by scaling and radial projection, essentially), and
homeomorphisms preserve fixed points, so
Brouwer's fixed-point theorem carries over from simplices to all
compact convex subsets of Euclidean space.\label{foot:radial}}
Let $f:\Delta \rightarrow \Delta$ be
a $\lambda$-Lipschitz function (with respect to the $\ell_2$ norm, say).
\begin{enumerate}

\item Subdivide $\Delta$ into sub-triangles with side length at most $\eps/\lambda$.  Think of the points of
  $\Delta$ as parameterized by three coordinates $(x,y,z)$, with
  $x,y,z \ge 0$ and $x+y+z=1$.

\item Associate each of the three coordinates with a distinct color.
%color with a  ``direction of movement.''
%  color for the direction opposite of 
%
To color a point $(x,y,z)$, consider its image $(x',y',z')$ under $f$
and choose the color of a coordinate that strictly decreased (if there
are none, then $(x,y,z)$ is a fixed point and we're done).
%did not increase (there is
%at least one).
Note that the conditions of Sperner's lemma are satisfied.
%Precisely, when 
%\[
%(x,y,z)\stackrel{f}{\mapsto} (x',y',z')\,,
%\]
%pick a coordinate that goes down and inherit the color of the corresponding

\item We claim that the center $(\xbar,\ybar,\zbar)$ of a trichromatic
  triangle must be an $O(\epsilon)$-fixed point (in the
  $\ell_{\infty}$ norm).  Because some corner of the triangle has its
  $x$-coordinate go down under $f$, $(\xbar,\ybar,\zbar)$ is at
  distance at most $\eps/\lambda$ from this corner, and $f$ is
  $\lambda$-Lipschitz, the
  $x$-coordinate of $f(\xbar,\ybar,\zbar)$ is at most $\xbar+O(\eps)$.
  The same argument applies to $\ybar$ and $\zbar$, which implies that
  each of the coordinates of $f(\xbar,\ybar,\zbar)$ is within
  $\pm O(\eps)$ of the corresponding coordinate of
  $(\xbar,\ybar,\zbar)$.

\end{enumerate}
Brouwer's fixed-point theorem now follows by taking the limit
$\eps \rightarrow 0$ and using the continuity of~$f$.

%By an inductive argument, Sperner's Lemma extends to simplices in any
%(finite) number of dimensions.
%The number of colors is one more than the number of dimensions, and a
%path-following argument implies the existence of (an odd number of)
%sub-simplices for which every corner has a different color.
%Analogs of Sperner's Lemma can also be proved by similar arguments for
%other natural sets, like hypercubes.

% and with
% witness corresponding to triangles whose
% This argument can be extended, by induction on the number of
% dimensions, to prove that the problem of computing an $\epsilon$-fixed
% point of an $n$-dimensional simplex reduces the 
% By induction on the number of dimensionsThe same argument can be reduce the problem of computing 

%This brings us to t
The second high-level goal of the proof of
Theorem~\ref{t:br17} is to reverse the direction of the above reduction
from \bfp to \eol.  That is, we would like to show that
the problem of computing an approximate Brouwer fixed point is 
as general as every path-following problem (of the form in \eol),
rather than merely being a special case.
\begin{mdframed}[style=offset,frametitle={Goal \#2}]
\centering
\eol $\leq$ \bfp
\end{mdframed}
If we succeed in implementing goals \#1 and \#2, and also prove
directly that the \eol problem is hard, then we'll have proven
hardness for the problem of computing an approximate Nash equilibrium.

\section{Road Map for the Proof of Theorem~\ref{t:br17}}\label{s:map}

%This implies that $\epsilon$-BFP $\leq$ EoL. Our goal now is to show that %$\epsilon$-BFP $\geq$ EoL. Here's the road map:\\
The high-level plan for the proof in the rest of this and the next
lecture is to show that
%\begin{equation}\label{eq:map1}
\[
\text{a low-cost communication protocol for $\eNE$}
\]
%\end{equation}
implies
%\begin{equation}\label{eq:map2}
\[
\text{a low-cost communication protocol for \ccbfp},
\]
%\end{equation}
where \ccbfp is a two-party version of the problem of computing a
fixed point (to be defined), which then implies
%\begin{equation}\label{eq:map3}
\[
\text{a low-cost communication protocol for \cceol},
\]
%\end{equation}
where \cceol is a two-party version of the \eol
%{\sc End-of-the-Line}
problem (to be defined), which then implies
%\begin{equation}\label{eq:map3}
\[
\text{a low-query algorithm for \eol}.
\]
%\end{equation}
Finally, we'll prove directly that the \eol problem does not admit a
low-query algorithm.  This gives us four things to prove (hardness of
\eol and the three implications); we'll tackle them one by one in
reverse order:
\begin{mdframed}[style=offset,frametitle={Road Map}]
\begin{itemize}

\item [] {\bf Step 1:} Query lower bound for \eol.

\item [] {\bf Step 2:} Communication complexity lower bound for \cceol
  via a simulation theorem.

\item [] {\bf Step 3:} \cceol reduces to \ccbfp.

\item [] {\bf Step 4:} \ccbfp reduces to $\eNE$.

\end{itemize}
\end{mdframed}
The first step (Section~\ref{s:eollb}) is easy.  The
second step (Section~\ref{s:cceol}) follows directly from one of the
simulation theorems alluded to in Section~\ref{s:ccpreamble}.  The
last two steps, which correspond to goals \#2 and \#1, respectively,
are harder and deferred to Solar Lecture~3.

Most of the ingredients in this road map were already present in a
paper by Roughgarden and Weinstein~\cite{RW16}, which was the
first paper to define and study two-party versions of fixed-point
computation problems, and to propose the use of simulation theorems
in the context of equilibrium computation.
One major innovation in \citet{BR17} is the use of the generic \eol
problem as the base of the reduction, thereby eluding the tricky
interactions in~\cite{RW16} between simulation theorems (which seem
inherently combinatorial) and fixed-point problems (which seem
inherently geometric).
%geometric constraints like Lipschitz-ness).
\citet{RW16} applied a simulation
theorem directly to a fixed-point problem (relying on strong query
complexity lower bounds for finding fixed points~\cite{HPV89,B16}),
which yielded a hard but unwieldy version of a two-party fixed-point
problem.  It is not clear how to reduce this version to the problem of
computing an approximate Nash equilibrium.  \citet{BR17} instead apply
a simulation theorem directly to the \eol problem, which results in a
reasonably natural two-party version of the problem (see
Section~\ref{s:cceol}).  There is significant flexibility in how to
interpret this problem as a two-party fixed-point problem, and the
interpretation in \citet{BR17} (see Section~\ref{s:ccbfp}) yields a
version of the problem that is hard and yet structured enough to be
solved using approximate Nash equilibrium computation.  A second
innovation in~\cite{BR17} is the reduction from \ccbfp to $\eNE$
(see Section~\ref{s:mt06}) which, while not difficult, is both new and
clever.\footnote{Very recently, \citet{GKP19} showed how to implement
  directly the road map of \citet{RW16}, thereby giving an alternative
  proof of Theorem~\ref{t:br17}.}

%\noindent protocol for $\epsilon$-NE $\Rightarrow$\\
%protocol for $\epsilon$-2BFP (2-player $\epsilon$-BFP) $\Rightarrow$\\
%protocol for $2$EoL (2-player EoL) $\Rightarrow$\\
%low-query algorithm for EoL.

\section{Step 1: Query Lower Bound for \eol}\label{s:eollb}

We consider the following ``oracle'' version of the \eol problem.
The vertex set $V$ is fixed to be $\zo^n$.
Let $N = |V| = 2^n$.  Algorithms are allowed to access the graph only
through vertex queries.  A query to the vertex~$v$ reveals its
alleged predecessor $pred(v)$ (if any, otherwise $pred(v)$ is NULL) and its
alleged successor $succ(v)$ (or NULL if it has no successor).
%The model is as follows: the vertex set $V=\{0,1\}^n$, $N=2^n$; query
%$(v)$ returns $pred(v)$, and $succ(v)$. 
The interpretation is that the directed edge $(v,w)$ belongs to the
implicitly defined directed graph $G=(V,E)$ if and only if both
$succ(v)=w$ and $pred(w)=v$.  These semantics guarantee that the graph
has in- and out-degree at most~1.\footnote{For the proof of
  Theorem~\ref{t:br17}, we could restrict attention to instances that
  are consistent in the sense that $succ(v)=w$ if and only if
  $pred(w)=v$.  The computational hardness results in
  Solar Lectures~4 and~5 require the general (non-promise) version of
  the problem stated here.}
We also assume that $pred(0^n)=NULL$,
and interpret the vertex $0^n$ as the a priori known source vertex of
the graph.

The version of the \eol problem for this oracle model is:
\begin{mdframed}[style=offset,frametitle={The \eol Problem (Query Version)}]
given an oracle as above, find a vertex~$v \in V$ that satisfies one
of the following:
\begin{itemize}

\item [(i)] $succ(v)$ is NULL;

\item [(ii)] $pred(v)$ is NULL and $v \neq 0^n$;

\item [(iii)] $v \neq pred(succ(v))$; or

\item [(iv)] $v \neq succ(pred(v))$ and $v \neq 0^n$.

\end{itemize}
% description of a graph~$G$ with maximum in- and
%out-degree~1, and a source vertex $s$ of $G$, find either a sink
%vertex of $G$ or a source vertex other than~$s$.
\end{mdframed}
According to our semantics, cases~(iii) and~(iv) imply that $v$ is a
sink and source vertex, respectively.  A solution is guaranteed to
exist---if nothing else, the other end of the path of~$G$ that
originates with the vertex~$0^n$.

%At timesFor this and the next lecture, i
It will sometimes be convenient to restrict
ourselves to a ``promise'' version of the \eol problem (which can only
be easier), where the graph $G$ is guaranteed to be a single
Hamiltonian path.
Even in this special case, because every vertex query reveals
information about at most three vertices, we have the following.
\begin{proposition}[Query Lower Bound for \eol]\label{c:eol}
Every deterministic algorithm that solves the \eol problem requires
$\Omega(N)$ queries in the worst case, even for instances that consist
of a single Hamiltonian path.
\end{proposition}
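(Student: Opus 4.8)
The plan is a standard adversary argument against deterministic query algorithms, exploiting the fact that a single vertex query reveals information about only three vertices. I would work with the promise version of \eol stated in the proposition, where $G$ is a single Hamiltonian path starting at $0^n$; there the unique solution is the sink, i.e.\ the vertex $v$ with $succ(v)$ NULL (cases (ii)--(iv) cannot occur in a consistent Hamiltonian-path instance). The goal is to design an adversary that answers queries while committing to as little of the path as possible, and to show that after fewer than about $N/3$ queries the adversary can still route the path so that its sink is an unqueried vertex different from whatever the algorithm outputs.

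Concretely, the adversary maintains for each vertex at most a committed predecessor and a committed successor, obeying the two rules that $pred(0^n)$ is NULL and that no vertex is ever committed to have a NULL successor. When the queried vertex $v$ has both links committed, answer with them. When $v$ has only one link committed, commit the missing one to a brand-new, previously untouched vertex (or to NULL, if the missing link is $pred$ and $v = 0^n$). When $v$ is completely fresh, commit $pred(v)$ (NULL if $v = 0^n$, else a new vertex $u$) and $succ(v)$ (a new vertex $w \neq u$), and answer. Each query thus marks at most three new vertices as touched, so after $q$ queries at most $3q$ vertices are touched.

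The part that needs care — the one real obstacle — is verifying the structural invariant that the committed links always form a vertex-disjoint union of simple directed paths, exactly one of which starts at $0^n$, and that this partial structure can always be completed to a genuine single Hamiltonian path starting at $0^n$. For the first half: every committed edge $a \to b$ is created at a moment when $a$ is fresh and hence has in-degree $0$, which forbids ever closing a cycle; and $0^n$ is the only vertex ever given a NULL predecessor. For the second half: list the path-segments (untouched vertices are singleton segments) with $0^n$'s segment first and concatenate them in any order — the end of each segment has no committed out-link and the start of each later segment has no committed in-link, so the ``glue'' edges are all legal — and observe that we are free to make the last segment end at any untouched vertex.

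To finish, suppose a deterministic algorithm makes $q \le (N-2)/3$ queries and outputs a vertex $v^*$. At most $3q \le N-2$ vertices are touched, so at least two vertices are untouched; complete the instance so that its sink is an untouched vertex other than $v^*$. This completion is a single Hamiltonian path consistent with every answer given, and in it $v^*$ has an out-edge and so is not the sink — hence not a solution, and the algorithm errs. Therefore any correct deterministic algorithm requires more than $(N-2)/3$ queries, which is $\Omega(N)$, even on single-Hamiltonian-path instances.
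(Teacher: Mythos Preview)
Your proposal is correct and follows essentially the same adversary argument that the paper sketches (answer each query with never-before-seen vertices; the committed information then forms a union of vertex-disjoint paths extendible to a Hamiltonian path whose sink can still be any of $\Omega(N)$ vertices). One small slip in your invariant check: it is not always the tail $a$ of a newly committed edge $a\to b$ that is fresh---e.g., when $v$ already has a committed predecessor and you commit $succ(v)=w$, the tail $v$ is not fresh, though the head $w$ is---so the correct (and sufficient) statement is that at least one endpoint of every new edge is fresh, which is what actually rules out cycles and preserves the disjoint-paths structure.
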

Slightly more formally, consider an adversary that always responds with
values of $succ(v)$ and $pred(v)$ that are never-before-seen vertices
(except as necessary to maintain the consistency of all of the adversary's
answers, so that cases~(iii) and~(iv) never occur).  
After only $o(N)$ queries, the known parts of~$G$ constitute a bunch
of vertex-disjoint paths, and $G$ could be
any Hamiltonian path of~$V$ consistent with these.  The end of this
Hamiltonian path could be any of $\Omega(N)$ different vertices, and
the algorithm has no way of knowing which one.\footnote{A similar
  argument, based on choosing a Hamiltonian path of $V$ at random,
  implies an $\Omega(N)$ lower bound for the randomized query
  complexity as well.}

%The goal of EoL is to find $v\in V$, $v\neq 0^n$, such that
%\begin{enumerate}
%    \item $succ(v)=NULL$, or
%    \item $pred(v)=NULL$, or
%    \item $succ(pred(v))\neq v$ or $pred(succ(v))\neq v$.     
%\end{enumerate}

%\begin{claim}
%EoL requires $\Omega(N)$ query lower bound.
%\end{claim}

%\begin{proof}
%Easy.
%\end{proof}

\section{Step 2: Communication Complexity Lower Bound for \cceol via
 a Simulation Theorem}\label{s:cceol}
%\paragraph{Step 2: communication complexity lower bound for 2EoL.} 

Our next step is to use a ``simulation theorem'' to transfer our query
lower bound for the \eol problem to a communication lower
bound for a two-party version of the problem, \cceol.\footnote{This
  monograph does not reflect a beautiful lecture given by Omri
  Weinstein at the associated workshop on
  the history and applications of simulation theorems (e.g., to the
  first non-trivial lower bounds for the clique vs.\ independent set
  problem~\cite{G15}).  Contact him for his slides!}  The exact
definition of the \cceol problem will be determined by the output of
the simulation theorem.

\subsection{The Query Model}

Consider an arbitrary function $f:\Sigma^N \rightarrow \Sigma$, where
$\Sigma$ denotes a finite alphabet.  There is an input $\bfz =
(z_1,\ldots,z_N) \in \Sigma^N$, initially unknown to an algorithm.  The
algorithm can query the input $\bfz$ adaptively, with each query
revealing $z_i$ for a coordinate~$i$ of the algorithm's choosing.
It is trivial to evaluate $f(\bfz)$ using $N$ queries; the question
is whether or not there is an algorithm that always does better
(for some function~$f$ of interest).
For example, the query version of the \eol problem 
in Proposition~\ref{c:eol} can be viewed as a
special case of this model, with $\Sigma = \zo^n \times \zo^n$ (to
encode $pred(v)$ and $succ(v)$) and $f(\bfz)$ encoding the (unique)
vertex at the end of the Hamiltonian path.

\subsection{Simulation Theorems}

We now describe how a function~$f:\Sigma^N \rightarrow \Sigma$ as
above induces a two-party communication problem.
The idea is to ``factor'' the input $\bfz=(z_1,\ldots,z_N)$ to the
query version of the problem between Alice and Bob, so that neither
player can unilaterally figure out any coordinate of $\bfz$.  We
use an \ind gadget for this purpose, as follows.  
(See also Figure~\ref{f:rm}.)
%
%Let $z_1,\dots,z_N\in\Sigma$ be an input. We need to compute %$f\colon\Sigma^N\to\Sigma$. Relate this to the 2-party version:
\begin{mdframed}[style=offset,frametitle={Two-Party Problem Induced
    by~$f:\Sigma^N \rightarrow \Sigma$}]
\textbf{Alice's input:} $N$ ``blocks'' $A_1,\dots,A_N$. Each block
has $M=\poly(N)$ entries (with each entry in $\Sigma$).  (Say, $M=N^{20}$.)

\vspace{.1in}
\noindent
\textbf{Bob's input:} $N$ indices $y_1,\dots,y_N\in [M]$.

\vspace{.1in}
\noindent
\textbf{Communication problem:} compute $f(A_1[y_1],\dots,A_N[y_N])$.
\end{mdframed}

\begin{figure}
\centering
\includegraphics[width=.3\textwidth]{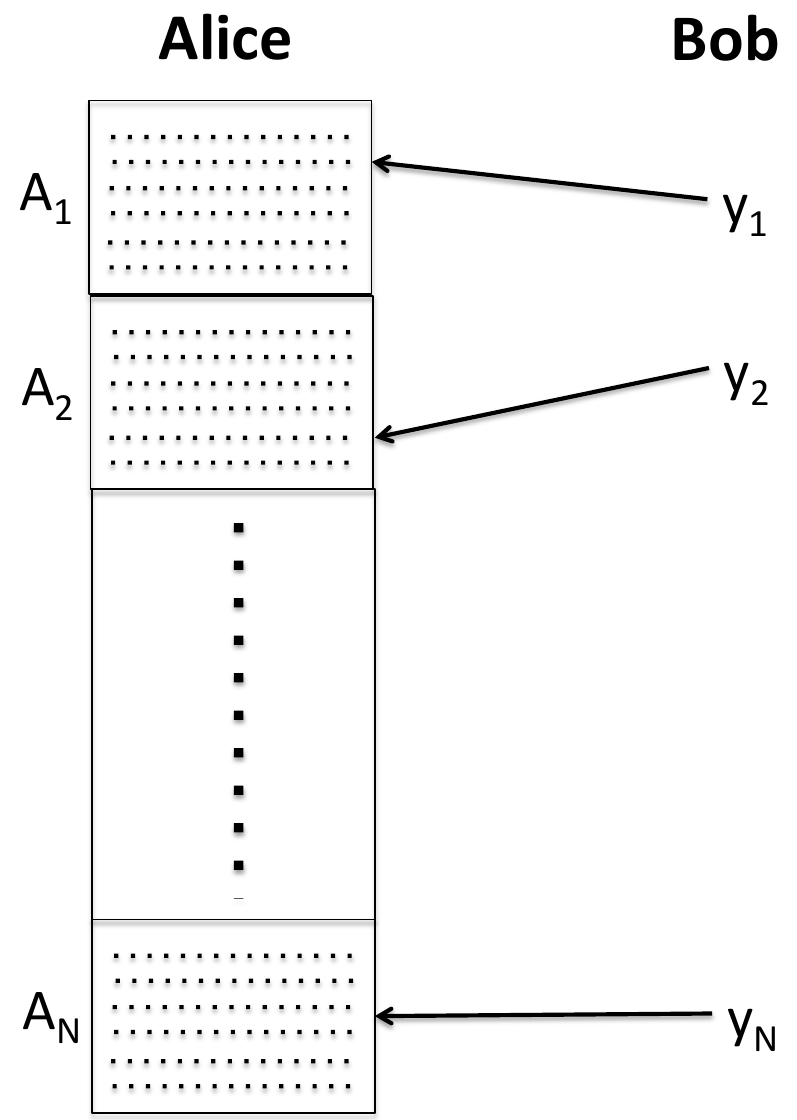}
\caption{A query problem induces a two-party communication problem.
Alice receives $N$ blocks, each containing a list of possible values
for a given coordinate of the input.  Bob receives $N$ indices,
specifying where in Alice's blocks the actual vales of the input
reside.}
\label{f:rm}
\end{figure}

Note that the $y_i$th entry of $A_i$---Bob's index into Alice's
block---is playing the role of $z_i$ in the original problem.  Thus
each block~$A_i$ of Alice's input can be thought of as a ``bag of
garbage,'' which tells Alice a huge number of possible values for the
$i$th coordinate of the input without any clue about which is the
real one.  Meanwhile, Bob's indices tell him the locations of the real
values, without any clues about what these values are.

If~$f$ can be evaluated with a query algorithm that always uses at
most $q$ queries, then the induced two-party problem can be solved
using $O(q \log N)$ bits of communication.  For Alice can just
simulate the query algorithm; whenever it
needs to query the $i$th coordinate of the input, Alice asks Bob for
his index~$y_i$ and supplies the query algorithm with $A_i[y_i]$.
Each of the at most $q$ questions posed by Alice can be communicated with
$\approx \log N$ bits,
and each answer from Bob with $\approx \log M = O(\log N)$ bits.

There could also be communication protocols for the two-party problem
that look nothing like such a straightforward simulation.  For example,
Alice and Bob could send each other the exclusive-or of all of their
input bits.  It's unclear why this would be useful, but it's
equally unclear how to prove that it {\em can't} be useful.
%different from the
% Note that $q$-query algorithm implies $O(q\cdot \log N)$-communication protocol (e.g., when Alice needs to answer query 17, she asks Bob for the relevant index in block 17, etc.) The following shows the converse for deterministic and, very recently, randomized communication models; the case of quantum is still open.
The remarkable {\em Raz-McKenzie simulation theorem} asserts that
there are no communication protocols for the two-party problem that
improve over the straightforward simulation of a query algorithm.

\begin{theorem}[Raz-McKenzie Simulation Theorem~\cite{DBLP:journals/combinatorica/RazM99,GPW18}]\label{t:rm}
%Communication complexity $o(q\cdot \log N)$ $\Rightarrow$ query complexity %$o(q)$.
If every deterministic query algorithm for~$f$ requires at least $q$
queries in the worst case, then every deterministic communication
protocol for the induced two-party problem has cost $\Omega(q \log N)$.
\end{theorem}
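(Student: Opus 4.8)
The plan is to prove the contrapositive \emph{by simulation}: from any deterministic protocol $\Pi$ of cost $c$ that solves the induced two-party problem, I will extract a deterministic query algorithm (decision tree) for $f$ of depth $O(c/\log N)$. Since by hypothesis every query algorithm for $f$ needs at least $q$ queries in the worst case, this forces $O(c/\log N) \geq q$, i.e.\ $c = \Omega(q\log N)$.

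The engine of the simulation is a \emph{structured-rectangle invariant} maintained as $\Pi$ runs. At each point the decision tree holds four objects: a set $I \subseteq [N]$ of coordinates it has already queried, a partial assignment $\rho \in \Sigma^I$ recording the answers, and a combinatorial rectangle $\mathcal{A}\times\mathcal{B}$ of surviving inputs (a subset $\mathcal{A}$ of Alice's block-tuples and a subset $\mathcal{B}$ of Bob's index-tuples). The invariant asserts (i) every $(A,y)\in\mathcal{A}\times\mathcal{B}$ is consistent with the transcript of $\Pi$ produced so far and satisfies $A_i[y_i]=\rho_i$ for all $i\in I$; and (ii) the projections of $\mathcal{A},\mathcal{B}$ to the free coordinates $[N]\setminus I$ are ``dense'': for Alice each surviving block is close to uniform on $\Sigma^M$ in a blockwise sense (robust to conditioning on other coordinates), and for Bob the index-tuple is close to uniform on $[M]^{[N]\setminus I}$. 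The payoff of density is a \emph{surjectivity lemma}: whenever the invariant holds, for every pattern $z\in\Sigma^{[N]\setminus I}$ there is a pair $(A,y)\in\mathcal{A}\times\mathcal{B}$ with $A_i[y_i]=z_i$ for all $i\notin I$ — so the transcript has revealed nothing about the gadget outputs on the unqueried coordinates.

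Given the invariant, the simulation step is forced. Process $\Pi$ one bit at a time. When Alice sends a bit, her surviving set splits as $\mathcal{A}=\mathcal{A}_0\sqcup\mathcal{A}_1$ according to which bit each input would send; the tree appends to the transcript the value $b$ maximizing $|\mathcal{A}_b|$, keeps $\mathcal{A}_b$ (and leaves $\mathcal{B}$ untouched), losing at most one unit of density; symmetrically when Bob speaks. If the deficiency of the kept part has crossed a fixed threshold on some coordinate $i$, the tree \emph{queries} $z_i$, sets $\rho_i$ to the answer, restricts $\mathcal{A}$ and $\mathcal{B}$ to be consistent with $A_i[y_i]=\rho_i$, and applies a \emph{density-restoration lemma} to re-establish the invariant on the remaining free coordinates. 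Accounting: each of the $c$ communicated bits depletes the density budget by $O(1)$, while a query is triggered only once a coordinate's deficiency has grown past $\Omega(\log M)=\Omega(\log N)$, so the total number of queries is $O(c/\log N)$. When $\Pi$ halts and outputs $v$, every $(A,y)$ in the current rectangle produced the common transcript, hence $f(A_1[y_1],\dots,A_N[y_N])=v$; by the surjectivity lemma this means $f(z)=v$ for every completion $z$ of $\rho$, so $\rho$ is a certificate fixing the value of $f$ and the tree correctly outputs $v$.

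The main obstacle — and the technical heart of \cite{DBLP:journals/combinatorica/RazM99,GPW18} — is the density machinery for the \ind gadget: pinning down the right blockwise-density notion, verifying that at least one branch of every communicated bit preserves it up to $O(1)$ loss, proving the density-restoration lemma after a query, and above all proving the surjectivity lemma. The surjectivity lemma is where the polynomial blowup $M=\poly(N)$ is spent: one shows, via a direct probabilistic computation over a uniformly random $(A,y)\in\mathcal{A}\times\mathcal{B}$, that for every \emph{fixed} target pattern on the free coordinates the probability of realizing it is strictly positive — each free coordinate's gadget value is roughly uniform on $\Sigma$ and roughly independent of the others even after conditioning on a few coordinates being already correct, provided $M$ is a large enough polynomial in $N$. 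Fitting the parameters together (density loss per bit versus the per-coordinate deficiency threshold versus the $M$ needed for surjectivity) is the delicate part; the rest of the argument is bookkeeping.
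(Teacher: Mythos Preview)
The paper does not actually prove this theorem; it states the result, cites \cite{DBLP:journals/combinatorica/RazM99,GPW18}, and offers only the one-sentence gloss that ``the proof \ldots\ shows how to extract a good query algorithm from an arbitrary low-cost communication protocol (essentially by a potential function argument).'' Your proposal is a faithful and accurate expansion of exactly that gloss: the contrapositive simulation, the structured-rectangle invariant with blockwise density, the branch-on-the-larger-half step, querying a coordinate once its density deficiency exceeds $\Omega(\log M)$, and the surjectivity (full-range) lemma are precisely the ingredients of the Raz--McKenzie/G\"o\"os--Pitassi--Watson argument, and your accounting matches the paper's ``potential function'' remark. So there is nothing to compare --- you have sketched the cited proof correctly, while the paper simply defers to the references.
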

The proof, which is not easy but also not unreadable, shows how to
extract a good query algorithm from an arbitrary low-cost
communication protocol (essentially by a potential function argument).
%The very rough idea is to traverse the
%protocol tree, and at each node to either descend a branch that does
%not overly constrain the unknown input coordinates (if possible)

The original Raz-McKenzie
theorem~\cite{DBLP:journals/combinatorica/RazM99} and the streamlined
version by \citet{GPW18} are both restricted to
deterministic algorithms and protocols, and this is the version we'll
use in this monograph.
Recently, \citet{GPW17} and \citet{A+17}
proved the analog of Theorem~\ref{t:rm}
for randomized query algorithms and randomized communication protocols
(with two-sided error).\footnote{Open question: prove a
  simulation theorem for quantum computation.}
This randomized simulation theorem
simplifies the original proof of Theorem~\ref{t:br17}
(which pre-dated~\cite{GPW17,A+17})
to the point that it's almost the same as the argument given here for the
deterministic case.\footnote{For typechecking reasons, the argument
  for randomized protocols needs to work with
a decision
version of the \eol problem, such as ``is the least significant bit of the
vertex at the end of the   Hamiltonian path equal to 1?''}

The Raz-McKenzie theorem provides a generic way to generate a hard
communication problem from a hard query problem.  We can apply it in
particular to the \eol problem, and we call the induced two-party
problem \cceol.\footnote{\citet{DBLP:journals/combinatorica/RazM99}
  stated their result for the binary alphabet and for total
  functions.  \citet{GPW18} note that it applies
  more generally to arbitrary alphabets and partial functions, which
  is important for its application here.  For further proof details of
  these extensions, see \citet{RW16}.}
\begin{mdframed}[style=offset,frametitle={The \eol Problem (Two-Party
    Version, \cceol)}]
\begin{itemize}

\item Let $V=\zo^n$ and $N=|V|=2^n$.

\item Alice's input consists of $N$ blocks, one for each vertex of $V$,
and each block $A_v$ contains $M$ entries, each encoding a possible
predecessor-successor pair for $v$.  

\item Bob's input consists of one index $y_v \in \{1,2,\ldots,M\}$ for
  each vertex $v \in V$, encoding the entry of the corresponding block 
holding the ``real''
predecessor-successor pair for $v$.  

\item The goal is to identify a vertex $v \in V$ that satisfies one
of the following:
\begin{itemize}

\item [(i)] the successor in $A_v[y_v]$ is NULL;

\item [(ii)] the predecessor in $A_v[y_v]$ is NULL and $v \neq 0^n$;

\item [(iii)] $A_v[y_v]$ encodes the successor $w$ but $A_w[y_w]$ does
  not encode the predecessor $v$; or

\item [(iv)] $A_v[y_v]$ encodes the predecessor $u$ but $A_u[y_u]$ does
  not encode the successor $v$, and $v \neq 0^n$.

\end{itemize}
\end{itemize}

\end{mdframed}

The next statement
is an immediate consequence of Proposition~\ref{c:eol}
and Theorem~\ref{t:rm}.

%2EoL: result of applying the Simulation Theorem to EoL. 

\begin{corollary}[Communication Lower Bound for \cceol]\label{cor:cceol}
%Communication complexity of 2EoL is $\Omega(N\cdot \log N)$.
The deterministic communication complexity of the \cceol problem is
$\Omega(N \log N)$, even for instances that consist of a single
Hamiltonian path.
\end{corollary}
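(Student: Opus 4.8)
The plan is to observe that \cceol is, by construction, nothing other than the two-party problem induced (in the sense of Section~\ref{s:cceol}) by the query version of \eol restricted to single-Hamiltonian-path instances, and then to feed the query lower bound of Proposition~\ref{c:eol} into the Raz--McKenzie simulation theorem (Theorem~\ref{t:rm}). So the corollary is essentially a composition of two results already in hand; what remains is to check that the bookkeeping lines up.

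First I would pin down the reduction. Take the alphabet $\Sigma = \zo^n \times \zo^n$, so that a symbol encodes an alleged predecessor--successor pair for a vertex, and let $f \colon \Sigma^N \to \Sigma$ be the (partial) function that, on a promised single-Hamiltonian-path instance, outputs the vertex at the far end of the path originating at $0^n$ (or, if a decision version is preferred, a fixed bit of that vertex). Applying the \ind-gadget construction of Section~\ref{s:cceol} to $f$ gives Alice $N$ blocks of $M=\poly(N)$ symbols and Bob $N$ indices in $[M]$, and the induced communication task is exactly the goal stated in the \cceol problem. The point to verify here is that the alphabet is polynomially bounded: $|\Sigma| = 2^{2n} = N^2$, so relaying a query of the simulated algorithm costs $\Theta(\log N)$ bits (an index into a block, $\log M = O(\log N)$ bits) and answering it costs $\Theta(\log N)$ bits (one symbol of $\Sigma$). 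This is what makes the $\log N$ factor in the bound genuine rather than swallowed into the hidden constant.

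Next I would invoke Theorem~\ref{t:rm} with $q = \Omega(N)$. By Proposition~\ref{c:eol}, every deterministic query algorithm for $f$ requires $\Omega(N)$ queries in the worst case, and this already holds for instances that form a single Hamiltonian path. Hence every deterministic communication protocol for the induced problem \cceol has cost $\Omega(q \log N) = \Omega(N \log N)$. Since cutting the instance family down to single Hamiltonian paths only shrinks the set of inputs, the lower bound survives for that promise version, which is precisely what the corollary asserts.

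The one genuinely delicate point---and the only place this is not completely turnkey---is that \eol is a total search problem rather than a total Boolean function, so one needs the extension of the Raz--McKenzie theorem to partial functions over non-binary alphabets. I would cite the version recorded by \citet{GPW18}, with the additional proof details for the search/partial-function setting as worked out in \citet{RW16}. Passing through the promise (single-Hamiltonian-path) version, where the endpoint of the path is a well-defined function of the input, is exactly what lets us treat the object at hand as computing a partial function rather than solving a relation, which is the mildest form of the simulation theorem one could hope to need. Everything else is direct composition of the two cited statements.
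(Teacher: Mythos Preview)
Your proposal is correct and follows exactly the approach the paper intends: the paper states the corollary as ``an immediate consequence of Proposition~\ref{c:eol} and Theorem~\ref{t:rm},'' and you have simply unpacked that immediacy with the right bookkeeping (alphabet $\Sigma=\zo^n\times\zo^n$, hence $\log|\Sigma|=\Theta(\log N)$; promise to single-Hamiltonian-path inputs so the output is a well-defined partial function; and the non-binary/partial-function extension of Raz--McKenzie via~\cite{GPW18,RW16}, which the paper also flags in a footnote). There is nothing to correct or add.
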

A matching upper bound of $O(N \log N)$ is trivial, as Bob always has
the option of
sending Alice his entire input.

Corollary~\ref{cor:cceol} concludes the second step of the proof of
Theorem~\ref{t:br17} and furnishes a generic hard total search
problem.  The next order of business is to transfer this communication
complexity lower bound to the more natural \bfp and $\eNE$ problems via reductions.

\lecture{Communication Complexity Lower Bound for Computing an
  Approximate Nash Equilibrium of a Bimatrix Game (Part II)}
%{From the
%  End of Line Problem to Approximate Nash Equilibrium}

\vspace{1cm}

This lecture completes the proof of Theorem~\ref{t:br17}.
As a reminder, this result states that if Alice's and Bob's
private inputs are the two payoff matrices of an $N \times N$ bimatrix game,
and $\eps$ is a sufficiently small constant, then 
$N^{\Omega(1)}$ communication is required to compute an
$\eps$-approximate Nash equilibrium (Definition~\ref{d:ene}),
even when randomization is allowed.
In terms of the proof road map in Section~\ref{s:map}, it remains to
complete steps~3 and~4.  This corresponds
to implementing Goals \#1 and \#2 introduced in the last
lecture---reversing the direction of the classical reductions from the
\bfp problem to path-following and from the $\eNE$ problem to
the \bfp problem.

\section{Step 3: \cceol $\leq$ \ccbfp}\label{s:ccbfp}

\subsection{Preliminaries}

We know from Corollary~\ref{cor:cceol} that \cceol, the two-party
version of the {\sc End-of-the-Line} problem defined in
Section~\ref{s:cceol}, has large communication
complexity.  This section transfers this lower bound to a two-party
version of an approximate fixed point problem, by reducing 
the \cceol problem to it.

We next define our two-party version of the \bfp problem, the \ccbfp
problem.  The problem is parameterized by the dimension~$d$ and an
approximation parameter $\eps$.  The latter should be thought of as a
sufficiently small constant (independent of~$d$). 
\begin{mdframed}[style=offset,frametitle={The \bfp Problem (Informal
    Two-Party Version)}]
\begin{itemize}

\item Let $H = [0,1]^d$ denote the $d$-dimensional hypercube.

\item Alice and Bob possess private inputs that, taken together, implicitly
  define a continuous function $f:H \rightarrow H$.

\item The goal is to identify an {\em $\eps$-approximate fixed point},
  meaning a point $x \in H$ such that $\n{f(x)-x} < \eps$, where
  $\|\cdot\|$ denotes the normalized $\ell_2$ norm:
\[
\n{a} = \sqrt{\frac{1}{d} \sum_{i=1}^d a_i^2}.
\]
\end{itemize}
\end{mdframed}

%In $\epsilon$-2BFP, Alice and Bob interpret their inputs as
%$f\colon H\to H$, where $H=[0,1]^d$ is a hypercube (unit cube), and
%$d=\Theta(n)$. The goal is to compute $x\in H$ such that
%\[
%\| f(x)-x\|<\epsilon,
%\]
%where 
%\[
%\|a\|:=\sqrt{\frac{1}{d}\cdot \sum a_i^2}.
%\]

The normalized $\ell_2$ norm of a point in the hypercube (or the
difference between two such points) is always
between 0 and 1.  If a point $x \in H$ is {\em not} an
$\eps$-approximate fixed point with respect to this norm,
then $f(x)$ and $x$ differ by a constant amount in a constant fraction
of the coordinates.  This version of the problem can only be easier
than the more traditional version, which uses the $\ell_{\infty}$
norm.

To finish the description of the \ccbfp problem, we need to explain
how Alice and Bob interpret their inputs as jointly defining a
continuous function.

\subsection{Geometric Intuition}

Our reduction from \cceol to \ccbfp will use no communication---Alice
and Bob will simply reinterpret their \cceol inputs as \ccbfp inputs
in a specific way, and a solution to the \cceol instance will be easy
to recover from any approximate fixed point.

Figure~\ref{f:hpv} shows the key intuition: graphs of paths and cycles
naturally lead to continuous functions, where the gradient of the
function ``follows the line'' and fixed points correspond to sources
and sinks of the graph.  Following the line (i.e., ``gradient
ascent'') guarantees discovery of an approximate fixed point; the goal
will be to show that no cleverer algorithm is possible.

\begin{figure}[h]
\centering
\includegraphics[width=.7\textwidth]{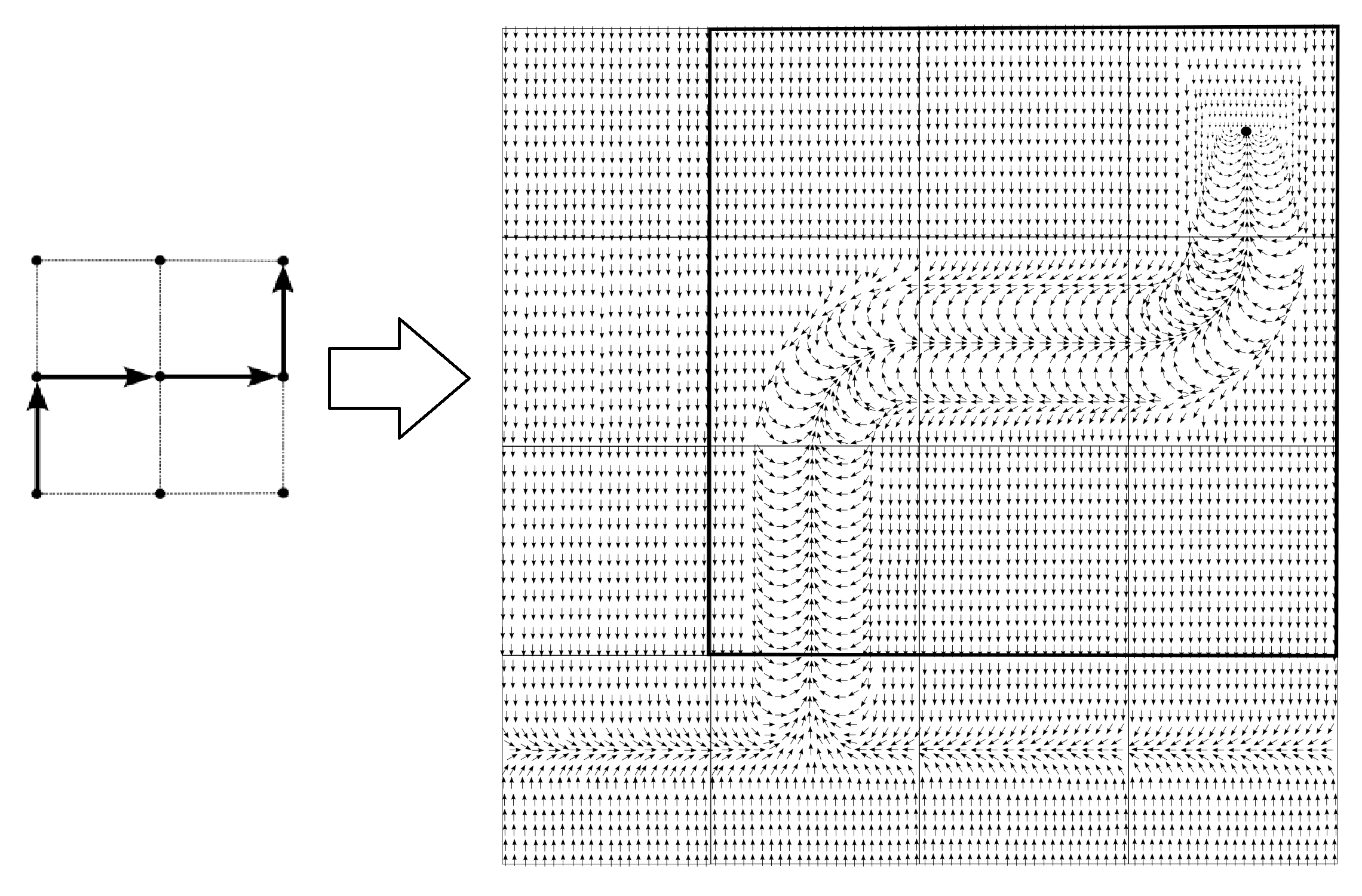}
\caption{Directed paths and cycles can be interpreted as a continuous
  function whose gradient ``follows the line.''  Points far from the path are
  moved by $f$ in some canonical direction.  (Figure courtesy of Yakov
Babichenko.)}
\label{f:hpv}
\end{figure}

This idea originates in
\citet{HPV89}, who considered approximate fixed points in the
$\ell_{\infty}$ norm.  \citet{R16} showed how to modify the
construction so that it works even for the normalized $\ell_2$ norm.
\citet{BR17} used the construction from~\cite{R16} in
their proof of Theorem~\ref{t:br17}; our treatment here includes some  simplifications.

\subsection{Embedding a Graph in the Hypercube}\label{ss:embed1}

Before explaining exactly how to interpret graphs as continuous
functions, we need to set up an embedding of every possible
graph on a given vertex set into the hypercube.

Let $V = \zo^n$ and $N=|V|=2^n$.  Let $K$ denote the complete
undirected graph with vertex set $V$---all edges that could
conceivably be present in an \eol instance (ignoring their
orientations).  Decide once and for all on an embedding $\sigma$ of
$K$ into $H=[0,1]^d$, where $d = \Theta(n) = \Theta(\log N)$, with two
properties:\footnote{By an {\em embedding}, we mean a
  function~$\sigma$ that 
  maps each edge $(v,w)$ of $K$ to a continuous path in $H$ with
  endpoints $\sigma(v)$ and $\sigma(w)$.}
\begin{itemize}

\item [(P1)] The images of the vertices are well separated:
for every
  $v,w \in V$ (with $v \neq w$), $\n{\sigma(v)-\sigma(w)}$ is at least some constant
  (say $\tfrac{1}{10}$).

\item [(P2)] The images of the edges are well separated.  More precisely,
a point $x \in H$ is close (within distance~$10^{-3}$,
  say) to the images $\sigma(e)$ and $\sigma(e')$ of distinct edges
  $e$ and $e'$ only if $x$ is close to the image of a shared endpoint
  of $e$ and $e'$.  (In particular, if $e$ and $e'$ have no endpoints
  in common, then no $x \in H$ is close to both $\sigma(e)$ and
  $\sigma(e')$.)

\end{itemize}
Property~(P1) asserts that the images of two different vertices differ
by a constant amount in a constant fraction of their
coordinates.\footnote{In the original construction of \citet{HPV89},
vertices of~$K$ could potentially be mapped to points of~$H$ that differ
significantly in only one coordinate.  This construction is good
enough to prevent spurious approximate fixed points in the
$\ell_{\infty}$ norm, but not in the normalized $\ell_2$ norm.}
One natural way to achieve this property is via an error-correcting code
with constant rate.  The simplest way to achieve both properties is to
take a random straight-line embedding.  Each vertex $v \in V$ is
mapped to a point in $\{ \tfrac{1}{4}, \tfrac{3}{4} \}^d$, with each
coordinate set to $\tfrac{1}{4}$ or $\tfrac{3}{4}$ independently with
50/50 probability.\footnote{For reasons related to the omitted
  technical details, it's convenient to
  have a ``buffer zone'' between the embedding of the graph and the
  boundary of the hypercube.}  Each edge is mapped to a straight line
between the images of its endpoints.  Provided $d=cn$ for a sufficiently
large constant $c$, properties~(P1) and~(P2) both hold with high
probability.\footnote{In the two-party communication model, we need
  not be concerned about efficiently constructing such an
  embedding.  Because Alice and Bob have unbounded computational power,
  they can both compute the lexicographically first such embedding in
  advance of the protocol.  When we consider computational lower
  bounds in Solar Lecture~5, we'll need an efficient construction.}

The point of properties~(P1) and~(P2) is to classify the points of $H$
into three categories: (i) those close to the image of a (unique)
vertex of~$K$; (ii) those not close to the image of any vertex but
close to the image of a (unique) edge of $K$; and (iii) points not
close to the image of any vertex or edge of~$K$.  Accordingly, each
point $x \in H$ can be ``decoded'' to a unique vertex $v$ of $K$, a
unique edge $(v,w)$ of $K$, or~$\bot$.  Don't forget that this
classification of points of $H$ is made in advance of receiving any
particular \cceol input.  In the \ccbfp problem, because Alice and Bob
both know the embedding in advance, they can decode points at will
without any communication.\footnote{As suggested by
  Figure~\ref{f:hpv}, in the final construction it's important to use
  a more nuanced classification that ``interpolates'' between points
  in the three different categories.  It will still be the case that
  Alice and Bob can classify any point $x \in H$ appropriately
  without any communication.}

\subsection{Interpreting Paths as Continuous Functions}\label{ss:embed2}

Given the embedding above, we can now describe how to interpret a
directed graph $G=(V,E)$ induced by an instance of \eol as a continuous
function on the hypercube, with approximate fixed points of the
function corresponding only to sources and sinks of~$G$.  Write a
function $f:H \rightarrow H$ as $f(x) = x + g(x)$ for the
``displacement function'' $g:H \rightarrow [-1,1]^d$.  
(The final construction will take care to define $g$ so that $x+g(x)
\in H$ for every $x \in H$.)  An $\eps$-approximate
fixed point is a point $x$ with $\n{g(x)} < \eps$, so it's crucial for
our reduction that our definition of $g$ satisfies $\n{g(x)} \ge \eps$
whenever $x$ is not close to the image of a source or sink of $G$.

\bigskip
%Fix an EoL input $G$. We define $f$ for the given directed graph
%$G$. Write $f(x)=x+g(x)$. 
Consider for simplicity a directed graph~$G=(V,E)$ of an \eol instance
that has no 2-cycles and no isolated vertices.\footnote{Recall from
  Corollary~\ref{cor:cceol} that the \cceol problem is already hard in
the special case where the encoded graph~$G$ is guaranteed to be a
Hamiltonian path.}
For a (directed) edge $(u,v) \in E$, define
\[
\gamma_{uv} = 
\frac{\sigma(v)-\sigma(u)}{\n{\sigma(v)-\sigma(u)}}
\]
as the unit vector with the same direction as the embedding of the
corresponding undirected edge of $K$, oriented from~$u$ toward $v$.
A rough description of the displacement function $g(x)$ corresponding to~$G$
is as follows, where $\delta > 0$ is a parameter (cf., Figure~\ref{f:hpv}):
\begin{enumerate}
\item For $x$ close to the embedding $\sigma(e)$ of the (undirected)
  edge $e \in K$ with endpoints $u$ and $v$, but not close to
  $\sigma(u)$ or $\sigma(v)$, define
\[
%\begin{equation}\label{eq:g}
g(x)=\delta \cdot \left\{
\begin{array}{ll}
\gamma_{uv}
%\text{direction of the line segment $\sigma(e)$, oriented from $u$ to
%  $v$} 
& \text{if edge $(u,v) \in E$}\\
\gamma_{vu}
%\text{direction of the line segment $\sigma(e)$, oriented from $u$ to
%  $v$} 
& \text{if edge $(v,u) \in E$}\\
%\frac{\sigma(u)-\sigma(v)}{\n{\sigma(u)-\sigma(v)}}
%\text{direction of the line segment $\sigma(e)$, oriented from $v$ to
%  $u$} 
%& \text{if directed edge $(v,u) \in E$}\\
\text{some default direction} & \text{otherwise}.
\end{array}
\right.
%\end{equation}
\]
%That is, for $x$ close to a (directed) edge of~$G$, the displacement
%vector points in the direction of the 

\item For $x$ close to $\sigma(v)$ for some $v \in V$,

 \begin{enumerate}

 \item if $v$ has an
   incoming edge $(u,v) \in E$ and an outgoing edge $(v,w) \in E$, 
then define $g(x)$ by
   interpolating between $\delta \cdot \gamma_{uv}$ and
$\delta \cdot \gamma_{vw}$
%the displacement vectors corresponding to the edges
%   $(u,v)$ and $(v,w)$ 
(i.e., ``turn slowly'' as in Figure~\ref{f:hpv});

 \item otherwise (i.e., $v$ is a source or sink of~$G$), 
%if $e$ is $v$'s (unique) incoming or outgoing edge,
   define $g(x)$ by interpolating between the all-zero vector and the
   displacement vector (as defined in case~1)
%~\eqref{eq:g} 
associated with
%$\delta \cdot \gamma_e$
%%the direction of the
%%   incoming or outgoing edge (whichever one exists) 
$v$'s (unique) incoming or outgoing edge in~$G$.
% with $0$ (so we get a fixed point here).

 \end{enumerate}

\item For $x$ that are not close to any $\sigma(v)$ or $\sigma(e)$,
  define $g(x)$ as $\delta$ times the default direction. 

\end{enumerate}
For points~$x$ ``in between'' the three cases (e.g., almost but not quite
close enough to the image $\sigma(v)$ of a vertex $v \in V$),
%n edge~$(v,w) \in E$), 
$g(x)$ is defined by interpolation
(e.g., a weighted average of the displacement vector associated with
$v$ in case~2
%$v$ as defined in case~2
%$v$ 
%the direction of the edge
and $\delta$ times the default direction, with the weights determined by
$x$'s proximity to $\sigma(v)$).

The default direction can be implemented by doubling the number of
dimensions to $2d$, and defining the displacement direction
%to be entirely in
%last $d$ coordinates (i.e., 
as the vector $(0,0,\ldots,0,1,1,\ldots,1)$.
Special handling (not detailed here) is then required at points $x$
with value close to 1 in one of these extra coordinates, to ensure that
$x+g(x)$ remains in~$H$ while also not introducing any unwanted
approximate fixed points.
Similarly, special handling is required for the source vertex $0^n$, to
prevent $\sigma(0^n)$ from being a fixed point.  Roughly, this can be
implemented by mapping the vertex $0^n$ to one corner of the hypercube
and defining $g$ to point in the opposite direction.
%the point $0^n \in H$, and
%defining $g$ as $\delta$ times the all-1s vector 
%Here we are using extra $d$ dummy dimensions to create the default %displacement. The default displacement is continuous with the only fixed %point 
%at the standard source (the first vertex). The first vertex is handled %separately so that it is not a fixed point (e.g. displace it along the 
%outgoing edge).
The parameter $\delta$ is a constant, bigger than $\eps$ by a constant
factor.  (%
%Recall that in the \ccbfp problem, $\eps$ is assumed to be
%at most some
%sufficiently small constant.  
For example, one can assume that $\eps \le
10^{-12}$ and take $\delta \approx 10^{-6}$.)  
This ensures that whenever the normalized~$\ell_2$ norm of a direction
vector $y$ is at least a sufficiently large constant, $\delta \cdot y$ 
has norm larger than $\eps$.
This completes our sketch of how to interpret an instance of \eol as a
continuous function on the hypercube.  

\subsection{Properties of the Construction}\label{ss:props}

Properly implemented, the construction in Sections~\ref{ss:embed1}
and~\ref{ss:embed2} has the following properties:
\begin{enumerate}

\item Provided $\eps$ is at most a sufficiently small constant,
a point $x \in H$ satisfies $\n{g(x)} < \eps$ only if it is
  close to the image
  of a source or sink of $G$ different from the canonical source
  $0^n$.  (Intuitively, this should be true by construction.)

%is an $\epsilon$-FP only if it is near EoL solution,

\item There is a constant $\lambda$, independent of $d$,
  such that the
  function $f(x)=x+g(x)$ is $\lambda$-Lipschitz.
In particular, $f$ is continuous.  (Intuitively, this is because we
take care to linearly interpolate between regions of $H$ with
different displacement vectors.)

\end{enumerate}
Sections~\ref{ss:embed1} and~\ref{ss:embed2},
together with Figure~\ref{f:hpv}, provide a
plausibility argument that a construction with these two
properties is possible along the proposed lines.  Readers interested
in further details should start with the carefully written
two-dimensional construction in \citet[Section 4]{HPV89}---where many of
these ideas originate---before proceeding to the general case
in~\cite[Section 5]{HPV89} for the $\ell_{\infty}$ norm and
finally~\citet{BR17} for the version tailored to the normalized
$\ell_2$ norm (which is needed here).

\subsection{The \ccbfp Problem and Its Communication Complexity}

We can now formally define the
two-party version of the \bfp problem that we consider, denoted
\ccbfp.  The problem is parameterized by a positive integer~$n$ and a
constant $\eps > 0$.
\begin{mdframed}[style=offset,frametitle={The \ccbfp Problem}]
\begin{itemize}

\item Alice and Bob begin with private inputs to the \cceol problem: Alice with
$N=2^n$ ``blocks'' $A_1,\dots,A_N$, each with $M=\poly(N)$ entries from the
alphabet $\Sigma = \zo^n \times \zo^n$, and Bob
with $N$ indices $y_1,\dots,y_N\in [M]$.

\item Let~$G$ be the graph induced by these inputs (with $V = \zo^n$
  and $A_v[y_v]$ encoding\\ $(pred(v),succ(v))$.

\item Let~$f$ denote the continuous function $f:H \rightarrow H$
  induced by~$G$, as per the construction in Sections~\ref{ss:embed1}
  and~\ref{ss:embed2}, where $H = [0,1]^d$ is the $d$-dimensional
  hypercube with $d = \Theta(n)$.

\item The goal is to compute a point $x \in H$ such that $\n{f(x)-x} <
  \eps$, where $\n{\cdot}$ denotes the normalized $\ell_2$ norm.

\end{itemize}
\end{mdframed}

%[What default means wasn't explained in a satisfactory way. He said ``default means going up in extra dummy dimensions''. This is an important issue, because we do not want to produce fixpoints other than those that correspond to the solutions of EoL problem. If we are ``going up in extra dimensions'', we eventually hit a face of the cube. I think what one has to do is to direct the flow so that it eventually comes to the source node of the graph.]

%\bigskip

The first property in Section~\ref{ss:props} implies a communication
complexity lower bound for the
\ccbfp problem, which implements step~3 of the road map in
Section~\ref{s:map}.  (The second property is important for
implementing step~4 of the road map in the next section.)
\begin{theorem}[\citet{BR17}]\label{t:ccbfp}
For every sufficiently small constant $\eps > 0$,
the deterministic communication complexity of the \ccbfp problem is
$\Omega(N \log N)$.
%, even for~$O(1)$-Lipschitz functions.
\end{theorem}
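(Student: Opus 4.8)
The plan is to prove Theorem~\ref{t:ccbfp} by exhibiting a \emph{communication-free} reduction from \cceol to \ccbfp and then invoking the $\Omega(N\log N)$ lower bound of Corollary~\ref{cor:cceol}. The key observation is that an instance of \ccbfp literally \emph{is} an instance of \cceol: Alice's blocks $A_1,\dots,A_N$ and Bob's indices $y_1,\dots,y_N$ are the same objects in both problems, and the function $f\colon H \to H$ of Sections~\ref{ss:embed1}--\ref{ss:embed2} is completely determined by the graph $G$ that these inputs encode. Moreover, since Alice and Bob have unbounded computational power, each of them can compute in advance (say) the lexicographically first embedding $\sigma$ satisfying properties~(P1) and~(P2), so no communication is needed to agree on the geometry.

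Concretely, suppose $\Pi$ is a deterministic protocol solving \ccbfp with cost $b$; I would convert it into a protocol for \cceol of cost $b$. Given a \cceol instance, Alice and Bob would run $\Pi$ on it verbatim, reinterpreting their inputs as a \ccbfp instance. When $\Pi$ halts, one of the two players holds a point $x \in H$ with $\n{f(x)-x} < \eps$. Provided $\eps$ is a sufficiently small constant, the first property of Section~\ref{ss:props} says that $x$ must lie within some small absolute constant $\rho < \tfrac{1}{100}$ of $\sigma(v^\star)$ for a vertex $v^\star \in V$ that is a source or a sink of $G$ and is different from the canonical source~$0^n$. That player then \emph{decodes} $x$ without any communication: it computes $\n{\sigma(w)-x}$ for every $w \in V$ and outputs the unique $w$ with $\n{\sigma(w)-x} \le \rho$. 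Uniqueness is guaranteed by property~(P1): for $w \neq v^\star$ we have $\n{\sigma(w)-x} \ge \n{\sigma(w)-\sigma(v^\star)} - \n{\sigma(v^\star)-x} \ge \tfrac{1}{10} - \rho > \rho$. The decoded vertex $v^\star$ is a source or sink of $G$ other than $0^n$, hence satisfies one of conditions~(i)--(iv) in the definition of \cceol, so it is a legitimate solution of the original instance. Since no bits beyond those of $\Pi$ were exchanged, this is a \cceol protocol of cost $b$, and Corollary~\ref{cor:cceol}---which holds even when $G$ is promised to be a single Hamiltonian path, the regime for which the construction of Sections~\ref{ss:embed1}--\ref{ss:embed2} was designed---forces $b = \Omega(N\log N)$.

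For the reduction to make sense I would also record that \ccbfp is a genuine (total) search problem: by the second property of Section~\ref{ss:props} the function $f$ is $\lambda$-Lipschitz and in particular continuous, and $H=[0,1]^d$ is compact and convex, so Brouwer's theorem (Theorem~\ref{t:bfp}) yields an exact fixed point and hence an $\eps$-approximate one; thus $\Pi$ always has a valid point $x$ to output. (The matching trivial upper bound $O(N\log N)$---Bob sends his whole input---shows the estimate is tight.)

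The main obstacle is not this reduction---which is bookkeeping once the construction is in place---but the \emph{first property of Section~\ref{ss:props}} itself: the claim that $f$ has \emph{no spurious} $\eps$-approximate fixed points away from the sources and sinks of $G$. Establishing it requires a careful quantitative analysis of the displacement function $g$ along the images of edges, near the images of vertices, in the interpolation regions, and in the ``default direction'' coordinates, all measured in the normalized $\ell_2$ norm rather than the more forgiving $\ell_\infty$ norm; this is precisely the technical content imported from \citet{HPV89} and \citet{R16}. In a full writeup I would cite \cite[Sections~4--5]{HPV89} together with \cite{R16,BR17} for that ingredient and keep the present argument at the level of the reduction.
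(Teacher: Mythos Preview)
Your proposal is correct and follows essentially the same approach as the paper: a zero-communication reduction from \cceol to \ccbfp (run the assumed \ccbfp protocol on the same inputs, then privately decode the returned approximate fixed point to a source or sink using the first property of Section~\ref{ss:props}), followed by an appeal to Corollary~\ref{cor:cceol}. Your write-up simply spells out more of the bookkeeping---the uniqueness of the decoded vertex via~(P1), totality via Brouwer, and the matching upper bound---than the paper's terse version does.
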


\begin{proof}
  If there is a deterministic communication protocol with cost $c$ for
  the \ccbfp problem, then there is also one for the \cceol problem:
  Alice and Bob interpret their \cceol inputs as inputs to the \ccbfp
  problem, run the assumed protocol to compute an $\eps$-approximate
  fixed point~$x \in H$ of the corresponding function~$f$, and (using
  no communication) decode~$x$ to a source or sink vertex of $G$
  (that is different from $0^n$).  The theorem follows immediately
  from Corollary~\ref{cor:cceol}.
\end{proof}

\subsection{Local Decodability of \ccbfp Functions}\label{ss:local}

There is one more important property of the functions~$f$ constructed
in Sections~\ref{ss:embed1} and~\ref{ss:embed2}: they are {\em locally
decodable} in a certain sense.  Suppose Alice and Bob want to compute
the value of $f(x)$ at some commonly known point $x \in H$.  If $x$
decodes to $\bot$ (i.e., is not close to the image of any vertex or
edge of the complete graph~$K$ on vertex set~$V$), then Alice and Bob
know the value of~$f(x)$ without any communication whatsoever: $f(x)$
is $x$ plus $\delta$ times the default direction (or a known
customized displacement if $x$ is too close to certain boundaries
of~$H$).  If $x$ decodes to the edge~$e=(u,v)$ of the complete
graph~$K$, then Alice and Bob can compute $f(x)$ as soon as they know
whether or not edge~$e$ belongs to the directed graph~$G$ induced by
their inputs, along with its orientation.  This requires Alice and Bob
to exchange predecessor-successor information about only two vertices
($u$ and $v$).
Analogously, if~$x$ decodes to the vertex~$v$ of~$K$, then Alice and
Bob can compute~$f(x)$ after exchanging information about at most
three vertices ($v$, $pred(v)$, and $succ(v)$).

% \item $f$ is locally decodable (needed for the reduction)
% The local decodability means that for a point $x$ in $H$ we can easily compute its displacement by looking at few successors and predecessors in $G$. 
% If the point is close to an edge of $K$, 
% we check whether the edge is a real edge in $G$ by checking the successor and predecessor of the associated vertices, if it is close to a 
% vertex of $K$, we check the edges of that vertex in $G$.

% Hence, the input to the two-party communication problem 2-BFP is identical to the input to 2-EoL. The players only interpret it as the function $f$. 
% Finding a fixed point corresponds to finding a end-of-line. This gives us the reduction 2EoL$\leq$2BFP.

% In the two-party End-of-Line Problem, Alice and Bob are jointly given a directed graph $G$ on $N$ vertices $V$ of in-degree and 
% out-degree at most 1 (with the first vertex having in-degree 0) and they should find another vertex of out-degree or in-degree 0 in $G$. 
% For each vertex, Alice holds a list of $N^{O(1)}$ possible predecessors and successors and Bob holds an index into this list which 
% determines what is the true predecessor and successor of that vertex. To make it in-degree and out-degree at most one, they 
% consider only valid edges where the predecessor of a successor takes you back to the same vertex. They disregard invalid edges. 
% From the previous lectures we know that finding the end-of-line has communication complexity $\Omega(N \log N)$.

\section{Step 4: \ccbfp $\le \eNE$}\label{s:mt06}

This section completes the proof of Theorem~\ref{t:br17} by reducing
the \ccbfp problem to the $\eNE$ problem, where $\eps$ is a
sufficiently small constant.

\subsection{The McLennan-Tourky Analytic Reduction}\label{ss:mt06}

The starting point for our reduction is a purely analytic reduction of
\citet{MT06},
which reduces the existence of (exact) Brouwer fixed points to the
existence of (exact) Nash equilibria.\footnote{This reduction was
  popularized in a Leisure of the Theory Class blog post by Eran
  Shmaya (\url{https://theoryclass.wordpress.com/2012/01/05/brouwer-implies-nash-implies-brouwer/}), who heard about the   result from Rida Laraki.}
Subsequent sections explain the additional
ideas needed to implement this reduction for approximate fixed points and
Nash equilibria in the two-party communication model.

% In this section we reduce finding Brouwer's $\epsilon$-fixed point to finding $\epsilon$-Nash Equilibrium. First we just assume that Alice and Bob know $f$
% and want to find its fixed point. Then we will worry about the communication setting.

\begin{theorem}[\citet{MT06}]\label{t:mt06}
Nash's theorem (Theorem~\ref{t:nash}) implies Brouwer's fixed-point
theorem (Theorem~\ref{t:bfp}).
\end{theorem}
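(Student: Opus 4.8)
The plan is to deduce Brouwer's theorem (Theorem~\ref{t:bfp}) from the \emph{finite} Nash existence theorem (Theorem~\ref{t:nash}) by a discretization argument, which is the honest way to carry out the McLennan--Tourky reduction without circularity (a continuous-strategy version would need Glicksberg's infinite-game generalization of Nash, itself proved via a fixed-point theorem). First I would reduce to the standard $d$-simplex: by the homeomorphism remark made earlier (every compact convex subset of Euclidean space is homeomorphic to a simplex of the same dimension, and homeomorphisms preserve fixed points), it suffices to find a fixed point of a continuous $f\colon\Delta\to\Delta$ where $\Delta=\Delta^d$. Next, since $\Delta$ is compact, $f$ is uniformly continuous; let $\omega_f$ be its modulus of continuity. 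Then it suffices to produce, for every $\delta>0$, a point $x_\delta\in\Delta$ with $\n{x_\delta-f(x_\delta)}\le\delta+\omega_f(\delta)$, because letting $\delta\to0$ and passing to a convergent subsequence (compactness of $\Delta$) yields an exact fixed point by continuity of $f$.

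The core step is the \emph{imitation game}. Fix $\delta>0$ and let $V_\delta\subseteq\Delta$ be a finite $\delta$-net of $\Delta$ (e.g.\ the vertex set of a triangulation of mesh at most $\delta$, so every point of $\Delta$ is within $\delta$ of some net point). Build a bimatrix game in which both Alice and Bob have pure-strategy set $V_\delta$: when Alice plays $v$ and Bob plays $w$, Alice's payoff is $-\n{v-f(w)}^2$ and Bob's payoff is $-\n{v-w}^2$ (rescale into $[-1,1]$ if desired; irrelevant to existence). Intuitively Bob imitates Alice while Alice tracks $f$ of Bob's choice. By Theorem~\ref{t:nash} this finite game has a mixed Nash equilibrium $(x,y)$; let $\bar x=\Exp_{v\sim x}[v]$ and $\mu=\Exp_{w\sim y}[f(w)]\in\Delta$. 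Since $\Exp_{v\sim x}[-\n{v-w}^2]=-\n{w-\bar x}^2+\text{const}$, Bob's best responses to $x$ are precisely the net points closest to $\bar x$, so $\mathrm{supp}(y)\subseteq\{w\in V_\delta:\n{w-\bar x}\le\delta\}$. Likewise $\Exp_{w\sim y}[-\n{v-f(w)}^2]=-\n{v-\mu}^2+\text{const}$, so Alice's best responses to $y$ are the net points closest to $\mu$, whence $\n{\bar x-\mu}\le\delta$. Finally, every $w\in\mathrm{supp}(y)$ is within $\delta$ of $\bar x$, so $\n{\mu-f(\bar x)}=\n{\Exp_{w\sim y}[f(w)-f(\bar x)]}\le\Exp_{w\sim y}\n{f(w)-f(\bar x)}\le\omega_f(\delta)$. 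Combining the last two displays, $\n{\bar x-f(\bar x)}\le\delta+\omega_f(\delta)$, so $x_\delta:=\bar x$ works, completing the reduction.

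I expect the main obstacle to be the \emph{design} of the game so that approximate equilibria of the \emph{discretized} problem are still pinned to approximate fixed points; the delicate point in the verification is the bound $\n{\mu-f(\bar x)}\le\omega_f(\delta)$, which relies on a best-responding Bob concentrating all his mass near $\bar x$ so that the average $\mu$ of $f$-values does not drift away from $f(\bar x)$ despite $f$ being nonlinear. (By contrast, the continuous-strategy imitation game on $\Delta\times\Delta$ with payoffs $-\n{x-y}^2$ and $-\n{y-f(x)}^2$ has the pleasant feature that its Nash equilibria are \emph{exactly} the fixed points of $f$, with no error terms — but its existence proof is the circular one.) This construction is also exactly the template that the later sections will massage into the approximate, two-party communication setting needed for Theorem~\ref{t:br17}.
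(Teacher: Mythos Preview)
Your argument is correct and rests on the same core idea as the paper --- the McLennan--Tourky ``imitation game'' with quadratic payoffs --- but you carry it out differently, and the difference is worth noting. The paper works directly with the continuous game on the hypercube: Alice's payoff is $1-\n{x-z}^2$ and Bob's is $1-\n{z-f(x)}^2$ (your roles are swapped, which is immaterial). It then observes that in any Nash equilibrium Alice must play the pure strategy equal to Bob's center of mass, whereupon Bob must play $f(x)$, yielding $x=f(x)$ exactly with no error terms. As the paper itself concedes in a footnote, this invokes a continuous-action generalization of Nash's theorem (compact convex strategy sets, concave continuous payoffs), whose standard proofs go through Kakutani or Brouwer --- precisely the circularity you flag. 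Your discretize-to-a-$\delta$-net, apply-finite-Nash, extract-$\bar x$ with $\n{\bar x - f(\bar x)}\le \delta+\omega_f(\delta)$, and pass-to-the-limit argument genuinely appeals only to the finite Nash theorem as stated in Theorem~\ref{t:nash}, and your error analysis (best responses concentrate within $\delta$ of the relevant barycenter, then uniform continuity controls the drift of $\mu$ from $f(\bar x)$) is clean and correct. So: the paper's route is slicker but leans on a stronger existence statement; yours is honestly self-contained at the cost of the approximation bookkeeping. Both are legitimate presentations of the McLennan--Tourky reduction, and your version is arguably the more faithful reading of the theorem statement.
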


\begin{proof}
Consider an arbitrary continuous function~$f:H \rightarrow H$, where
$H = [0,1]^d$ is the $d$-dimensional hypercube (for some positive
integer~$d$).\footnote{If fixed points are guaranteed for hypercubes
  in every dimension, then they are also guaranteed for all compact
  convex   subsets of finite-dimensional Euclidean space; see
  footnote~\ref{foot:radial} in Solar Lecture~2.}
Define a two-player game as follows.
The pure strategies of Alice and Bob both correspond to points of
$H$.  For pure strategies $x,z \in H$, Alice's payoff is defined as
\begin{equation}\label{eq:apayoff}
1 - \n{x-z}^2 = 
1 - \frac{1}{d} \sum_{i=1}^d (x_i-z_i)^2
\end{equation}
and Bob's payoff as
\begin{equation}\label{eq:bpayoff}
1 - \n{z-f(x)}^2 = 
1 - \frac{1}{d} \sum_{i=1}^d (z_i-f(x)_i)^2.
\end{equation}
Thus Alice wants to imitate Bob's strategy, while Bob wants to imitate
the image of Alice's strategy under the function~$f$.

%\medskip Points in $H$ are the pure strategies of Alice and Bob,

%\medskip Payoff for $x,y\in H$:

%\medskip\qquad Alice:\ $1-\|x-y\|^2_2$

%\medskip\qquad Bob:\ $1-\|y-f(x)\|^2_2$.

%\medskip\noindent 

For any mixed strategy $\sigma$ of Bob (i.e., a distribution over
points of the hypercube),
Alice's unique best response
is the corresponding center of gravity $\expect[z \sim \sigma]{z}$ (as
you should check).
Thus, in any Nash equilibrium, Alice plays a pure strategy $x$.  Bob's
unique best response to such a pure strategy is the pure strategy $z =
f(x)$.  That is, every Nash equilibrium is pure, with $x = z = f(x)$ a
fixed point of $f$.
%Alice's best response to a mixed strategy of Bob is $E[x]$, the center of %gravity. Hence her Nash equilibrium must be pure (as the $\ell_2$ norm is %strictly concave). Similarly the best Bob's response is $E[f(x)]$. Thus we have
%\[
%(x,y) \mbox{ is a NE }\Leftrightarrow\ x=y=f(x).
%\]
Because a Nash equilibrium exists, so does a fixed
point of~$f$.\footnote{Strictly speaking, we're assuming a more
  general form of Nash's theorem that asserts the existence of a pure
  Nash   equilibrium
whenever every player has a convex
  compact strategy set (like~$H$) and a continuous concave payoff
  function (like~\eqref{eq:apayoff} and~\eqref{eq:bpayoff}).  
(The version in Theorem~\ref{t:nash} corresponds to the special case 
where each strategy set corresponds to a finite-dimensional simplex of mixed
  strategies, and where all payoff functions are linear.)
Most proofs of Nash's theorem---including the one outlined in
Section~\ref{ss:nashpf}---are straightforward to generalize in this way.}
%
%We stated Nash's theorem (Theorem~\ref{t:nash})
%  for mixed
%  Nash equilibria of finite games.  Essentially the same proof
%  goes through
%  for pure-strategy Nash equilibria 
\end{proof}

An extension of the argument above shows that, for $\lambda$-Lipschitz
functions~$f$, an $\eps'$-approximate fixed point (in the normalized
$\ell_2$ norm) can be extracted easily from any $\eps$-approximate
Nash equilibrium, where~$\eps'$ is a function of $\eps$ and $\lambda$
only.\footnote{It is not clear how to easily extract an approximate
  fixed point in the $\ell_{\infty}$ norm from an approximate Nash
  equilibrium without losing a super-constant factor in the
  parameters.  The culprit is the ``$\tfrac{1}{d}$'' factor
  in~\eqref{eq:apayoff} and~\eqref{eq:bpayoff}---needed to ensure that
  payoffs are bounded---which allows each player to behave in an
  arbitrarily crazy way in a few coordinates without violating the
  $\eps$-approximate Nash equilibrium conditions.  (Recall $\eps > 0$
  is constant while $d \rightarrow \infty$.)  This is one of the
  primary reasons why \citet{R16} and \citet{BR17} needed to modify
  the construction in \citet{HPV89} to obtain their results.}

\subsection{The Two-Party Reduction: A Naive Attempt}

We now discuss how to translate the McLennan-Tourky analytic reduction
to an analogous reduction in the two-party model.  First, we need to
discretize the hypercube.  Define $\discH$ as the set of
$\approx \left(\tfrac{1}{\eps}\right)^d$ points of $[0,1]^d$ for which
all coordinates are multiples of $\eps$.  Every $O(1)$-Lipschitz
function~$f$---including every function arising in a \ccbfp instance
(Section~\ref{ss:props})---is guaranteed to have an
$O(\eps)$-approximate fixed point at some point of this discretized
hypercube (by rounding an exact fixed point to its nearest neighbor in
$H_{\eps}$).  This also means that the corresponding game (with
payoffs defined as in~\eqref{eq:apayoff} and~\eqref{eq:bpayoff}) has
an $O(\eps)$-approximate Nash equilibrium in which each player
deterministically chooses a point of~$\discH$.
%(even in pure strategies).

% By an argument along the lines of Theorem~\ref{t:mt06}, it follows
% that the two-player game with payoffs given in~\eqref{eq:apayoff}
% and~\eqref{eq:bpayoff} has an $\eps'$-Nash equilibrium 
% $O(\eps)$-approximate Nash equilibrium in which both Alice and Bob play

The obvious attempt at a two-party version of the McLennan-Tourky
reduction is: 
\begin{enumerate}

\item Alice and Bob start with inputs to the \ccbfp problem.

\item The players interpret
these inputs as a two-player game, with strategies corresponding to
points of the discretized hypercube~$\discH$, and with Alice's payoffs given
by~\eqref{eq:apayoff} and Bob's payoffs by~\eqref{eq:bpayoff}.

\item The players run
the assumed low-cost communication protocol for computing an
approximate Nash equilibrium.

\item The players extract an approximate fixed
point of the \ccbfp function from the approximate Nash equilibrium.  

\end{enumerate}
Just one problem: {\em this doesn't make
  sense}.  The issue is that Bob needs to be able to compute $f(x)$ to
evaluate his payoff function in~\eqref{eq:bpayoff}, and his \ccbfp
input (a bunch of indices into Alice's blocks) does not provide
sufficient information to do this.  Thus, the proposed reduction does
not produce a well-defined input to the $\eNE$ problem.

% The above game uses the whole cube $H$ so we have to discretize it and consider only $\epsilon$-NE. A similar argument as above gives: $\epsilon$-FP will be $\epsilon$-NE.

% \medskip
% Hence, we reduced the two-party BFP from the previous section to a two-party $\epsilon$-NE. However, in this new game, Bob does not know his
% payoff matrix as he does not know $f$. He would need to communicate with Alice about the whole function $f$ to know his payoffs. 
% So we want to turn this game into a communication game where each player knows his payoff matrix without communicating. The trick is to
% embed the Alice-Bob communication to calculate a particular $f(x)$ into the actions of the game.

\subsection{Description of the Two-Party Reduction}\label{ss:step4}

The consolation prize is that Bob can compute the function~$f$ at a
point $x$ after a brief conversation with Alice.  Recall from
Section~\ref{ss:local} that computing~$f$ at a point $x \in H$
requires information about at most three vertices of the \cceol input
that underlies the \ccbfp input (in addition to~$x$).  
%Call these the {\em relevant} vertices for~$x$.  
Alice can send
$x$ to Bob, who can then send the relevant indices to Alice (after
decoding $x$ to some vertex or edge of $K$), and Alice can respond
with the corresponding predecessor-successor pairs.
%\footnote{At this
%  point, 
%
%\footnote{In
%  general, for the
%  case where $x$ decodes to a vertex $v$, two rounds of interaction
%  are needed to discuss the three relevant vertices: the first round
%  to identify the predecessor~$u$ and successor~$w$ of $v$, and the
%  second to confirm that $u$'s successor and $w$'s predecessor are
%  indeed~$v$.  (Recall our semantics: directed edge $(u,v)$ belongs to
%  $G$ if and only if both $succ(u)=v$ and $pred(v)=u$.)
%
%Happily, once we restrict our attention to the hard instances of
%\cceol used in Corollary~\ref{cor:cceol}, one round is enough
%  Alternatively, one round of communication 
%(identifying only $v$'s
%  predecessor and successor) 
%  (with only one relevant vertex, $v$) suffices if one restricts
%  attention to the hard instances of \cceol used in
%  Corollary~\ref{cor:cceol}, where $pred(v)=u$ and $succ(v)=w$
%  automatically imply that
%  $succ(u) = pred(w) = v$.\label{foot:2rounds}} 
This requires
$O(\log N)$ bits of communication, where $N=2^n$ is the number of
vertices in the underlying \cceol instance.  (We are suppressing the
dependence on the constant~$\eps$ in the big-O notation.)  Denote this
communication protocol by~$P$.

At this point, it's convenient to restrict the problem to the hard
instances of \cceol used to prove Corollary~\ref{cor:cceol}, where in
particular, $succ(v) = w$ if and only if $v = pred(w)$.  (I.e.,
cases~(iii) and~(iv) in the definition of the \cceol problem in
Section~\ref{s:cceol} never come up.)  For this special case, $P$ can
be implemented as a two-round protocol where Alice and Bob exchange
information about one relevant vertex~$v$ (if $x$ decodes to $v$) or
two relevant vertices~$u$ and~$v$ (if $x$ decodes to the edge
$(u,v)$).\footnote{If $x$ decodes to the edge $(u,v)$, then Alice and
  Bob exchange information about $u$ and $v$ in two rounds.  If $x$
  decodes to the vertex $v$, they exchange information about $v$ in
  two rounds.  This reveals $v$'s opinion of its predecessor $u$ and
  successor $w$.  In the general case, Alice and Bob would still need
  to exchange information about $u$ and $w$ using two more rounds of
  communication to confirm that $succ(u)=pred(w)=v$.  (Recall our
  semantics: directed edge $(v,w)$ belongs to $G$ if and only if both
  $succ(v)=w$ and $pred(w)=v$.)  In the special case of instances
  where $succ(v) = w$ if and only if $v = pred(w)$, these two extra
  rounds of communication are redundant.}

%In 
% (as opposed
%to two 
%rounds are needed.
%: if $x$ decodes to $(u,v)$, then
%Alice and Bob exchange information about $u$ and $v$

How can we exploit the local decodability of \ccbfp functions?  The
idea is to enlarge the strategy sets of Alice and Bob, beyond the
discretized hypercube~$\discH$, so that the players' strategies at
equilibrium effectively simulate the protocol~$P$.  Alice's pure
strategies are the pairs $(x,\alpha)$, where $x \in \discH$ is a point
of the discretized hypercube and $\alpha$ is a possible transcript of
Alice's communication in the protocol~$P$.  Thus $\alpha$ consists of
at most two predecessor-successor pairs.  Bob's pure strategies are
the pairs $(z,\beta)$, where $z \in \discH$ and $\beta$ is a
transcript that could be generated by Bob in~$P$---a specification of
at most two different vertices and his corresponding indices for
them.\footnote{In the protocol~$P$, Bob does not need to communicate
  the names of any vertices---Alice can decode~$x$ privately.  But
  it's convenient for the reduction to include the names of the
  vertices relevant for~$x$ in the $\beta$ component of Bob's strategy.}
Crucially, because the protocol~$P$ has cost $O(\log N)$, there are
only $N^{O(1)}$ possible $\alpha$'s and $\beta$'s.  There are also
only $N^{O(1)}$ possible choices of $x$ and $z$---since $\eps$ is a
constant and $d=\Theta(n)$ in the \ccbfp problem,
$|\discH|\approx \left(\tfrac{1}{\eps}\right)^d$ is polynomial
in~$N=2^n$.  We conclude that the size of the resulting game is
polynomial in the length of the given \ccbfp (or \cceol) inputs.

% Since $f$ is locally decodable, Alice and Bob have an $O(\log N)$ cost protocol $P$ to compute $f(x)$ for $x$ of Bob's choosing. Bob sends the
% relevant list indices to Alice, and Alice sends the relevant predecessors and successors to Bob. There are at most $N^{O(1)}$ possible protocol
% transcripts so we will make them part of the actions.
%\begin{itemize}
%\item Alice's pure strategies: possible $x$'s and messages $\alpha$ in $P$,
%\item Bob's pure strategies: possible $y$'s and messages $\beta$ in $P$.
%\end{itemize}

We still need to define the payoffs of the game.  Let
$A_1,\ldots,A_N$ and $y_1,\ldots,y_N$ denote Alice's and Bob's
private inputs in the given \ccbfp (equivalently, \cceol) instance and
$f$ the corresponding function.
Call an outcome $(x,\alpha,z,\beta)$ \emph{consistent} if $\alpha$ and
$\beta$ are the transcripts generated by Alice and Bob when they
honestly follow the protocol~$P$ to compute $f(x)$.
%(given their private inputs).  
Precisely, a consistent outcome is one that meets the following two conditions:
\begin{itemize}

\item[(i)] 
for each of the (zero, one, or two) vertices~$v$ 
%named in~$\beta$
%relevant for Alice's
and corresponding indices~$\yhat_v$ announced
by Bob in $\beta$,
%% vertex~$v \in V$ and index $y_v \in [M]$ named in 
%Bob's message $\beta$, 
$\alpha$ contains the correct response~$A_v[\yhat_v]$;
%in the protocol~$P$, when Alice's private input is~$g$ and
%  Bob sends messages (i.e., index requests) as in~$\beta$, Alice
%  responds with the message $\alpha$; and

\item[(ii)] 
$\beta$ specifies the names of the vertices relevant for Alice's
announced point $x \in H_{\eps}$, and for each such vertex~$v$,
%for each of the (zero, one, or two) vertices~$v$ relevant for Alice's
%announced point $x \in \discH$, 
$\beta$ specifies the correct index~$y_v$.
% and corresponding index~$y_v$ announced
%by Bob in $\beta$,
%the vertices named by Bob in~$\beta$ are those relevant
%  for evaluating $f$ at the point~$x$ (e.g., if $x$ decodes to an
%  edge~$(v,w)$ of~$K$, then $\beta$ names $v$ and $w$), and for each
%  such vertex~$v$, $\beta$ specifies the correct
%  index~$y_v$.
%\footnote{If the two-round version of the protocol~$P$
%    is used (see footnote~\ref{foot:2rounds}), a more complicated
%    version of condition~(ii) is required.  For points~$x$ that
%    decode to a vertex $v$, the first vertex Bob asks about should be
%    $v$, and the second and third vertices should be the predecessor
%    and successor of $v$ named by Alice in $\alpha$.}

\end{itemize}
Observe that Alice can privately check if condition~(i) holds (using
her private input $A_1,\ldots,A_N$
%, her announced point~$x$,
and the vertex names and indices in
Bob's announced strategy $\beta$), and Bob can privately check
condition~(ii) (using his private input $y_1,\ldots,y_N$ and the
point~$x$ announced by Alice).
%, and possibly the first
%predecessor-successor pair in Alice's announced strategy~$\alpha$).

%in the protocol~$P$, when Bob's private input is~$h$ and
%  Alice sends messages (i.e.,  requests) as in~$\beta$, Alice
%  responds with the message $\alpha$; and
%$\beta$ equals Bob's transcript given that Alice would send $\alpha$ %(assuming Bob's private part of description of $f$).

For an outcome $(x,\alpha,z,\beta)$, we define Alice's payoffs by
\begin{equation}\label{eq:apayoff2}
%\mbox{Alice's payoff}:=
\left\{ \begin{array}{cl}
-1-\frac{1}{d} \sum_{i=1}^d (x_i-z_i)^2 &\mbox{ if (i) fails}\\
1-\frac{1}{d} \sum_{i=1}^d (x_i-z_i)^2 &\mbox{ otherwise.}
\end{array}\right.
\end{equation}
(Compare~\eqref{eq:apayoff2} with~\eqref{eq:apayoff}.)
This definition makes sense because Alice can privately check whether
or not~(i) holds and hence can privately compute her
payoff.\footnote{If you want to be a stickler and insist on payoffs in
  $[0,1]$, then shift and scale the payoffs in~\eqref{eq:apayoff2} appropriately.}

For Bob's payoffs, we need a preliminary definition.
Let $f_{\alpha}(x)$ denote the value that the induced function~$f$
would take on if $\alpha$ was consistent with $x$ and
with Alice's and Bob's private inputs.
That is, to compute $f_{\alpha}(x)$:
\begin{enumerate}

\item Decode $x$ to a vertex or an edge (or $\bot$).
% to determine the relevant vertices (at most three of
%  them).

\item Interpret $\alpha$ as the predecessor-successor pairs for the
  vertices relevant for evaluating~$f$ at $x$.

\item Output $x$ plus the displacement $g_{\alpha}(x)$
defined as in Sections~\ref{ss:embed1} and~\ref{ss:embed2} (with
$\alpha$ supplying any predecessor-successor
pairs that are necessary).

\end{enumerate}

To review, $f$ is the \ccbfp function that Alice and Bob want to find a
fixed point of, and $f(x)$ generally depends on the private inputs
$A_1,\ldots,A_N$ and $y_1,\ldots,y_N$ of
both Alice and Bob.  The function~$f_{\alpha}$ is a speculative
version of~$f$, predicated on Alice's announced predecessor-successor
pairs in her strategy~$\alpha$.  Crucially, the definition of
$f_{\alpha}$ does not depend at all on Alice's private input,
only on Alice's {\em announced strategy}.
Thus given $\alpha$, Bob can
privately execute the three steps above and evaluate $f_{\alpha}(x)$
for any $x \in \discH$.
%: he can
%decode $x$ (without any communication), read off the relevant
%predecessor-successor pairs from $\alpha$, and unilaterally compute
%$g_{\alpha}(x)$ according to the formulas in Sections~\ref{ss:embed1}
%and~\ref{ss:embed2}.  
The other crucial property of $f_{\alpha}$ is that, if $\alpha$
happens to be the actual predecessor-successor pairs $\{ A_v[y_v] \}$
for the vertices relevant for~$x$ (given Alice's and Bob's private
inputs), then $f_{\alpha}(x)$ agrees with the value~$f(x)$ of the true
\ccbfp function.

We can now define Bob's payoffs as follows (compare
with~\eqref{eq:bpayoff}):
\begin{equation}\label{eq:bpayoff2}
%\mbox{Alice's payoff}:=
\left\{ \begin{array}{cl}
-1&\mbox{ if (ii) fails}\\
1- \frac{1}{d} \sum_{i=1}^d (z_i-f_{\alpha}(x)_i)^2
&\mbox{ otherwise.}
\end{array}\right.
\end{equation}
Because Bob can privately check condition~(ii) and compute
$f_{\alpha}(x)$ (given $x$ and $\alpha$), Bob can privately compute
his payoff.  This completes the description of the reduction from
the \ccbfp problem to the $\eNE$ problem.  

Alice and Bob can carry out this reduction with no communication---by
construction, their \ccbfp inputs fully determine their payoff
matrices.  As noted earlier, because $\eps$ is a constant,
the sizes of the produced $\eNE$ inputs are polynomial in those of the
\ccbfp inputs.

%\[
%\mbox{Bob's payoff}:=\left\{ \begin{array}{ll}
%-1&\mbox{ if (ii) fails}\\
%1-\|f_{\alpha\beta}(x)-y\|^2_2&\mbox{ otherwise}
%\end{array}\right.
%\]
%where $f_{\alpha\beta}(x)$ results from the transcript $(\alpha,\beta)$.

\subsection{Analysis of the Two-Party Reduction}

Finally, we need to show that the reduction ``works,'' meaning that
Alice and Bob can recover an approximate fixed point of the \ccbfp
function~$f$ from any approximate Nash equilibrium of the game
produced by the reduction.

For intuition, let's think first about the case where Alice's and Bob's
strategies are points of the hypercube~$H$ (rather than the
discretized hypercube $\discH$) and the case of exact fixed points and
Nash equilibria.  (Cf., Theorem~\ref{t:mt06}.)
What could a Nash equilibrium of the game look like?  Consider mixed
strategies by Alice and Bob.  
\begin{enumerate}

\item
Alice's payoff in~\eqref{eq:apayoff2}
includes a term $-\frac{1}{d} \sum_{i=1}^d (x_i-z_i)^2$ that is
independent of her choice of $\alpha$ or Bob's choice of $\beta$, and
the other term (either~1 or~-1) is independent of her choice of~$x$
(since condition~(i) depends only on~$\alpha$ and~$\beta$).
%Also, her choice of~$x$ does not affect whether or not condition~(i)
%holds. 
Thus, analogous to the proof of Theorem~\ref{t:mt06}, in every one of
Alice's best responses,
she deterministically chooses $x = \expect[z \sim \sigma]{z}$, where
$\sigma$ denotes the marginal distribution of $z$ in Bob's mixed
strategy.  

\item
Given that Alice is playing deterministically in her
$x$-coordinate, in every one of Bob's best responses,
%Bob's best response is clear.
%First, to satisfy condition~(ii) with probability~1, Bob
he deterministically chooses~$\beta$ to name the vertices relevant for
Alice's announced
point~$x$ and his indices for these vertices (to land in the second
case of~\eqref{eq:bpayoff2} with probability~1).
%Second, to optimize the payoff in the second case
%in~\eqref{eq:bpayoff2}, 
%: deterministically play
%the (unique) $\beta$ that satisfies condition~(ii), to avoid landing
%in the first case in~\eqref{eq:bpayoff2}; and 
%he deterministically plays the point $z=f(x)$.
%, to maximize the payoff in the second case in~\eqref{eq:bpayoff2}.  

\item Given that Bob is playing deterministically in his
  $\beta$-coordinate, 
Alice's unique best response is to choose~$x$ as before and also
deterministically choose the (unique) message $\alpha$ that satisfies
condition~(i), so that she will be in the more favorable second case
of~\eqref{eq:apayoff2} with probability~1.

\item Given that Alice is playing deterministically in both her $x$- and
  $\alpha$-coordinates, Bob's unique best response is to
  choose~$\beta$ as before and set $z=f_{\alpha}(x)$ (to maximize his
  payoff in the second case of~\eqref{eq:bpayoff2}).

\end{enumerate}
These four steps imply that every (exact) Nash
equilibrium~$(x,\alpha,z,\beta)$ of the game is pure, with $\alpha$
and $\beta$ consistent with $x$ and Alice's and Bob's private
information about the corresponding relevant vertices, and with
$x=z=f_{\alpha}(x) = f(x)$ a fixed point of~$f$.
%\footnote{This
%  argument needs to be modified to work for the more complicated
%  version of condition~(ii)

As with Theorem~\ref{t:mt06}, a more technical version of the same
argument implies that an approximate fixed point---a
 point~$x$ satisfying $\n{f(x)-x} < \eps'$ with respect to the
 normalized $\ell_2$ norm---can
be easily extracted by Alice and Bob from any $\eps$-approximate Nash
equilibrium, where $\eps'$ depends only on $\eps$ (e.g.,
$\eps' = O(\eps^{1/4})$ suffices).  For example, the first step of
the proof becomes: in an $\eps$-approximate Nash equilibrium, Alice
must choose a point $x \in \discH$ that is close to $\expect{z}$
except with small probability (otherwise she could increase her
expected payoff 
%she could do better 
by more
than $\eps$ by switching to the point of $\discH$ closest to
$\expect{z}$).  
%Given this, Bob mostly plays stra
%points $z \in \discH$
%that are close to $\expect{x}$.  
And so on.  Carrying out approximate
versions of all four steps above, while keeping careful track of the
epsilons, completes the proof of
Theorem~\ref{t:br17}.\footnote{The fact that~$f$ is~$O(1)$-Lipschitz 
is important for carrying out the last of these steps.}

We conclude that computing an approximate Nash equilibrium of a
general bimatrix game requires a polynomial amount of communication, and
in particular there are no uncoupled dynamics guaranteed to converge
to such an equilibrium in a polylogarithmic number of iterations.

% Note that at an $\epsilon$-NE, almost all mass is on consistent outcomes, because each player can unilaterally switch to his/her consistent actions.
% (For pure strategies, this is clear; for mixed strategies this requires slightly more complicated argument. Alice has an incentive to play pure
% strategy $y=E[x]$ against a mixed strategy of Bob $(x,\beta)$. Then Bob has also incentive to play pure $x$, and both have an incentive to play
% consistently.)

% At any $\epsilon$-NE, Alice chooses $x\approx E[y]$, hence almost all Bob's mass is on $(y,\beta)$ with $y\approx f(x)$.

%This finishes the proof of the theorem.

%\section{Simulation theorems}

% Omri Weinstein gave a guest lecture on simulation theorems which
% translate decision tree lower bounds into communication complexity
% lower bounds. He showed various applications of simulation
% theorems. The accompanying slides used for his lecture provide
% accurate description of his presentation.

%% Please fill out the details below
\lecture{$\TFNP$, $\PPAD$, \& All That}

\vspace{1cm}

%% Your lecture LaTeX goes here.

Having resolved the communication complexity of computing an
approximate Nash equilibrium of a bimatrix game, we turn our attention
to the {\em computational} complexity of the problem.  Here, the goal
will be to prove a super-polynomial lower bound on the amount of
computation required, under appropriate complexity assumptions.  The
techniques developed in the last two lectures for our communication
complexity lower bound will again prove useful for this goal, but we
will also need several additional ideas.
%This goal turns out to be even more challenging than proving a
%communication complexity lower bound (although some of the same ideas
%will show up

This lecture identifies the appropriate complexity class for
characterizing the computational complexity of computing an exact or
approximate Nash equilibrium of a bimatrix game, namely $\PPAD$.  Solar
Lecture~5 sketches some of the ideas in Rubinstein's recent
proof~\cite{R16} of a quasi-polynomial-time lower bound for the problem,
assuming an analog of the Exponential Time Hypothesis for $\PPAD$.

%In this lecture, we will study the complexity of finding $\NE$ or
%$\epsilon$-$\NE$. Recall, our ultimate goal of this SOLAR lecture
%series is to determine the complexity of finding $\epsilon$-$\NE$ for
%constant $\epsilon$. We will 
Section~\ref{s:preamble} explains why customized complexity classes are
needed to reason about equilibrium computation and other total search
problems.  
Section~\ref{s:tfnp} defines the class $\TFNP$ and some of its syntactic
subclasses, including $\PPAD$.\footnote{Some of the discussion in
  these two sections is drawn from~\cite[Lecture~20]{f13}.}
%several classes related to total
%search problems (where a solution always exists, just need to find
%one) including $\PPAD$ and talk about the history of
Section~\ref{s:ppad} reviews a number of $\PPAD$-complete problems.
Section~\ref{s:evidence} discusses the existing evidence that $\TFNP$
and its important subclasses are hard, and proves that the class
$\TFNP$ is hard on average assuming that~$\NP$ is hard on average.

\section{Preamble}\label{s:preamble}

We consider two-player (bimatrix) games, where each player has (at most)~$n$
strategies. The $n \times n$ payoff matrices for Alice and Bob~$A$
and~$B$ are described explicitly, with $A_{ij}$ and
$B_{ij}$ indicating Alice's and Bob's payoffs when Alice plays her
$i$th strategy and Bob his $j$th strategy.
%
%to play. The payoff matrix for player 1 is $A$ and that of player 2 is
%$B$. We assume w.l.o.g that each entry of $A$ and $B$ is at most $1$
%in absolute value. If player 1 plays strategy $i$ and player 2 plays
%$j$ then the first player wins $A_{ij}$ and the second player wins
%$B_{ij}$. 
Recall from Definition~\ref{d:ene} that an
$\epsilon$-$\NE$ is a pair $\xhat,\yhat$ of mixed strategies such that
neither player can increase their payoff with a unilateral deviation
by more than $\eps$.

%\begin{definition} ($\epsilon$ - Nash Equilibrium)
%Given two $n\times n$ payoff matrices $A$ and $B$ and $\epsilon\geq 0$,  %$(\hat{x}, \hat{y})$ is an $\epsilon$-Nash equilibrium iff
%\begin{itemize}
%    \item $\hat{x}^{\top}A\hat{y} \geq x^{\top} A\hat{y} -\epsilon$ for all %$x$
%    \item $\hat{x}^{\top}A\hat{y} \geq \hat{x}^{\top} A y -\epsilon$ for all $y$
%\end{itemize}

%\end{definition}

%Given $A, B$ and $\epsilon\geq 0$ as inputs w
What do we know about the complexity of computing an $\epsilon$-$\NE$
of a bimatrix game?  Let's start with the exact case ($\eps=0$),
where no subexponential-time (let alone polynomial-time) algorithm is
known for the problem.  (This contrasts with the zero-sum case, see
Corollary~\ref{cor:zerosum}.)
%When $\epsilon = 0$ this is simply the problem of finding
%exact Nash equilibrium and we denote the problem as simply
%$\NE$. 
It is tempting to speculate that no such algorithm exists.  How 
would we amass evidence that the problem is intractable?  As we're
interested in super-polynomial lower bounds, communication complexity
is of no direct help.

Could the problem be $\NP$-complete?\footnote{Technically, we're
  referring to the {\em search} version of $\NP$ (sometimes called
  $\FNP$, where the ``F'' stands for ``functional''), where the goal is to either exhibit a witness or correctly
  deduce that no witness exists.}
%There is no polynomial time algorithm known for computing
%$\epsilon$-$\NE$. One may wonder that computing $\NE$ is at least as
%hard as solving problems in $\NP$. 
The following theorem by \citet{MP91} rules out this
possibility (unless $\NP = \coNP$).  

\begin{theorem}[\citet{MP91}]\label{t:mp91}
The problem of computing a Nash equilibrium of a bimatrix game is
$\NP$-hard only if $\NP = \coNP$. 
\end{theorem}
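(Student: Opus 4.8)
The plan is to exploit the two structural features of equilibrium computation already emphasized in the excerpt: a Nash equilibrium always \emph{exists} (Theorem~\ref{t:nash}), and in a bimatrix game a candidate equilibrium can be \emph{checked in polynomial time}. A total search problem with both features cannot be $\NP$-hard unless $\NP = \coNP$, and the proof is essentially a careful instantiation of that slogan. So I would assume for contradiction that computing a Nash equilibrium of a bimatrix game is $\NP$-hard; in particular some $\NP$-complete language $L$ (say SAT) reduces to it. The goal is then to use that reduction to place $\overline{L}$ into $\NP$; since $\overline{L}$ is $\coNP$-complete, this forces $\coNP \subseteq \NP$, i.e.\ $\NP = \coNP$.

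First I would nail down the two enabling facts. (i) \emph{Totality}: by Theorem~\ref{t:nash} every bimatrix game has at least one mixed Nash equilibrium, so the search problem never fails to have an answer. (ii) \emph{Polynomial-time verifiability}: a bimatrix game with integer payoffs has a Nash equilibrium in which all probabilities are rationals of polynomially bounded bit-length, and, having guessed the two support sets, one recovers and checks a candidate via a pair of linear programs; consequently the predicate ``$(\hat x,\hat y)$ is a Nash equilibrium of $(A,B)$'' is decidable in polynomial time---just compute each player's best-response payoff against the other's mixed strategy and compare with their current payoff. This is the one place where we genuinely use that the game has two players.

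Next I would run the reduction in the clean (many-one / Levin) case first: a polynomial-time map $f$ sending an instance $\phi$ of $L$ to a bimatrix game $f(\phi)$, together with a polynomial-time map $g$ such that for \emph{every} Nash equilibrium $s$ of $f(\phi)$ the value $g(\phi,s)$ correctly decides whether $\phi \in L$. An $\NP$-verifier for $\overline{L}$ then works as follows: on input $\phi$, guess a mixed-strategy profile $s$ and accept iff (a)~$s$ really is a Nash equilibrium of $f(\phi)$ (checkable in polynomial time by fact~(ii)) and (b)~$g(\phi,s)$ outputs ``$\phi \notin L$''. If $\phi \notin L$, the nonempty (by fact~(i)) set of Nash equilibria of $f(\phi)$ provides an accepting witness. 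If $\phi \in L$, no witness is accepted: any $s$ surviving test~(a) is a genuine equilibrium, so validity of the reduction forces $g(\phi,s)$ to report ``$\phi \in L$'', killing test~(b). Hence $\overline{L} \in \NP$.

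The part that needs real care is the general notion of $\NP$-hardness, i.e.\ a polynomial-time Turing (Cook) reduction: an oracle machine $M$ with $M^{\text{NASH}}$ deciding $L$, where $M$ must output the correct answer \emph{regardless of which} Nash equilibrium the oracle returns for each query. Here the $\NP$-witness that $\phi \notin L$ is a complete transcript of a halting run of $M$ on $\phi$ in which every oracle query (a bimatrix game) is answered by a claimed Nash equilibrium of that game and the run ends in the rejecting state; the verifier replays the computation, at each query checking in polynomial time (fact~(ii)) that the supplied answer is a Nash equilibrium of the queried game, and finally checks that $M$ rejected. By the same reasoning, such a transcript exists exactly when $\phi \notin L$: if $\phi \notin L$, feeding $M$ any valid equilibria makes it reject; if $\phi \in L$, every branch with valid oracle answers makes $M$ accept, so no accepting transcript exists. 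This again yields $\overline{L}\in\NP$ and finishes the proof. The main obstacle is thus conceptual rather than computational: one must handle the quantifier ``for every equilibrium the reduction is correct'' carefully (and the Turing-reduction bookkeeping for the witness), and one must be sure that equilibrium-checking for bimatrix games is genuinely a polynomial-time predicate---precisely the rationality/bit-complexity fact flagged in the excerpt, which is what restricts this argument to two-player games.
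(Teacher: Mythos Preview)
Your proof is correct and follows essentially the same approach as the paper: use the guaranteed existence of a Nash equilibrium (Theorem~\ref{t:nash}) together with polynomial-time verifiability to turn any putative reduction from SAT into an $\NP$ certificate for unsatisfiability, forcing $\NP=\coNP$. You are in fact slightly more thorough than the paper's own proof, which treats only the Levin-style (many-one) reduction case, whereas you also spell out the transcript argument needed for general Turing reductions and flag the bit-complexity point underlying verifiability.
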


\begin{figure}
\centering
\includegraphics[width=\textwidth]{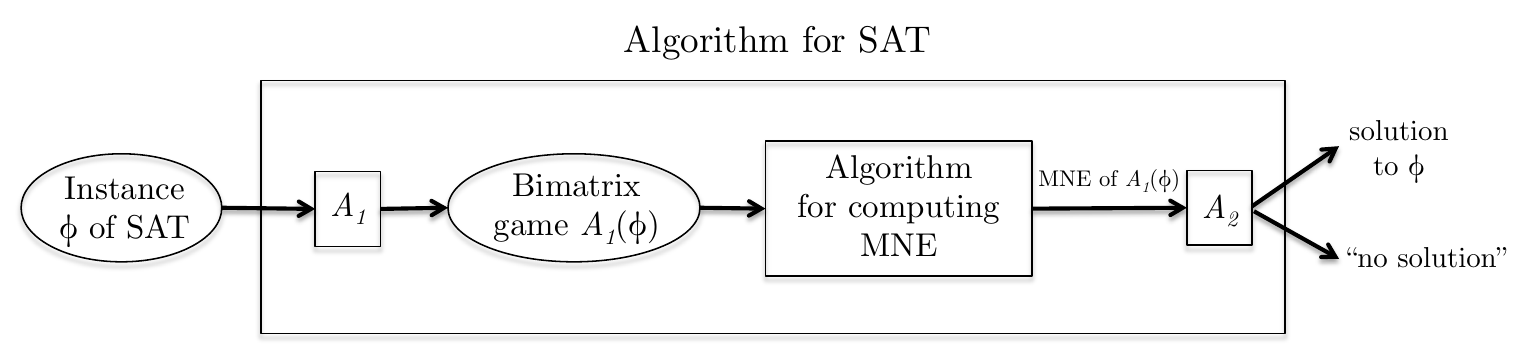}
\caption{A reduction
  from the search version of the SAT problem to the problem of
  computing a Nash equilibrium of a bimatrix game
%.  Such a reduction 
would yield a polynomial-time verifier for the
  unsatisfiability problem.}
\label{f:mp}
\end{figure}

%\vspace{.5\baselineskip}

\begin{proof}
%(of Theorem~\ref{t:mp91})
The proof is short but a bit of a mind-bender, analogous to the
argument back in Section~\ref{s:naive}.
Suppose there is
a reduction from, say, (the search version of) satisfiability
%Section~\ref{ss:plsc}, from the functional SAT problem 
to the problem of computing a Nash equilibrium of a bimatrix game.  By
definition, the reduction comprises two algorithms:
\begin{enumerate}

\item A polynomial-time algorithm $\A_1$ that maps every SAT
  formula~$\phi$ to a bimatrix game $\A_1(\phi)$.

\item A polynomial-time algorithm $\A_2$ that maps every Nash
  equilibrium $\mne$ of a game $\A_1(\phi)$ to a satisfying assignment
  $\A_2\mne$ of $\phi$, if one exists, and to the string ``no''
  otherwise.

\end{enumerate}
We claim that the existence of these algorithms $\A_1$ and $\A_2$ imply that
$\NP = \coNP$ (see also Figure~\ref{f:mp}).
In proof, consider an unsatisfiable SAT formula
$\phi$, and an arbitrary Nash equilibrium $\mne$ of the game
$\A_1(\phi)$.\footnote{Crucially, $\A_1(\phi)$ has at least one Nash equilibrium, including one whose description
length is polynomial in that of the game (see 
Theorem~\ref{t:nash} and the subsequent discussion).}
We claim that $\mne$ is a short, efficiently verifiable proof of the
unsatisfiability of $\phi$, implying that $\NP = \coNP$.
Given an alleged certificate $\mne$ that
$\phi$ is unsatisfiable,  the verifier performs two checks: (1)
compute the game
$\A_1(\phi)$ using algorithm $\A_1$ and 
verify that $\mne$ is a Nash equilibrium of $\A_1(\phi)$; (2) use the
algorithm $\A_2$ to verify that $\A_2\mne$ is the string ``no.''  
This verifier runs in time polynomial in the description lengths of
$\phi$ and $\mne$. 
If $\mne$ passes both of these tests, then correctness of the
algorithms $\A_1$ and $\A_2$ implies that $\phi$ is %yyy indeed
unsatisfiable.
\end{proof}

\section{$\TFNP$ and Its Subclasses}\label{s:tfnp}

\subsection{$\TFNP$}

What's really going on in the proof of Theorem~\ref{t:mp91} is a
mismatch between the search version of an $\NP$-complete problem like
SAT, where an instance may or may not have a witness, and a problem
like computing a Nash equilibrium, where every instance has at least
one witness.
%Nash's theorem guarantees at least one 
%(polynomial-length) witness in every
%instance.  
While the correct answer to a SAT instance might well be ``no,'' a
correct answer to an instance of Nash equilibrium computation is
always a Nash equilibrium.  It seems that if the problem of computing
a Nash equilibrium is going to be complete for some complexity class,
it must be a class smaller than~$\NP$.

The subset of $\NP$ (search) problems for
which every instance has at least one witness is called $\TFNP$, for
``total functional $\NP$.''  The proof of Theorem~\ref{t:mp91} shows more
generally that if {\em any} $\TFNP$ problem is $\NP$-complete, then
$\NP = \coNP$.  
Thus a fundamental barrier to $\NP$-completeness is the guaranteed
existence of a witness.

%Membership in $\TFNP$ precludes proving that computing a Nash equilibrium of a
%bimatrix game is $\FNP$-complete (unless $\NP = \coNP$).
Since computing a Nash equilibrium does not seem to be $\NP$-complete,
the sensible refined goal is
to prove that the problem is $\TFNP$-complete---as hard as any other
$\NP$ problem with a guaranteed witness.
%, and hence as hard as
%any other problem in $\TFNP$.  

\subsection{Syntactic vs.\ Semantic Complexity Classes}

Unfortunately, $\TFNP$-completeness is also
too ambitious a goal.  The reason is that $\TFNP$ does not seem to have
complete problems.  Think about the complexity classes 
that {\em are} known to have complete problems---$\NP$ of course,
and also classes like $\ptime$ and $\pspace$.  What do these
complexity classes have in
common?  They are ``syntactic,'' meaning that membership can be
characterized via acceptance by some concrete computational model,
such as polynomial-time or polynomial-space deterministic or
nondeterministic Turing machines.
In this sense, there is a generic reason for membership in these
complexity classes.

%\index{syntactic complexity class!and generic complete problems}
Syntactically defined complexity classes always have a ``generic''
complete problem, where the input is a description of a problem in
terms of the accepting machine and an instance of the problem, and
the goal is to solve the given instance of the given problem.  For
example, the generic $\NP$-complete problem takes as input a
description of a verifier, a polynomial time bound,
and an encoding of an instance, and the goal is to decide whether
or not there is a witness, meaning a string that causes the given
verifier to accept the given instance in at most the given number of
steps.

$\TFNP$ has no obvious generic reason for membership, and as such is
called a ``semantic'' class.\footnote{There are many other interesting
  examples of classes that appear to be semantic in this sense, such
  as $\RP$ and $\NP \cap \coNP$.}  For example, the problem of
computing a Nash equilibrium of a bimatrix game belongs to $\TFNP$
because of the topological arguments that guarantee the existence of a
Nash equilibrium (see Section~\ref{s:bfp}).  Another problem in $\TFNP$ is
factoring: given a positive integer, output its factorization.  Here,
membership in $\TFNP$ has a number-theoretic
explanation.\footnote{There are many other natural examples of $\TFNP$
  problems, including computing a local minimum of a function,
  computing an approximate Brouwer fixed point, and inverting a
  one-way permutation.}  Can the guaranteed existence of a Nash
equilibrium of a
game and of a factorization of an integer be regarded as separate
instantiations of some ``generic'' $\TFNP$ argument?  No one knows the
answer.

% Thus, it is very unlikely that computing $\NE$ is $\NP$-hard. The
% natural direction to understand the true complexity if $\NE$ is to
% find a subclass of $\NP$ which contains $\NE$ (and other interesting
% problems) and show that computing $\NE$ is complete for that subclass.

% Let us try to narrow down the nature of computing $\NE$. From Nash's theorem we know that for any game there exists a Nash equilibria. This is in contrast with the natural complexity classes where we we have to decide between cases when the solution exists or not. This is not the case with computing $\NE$. Such {\em Total Search} problems come under a class called $\TFNP$ which is contained in $\NP$. $\TFNP$ stands for ``Total Function Nondeterministic Polynomial" and is the search analog of the class $\NP$ with the additional guarantee that any instance
% has a solution. 

%More formally,
%\begin{definition}[$\TFNP$]
%$\TFNP$ is a class of search problems associated with the class $\NP$ such %that every instance has a witness. 
%\end{definition}
%Other natural problem which are inside $\TFNP$ is factoring, where given an %integer output its factors. 

\subsection{Syntactic Subclasses of $\TFNP$}

Given that the problem of computing a Nash equilibrium appears too
specific to be complete for $\TFNP$, we must refine our goal again,
and try to prove that the problem is complete for a still smaller
complexity class.  \citet{P94} initiated the search for syntactic
subclasses of $\TFNP$ that contain interesting problems not known to
belong to $\ptime$.  His proposal was to categorize $\TFNP$ problems
according to the type of mathematical proof used to 
guaranteed the existence of a witness.  Interesting subclasses include the
following: 
%
% Since a total search problem is guaranteed to have a solution, there must be a mathematical proof of this fact. If the problem is difficult to solve then in that proof there must be some non constructive step. Papadimitriou in \cite{P94} initiated classifying total search problems. This gives rise to the following interesting classes.
%
\begin{itemize}
 
\item $\PPAD$ (for polynomial parity argument, directed version):
  Problems that can be solved by path-following in a
  (exponential-size) directed graph with in- and out-degree at most~1
  and a known source vertex (specifically, the problem of identifying
  a sink or source vertex other than the given one).
%Given a directed graph containing a node with different in and out degrees find another such vertex. This gives rise to the class $\PPAD$ stands for {\em parity argument for directed graphs}.

   \item $\PPA$ (for polynomial parity argument, undirected version):
     Problems that can be solved by path-following in an undirected
     graph (specifically, given an odd-degree vertex, the problem of
     identifying a different odd-degree vertex).
%     Given a  graph with one odd degree vertex, find another. Such problems come under $\PPA$ which stands for {\em parity argument}.

    \item $\PLS$ (for polynomial local search): Problems that can be
      solved by path-following in a directed acyclic graph
      (specifically, given such a graph, the problem of identifying a sink
      vertex).\footnote{$\PLS$ was actually defined prior to $\TFNP$, by
        \citet{JPY88}.}
%The fact that every directed acyclic graph has a sink vertex, can be %converted into a total search problem of finding that sink vertex. This corresponds to the class $\PLS$ stands for {\em polynomial local search}.

    \item $\PPP$ (for polynomial pigeonhole principle): Problems that
      reduce to the following: given a
      function~$f$ mapping $\{1,2,\ldots,n\}$ to $\{1,2,\ldots,n-1\}$,
      find $i \neq j$ such that $f(i)=f(j)$.
      % Finally this class corresponds to the fact implied by the {\em
      % pigeonhole principle} - Given $f : [n] \rightarrow [n-1]$,
      % find $i\in [n-1]$ where $f$ maps two elements from $[n]$.

\end{itemize}
All of these complexity classes can be viewed as intermediate to
$\ptime$ and $\NP$.  The conjecture, supported by oracle
separations~\cite{B+98}, is that all four of these classes are distinct
%The current belief about some of these total search classes is as
%shown in  
(Figure~\ref{fig:belief}).
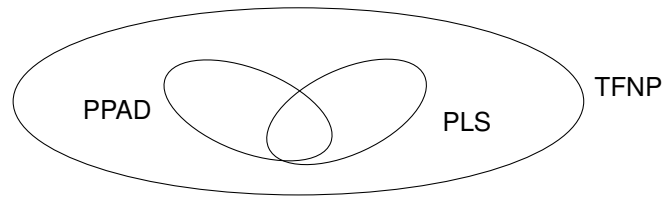
\begin{figure}[h!]
\center
\begin{tikzpicture}[line cap=round,line join=round,>=triangle 45,x=1.0cm,y=1.0cm]
\clip(-3, 1.4) rectangle (18.3,5);
\draw [rotate around={0.:(5.5,3.)}] (5.5,3.) ellipse (3.792864479119171cm and 1.2450269068279611cm);
\draw [rotate around={-21.392890351981443:(4.83,2.88)}] (4.83,2.88) ellipse (1.1797724164164307cm and 0.5536812752270697cm);
\draw [rotate around={25.01689347810002:(6.14,2.86)}] (6.14,2.86) ellipse (1.1411849820365318cm and 0.5620526338571116cm);
\draw (9.3,3.46) node[anchor=north west] {$\TFNP$};
\draw (7.28,3.) node[anchor=north west] {$\PLS$};
\draw (2.5,3.16) node[anchor=north west] {$\PPAD$};
\end{tikzpicture}
\caption{Oracle separations suggest that the different well-studied
  syntactic subclasses of $\TFNP$ are in fact distinct (from each
  other and from $\TFNP$).}
\label{fig:belief}
\end{figure}

%Johnson, Papadimitriou and Yannakakis showed that computing {\em pure} $\NE$ %is $\PLS$-complete. 
 
%\begin{theorem}[\cite{JPY88}]
%Computing pure $\NE$ is $\PLS$-complete.
%\end{theorem}

%Thus, we understand the correct complexity of computing pure $\NE$.
%Next question comes to mind is if $\NE$ $\TFNP$-hard? It turns out
%that computing $\NE$ is $\PPAD$-complete. Let us look at $\PPAD$
%Generic problem: End-of-the-Line.

Section~\ref{s:bfp} outlined the argument that the guaranteed
existence of Nash equilibria reduces to the guaranteed existence of
Brouwer fixed points, and Section~\ref{s:eol} showed (via Sperner's
lemma) that Brouwer's fixed-point theorem reduces to path-following in
a directed graph with in- and out-degrees at most~1.  Thus, $\PPAD$
would seem to be the subclass of $\TFNP$ with the best chance of
capturing the complexity of computing a Nash equilibrium.

\section{$\PPAD$ and Its Complete Problems}\label{s:ppad}

\subsection{\eol: The Generic Problem for $\PPAD$}\label{ss:seol}

We can formally define the class $\PPAD$ by defining its generic
problem.  
%(A problem in $\PPAD$ is then $\PPAD$-complete if the
%generic problem reduces to it.)  
(A problem is then in $\PPAD$ if it reduces in polynomial time to the
generic problem.)  Just as the {\sc End-of-the-Line (\eol)} problem
served as the starting point of our communication complexity lower
bound (see Section~\ref{s:eol}), a succinct version of the problem
will be the basis for our computational hardness results.

\begin{mdframed}[style=offset,frametitle={The \eol Problem (Succinct
    Version)}]
Given two circuits $S$ and $P$ (for ``successor'' and
``predecessor''), each mapping $\zo^n$ to $\zo^n \cup \{
NULL\}$ and with size polynomial in~$n$, and with $P(0^n) = NULL$,
find an input~$v \in \zo^n$ that satisfies one of the following:
\begin{itemize}

\item [(i)] $S(v)$ is NULL;

\item [(ii)] $P(v)$ is NULL and $v \neq 0^n$;

\item [(iii)] $v \neq P(S(v))$; or

\item [(iv)] $v \neq S(P(v))$ and $v \neq 0^n$.

\end{itemize}
% description of a graph~$G$ with maximum in- and
%out-degree~1, and a source vertex $s$ of $G$, find either a sink
%vertex of $G$ or a source vertex other than~$s$.
\end{mdframed}
Analogous to Section~\ref{s:eol}, we can view the circuits $S$ and $P$
as defining a graph~$G$ with in- and out-degrees at most~1 (with edge $(v,w)$
in $G$ if and only if $S(v) = w$ and $P(w) = v$), and with a given
source vertex~$0^n$.  The \eol problem then corresponds to identifying
either a sink vertex of~$G$ or a source vertex other
than~$0^n$.\footnote{The undirected version of the problem can be used
  to define the class $\PPA$.
The version of the problem where only sink
  vertices count as witnesses seems to give rise to a different (larger)
  complexity class called $\PPADS$.}
A solution is guaranteed to exist---if nothing else, the other end of
the path of~$G$ that originates with the vertex~$0^n$.  Thus \eol
does indeed belong to $\TFNP$, and $\PPAD \subseteq \TFNP$.  Note also
that the class is syntactic and by definition has a complete problem,
namely the \eol problem.

%\begin{definition}[End-of-the-Line]
%Given two circuits $S$ (successor),$P$ (predecessor)$: \{0,1\}^n \rightarrow %\{0,1\}^n$ define a graph on the $\{0,1\}^n$ with in-degree and out-degree at most $1$; An edge $(u,v)$ is present in a graph iff $P(v) = u$ and $S(u) = v$. $P(0^n) = NULL$. The question is find a node which is either a source (whose predecessor is NULL, except $0^n$) or a sink (whose successor is NULL).
%\end{definition}

%It is clear that this is a total problem. There are certain variant of this problem which give rise to followinf classes:

%\begin{itemize}
%    \item $\PPADs$ : In this variant, it is only allowed to return sink.
%    \item $\PPA$ : In this case, the underlying graph is undirected.
%\end{itemize}

% We will use this problem in the next lecture to to understand the
% complexity of finding $\epsilon$-$\NE$ for constant
% $\epsilon>0$. Rubinstein~\cite{Rub16} proved that the quasi-polynomial
% time algorithm of Lipton-Markakis-Mehta~\cite{LMM03} for
% $\epsilon$-$\NE$ is essentially tight i.e. improving to polynomial
% time would imply an unlikely subexponential time algorithm for
% End-of-the-Line. This will be the topic of next lecture.

\subsection{Problems in $\PPAD$}

The class $\PPAD$ contains several natural problems (in addition to
the \eol problem).  For example, it contains a computational version
of Sperner's lemma---given a succinct description (e.g.,
polynomial-size circuits) of a legal coloring of an exponentially
large triangulation of a simplex, find a sub-simplex such that its
vertices showcase all possible colors.  This problem can be regarded
as a special case of the \eol problem (see Section~\ref{s:eol}), and
hence belongs to $\PPAD$.
%Membership in $\PPAD$ follows because 
%computing a solution can
%the \eol problem (see Section~\ref{s:eol}).

Another example is the problem of computing an approximate fixed
point.  Here the input is a succinct description of a
$\lambda$-Lipschitz function~$f$ (on the hypercube in~$d$ dimensions,
say) and a parameter $\eps$, and the goal is to compute a point $x$
with $\n{f(x)-x} < \eps$ (with respect to some norm).  The description
length of~$x$ should be polynomial in that of
%~$n$, the description length of 
the function~$f$.  Such a point is guaranteed to exist provided $\eps$ is
not too small relative to~$\lambda$.%
% (as a function of~$\lambda$, $d$, and the choice of norm).
\footnote{For example, for the $\ell_{\infty}$ norm, existence of such
  a point is guaranteed with $\eps$ as small as
  $\tfrac{(\lambda+1)}{2^n}$, where~$n$ is the description length
  of~$f$.  This follows from rounding each coordinate of an exact
  fixed point to its nearest multiple of $2^{-n}$.}  The reduction
from Brouwer's fixed-point theorem to Sperner's lemma (with colors
corresponding to directions of movement, see Section~\ref{s:bfp})
shows that computing an approximate fixed point can also be regarded
as a special case of the \eol problem, and hence belongs to $\PPAD$.

The problem of computing an exact or approximate Nash equilibrium of a
bimatrix game also belongs to $\PPAD$.  For the problem of computing
an $\eps$-approximate Nash equilibrium (with $\eps$ no smaller than
inverse exponential in~$n$), this follows from the proof of
Nash's theorem outlined in Section~\ref{ss:nashpf}.
That proof shows that
computing an $\eNE$ is a special case of computing an
approximate fixed point (of the regularized best-response
function defined in~\eqref{eq:nash1} and~\eqref{eq:nash2}), and hence
the problem belongs to $\PPAD$.
The same argument shows that this is true more generally with any
finite number of players (i.e.,  not only for bimatrix games).

The problem of computing an exact Nash equilibrium ($\eps=0$) also
belongs to $\PPAD$ in the case of two-player (bimatrix)
games.\footnote{\citet{EY07} proved that, with 3 or more players, the
  problem of computing an exact Nash equilibrium of a game appears to
  be strictly harder than any problem in $\PPAD$.}  One way to prove
this is via the Lemke-Howson algorithm~\cite{LH64} (see also
Section~\ref{s:bimatrix}), which reduces the computation of an (exact)
Nash equilibrium of a bimatrix game to a path-following problem, much
in the way that the simplex method reduces computing an optimal
solution of a linear program to following a path of improving edges
along the boundary of the feasible region.  
%All known proofs of the
The proof of the
Lemke-Howson algorithm's inevitable convergence uses parity arguments
akin to the one in the proof of Sperner's lemma,
%.  These convergence
%proofs
and shows that the problem of computing a Nash equilibrium of a
bimatrix game belongs to $\PPAD$.

% The two connections between $\ppad$ and computing a MNE are
% incomparable.  The Lemke-Howson algorithm applies only to games with
% two players, but it shows that the problem of computing an {\em exact} MNE
% of a bimatrix game belongs to $\ppad$.  The path-following
% algorithm derived from Sperner's lemma applies to games with any
% fixed
% finite number of players, but only shows that the problem of computing
% an {\em approximate} MNE is in $\ppad$.

\subsection{$\PPAD$-Complete Fixed-Point Problems}\label{ss:ppadbfp}

The \eol problem is $\PPAD$-complete by construction.  What about
``more natural'' problems?  Papadimitriou~\cite{P94} built evidence
that $\PPAD$ is a fundamental complexity class by showing that
fixed-point problems are complete for it.
%
%We now list down the sequence of results showing that Brouwer's
%problem (which is closely related to computing $\epsilon$-$\NE$) is
%$\PPAD$-complete. Let's recall the search problem.
%
%\begin{theorem}[Brouwer's fixed point theorem]
%Let $C\subset \R^m$ be a compat and convex set and let $f : C \rightarrow C$ %be any continuous function then there exists $x_0 \in C$ such that $f(x_0) = x_0$.
% \end{theorem}
% We can extract a computational problem from Brouwer's fixed point theorem which is in $\TFNP$. There are some issues about how the function $f$ is represented and the issue of irrational fixed points, but we can relax some conditions as follows:
%
% We fix the compact and convex set to be the unit cube $\R^m$. We
% assume that we have given an efficiently computable function $f :
% [0,1]^m \rightarrow [0,1]^m$ which also obeys a Lipschitz condition
% with Lipschitz constant $K$. With these conditions we can always
% guarantee that there exists a point $x_0$ with bounded precision
% which is {\em almost} a fixed point i.e. $\norm(f(x_0), x_0)\leq
% \epsilon$.  Let's call this problem \brouwer\xspace and parameterize
% the problem based on the metric $\norm$, the dimension $m$ and the
% approximation $\epsilon$. In what follows next, we reserve $n$ to
% denote the size of input.

To be precise, let \brouwer$(\n{\cdot}, d, \F, \epsilon)$ denote the
following problem: given a (succinct description of a) function $f \in
\F$, with $f:[0,1]^d \rightarrow [0,1]^d$, compute a point $x \in
[0,1]^d$ such that $\n{f(x)-x} < \eps$.  The original hardness result
from~\cite{P94} is the following.

%\begin{definition}[\brouwer$(\norm, m, \epsilon)$]
%Given $f : [0,1]^m \rightarrow [0,1]^m$ which also obeys a Lipschitz condition with Lipschitz constant $K$ find $x_0$ such that $\norm(f(x_0), x_0) \leq \epsilon$.
%\end{definition}

%Papadimitriou first showed that \brouwer\xspace is $\PPAD$-complete.
\begin{theorem}[\citet{P94}]\label{t:brouwer1}
The \brouwer$(\n{\cdot}, d, \F, \epsilon)$ problem is
$\PPAD$-complete, even when $d=3$, the functions in~$\F$ are
$O(1)$-Lipschitz, $\n{\cdot}$ is the $\ell_{\infty}$ norm, and $\eps$
is exponentially small in the description
length~$n$ of a function~$f \in \F$.
\end{theorem}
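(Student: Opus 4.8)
The plan is to establish both directions of $\PPAD$-completeness. Membership, \brouwer$(\n{\cdot},3,\F,\eps)\in\PPAD$, is the forward direction of the chain assembled in Sections~\ref{s:bfp}--\ref{s:eol}, read in reverse. Given a succinct description of an $O(1)$-Lipschitz function $f:[0,1]^3\to[0,1]^3$ and the parameter $\eps$, we would triangulate $[0,1]^3$ at an exponentially fine resolution $\delta=\Theta(\eps/\lambda)$ and color each vertex of the triangulation by (the index of) some coordinate that strictly decreases under $f$ (declaring the vertex a fixed point and halting if none does), exactly as in the Sperner-implies-Brouwer argument of Section~\ref{s:eol}. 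This is a \emph{legal} Sperner coloring whose value at any one vertex is computed by a polynomial-size circuit (one evaluation of the circuit for $f$, followed by a comparison of coordinates), so it defines a succinct Sperner instance; the constructive proof of Sperner's lemma (Theorem~\ref{t:sperner}), made succinct, reduces that instance to the succinct \eol problem---i.e.\ to the generic $\PPAD$ problem of Section~\ref{ss:seol}---and the center of any panchromatic simplex is an $\eps$-approximate fixed point. Since $\eps$ is at least inverse-exponential in the description length of $f$, the triangulation, though exponentially fine, is described succinctly, and the whole reduction runs in polynomial time.

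For the hardness direction I would reduce the succinct \eol problem to \brouwer$(\ell_\infty,3,\F,\eps)$ with $\eps$ exponentially small. Given circuits $S,P:\zo^n\to\zo^n\cup\{\mathrm{NULL}\}$ defining a graph $G$ on $V=\zo^n$ with in- and out-degrees at most $1$, I would build a succinctly described displacement $g:[0,1]^3\to[-\delta,\delta]^3$ and set $f(x)=x+g(x)$, realizing the Hirsch--Papadimitriou--Vavasis picture of Figure~\ref{f:hpv} in the ``hard'' direction: embed $V$ as a set of tiny, well-separated features in $[0,1]^3$ at exponentially fine resolution, one feature per vertex; for every directed edge $(v,w)$ of $G$ (that is, $S(v)=w$ \emph{and} $P(w)=v$) carve out a thin, piecewise-axis-parallel ``pipe'' joining the feature of $v$ to that of $w$; inside a pipe let $g$ have $\ell_\infty$-norm exactly $\delta$ and point along the pipe in the direction of the edge, outside all pipes let $g$ point in a fixed default direction, and interpolate linearly across the thin transition layers so that $f$ is $O(1)$-Lipschitz and maps $[0,1]^3$ into itself (with the usual special handling near the boundary of the cube, and with $0^n$'s pipe routed out of a corner so that its feature is not a fixed point, as sketched after Figure~\ref{f:hpv}). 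By construction $\n{g(x)}_\infty$ can fall below $\eps$ only where a pipe begins or ends without continuing, i.e.\ only near a source other than $0^n$ or a sink of $G$; a $\PPAD$ verifier decodes an $\eps$-approximate fixed point $x$ to the nearby vertex $v$ and checks against $S,P$ that $v$ is such a source or sink, a valid \eol solution. All of this---the embedding, the routing, and $g$---is circuit-computable in polynomial time, because determining the pipe structure in the neighborhood of a point only requires evaluating $S$ and $P$ on a constant number of vertices.

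The technical heart, and the step I expect to be the main obstacle, is \textbf{routing the exponentially many pipes through $[0,1]^3$ so that distinct pipes never meet}, while keeping $f$ globally $O(1)$-Lipschitz and polynomial-time describable. Two coordinates do not suffice, since $G$ need not be planar---this is precisely why the dimension is $3$. The plan is to use two of the three coordinates to \emph{address} vertices (split the $n$ bits of $v$ into two halves and place the feature for $v$ on the corresponding cell of a $2^{n/2}\times 2^{n/2}$ sub-lattice of the bottom face) and the third coordinate to \emph{separate} pipes: pipe $(v,w)$ rises from $v$'s feature to a private elevation $h(v)$, where $h$ is an injective, circuit-computable function of $v$, travels horizontally at elevation $h(v)$ until it lies above $w$'s feature, and descends; because the features are geometrically negligible relative to a lattice cell, the finitely many near-feature vertical segments can be avoided by perturbing the horizontal routes, and pipes at distinct elevations are disjoint. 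The delicate bookkeeping is to make these perturbations explicit and circuit-computable and to verify that the $\ell_\infty$-Lipschitz constant of $f$ stays $O(1)$ despite all the climbs and descents---it is this Lipschitz constraint that forces the resolution, and hence $\eps$, to be exponentially small. Once that is in place, the solution-preserving properties above follow as in the carefully written two-dimensional warm-up of \citet[Section~4]{HPV89}, and the reverse map (Brouwer solution $\mapsto$ \eol solution) is immediately polynomial-time, yielding the $\PPAD$-completeness of \brouwer$(\n{\cdot},3,\F,\eps)$ under the stated restrictions.
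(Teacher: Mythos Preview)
Your proposal is correct and takes the same approach as the paper's one-paragraph sketch: membership via the Sperner-to-\eol chain of Sections~\ref{s:bfp}--\ref{s:eol}, and hardness by embedding the \eol graph into $[0,1]^3$ with a follow-the-line displacement, the third dimension serving to keep edge-images disjoint. The paper offers no further detail than this, deferring to~\cite{P94} for the routing you correctly identify as the crux.
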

The high-level idea of the proof is similar to the construction in
Section~\ref{s:ccbfp} that 
shows how to interpret \eol instances as
implicitly defined Lipschitz functions on the hypercube.  Given
descriptions of the circuits $S$ and $P$ in an instance of the generic
\eol problem, it is possible to define an (efficiently computable)
function whose gradient ``follows the line'' of an embedding of the induced
directed graph into the hypercube.  Three dimensions are needed in the
construction in~\cite{P94} to ensure that the images of different
edges do not intersect (except at a shared endpoint).
%no two edges cross each other in the embedding.  
Some time later, \citet{CD09} used a somewhat
different approach to prove that Theorem~\ref{t:brouwer1} holds even
when $d=2$.\footnote{The one-dimensional case can be solved in
  polynomial time, essentially by binary search.}

Much more recently, with an eye toward hardness results for
$\eps$-approximate Nash equilibria with constant~$\eps$ (see Solar Lecture~5),
\citet{R16} proved the following.\footnote{Theorem~\ref{t:brouwer1}
  proves hardness in the regime where $d$ and $\eps$ are both small,
  Theorem~\ref{t:brouwer2} when both are large.  This is not an
  accident; if $d$ is small (i.e., constant) and $\eps$ is large
  (i.e., constant), the problem can be solved in polynomial time by
  exhaustively checking a constant number 
%($\approx \tfrac{1}{\eps}^d$) 
of evenly spaced grid points.}
\begin{theorem}[\citet{R16}]\label{t:brouwer2}
  The \brouwer$(\n{\cdot}, d, \F, \epsilon)$ problem is
  $\PPAD$-complete even when 
the functions in~$\F$ are $O(1)$-Lipschitz functions,
$d$ is linear in the description length~$n$ of a function in~$\F$, 
$\n{\cdot}$ is the normalized $\ell_2$ norm
  (with $\n{x} = \sqrt{\tfrac{1}{d} \sum_{i=1}^d x_i^2}$), and $\eps$
  is a sufficiently small constant.
%The \brouwer$(\n{\cdot}, d, \F, \epsilon)$ problem is
%$\PPAD$-complete when $d$ is linear in the description length of~$n$, 
%the functions in~$\F$ are
%$O(1)$-Lipschitz, $\n{\cdot}$ is the normalized $\ell_2$ norm (with
%$\n{x} = \sqrt{\tfrac{1}{d} \sum_{i=1}^d x_i^2}$), and $\eps$ is 
%a sufficiently small constant.
\end{theorem}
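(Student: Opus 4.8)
The plan is to prove the two directions of $\PPAD$-completeness separately, with essentially all of the work going into hardness. Membership in $\PPAD$ is the reduction already sketched in Sections~\ref{s:bfp} and~\ref{s:eol}: subdivide $[0,1]^d$ into a fine grid, color each grid point by (the index of) a coordinate along which $f$ strictly decreases, observe that the hypotheses of Sperner's lemma hold, and note that a panchromatic sub-simplex yields an $\eps$-approximate fixed point once $f$ is $O(1)$-Lipschitz. Since the colorings and circuits are all polynomial-size, this exhibits \brouwer$(\n{\cdot},d,\F,\eps)$ as a special case of the succinct \eol problem, hence in $\PPAD$. So the heart of the matter is to reduce the succinct \eol problem to \brouwer$(\n{\cdot},d,\F,\eps)$ for the \emph{normalized} $\ell_2$ norm, \emph{constant} $\eps$, $d=\Theta(n)$, and an $O(1)$-Lipschitz family~$\F$.

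For hardness I would follow the ``graph-as-gradient'' construction of \citet{HPV89}, modified as in \citet{R16} to work with the normalized $\ell_2$ norm --- this is exactly the construction previewed in Sections~\ref{ss:embed1}--\ref{ss:props}, now carried out in the computational rather than the communication setting, so the embedding must be \emph{explicit}. Given circuits $S,P$ defining a directed graph $G$ on $V=\zo^n$ with in- and out-degrees at most~$1$ and known source~$0^n$, first fix an explicit embedding $\sigma$ of the complete graph $K$ on $V$ into $H=[0,1]^d$ with $d=\Theta(n)$, mapping each vertex into $\{\tfrac14,\tfrac34\}^d$ via an explicit constant-rate error-correcting code and each edge to the straight segment between its endpoints' images; a good code guarantees property~(P1) (vertex images differ in a constant fraction of coordinates, hence have constant normalized-$\ell_2$ distance) and property~(P2) (edge images are well separated except near a shared endpoint). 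Then set $f(x)=x+g(x)$, where the displacement $g:H\to[-1,1]^d$ has constant magnitude $\delta$ (a constant factor larger than~$\eps$) and is defined by cases: $x$ decoding to an edge $(u,v)\in G$ gets $g(x)=\delta\,\gamma_{uv}$ with $\gamma_{uv}=(\sigma(v)-\sigma(u))/\n{\sigma(v)-\sigma(u)}$; $x$ decoding to a vertex $v$ interpolates between the incoming and outgoing directions, or between $\bfzero$ and the single incident direction if $v$ is a source or sink; $x$ decoding to $\bot$ gets $\delta$ times a fixed default direction, implemented by doubling to $2d$ coordinates. Use linear interpolation between the regime boundaries, and special handling near $0^n$ and near $\partial H$ to prevent spurious fixed points and to keep $x+g(x)\in H$; a preliminary reduction removes $2$-cycles from $G$.

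The two properties to verify are exactly those in Section~\ref{ss:props}. First, $f$ is $\lambda$-Lipschitz for a constant $\lambda$ independent of $d$: because (P1) and (P2) separate the ``near a vertex,'' ``near an edge,'' and ``far from everything'' regions by gaps of constant normalized-$\ell_2$ width, the linear interpolation between the (bounded, constant-magnitude) displacement vectors turns $g$ at a bounded rate, so distances are amplified by at most a constant. Second, there are no spurious approximate fixed points: away from the image of a source or sink of $G$ other than $0^n$, the vector $g(x)$ is $\delta$ times a vector of constant normalized-$\ell_2$ norm, so $\n{g(x)}\ge \delta c>\eps$; hence any $x$ with $\n{f(x)-x}=\n{g(x)}<\eps$ lies in a tiny ball around $\sigma(v)$ for some source or sink $v\neq 0^n$, from which $v$ --- and whether it is a source or a sink --- is recovered in polynomial time by decoding $x$ through the code and evaluating $S,P$ at $v$. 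Since $f$ has a polynomial-size circuit computable from $S,P$ and $d=\Theta(n)$, this is a polynomial-time reduction from \eol, completing the proof.

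I expect the main obstacle to be obtaining the constant Lipschitz bound and the constant $\eps$ \emph{simultaneously}: this is precisely what the $\ell_\infty$ construction of \citet{HPV89} does not give (there vertex images may differ in a single coordinate, so the natural displacement has normalized-$\ell_2$ norm $\Theta(1/\sqrt d)\ll\eps$), and it is why \citet{R16} replaces the embedding by a constant-distance code and re-engineers the interpolation. Getting the widths of the interpolation buffer zones to be a constant fraction of the now-constant inter-feature distances, while still routing the gradient cleanly around degree-two vertices and truncating it at sources and sinks, is the delicate part; the remaining bookkeeping (boundary handling, the $0^n$ gadget, $2$-cycle elimination, and tracking how $\eps$ propagates) is routine.
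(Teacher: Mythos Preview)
Your high-level plan matches the paper's, but there is a real gap at the step where you assert that ``a good code guarantees property~(P2)'' for the straight-line edge embedding. Large minimum distance only controls pairwise distances between vertex images; it says nothing about whether the straight segments between them stay separated. For a linear code this can fail outright: whenever there are nonzero codewords $b,c$ with $\mathrm{supp}(c)\subsetneq\mathrm{supp}(b)$, the midpoints of the segments $[\sigma(0),\sigma(b)]$ and $[\sigma(c),\sigma(b{+}c)]$ coincide exactly (both equal $\tfrac12$ in coordinates where $b_i=1$ and $\tfrac14$ where $b_i=0$), so two edges of $K$ with no shared endpoint have intersecting images. There is also an efficiency problem you do not address: your $f$ must determine, given $x$, which of the $\sim 4^n$ edges of $K$ (if any) $x$ is near, and a straight-line embedding gives no obvious polynomial-time procedure for this.

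The paper (Section~\ref{ss:ppadbfp}) fixes both issues by abandoning straight-line edges. It embeds only the edges actually present in $G$ (not all of $K$), mapping each directed edge $(u,v)$ to a four-segment path $(\sigma(u),\sigma(u),\mathbf{\tfrac14})\to(\sigma(u),\sigma(v),\mathbf{\tfrac14})\to(\sigma(u),\sigma(v),\mathbf{\tfrac34})\to(\sigma(v),\sigma(v),\mathbf{\tfrac34})\to(\sigma(v),\sigma(v),\mathbf{\tfrac14})$ in roughly $3d$ dimensions, so that along each segment two-thirds of the coordinates are frozen to a full codeword. The frozen blocks enforce (P2) directly (images of distinct edges differ in some frozen block by at least the code's minimum distance) and make local decoding efficient: from any point near the embedded graph you read off one endpoint from a frozen block via the code's decoder, then recover the other endpoint with a single call to $S$ or $P$. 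With this change to the edge embedding, the rest of your outline---the displacement field, the interpolation near vertices, the default direction, the $0^n$ and boundary gadgets---goes through as you describe.
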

%The \brouwer$(\n{\cdot}, d, \F, \epsilon)$ problem is easy when $d$
%and $\eps$ are both constant

The proof of Theorem~\ref{t:brouwer2} is closely related to the third
step of our communication complexity lower bound
(Section~\ref{s:ccbfp}), and in particular makes use of a similar
embedding of graphs into the hypercube with the properties~(P1)
and~(P2) described in Section~\ref{ss:embed1}.%
%and~\ref{ss:embed2}.
\footnote{We have reversed the chronology;
  Theorem~\ref{t:br17} was proved after Theorem~\ref{t:brouwer2} and
  used the construction in~\cite{R16} more or less as a black box.}
One major difference is that our proof of existence of the embedding
in Section~\ref{s:ccbfp} used the probabilistic method and hence is
not constructive (which is not an issue in the two-party communication
model), while the computational lower bound in
Theorem~\ref{t:brouwer2} requires an efficiently computable embedding.
%a constructive version.  
In particular, the
reduction from \eol to \brouwer$(\n{\cdot}, d, \F, \epsilon)$ must
efficiently produce a succinct description of the function~$f$ induced
by an instance of \eol, and it should be possible to efficiently
evaluate~$f$, presumably while using the given \eol circuits~$S$ and~$P$
only as black boxes.
% to be computationally efficient, it must be possible to produce 
%it should be possible to efficiently evaluate the
%function implicitly defined by $S$ and $P$ in polynomial time, while
%using $S$ and $P$ only as black boxes.  This ensures that the
%reduction can 
%This requires the ability 
%is important to be able 
For example, it should be possible to efficiently decode points of
the hypercube (to a vertex, edge, or~$\bot$, see
Section~\ref{ss:embed1}).
%, and to efficiently evaluate the function
%induced by the given circuits~$S$ and~$P$, while using~$S$ and~$P$
%only as black boxes.

Conceptually, the fixes for these problems are relatively simple.
First, rather than mapping the vertices 
%of the complete graph~$K$
randomly into the hypercube, the reduction in the proof of
Theorem~\ref{t:brouwer2} embeds the vertices using an error-correcting
code (with constant rate and efficient encoding and decoding
algorithms).  This enforces property~(P1) of Section~\ref{ss:embed1}.
Second, rather than using a straight-line embedding, the reduction is
more proactive about making the images of different edges stay far
apart (except for at shared endpoints).  Specifically, an edge of the
directed graph induced by the given \eol instance is now mapped to~4
straight line segments, and along each line segment, two-thirds of the
coordinates stay fixed.  (This requires blowing up the number of
dimensions by a constant factor.)  For example, the directed
edge~$(u,v)$ can be mapped to the path
\[
(\sigma(u),\sigma(u),\mathbf{\tfrac{1}{4}}) \mapsto
(\sigma(u),\sigma(v),\mathbf{\tfrac{1}{4}}) \mapsto
(\sigma(u),\sigma(v),\mathbf{\tfrac{3}{4}}) \mapsto
(\sigma(v),\sigma(v),\mathbf{\tfrac{3}{4}}) \mapsto
(\sigma(v),\sigma(v),\mathbf{\tfrac{1}{4}}),
\]
where~$\sigma$ denotes the error-correcting code used to map the
vertices to the hypercube and the boldface $\mathbf{\tfrac{1}{4}}$ and
$\mathbf{\tfrac{3}{4}}$ indicate the value of the last third of the
coordinates.  This maneuver enforces property~(P2) of
Section~\ref{ss:embed1}.  It also ensures that it is easy to decode
points of the hypercube that are close to the image of an edge of the
graph---at least one of the edge's endpoints can be recovered from the values
of the frozen coordinates, and the other endpoint can be recovered using
the given predecessor and successor circuits.\footnote{This embedding
  is defined only for the directed
  edges that are present in the given \eol instance, rather than for all
  possible edges (in contrast to
the embedding in Sections~\ref{ss:embed1}   and~\ref{ss:embed2}).}

\subsection{$\PPAD$-Complete Equilibrium Computation Problems}

% Later it was proved that the same result holds in $2$-dimension w.r.t $\ell_\infty$ metric. It turns out that the problem when the distance is euclidean distance is more challenging. Few years later  Daskalakis-Goldberg-Papadimitriou proved that the problem is still $\PPAD$-complete under $\ell_2$ distance metric.

%\begin{theorem}[\cite{DGP09}]
%\brouwer$(\ell_2, 4, \exp(-n)]$ is $\PPAD$-complete.
%\end{theorem}

%\begin{theorem}[\cite{CD09}]
%\brouwer$(\ell_2, 2, \exp(-n)]$ is $\PPAD$-complete.
%\end{theorem}

\citet{P94} defined the class $\PPAD$ in large part to capture the
complexity of computing a Nash equilibrium, conjecturing that the
problem is in fact $\PPAD$-complete.  Over a decade later, a flurry of
papers confirmed this conjecture.  First, Daskalakis, Goldberg, and
Papadimitriou~\cite{DGP06,GP06} proved that computing an $\eNE$ of a
four-player game, with $\eps$ inverse exponential in the size of the
game, is $\PPAD$-complete.  This approach was quickly
refined~\cite{CD05,DP05}, culminating in the proof of Chen and
Deng~\cite{CD06} that computing a Nash equilibrium (or even an $\eNE$
with exponentially small $\eps$) of a bimatrix game is
$\PPAD$-complete.  Thus the nice properties possessed by Nash
equilibria of bimatrix games (see Section~\ref{s:bimatrix}) are not
enough to elude computational intractability.  \citet{CDT06}
strengthened this result to hold even for values of $\eps$ that are
only inverse polynomial in the size of the game.\footnote{In
  particular, under standard complexity assumptions, this rules out an
  algorithm for computing an exact Nash equilibrium of a bimatrix game
  that has smoothed polynomial complexity in the sense of
  \citet{ST04}.  Thus the parallels between the simplex method and the
  Lemke-Howson algorithm (see Section~\ref{s:bimatrix}) only go so
  far.}  The papers by \citet{DGP09} and \citet{CDT09} give a full
account of this breakthrough sequence of results.
\begin{theorem}[\citet{DGP09,CDT09}]\label{t:cdt}
The problem of computing an $\eNE$ of an $n \times n$ bimatrix game is
$\PPAD$-complete, even when $\eps = 1/\poly(n)$.
\end{theorem}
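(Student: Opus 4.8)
The plan is to prove membership in $\PPAD$ and $\PPAD$-hardness separately; membership is already in hand. Indeed, as noted in Section~\ref{ss:ppadbfp}, the regularized best-response map from the proof of Nash's theorem (equations~\eqref{eq:nash1}--\eqref{eq:nash2}) is $O(1)$-Lipschitz, so computing an $\eNE$ with $\eps = 1/\poly(n)$ reduces to \brouwer and hence to \eol, putting the problem in $\PPAD$. So everything is in the hardness direction: I would exhibit a polynomial-time reduction from a $\PPAD$-complete problem to computing an $\eNE$ of a bimatrix game with $\eps = 1/\poly(n)$. The route is \brouwer $\le$ \problem{Generalized-Circuit} $\le$ (graphical/polymatrix game) $\le$ (bimatrix game), which is the Daskalakis--Goldberg--Papadimitriou / Chen--Deng / Chen--Deng--Teng program.

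For the first link I would start from \brouwer$(\n{\cdot}_{\infty}, 3, \F, \eps)$ with $\eps$ exponentially small, which is $\PPAD$-complete by Theorem~\ref{t:brouwer1}, and introduce the intermediate problem \problem{Generalized-Circuit}: a ``circuit'' assembled from gates computing addition, subtraction, multiplication by a rational constant, $\max$, $\min$, and a (brittle) comparator, but whose wiring graph may contain directed cycles, where a solution is an assignment of values in $[0,1]$ to the wires satisfying every gate constraint up to additive error $\eps$. Such an assignment always exists: the gates jointly define a continuous self-map of $[0,1]^{m}$, where $m$ is the number of wires, and any Brouwer fixed point is a solution, so \problem{Generalized-Circuit} lies in $\PPAD$. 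To reduce \brouwer to it, I would hard-code the (efficiently evaluable) Brouwer function as a constant-depth arithmetic circuit in its three input coordinates and then close the loop with feedback wires enforcing $x = f(x)$; the $\ell_\infty$ approximate-fixed-point guarantee becomes a per-wire error guarantee, and the coordinate-wise structure of $f$ keeps the circuit small.

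The second and third links are where the games appear. Following Daskalakis--Goldberg--Papadimitriou, I would represent the value $v \in [0,1]$ of a wire by the probability that a dedicated player plays a dedicated one of two pure strategies, and build for each gate a constant-size payoff gadget whose equilibrium conditions force the output wire to carry the prescribed function of the input wires up to $O(\eps)$; stitching these together produces a bounded-degree graphical game (equivalently, a polymatrix game) every $\eNE$ of which decodes to an $O(\eps)$-approximate solution of the \problem{Generalized-Circuit} instance. To collapse this to two players I would use the Chen--Deng simulation: after making the polymatrix interaction bipartite, let Alice's strategy set be the disjoint union of the strategy sets of the ``left'' players and Bob's the disjoint union of those of the ``right'' players, add a large zero-sum ``matching-pennies'' term that in equilibrium forces Alice and Bob each to split their probability mass equally among the players they simulate, and choose the remaining payoffs so that, conditioned on simulating a given player~$i$, a best response reproduces player~$i$'s behavior in the polymatrix game. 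Rescaling, every $\eNE$ of the resulting bimatrix game yields an $\eps \cdot \poly(n)$-approximate equilibrium of the polymatrix game.

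The step I expect to be the main obstacle is the quantitative error analysis required to push $\eps$ down only to inverse polynomial rather than inverse exponential: a naive composition would blow the error up through the $\poly(n)$ gate gadgets and again through the two-player simulation, yielding only Theorem~\ref{t:brouwer1}-strength hardness. Making $\eps = 1/\poly(n)$ work is the Chen--Deng--Teng contribution and requires (i) rewriting the circuit so that only ``robust'' gates appear, with the brittle comparator replaced by averaged/approximate variants that tolerate a bounded additive error per gate; (ii) controlling the depth and fan-in so accumulated error is polynomial, not exponential; and (iii) scaling the matching-pennies term in the bimatrix simulation so the simulation loss is $1/\poly(n)$. Carrying these constants consistently through the whole chain is the technical heart of the argument; the gadget constructions themselves, while intricate, are routine bookkeeping. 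Combining the hardness reduction with the $\PPAD$-membership noted above gives $\PPAD$-completeness, which is Theorem~\ref{t:cdt}.
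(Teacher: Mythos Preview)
The paper does not actually prove Theorem~\ref{t:cdt}; it states the result, attributes it to \cite{DGP09,CDT09}, and summarizes the idea in a single sentence (``define a bimatrix game so that every Nash equilibrium effectively performs a gate-by-gate simulation of the circuits of a given \eol instance''), referring the reader to the surveys \cite{J07,P07,DGPcacm,et} for details. Your outline is a faithful high-level account of the proof in those cited works---the \brouwer $\to$ generalized-circuit $\to$ graphical/polymatrix game $\to$ bimatrix game chain, with the Chen--Deng--Teng error-control refinements enabling $\eps = 1/\poly(n)$---so there is nothing in the paper itself to compare against. As a road map your sketch is accurate and matches the literature; the paper simply declines to reproduce what it calls a ``tour de force'' spanning the original papers.
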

The proof of Theorem~\ref{t:cdt}, which is a tour de force, is also
outlined in the surveys by \citet{J07}, \citet{P07}, \citet{DGPcacm},
and \citet{et}.  Fundamentally, the proof shows how to define a
bimatrix game so that every Nash equilibrium effectively performs a
gate-by-gate simulation of the circuits of a given \eol instance.

Theorem~\ref{t:cdt} left open the possibility that,
for every constant $\eps > 0$,
an $\eNE$ of a bimatrix game can be computed in polynomial time.
(Recall from Corollary~\ref{cor:lmm2} that one can be computed in {\em
  quasi-polynomial} time.)
A decade later,  \citet{R16} ruled out this possibility (under
suitable complexity assumptions) by proving a
quasi-polynomial-time hardness result for the problem 
%of computing an
%$\eNE$ of a bimatrix game 
when~$\eps$ is a sufficiently small constant.
We will have much more to say about this result in Solar Lecture~5.

% Finally, Chen, Deng \& Teng showed that computing $\epsilon$-Brouwer's Fixed point is still hard for $\epsilon$ as large as inverse polynomial in $n$.

%\begin{theorem}[\cite{CDT09}]
%\brouwer$(\ell_2, 2, 1/\poly(n))]$ is $\PPAD$-complete.
%\end{theorem}

%Recently, Rubinstein settled the $\PPAD$-completeness of $\epsilon$-$\NE$ for a constant $\epsilon>0$ by showing the following theorem.

%\begin{theorem}[\cite{R16}]
%\brouwer$(\ell_2, 2, \Omega(1)]$ is $\PPAD$-complete.
%\end{theorem}

\section{Are $\TFNP$ Problems Hard?}\label{s:evidence}

It's all fine and good to prove that a problem is as hard as any other
problem in $\PPAD$, but what makes us so sure that $\PPAD$ problems
(or even $\TFNP$ problems) can be computationally difficult?  

\subsection{Basing the Hardness of $\TFNP$ on Cryptographic
  Assumptions}

The first evidence of hardness of problems in $\TFNP$ came in the form of
exponential lower bounds for functions given as ``black
boxes,'' or equivalently query complexity lower bounds, as in
Proposition~\ref{c:eol} for the \eol problem or \citet{HPV89} for the
\brouwer~problem.

Can we relate the hardness of $\TFNP$ and its subclasses to other
standard complexity assumptions?  Theorem~\ref{t:mp91} implies that
we can't base hardness of $\TFNP$  on the assumption that $\ptime \neq
\NP$, unless $\NP = \coNP$.  What about cryptographic assumptions?
After all, the problem of inverting a one-way permutation belongs to
$\TFNP$ (and even the subclass $\PPP$).  Thus, sufficiently strong
cryptographic assumptions imply hardness of $\TFNP$.

Can we prove hardness also for all of the other interesting subclasses of
$\TFNP$, or can we establish the hardness of $\TFNP$ under weaker
assumptions (like the existence of one-way functions)?  
Along the former lines, a recent sequence of papers (not discussed
here) show that sufficiently strong cryptographic assumptions 
%about indistinguishability obfuscation (i.o.) 
imply that $\PPAD$ is hard~\cite{BPR15,GPS16,RSS17,HY17,C+19}.
The rest of this lecture covers a recent result in the second
direction by \citet{HNY17},
who show that the average-case hardness of $\TFNP$ can be based on the
average-case hardness of $\NP$.  (Even though the worst-case hardness
of $\TFNP$ {\em cannot} be based on that of $\NP$, unless
$\NP=\coNP$!)
Note that assuming that $\NP$ is hard on average is only weaker than
assuming
% average-case hardness of $\NP$ is an only weaker assumption
the existence of one-way functions.

% Can we relate hardness of these total search classes to some other
% complexity assumption (eg. cryptographic)? We end this lecture by
% proving an interesting recent result by Hub{\'{a}}cek-Naor-Yogev about
% average case complexity of the class $\TFNP$.

\begin{theorem}[\citet{HNY17}]
\label{theorem:average_hard}
If there exists a hard-on-average
language in $\NP$, then there
exists a hard-on-average search problem in $\TFNP$.
\end{theorem}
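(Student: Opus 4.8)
The plan is to take the promised hard-on-average language $L \in \NP$, with witness relation $R$ and a samplable hard distribution $\mathcal{D}$ over instances, and build a total search problem whose solutions encode either an $R$-witness or a ``pigeonhole escape hatch'' that is guaranteed to exist no matter whether the given instance is a yes- or a no-instance of $L$. The fundamental tension to resolve is precisely the one that Theorem~\ref{t:mp91} exploits in the other direction: a $\TFNP$ problem must have a solution on \emph{every} input, whereas a no-instance of $L$ has no witness at all, so the naive search problem ``given $x$, find $w$ with $R(x,w)=1$'' is not total and conveys nothing when $x \notin L$.

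First I would put $(L,\mathcal{D})$ into a convenient normal form. Using standard hardness amplification for distributional $\NP$ problems (a direct product together with an XOR/hardcore argument), I would obtain a samplable distribution on which no polynomial-time algorithm predicts the membership indicator $\chi_L(x)$ with advantage better than $1/\poly(n)$ over the trivial ``always output the more likely label'' strategy; I would also arrange — by discarding or padding, since otherwise $L$ would not have been hard on average to begin with — that both the yes-probability and the no-probability of the distribution are bounded away from $0$ and $1$.

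Next, the totalization. The search problem $\Pi$ has instances of the form $(x,\mathrm{gadget})$, where $x$ is drawn from the normal-form distribution and $\mathrm{gadget}$ supplies a succinctly described \emph{compressing} object, so that a collision in it exists by the pigeonhole principle. A solution to $\Pi$ on $(x,\mathrm{gadget})$ is \emph{either} (i) a string $w$ with $R(x,w)=1$, \emph{or} (ii) a pigeonhole collision in $\mathrm{gadget}$. Totality is immediate from branch~(ii); solutions are polynomially long and verifiable in polynomial time (run the $\NP$ verifier for $R$, or check the claimed collision), so $\Pi \in \TFNP$, and for a pigeonhole-style gadget $\Pi$ even reduces to \problem{Pigeon} and lies in $\PPP \subseteq \TFNP$. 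The hardness claim is that a polynomial-time algorithm solving $\Pi$ with noticeable probability over the instance distribution yields a polynomial-time predictor for $\chi_L$ on $\mathcal{D}$ with noticeable advantage, contradicting the normal form. The reduction runs the assumed $\Pi$-solver on $(x,\mathrm{gadget})$ for a freshly sampled $\mathrm{gadget}$: if it returns an $R$-witness, declare $x \in L$; otherwise declare $x \notin L$. On a no-instance the solver is forced into branch~(ii), so this declaration is correct whenever the solver succeeds; the whole advantage over the trivial baseline therefore comes from the yes-instances on which the solver chooses branch~(i).

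The crux — and the step I expect to be the main obstacle — is engineering $\mathrm{gadget}$ so that branch~(ii) is not a ``free pass'' allowing the $\Pi$-solver to ignore $x$ entirely. A generic compressing hash (a random linear map, say) has \emph{trivially findable} collisions, so the solver would always take branch~(ii), $\Pi$ would be easy, and the reduction would collapse; yet we are not permitted to assume cryptographic collision resistance, since ``$\NP$ hard on average'' is weaker than the existence of one-way functions. The resolution is to construct the gadget \emph{out of} $R$ and $\mathcal{D}$ — interleaving the compression with the hard instance so that \emph{any} collision in it can be converted efficiently into either an $R$-witness for $x$ or a correct guess of $\chi_L(x)$ — and then to track parameters so that the accumulated error stays below the $1/\poly$ advantage forbidden by the normal form. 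Getting this interleaving right while keeping the gadget succinct and the problem total is the heart of the argument; the remaining work (amplification constants, union bounds over the gadget's randomness, and the reductions witnessing $\Pi \in \TFNP$) is routine bookkeeping.
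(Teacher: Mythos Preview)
Your proposal correctly identifies the central tension (a $\TFNP$ problem must be total, but no-instances of $L$ have no witness) and correctly flags the make-or-break step: you need the ``escape hatch'' branch~(ii) to exist always (for totality) yet not be a free pass (for hardness), and you are not allowed to assume collision-resistant hashing. But the resolution you sketch---``construct the gadget out of $R$ and $\mathcal{D}$ so that any collision converts into an $R$-witness or a correct guess of $\chi_L(x)$''---is precisely the missing idea, not a proof of one. You have not said what function the gadget computes, why it is compressing, or why a collision in it yields a predictor for $\chi_L$; absent that, the argument is circular (a gadget whose collisions reveal $\chi_L(x)$ already encodes the hard bit, and it is unclear how to build such a thing efficiently from $(L,\mathcal{D})$ alone). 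Since you concede this is ``the heart of the argument,'' the proposal as written is a plan with its load-bearing step left blank.

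The paper's proof takes a completely different route that sidesteps the gadget problem entirely. Rather than ``witness for $x$ \emph{or} pigeonhole collision,'' it asks for a witness for \emph{some} member of a correlated family of instances $D(\hat r\oplus s_1),\ldots,D(\hat r\oplus s_{2n})$, where the shifts $s_1,\ldots,s_{2n}$ are chosen (via a Lautemann-style probabilistic argument) so that for \emph{every} seed $\hat r$ at least one of these is a yes-instance---this is what makes the problem total, with no second branch at all. Hardness then comes from a symmetry/hiding argument: given the real instance $D(\hat r)$, the reduction picks a random index $i$, sets $r^\star=\hat r\oplus s_i$, and hands the solver the family generated from $r^\star$; since all $2n$ choices of $(i,\hat r)$ produce the same family, the solver cannot tell which slot is ``real'' and with probability $\ge 1/(2n)$ its witness lands on it, yielding a $1/\poly$ advantage for deciding $L$. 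The non-uniformity of the good shift set (and the public-coin requirement on the sampler) are then handled separately. This ``correlated instances plus random re-indexing'' idea is the actual content you are missing; your pigeonhole-gadget framework, even if it could be made to work, would need an idea of comparable strength where you currently have a placeholder.
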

There is some fine print in the precise statement of the result (see
Remarks~\ref{rem:public} and~\ref{rem:uniform}), but the statement in
Theorem~\ref{theorem:average_hard} is the gist of it.\footnote{The
  amount of fine print was reduced very recently by \citet{PV19}.}

\subsection{Proof Sketch of Theorem~\ref{theorem:average_hard}}

% We will prove a weaker version of the above theorem. First, we assume
% that the average case hardness of $\NP$ under the distribution which
% uses {\em public coins} (i.e algorithm has access to the random coins
% using which the instance is generated). Second, we will show average
% case hardness of a problem in $\TFNP/\poly$, a non-uniform version of
% the class $\TFNP$. It turns out that both these assumptions can be
% gotten rid off and we refer to the main paper \cite{HNY16} for more
% details.

%\begin{theorem}[\cite{HNY16}]
%  If there exists a public coin hard on average language in $\NP$ then
%  there exists a hard on average search problem in $\TFNP/\poly$.
%\end{theorem}

%\begin{proof}

Let $L$ be a language in $\NP$ that is hard on average w.r.t.\ some
family of distributions $D_n$ on input strings of length $n$. 
Average-case hardness of $(L, D_n)$ means that there is no
polynomial-time algorithm with an advantage of $1/\poly(n)$ over
random guessing when the input is sampled according to $D_n$ (for any
polynomial).  Each~$D_n$ should be efficiently sampleable, so that
hardness cannot be baked into the input distribution.
%
%One natural candidate for a hard-on-average problem in $\NP$
%is that of inverting a one-way
%function on a random range element.  Since a one-way function is 
%generally not surjective, this problem is not total and hence does not
%belong to $\TFNP$.\footnote{Note that the related problem of inverting
%  $f(x)$ for a randomly chosen domain element $x$ is a promise problem
%  (with the input promised to have an inverse) and hence ineligible
%  for membership in $\TFNP$.}
Can we convert such a problem into one that is
total while retaining its average-case hardness?

%This
%implies  that probability of sampling Yes/No instance w.r.t $D_n$ is
%almost  equal. We will crucially use this fact.\\

%How do we construct a problem in $\TFNP$ from $L$? Consider the following attempt:\\

Here's an initial attempt:
\begin{mdframed}[style=offset,frametitle={Attempt \#1}]
%\noindent {\bf Attempt 1} \\
%(uniform)\\
%\\
{\bf Input:} $l$ independent samples $x_1, x_2, \ldots, x_l$ from
$D_n$.\\ % each sampled independently.\\
{\bf Output:} a witness for some $x_i \in L$.
%witness of any of $x_i \in L$.
\end{mdframed}
For sufficiently large $l$, this problem is 
``almost total.''  Because $(L,D_n)$ is hard-on-average, random
instances are nearly equally likely to be ``yes'' or ``no'' instances
(otherwise a constant response would beat random guessing).
Thus, except with probability $\approx 2^{-l}$, at least one of the
sampled instances~$x_i$ is a ``yes'' instance and has a witness.
Taking~$l$ polynomial in~$n$, we get a problem that is total except
with exponentially small probability.  How can we make it ``totally
total?''

% - This follows because probability that all $x_i \notin L$ is close to $1/2^l$ and hence as $l\rightarrow \infty$, with probability $1$ we sample $x_i$ such that $x_i\in L$ and hence the solution exists. 

The idea is to sample the $x_i$'s in a correlated way, using a random
shifting trick reminiscent of Lautemann's proof that $\BPP \subseteq
\Sigma_2 \cap \Pi_2$~\cite{L83}.  This will give a non-uniform version
of Theorem~\ref{theorem:average_hard}; Remark~\ref{rem:uniform}
sketches the changes necessary to get a uniform version.

Fix~$n$.  Let $D_n(r)$ denote the output of the sampling algorithm
for~$D_n$, given the
random seed $r \in \zo^n$.  (By padding, we can assume that the
input length and the random seed length both equal~$n$.)
%
% But we cannot afford $l$ being super-polynomial in $n$. The next idea to get around this issue is finding some fixed set of strings $s_j$ for $1\leq j \leq O(n)$ such that it is guaranteed that one of $x_i \oplus s_j \in L$ (not quite but a similar guarantee). 
%
% We can assume w.l.o.g. that the random seed used in generating input from $D_n$ is of length $n$ (if not we can pad the input). Let $D_n(r)$ denote the output on seed $r\in \{0,1\}^n$.
%
%\begin{definition}
%Fix $n$. 
Call a set containing the strings $s_1, s_2, \ldots, s_l \in \zo^n$
{\em good} if
for every seed $r\in \{0,1\}^n$ there exists an index $i\in [l]$ such
that $D(r\oplus s_i) \in L$.
%\end{definition}
We can think of the $s_i$'s as masks; goodness then means that there
is always a mask whose application yields a ``yes'' instance.

%We now show that for all $n$ there exists a set $\mathcal{S}_n$ of size linear in $n$ which is good.

\begin{claim}
\label{claim:goodness}
If $s_1, s_2, \ldots, s_{2n} \sim \{0,1\}^n$ are sampled uniformly and
independently, then $\{ s_1,\ldots,s_{2n} \}$ is good 
except with exponentially small probability.
%with high probability.
\end{claim}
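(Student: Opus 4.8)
The plan is to prove Claim~\ref{claim:goodness} by a direct probabilistic argument --- the random-shifting trick in the spirit of Lautemann --- reducing ``goodness'' to a union bound over the $2^n$ possible seeds $r$. The only structural input needed is that, under the standing average-case hardness assumption, a random instance $D_n(r)$ is a ``yes'' instance with probability bounded away from~$0$.

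\textbf{Step 1: a random instance is a ``yes'' instance with constant probability.} Set $p_n := \Pr_{r \sim \zo^n}[D_n(r) \in L]$. I claim $p_n \ge \tfrac13$ for all sufficiently large~$n$. Consider the polynomial-time algorithm that ignores its input and always outputs ``no''; on input $x \sim D_n$ it is correct with probability $1 - p_n$. By the average-case hardness of $(L, D_n)$, this (or any) polynomial-time algorithm beats random guessing by at most $1/q(n)$ for every polynomial $q$ and all large~$n$; taking $q(n) = n$ gives $1 - p_n \le \tfrac12 + \tfrac1n$, i.e.\ $p_n \ge \tfrac12 - \tfrac1n \ge \tfrac13$ once $n \ge 6$.

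\textbf{Step 2: union bound over seeds.} Fix a seed $r \in \zo^n$. For each $i \in [2n]$, since $s_i$ is uniform on $\zo^n$ the string $r \oplus s_i$ is uniform as well, so $\Pr[D(r \oplus s_i) \in L] = p_n$. The $s_i$'s are independent, hence so are the events $\{D(r \oplus s_i) \notin L\}$, and
\[
\Pr\Bigl[\,\forall\, i \in [2n]\colon D(r \oplus s_i) \notin L \,\Bigr] \;=\; (1 - p_n)^{2n} \;\le\; (2/3)^{2n}.
\]
The set $\{s_1,\dots,s_{2n}\}$ fails to be good exactly when some seed $r$ witnesses the event above, so a union bound over the $2^n$ seeds yields
\[
\Pr\bigl[\,\{s_1,\dots,s_{2n}\}\text{ is not good}\,\bigr] \;\le\; 2^n \cdot (2/3)^{2n} \;=\; (8/9)^n,
\]
which is exponentially small, proving the claim.

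\textbf{Main obstacle.} There is essentially no obstacle; the argument is short. The one point that needs care is the choice of the sample count $l = 2n$: it must be large enough that $2^n (1-p_n)^{l}$ is exponentially small given only the crude bound $p_n \ge \tfrac13$, and the constant~$2$ is precisely what makes $2\cdot(2/3)^2 = 8/9 < 1$. A weaker bound $p_n \ge c$ for a small constant $c > 0$ would still work with $l$ a suitably larger constant multiple of~$n$. Note also that this argument is existential: it produces a good set with high probability over the $s_i$'s (to be fixed later as non-uniform advice), which is exactly what is needed for the non-uniform version of Theorem~\ref{theorem:average_hard}.
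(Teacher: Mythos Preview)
Your proof is correct and follows essentially the same approach as the paper: fix a seed $r$, use that $r \oplus s_i$ is uniform so $D_n(r \oplus s_i)$ is a yes-instance with probability bounded away from~$0$ by the average-case hardness assumption, exploit independence to bound the failure probability for that fixed~$r$, and finish with a union bound over all $2^n$ seeds. The only difference is cosmetic: the paper writes ``roughly $50\%$'' and ``$\approx 2^{-2n}$'' informally, whereas you make the constant explicit via $p_n \ge \tfrac13$ and obtain the clean bound $(8/9)^n$.
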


\begin{proof}
  Fix a seed $r\in \{0,1\}^n$.  The distribution of $r \oplus s_i$
  (over $s_i$) is uniform, so $D_n(r \oplus s_i)$ has a roughly 50\%
  chance of being a ``yes'' instance (since $(L,D_n)$ is hard on
  average).  Thus the probability (over $s_1,\ldots,s_{2n}$) that
  $D_n(r \oplus s_i)$ is a ``no'' instance for {\em every} $s_i$ is
  $\approx 2^{-2n}$.  Taking a union bound over the $2^n$ choices
  for~$r$ completes the proof.
%
% Probability that the set of
%   strings is not good because of this $r$ i.e probability of the event
%   that for all $i\in [2n]$, $D(r\oplus s_i) \notin L$ is roughly
%   $1/2^{2n}$. This follows because $r\oplus s_i$ is uniformly
%   distributed in $\{0,1\}^n$ and we know that the probability of
%   sampling Yes/No instance w.r.t $D_n$ is almost equal. We can now
%   union bound for all $r\in \{0,1\}^n$ - with probability at least
%   $1-2^n/2^{2n} > 0$ the sampled set is good.
\end{proof}

Consider now the following reduction, from the assumed hard-on-average
$\NP$ problem $(L,D_n)$ to a hopefully hard-on-average $\TFNP$ problem.
\begin{mdframed}[style=offset,frametitle={\bf Attempt 2} (non-uniform)]
%\noindent {\bf Attempt 2} (non-uniform)\\
{\bf Chosen in advance:} A good set of strings $\{s_1, s_2, \ldots,
s_{2n}\}$.\\
 {\bf Input:} an instance $x$ of $(L,D_n)$, in the form of the random
 seed $\hat{r}$ used to generate $x = D_n(\hat{r})$.\\
%$\hat{r}\in \{0,1\}^n$ and $x = D(\hat{r})$.\\
 {\bf Output:} a witness for one of the instances
$D(\hat{r}\oplus s_1),\ldots,D(\hat{r}\oplus s_{2n})$.
\end{mdframed}
By the definition of a good set of strings, there is always at least
one witness of the desired form, and so the output of this reduction
is a $\TFNP$ problem (or more accurately, a $\TFNP/\poly$ problem,
with $s_1,\ldots,s_{2n}$ given as advice).
Let $D'$ denote the distribution over instances of this problem
induced by the uniform distribution over $\hat{r}$.  It remains to
show how a (non-uniform) algorithm that solves this
$\TFNP/\poly$ problem (with respect to~$D'$)
can be used to beat random guessing (with inverse polynomial
advantage) for $(L,D_n)$ in a comparable amount of time.
%
%Let us denote the induced distribution on the instances of $\TFNP$ problem by $D'$. We need to show two things : 
%\begin{enumerate}
%\item The reduced problem is in $\TFNP/\poly$.
%\item The reduction works!.
%\end{enumerate}
%
%The first requirement follows from the Claim~\ref{claim:goodness} we
%proved. 
%To see the second requirement, we show that suppose there exists an
%algorithm %which solves the reduced problem on average then we can
%use the algorithm to %decide $L$ with an advantage of $1/\poly(n)$
%over random guessing. Let $A$ be %an algorithm which solves the
%reduced problem. Consider 
Given an algorithm~$A$ for the former problem
(and the corresponding good set of strings), 
consider the following algorithm $B$ for~$(L,D_n)$.

\vspace{10pt}
\begin{mdframed}[style=offset,frametitle={Algorithm $B_{s_1, s_2,
      \ldots, s_{2n}}$}]
%\noindent {\bf Algorithm $B_{s_1, s_2, \ldots, s_{2n}} (\hat{r},  %D(\hat{r}))$:}
{\bf Input:} A random instance $x$ of $(L,D_n)$ and the random
seed~$\hat{r}$ that generated it (so $x = D_n(\hat{r})$).
%A good set of strings $\{s_1, s_2, \ldots,
%s_{2n}\}$.\\
\begin{enumerate}
\item Choose $i\in [2n]$ uniformly at random.
\item Set $r^\star = \hat{r} \oplus s_i$.
\item Use the algorithm $A$ to generate a witness~$w$ for one
  of the instances
$$D(r^\star\oplus s_1), D(r^\star\oplus s_2), \ldots, D(r^\star\oplus s_{2n}).$$
(Note that the $i$th problem is precisely the one we want to solve.)
\item If $w$ is a witness for $D(r^\star\oplus s_i)$,
then output  ``yes.'' 
\item 
Otherwise, randomly answer ``yes'' or ``no'' (with 50/50
  probability).
%solves for $i$ then output Yes otherwise Output Yes/No with probability $(1/2, 1/2)$.
\end{enumerate}
\end{mdframed}

Consider a ``yes'' instance~$D_n(\hat{r})$ of~$L$.  If algorithm~$A$
happens to output a witness to the $i$th instance $D_n(r^\star\oplus s_i) =
D_n(\hat{r})$, then algorithm~$B$ correctly decides the problem.  The
worry is that the algorithm~$A$ somehow conspires
%inspects $r^{\star}$ and uses this
%information 
to always output a witness for an instance other than the ``real''
one.  

Suppose algorithm~$A$, when presented with the instances
$D(r^\star\oplus s_1), D(r^\star\oplus s_2), \ldots, D(r^\star\oplus
s_{2n})$, exhibits a witness for the $j$th instance $D(r^\star\oplus
s_j)$.  This collection of instances could have been produced by the
reduction in exactly~$2n$ different ways: 
with $i=1$ and $\hat{r}=r^{\star} \oplus s_1$,
with $i=2$ and $\hat{r}=r^{\star} \oplus s_2$, and so on.
Since~$i$ and~$\hat{r}$ were chosen independently and uniformly at
random, each of these~$2n$ outcomes is equally likely, and
algorithm~$A$ has no way of distinguishing between them.  Thus
whatever~$j$ is, $A$'s witness has at least a $1/2n$ chance of being a
witness for the true problem $D_n(\hat{r})$ (where the probability is
over both~$\hat{r}$ and~$i$).  We conclude that, for ``yes'' instances
of~$L$, algorithm~$B$ has advantage~$\tfrac{1}{2n}$ over random
guessing.  Since roughly 50\% of the instances $D_n(\hat{r})$ are
``yes'' instances (since $(L,D_n)$ is average-case hard),
algorithm~$B$ has advantage roughly~$\tfrac{1}{4n}$ over random
guessing for~$(L,D_n)$.  This contradicts our assumption that
$(L,D_n)$ is hard on average.

% The only information algorithm~$B$ can use to do this is its
% knowledge of $s_1,\ldots,s_{2n}$ and $r^\star$; it does not know
% $\hat{r}$ or~$i$.  Given~$r^{\star}$, there are $2n$ different
% possibilities for~$\hat{r}$ (namely $r^\star \oplus s_1,\ldots,r^\star
% \oplus s_{2n}$); , each equally likely.

%\begin{claim}
%For all $n$, Algorithm $B_{s_1, s_2, \ldots, s_{2n}}$ on input $\hat{r}$ and %$x = D(\hat{r})$, decides if $x\in L$ with probability at least %$1/2+1/\poly(n)$.
%\end{claim}

%\begin{proof}
%Since $\hat{r}$ is uniformly distributed, so is $r^{\star}$.
%  We first note that $r^\star$ is distributed uniformly in
%  $\{0,1\}^n$. 
%Thus the instance
%  $(D(r^\star\oplus s_1), D(r^\star\oplus s_2), \ldots, D(r^\star\oplus
%  s_l))$
%provided to the Algorithm $A$ is distributed according to the 
%  distribution $D'$ that it is assumed to work well for.
%. Since $i$ was chosen uniformly from $[2n]$, the
%  algorithm $A$ returns a witness for $D(r^\star \oplus s_i)$ with
%  probability $1/2n$. Instance $D(r^\star \oplus s_i)$ is nothing but
%  $D(\hat{r})$. Therefore, if the input $x \in L$ then the algorithm
%  $B$ returns Yes with probability at least $1/2+1/\poly(n)$.
%\end{proof}

We have completed the proof of Theorem~\ref{theorem:average_hard},
modulo two caveats.

%\end{proof}

\begin{remark}[Public vs.\ Private Coins]\label{rem:public}
  The algorithm~$B$ used in the reduction above beats random guessing
  for $(L,D_n)$, provided the algorithm receives as input the random
  seed $\hat{r}$ used to generate an instance of~$(L,D_n)$.  That is,
  our current proof of Theorem~\ref{theorem:average_hard} assumes
  that~$(L,D_n)$ is hard on average {\em even with public coins}.
  While there are problems in $\NP$ conjectured to be average-case
  hard in this sense (like random SAT near the phase transition),
  it would be preferable to have a
  version of Theorem~\ref{theorem:average_hard} that allows for
  private coins.
%only assumes
%  that~$(L,D_n)$ with private coins.  
Happily, \citet{HNY17} prove
  that there exists a private-coin average-case hard problem in $\NP$
  only if there is also a public-coin such problem.  This implies that
  Theorem~\ref{theorem:average_hard} holds also in the private-coin
  case.
\end{remark}

\begin{remark}[Uniform vs.\ Non-Uniform]\label{rem:uniform}
Our proof of Theorem~\ref{theorem:average_hard} only proves hardness
for the non-uniform class $\TFNP/\poly$.  (The good set $\{
s_1,\ldots,s_{2n} \}$ of strings is given as ``advice'' separately
for each~$n$.)  It is possible to extend the argument to (uniform)
$\TFNP$, under some additional (reasonably standard) complexity assumptions.
The idea is to use techniques from derandomization.  We already know
from Claim~\ref{claim:goodness}
that almost all sets of $2n$ strings from $\zo^n$ are good.  Also,
the problem of checking whether or not a set of strings is good is a
$\Pi_2$ problem (for all $r \in \zo^n$ there exists $i \in [2n]$ such
that $D_n(r \oplus s_i)$ has a witness).  Assuming that there is a
problem in $\E$ with exponential-size $\Pi_2$ circuit complexity, it
is possible to derandomize the probabilistic argument and efficiently
compute a good set $\{ s_1,\ldots,s_l\}$ of strings (with $l$ larger
than $2n$ but still polynomial in~$n$), \`{a} la \citet{IW97}.
%The proof gives an average case hardness for a problem in $\TFNP/\poly$. One can show the average case hardness of the uniform version of the same problem by using derandomization techniques to find a good set $s_1, s_2, \ldots, s_{2n}$.
\end{remark}

An important open research direction is to extend
Theorem~\ref{theorem:average_hard} to subclasses of $\TFNP$, such as
$\PPAD$.

% It might be worth thinking whether the hardest problem in the class
% $\TFNP$ is  hard on average or is it because of its one of the subclasses we know. It is quite possible that under the same hard on average assumption of $\NP$, there is a subclass of $\TFNP$ which is also hard on average. \\

\vspace{.5\baselineskip}

\noindent{\bf Open Problem:} Does an analogous
average-case hardness result hold for $\PPAD$?

\lecture{The Computational Complexity
  of Computing an Approximate Nash Equilibrium}

\vspace{1cm}

%% Your lecture LaTeX goes here.

% aviad comments:
% ps: one picky comment for the computational section...
% in theorem 5.1.5 you cite the BKW'15 lower bound as giving
% \tilde{\Omega}O(\log n) in the exponent.
% because they use the Moshkovitz-Raz PCP, their lower bound is a little
% weaker (equivalent to what salil already corrected for theorem
% 5.1.4). 
% but by now there is a paper by yakov's postdoc and coauthors that does
% get \tilde{\Omega}O(\log n) :-) 

\section{Introduction}

%\footnote{The results described in these lectures imply the
%  non-obvious fact that approximate Nash equilibrium computation also
%  reduces to approximate fixed-point computation with respect to a
%  normalized version of the $\ell_2$
%  norm.}  

% Recall that in Lecture 8 \Snote{or Solar Lecture 4 --- not sure what numbering we are using}, we stated the following results about the PPAD-hardness of Nash Equilibria, without proof:

Last lecture we stated without proof the result by \citet{DGP09} and
\citet{CDT09} that computing an $\eps$-approximate Nash equilibrium of
a bimatrix game is $\PPAD$-complete, even when $\eps$ is an inverse
polynomial function of the game size (Theorem~\ref{t:cdt}).  Thus, it
would be surprising if there were a polynomial-time (or even
subexponential-time) algorithm for this problem.
%
%\begin{theorem}[\cite{DGP09}]
%$\eps$-Nash in 2-player games is PPAD-complete for $\eps=\exp(-n)$.
%\end{theorem}
%
%\begin{theorem}[\cite{CDT09}]
%$\eps$-Nash in 2-player games is PPAD-complete for $\eps=1/\poly(n)$.
%\end{theorem}
%
Recall from Corollary~\ref{cor:lmm2} in Solar Lecture~1 that the story is
different for constant values of $\eps$, where an $\eps$-approximate
Nash equilibrium can be computed in quasi-polynomial (i.e., $n^{O(\log
  n)}$) time.

% Consequently, we do not expect polynomial-time algorithms or even subexponential-time algorithms for $\eps$-Nash with $\eps\leq 1/\poly(n)$.
% In contrast, for  the case of constant $\eps$, we saw in Lecture 1 \Snote{Solar Lecture 1} that there is a quasipolynomial-time algorithm:

%\begin{theorem}[\cite{LMM03}]
%$\eps$-Nash can be solved in time $n^{O(\log n)/\eps^2}$.
%\end{theorem}

The Pavlovian response of a theoretical computer scientist to a
quasi-polynomial-time algorithm is to conjecture that a
polynomial-time algorithm must also exist.  (There are only a few
known natural problems that appear to have inherently quasi-polynomial
time complexity.)  But recall that the algorithm in the proof of
Corollary~\ref{cor:lmm2} is just exhaustive search over all
probability distributions that are uniform over a multi-set of
logarithmically many strategies (which is good enough, by
Theorem~\ref{t:lmm}).  Thus the algorithm reveals no structure of the
problem other than the fact that the natural search space for it has
quasi-polynomial size.  It is easy to imagine that there are no
``shortcuts'' to searching this space, in which case a
quasi-polynomial amount of time would indeed be necessary.  How would
we ever prove such a result?  Presumably by a non-standard
super-polynomial reduction from some $\PPAD$-complete problem like
succinct \eol (defined in Section~\ref{ss:seol}).  This
might seem hard to come by, but in a recent breakthrough, \citet{R16}
provided just such a reduction!
%\footnote{See Section~\ref{ss:seol} for
%  the definition of the succinct \eol problem.}
\begin{theorem}[\cite{R16}] \label{thm:Aviad} For all sufficiently
  small constants $\eps>0$, for every constant $\delta > 0$,
there is no $n^{O(\log^{1-\delta} n)}$-time
%  \Snote{taken from Aviad's paper. in lecture it was stated as
%    $n^{o(\log n)}$)} 
algorithm for computing an $\eps$-approximate Nash equilibrium of a
bimatrix game, unless the succinct \eol problem has a
$2^{O(n^{1-\delta'})}$-time algorithm for some constant $\delta' > 0$.
\end{theorem}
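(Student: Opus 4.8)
The plan is to recast the reduction chain of the communication lower bound (\eol $\to$ \bfp $\to$ $\eNE$ from Solar Lectures~2--3) as a \emph{computationally efficient} reduction from the succinct \eol problem, and to insert a \emph{birthday-repetition} step that replaces the exponential blow-up of the naive approach by a quasi-polynomial one. Write $n$ for the input length of the succinct \eol circuits $S,P$, so the implicit graph has $2^n$ vertices, and suppose for contradiction that $\eps$-approximate Nash equilibria of $M\times M$ bimatrix games can be computed in time $M^{O(\log^{1-\delta}M)}$. The goal is a reduction that, given $(S,P)$, produces in time $\mathrm{poly}(M)$ a bimatrix game of size $M=2^{\tilde\Theta(\sqrt n)}$ all of whose $\eps$-approximate equilibria (for a fixed sufficiently small constant $\eps$) encode a solution of the given succinct \eol instance. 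Feeding this game to the hypothesized algorithm then solves succinct \eol in time $M^{O(\log^{1-\delta}M)}=2^{O((\log M)^{2-\delta})}=2^{O(n^{1-\delta'})}$ for any $\delta'<\delta/2$, contradicting the ETH-type hardness of succinct \eol.

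First I would invoke the constructive, normalized-$\ell_2$ version of PPAD-hardness of Brouwer, Theorem~\ref{t:brouwer2}: from $(S,P)$ one efficiently builds a poly-size circuit for an $O(1)$-Lipschitz map $f\colon[0,1]^d\to[0,1]^d$ with $d=\Theta(n)$ whose only $\eps$-approximate fixed points (in the normalized $\ell_2$ norm) lie near the images of sources or sinks of the induced graph, using the error-correcting ``graph-in-the-hypercube'' embedding of Section~\ref{ss:ppadbfp}. The property I really need to extract is the one highlighted in Section~\ref{ss:local}: $f$ is \emph{locally decodable}, i.e.\ $f(x)$ depends only on the $S,P$-labels of the $O(1)$ vertices obtained by decoding $x$ against the fixed embedding, and these labels can be recovered using $S,P$ only as black boxes. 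Thus succinct \eol reduces to computing a constant-$\eps$ approximate fixed point of an explicit, $O(1)$-Lipschitz, locally decodable Brouwer function in dimension linear in $n$, with the additional ``PCP-type'' feature that the correctness of a candidate decoding can be certified by inspecting only a constant number of vertex-labels.

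Second -- the new ingredient -- I would apply the birthday-repetition paradigm (à la Aaronson--Impagliazzo--Moshkovitz and Braverman--Ko--Weinstein) to avoid paying a strategy per vertex. Partition the relevant ``local information'' of an \eol instance into $\approx\sqrt n$ windows, and let each player's pure strategy commit to a random collection of $\Theta(\sqrt n)$ windows together with the corresponding $S,P$-labels; by the birthday paradox two independently drawn strategies share a window with constant probability, so consistency checks between the two players already pin down an honest, globally consistent labeling even though the game has only $M=2^{\tilde\Theta(\sqrt n)}$ pure strategies. On top of this I would run a McLennan--Tourky-style construction, adapted exactly as in Step~4 of the communication proof (Theorem~\ref{t:mt06} and Section~\ref{s:mt06}): Alice's strategy encodes a grid point $x\in\discH$ of the discretized hypercube plus her windowed labels $\alpha$, Bob's encodes $z\in\discH$ plus his windowed labels $\beta$; Alice is rewarded for $x\approx z$ and for $\alpha$ agreeing with $\beta$ on shared windows, Bob for $z\approx f_\alpha(x)$ and for $\beta$ being the true windows for $x$, where $f_\alpha$ is the speculative value of $f$ computable from the announced labels alone. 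Local decodability is what makes each player's payoff privately computable from a single strategy, and the birthday overlap is what forces $\alpha,\beta$ to be honest at (approximate) equilibrium, so that $f_\alpha$ coincides with the true $f$ near the consensus point.

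The main obstacle, and the bulk of the technical work, is \emph{soundness} for \emph{approximate} equilibria in the \emph{normalized} $\ell_2$ norm. Because of the $\tfrac1d$ averaging in the payoffs (the ``culprit'' noted in the footnote to Section~\ref{ss:mt06}), an $\eps$-approximate equilibrium lets each player behave arbitrarily on an $O(\eps)$-fraction of windows and coordinates, so one must show that a suitable consensus point $\bar x$ extracted from any $\eps$-Nash equilibrium of the quasi-polynomial-size game is still an $\eps'$-approximate fixed point of $f$ with $\eps'$ depending only on $\eps$ -- hence, via Step~1, a genuine \eol solution. This needs: (i) a robust ``hardness on a line'' strengthening of Theorem~\ref{t:brouwer2}, so that corrupting an $O(1)$-fraction of coordinates cannot create a spurious normalized-$\ell_2$ approximate fixed point; (ii) a birthday-soundness analysis robust to $\eps$-deviations, keeping the effective overlap probability bounded below by a constant so the consistency terms still bite; and (iii) careful tracking of the chain of constants $\eps\mapsto\eps'\mapsto\cdots$ to confirm that one fixed sufficiently small constant $\eps$ in the final bimatrix game propagates back to a valid solution of succinct \eol. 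Assembling (i)--(iii) and checking that writing down the $M\times M$ payoff matrices (each entry a poly-time computation in $n$ invoking $S,P$ as oracles) takes time $\mathrm{poly}(M)$ completes the proof.
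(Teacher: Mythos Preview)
Your high-level architecture (succinct \eol $\to$ constant-$\eps$ \bfp in dimension $d=\Theta(n)$ $\to$ bimatrix game of size $2^{\tilde\Theta(\sqrt n)}$) matches the paper, but there is a genuine gap in the second step that makes your reduction produce a game of size $2^{\Theta(n)}$, not $2^{\tilde\Theta(\sqrt n)}$.

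You write that ``Alice's strategy encodes a grid point $x\in\discH$ of the discretized hypercube plus her windowed labels~$\alpha$.'' But here $d=\Theta(n)$, so $|\discH|\approx(1/\eps)^d=2^{\Theta(n)}$; encoding a full point $x$ already makes the strategy set exponentially large before you ever add the labels. In the communication proof (Section~\ref{s:mt06}) this was harmless because there $d=\Theta(\log N)$; transplanting that construction ``exactly'' to $d=\Theta(n)$ is precisely the naive Attempt~\#1 the paper rules out. Your birthday repetition is applied only to the \emph{label} component~$\alpha$, so it cannot repair the blow-up coming from~$x$ itself.

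The paper's fix is to decompose the \emph{point} rather than (only) the advice. View $x\in[0,1]^d$ as a $\sqrt d\times\sqrt d$ matrix; Alice's pure strategy names one row index $i\in[\sqrt d]$ together with the $\sqrt d$ entries $x_{i*}$, and Bob names one column index $j$ together with $y_{*j}$. Now each strategy has description length $\tilde O(\sqrt d)=\tilde O(\sqrt n)$, and the two strategies deterministically intersect at a single coordinate $(i,j)$ where the McLennan--Tourky payoffs are evaluated. To prevent players from concentrating on a single index (your Attempt-\#2 failure mode), the paper glues in $\sqrt d\times\binom{\sqrt d}{\sqrt d/2}$ \emph{Alth\"ofer games}, which force each player's marginal on indices to be $\eps$-close to uniform in every $\eps$-approximate equilibrium (Claim~\ref{claim:Althofer}). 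This row/column structure, not a random-subset birthday argument, is what delivers the $2^{\tilde O(\sqrt n)}$ size while guaranteeing interaction at \emph{every} coordinate in expectation.

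A second, smaller gap: invoking the ``local decodability'' of Section~\ref{ss:local} is not enough on its own, because Bob's payoff at coordinate $(i,j)$ must be computable from $x_{i*}$ alone (he never sees the other rows of $x$). The paper handles this by first reducing to \textsc{Local \eol} (Lemma~\ref{l:local}), in which $S,P$ are $\mathrm{NC}^0$ and change only $O(1)$ bits, and then layering PCP-style machinery so that each output coordinate of $f$ depends on only $d^{o(1)}$ input coordinates plus $d^{o(1)}$ bits of checkable advice. This is what makes the blockwise payoffs well defined and is a substantially stronger structural property than the three-vertex local decoding you cite.
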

In other words, assuming an analog of the Exponential Time
Hypothesis (ETH)~\cite{IPZ01} for $\PPAD$, the quasi-polynomial-time
algorithm in Corollary~\ref{cor:lmm2} is essentially
optimal!\footnote{To obtain a quantitative lower bound like the
  conclusion of Theorem~\ref{thm:Aviad}, it is necessary to make a
  quantitative complexity assumption (like an analog of ETH).  This
  approach belongs to the tradition of ``fine-grained'' complexity
  theory.}\footnote{How plausible is the assumption that the ETH holds
  for $\PPAD$, even after assuming that the ETH holds for $\NP$ and
 that $\PPAD$ has no polynomial-time algorithms?  The answer is far from
  clear, although there are exponential query lower bounds for $\PPAD$
  problems (e.g.~\cite{HPV89}) and no known techniques that show
  promise for a subexponential-time algorithm for the succinct \eol
  problem.}  

Three previous papers that used an ETH assumption (for
$\NP$) along with PCP machinery to prove quasi-polynomial-time lower
bounds for $\NP$ problems are:
\begin{enumerate}

\item
\citet{AaronsonImMo14}, for the problem of computing the value of free
games (i.e., two-prover proof systems with stochastically independent
questions), up to additive error $\eps$;

\item
\citet{BravermanYoWe15}, for the problem of computing the
$\eps$-approximate Nash equilibrium with the highest expected
sum of player payoffs; and

\item \citet{BKRW17}, for the problem of distinguishing graphs with a
$k$-clique from those that only have $k$-vertex subgraphs with density
at most $1-\eps$.

\end{enumerate}
In all three cases, the hardness results apply when $\eps > 0$ is a
sufficiently small constant.  Quasi-polynomial-time algorithms are
known for all three problems.

%showed that time $n^{O(\log n)}$
%is the best possible, assuming the ``Exponential-time Hypothesis (ETH)
%for PPAD'':

The main goal of this lecture is to convey some of the ideas in the
proof of Theorem~\ref{thm:Aviad}.  The proof is a tour de force and
the paper~\cite{R16} is~57 pages long, so our treatment will
necessarily be impressionistic.  
We hope to explain the following:
\begin{enumerate}

\item What the reduction in Theorem~\ref{thm:Aviad} must look like.
  (Answer: a blow-up from size~$n$ to size $\approx 2^{\sqrt{n}}$.)

\item How a $n \mapsto \approx 2^{\sqrt{n}}$-type blowup
  can naturally arise in a reduction to the problem of computing an
  approximate Nash equilibrium.

\item Some of the tricks used in the reduction.

\item Why these tricks naturally lead to the development and
  application of PCP machinery.

\end{enumerate}

% Before doing so, we note previous results that gave hope for such a
% result, proving similar time lower bounds for problems related to Nash
% equilibria under the standard ETH for NP (together with PCP
% machinery):
%\begin{theorem}[\cite{AaronsonImMo14,BravermanYoWe15}]
%  Assuming that SAT does not have a $2^{o(n)}$-time algorithm, the
%  following two problems require time $n^{\tOmega(\log n)}$ for all
%  sufficiently small constants $\eps>0$:
%\begin{itemize}
%
%\item Approximating the value of a 2-prover``free game'' to within an
%  additive $\eps$.  (In a free game, there are two cooperating provers
%  trying to make a referee accept.)~\cite{AaronsonImMo14}
% \item Finding an $\eps$-Nash equilibrium whose social welfare is within an additive $\eps$ of the optimal social welfare achieved by Nash equilibria.~\cite{BravermanYoWe15}

%\end{itemize}
%\end{theorem}

\section{Proof of Theorem~\ref{thm:Aviad}: An Impressionistic Treatment}

\subsection{The Necessary Blow-Up}

The goal is to reduce length-$n$ instances of the succinct \eol problem
to length-$f(n)$ instances of the problem of computing an
$\eps$-approximate Nash equilibrium with constant $\eps$, so that a
sub-quasi-polynomial-time algorithm for the latter implies a
subexponential-time algorithm for the former.  Thus the mapping $n
\mapsto f(n)$ should satisfy $2^n \approx f(n)^{\log f(n)}$
and hence $f(n) \approx 2^{\sqrt{n}}$.  That is, we should be looking
to encode a length-$n$ instance of succinct \eol as a $2^{\sqrt{n}}
\times 2^{\sqrt{n}}$ bimatrix game.
%
%Since the proof of Theorem~\ref{thm:Aviad} is very involved, we will
%only be able to outline some of its ideas.  To relate the
%exponential-time hardness of EoL to quasipolynomial hardness of
%$\eps$-Nash, we need to give a reduction that maps
%$$\mbox{(instances of
%  EoL of size $n$)} \mapsto \mbox{(2-player games with
%  $\approx 2^{\sqrt{n}}$ strategies)}.$$
%
The $\sqrt{n}$ will essentially come from the ``birthday paradox,''
with random subsets of~$[n]$ of size~$s$ likely to intersect once
$s$ exceeds~$\sqrt{n}$.
%it is the size at which random subsets of $[n]$ are likely to
%intersect.
%need for partitioning $[n]$ so that two random sets
%from the family are likely to intersect.  
The blow-up from $n$ to
$2^{\sqrt{n}}$ will come from PCP-like machinery, as well as a
game-theoretic gadget (``Alth\"ofer games,'' see Section~\ref{ss:althofer})
that forces players to randomize nearly uniformly over size-$\sqrt{n}$
subsets of~$[n]$ in every approximate Nash equilibrium.

\subsection{The Starting Point: \bfp}

The starting point of the reduction is the $\PPAD$-complete version of
the \bfp problem in Theorem~\ref{t:brouwer2}.  We restate that result
here.
\begin{theorem}[\citet{R16}]\label{t:brouwer3}
  The \brouwer$(\n{\cdot}, d, \F, \epsilon)$ problem is
  $\PPAD$-complete when 
the functions in~$\F$ are $O(1)$-Lipschitz functions from
 the $d$-dimensional hypercube $H=[0,1]^d$ to itself, 
$d$ is linear in the description length~$n$ of a function in~$\F$, 
$\n{\cdot}$ is the normalized $\ell_2$ norm
  (with $\n{x} = \sqrt{\tfrac{1}{d} \sum_{i=1}^d x_i^2}$), and $\eps$
  is a sufficiently small constant.
\end{theorem}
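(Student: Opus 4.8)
The plan is to give a polynomial-time reduction from the succinct \eol problem (Section~\ref{ss:seol})---which is $\PPAD$-complete by construction---to \brouwer$(\n{\cdot},d,\F,\eps)$ with the stated parameters. Membership in $\PPAD$ is the easy direction: it follows from an efficient, constructive version of the chain Brouwer $\to$ Sperner $\to$ \eol sketched in Sections~\ref{s:bfp} and~\ref{ss:ppadbfp}, which is valid whenever $\eps$ is not too small relative to the Lipschitz constant; here $\eps$ is a positive constant, $d=\Theta(n)$, and the functions are $O(1)$-Lipschitz, so this is fine. The crux is hardness.

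Given circuits $S,P$ defining a directed graph $G$ on $V=\zo^n$ with in- and out-degree at most~$1$ and known source $0^n$, the goal is to build an efficiently evaluable, $O(1)$-Lipschitz displacement $g\colon H\to[-1,1]^d$ with $H=[0,1]^d$ and $d=\Theta(n)$, such that $f(x)=x+g(x)$ maps $H$ into itself and every $x$ with $\n{g(x)}<\eps$ (normalized $\ell_2$) can be efficiently decoded to a sink or a non-$0^n$ source of $G$. The steps I would carry out, in order:
\begin{enumerate}
\item Embed $V$ into $H$ via a constant-rate error-correcting code $\sigma$ with efficient encoding and decoding, so that distinct vertices map to points differing in a constant fraction of coordinates. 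This is the constructive replacement for the random straight-line embedding of Section~\ref{ss:embed1} and enforces property~(P1), while blowing up the dimension only by a constant, keeping $d=\Theta(n)$.
\item Embed each directed edge $(u,v)$ \emph{present in $G$} (not all of $K$) as a concatenation of four segments $(\sigma(u),\sigma(u),\mathbf{\tfrac{1}{4}})\mapsto(\sigma(u),\sigma(v),\mathbf{\tfrac{1}{4}})\mapsto(\sigma(u),\sigma(v),\mathbf{\tfrac{3}{4}})\mapsto(\sigma(v),\sigma(v),\mathbf{\tfrac{3}{4}})\mapsto(\sigma(v),\sigma(v),\mathbf{\tfrac{1}{4}})$, tripling the number of coordinates so that along every segment two-thirds of the coordinates stay fixed. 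This enforces property~(P2) \emph{and} makes edge images locally decodable: the frozen coordinates of a point near an edge image reveal (the codeword of) one endpoint, and then $S$ and $P$ recover the other.
\item Define $g$ so its value ``follows the line'': near an edge image it equals $\delta\cdot\gamma_{uv}$ (the unit vector along the corresponding segment), near an interior vertex it interpolates between the incoming and outgoing edge directions (``turning slowly''), near a source or sink it interpolates down to $\mathbf{0}$, and far from $\sigma(V)\cup\sigma(E)$ it equals $\delta$ times a default direction implemented in $O(1)$ extra doubled coordinates. Use linear interpolation in the transition regions, and cap norms, so that $f$ is $\lambda$-Lipschitz for a constant $\lambda$ and $x+g(x)\in H$ always; handle $0^n$ specially (map it near a corner with $g$ pointing inward) so it is not a fixed point.
\item Pick the constant $\delta$ larger than $\eps$ by a constant factor. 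Then wherever $x$ is not close to the image of a genuine source or sink $\neq 0^n$, a constant fraction of the coordinates of $g(x)$ have magnitude $\Theta(\delta)$, so $\n{g(x)}\ge\delta\cdot c>\eps$; hence any $\eps$-approximate fixed point lies near $\sigma(v)$ for such a $v$ and decodes to an \eol solution.
\item Verify efficiency: evaluating $f$ at $x$ requires only decoding $x$ (to a vertex, an edge, or $\bot$) and at most a constant number of calls to $S$ and $P$, so $f$ has a polynomial-size circuit, and the whole reduction runs in polynomial time.
\end{enumerate}

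The main obstacle is Step~2 together with the Lipschitz and no-spurious-fixed-point analysis of Step~3 \emph{in the normalized $\ell_2$ norm}. In the $\ell_\infty$ constructions of \citet{HPV89} and \citet{P94} it suffices that distinct edge images be separated in a single coordinate, and a random or straight-line embedding works; but under normalized $\ell_2$ a point can be $\ell_\infty$-far from an edge image while being $\ell_2$-close to it, so one must actively force the images of non-adjacent edges to differ in a constant fraction of coordinates---which is precisely what freezing two-thirds of the coordinates along each of the four segments buys, at the price of an efficiently decodable ECC embedding and a constant-factor dimension blow-up. Carefully checking that the interpolations in the transition regions are $O(1)$-Lipschitz and introduce no $x$ with $\n{g(x)}<\eps$ other than near true sources and sinks is the delicate bookkeeping; this is essentially the content of \citet{R16}, and it mirrors the (non-constructive) Step~3 of the communication lower bound in Section~\ref{s:ccbfp}, with the probabilistic embedding there replaced by the explicit error-correcting construction here.
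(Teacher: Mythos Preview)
Your proposal is correct and follows essentially the same approach as the paper: the reduction from succinct \eol via an error-correcting-code vertex embedding plus the four-segment edge embedding (freezing two-thirds of the coordinates), with the ``follow the line'' displacement and the doubled default-direction coordinates, is precisely what Section~\ref{ss:ppadbfp} outlines as the constructive replacement for the probabilistic embedding of Section~\ref{s:ccbfp}. You have also correctly identified the key obstacle---that the normalized $\ell_2$ norm forces non-adjacent edge images to differ in a constant fraction of coordinates rather than a single one---and its resolution.
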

The proof is closely related to the reduction from \cceol to \ccbfp
outlined in Section~\ref{s:ccbfp}, and Section~\ref{ss:ppadbfp}
describes the additional ideas needed to prove
Theorem~\ref{t:brouwer3}.  As long as the error-correcting code used
to embed vertices into the hypercube (see Section~\ref{ss:ppadbfp})
has linear-time encoding and decoding algorithms (as in~\cite{S97},
for example), the reduction can be implemented in linear time.  In
particular, our assumption that the succinct \eol problem has no
subexponential-time algorithms automatically carries over to this
version of the \bfp problem.  In addition to the properties of the
functions in~$\F$ that are listed in the statement of
Theorem~\ref{t:brouwer3}, the proof of Theorem~\ref{thm:Aviad}
crucially uses the ``locally decodable'' properties of these functions
(see Section~\ref{ss:local}).

% \paragraph{Initial Step: $\mbox{EoL} \leq \mbox{$\eps$-BFP}$.}
% Similarly to the communication complexity lower bounds we saw in
% Lecture 3 \Snote{(Solar Lecture 2)}, the reduction starts by reducing
% EoL to finding an $\eps$-approximate Brouwer Fixed Point (the
% ``$\eps$-BFP'' problem) for a constant $\eps>0$ and where distance is
% measured in the normalized $\ell_2$
% norm.\footnote{$\| a\| = \sqrt{\sum_{i=1}^d a_i^2/d}$.}  Here the
% Brouwer function $f : H \rightarrow H$ maps (a discretized version of)
% the cube $H= [0,1]^d$ to itself for $d=\Theta(n)$, where $n$ is the
% circuit size of the EoL problem, and $f$ is also specified as a
% circuit.  The discretization of $H$ is into a grid of side length
% $\Theta(\eps)$.  It turns out that the same reduction we gave in the
% communication complexity setting works here in the computational
% setting, provided that we use an error-correcting code with efficient
% encoding and decoding.

%The proof of Theorem~\ref{thm:Aviad} also uses some additional
%properties of the functions~$f$ of the hard Brouwer problem in %Theorem~\ref{t:brouwer2},
%from  the function
%$f$, 
%essentially that it is computable in $\mathrm{NC}^0$ (i.e., each of the
%$d$ output coordinates depends only a constant number of
%input coordinates).  

\subsection{\bfp $\le \eNE$ (Attempt \#1): Discretize McLennan-Tourky}

%\paragraph{First Attempt at $\mbox{$\eps$-BFP} \leq
%\mbox{$\eps$-Nash}$: Analytic Reduction.}
One natural starting point for a reduction from \bfp to $\eNE$ is the
McLennan-Tourky analytic reduction in Section~\ref{ss:mt06}.  Given a
description of an~$O(1)$-Lipschitz
function~$f:[0,1]^d \rightarrow [0,1]^d$, with~$d$ linear in the
length~$n$ of the function's description, the simplest reduction would
proceed as follows.  Alice and Bob each have a strategy set
corresponding to the discretized hypercube~$H_{\eps}$ (points of
$[0,1]^d$ such that every coordinate is a multiple of $\eps$).
Alice's and Bob's payoffs are defined as in the proof of
Theorem~\ref{t:mt06}: for strategies $x,y \in H_{\eps}$, Alice's
payoff is
\begin{equation}\label{eq:apayoff3}
1 - \n{x-y}^2 = 
1 - \frac{1}{d} \sum_{i=1}^d (x_i-y_i)^2
\end{equation}
and Bob's payoff is
\begin{equation}\label{eq:bpayoff3}
1 - \n{y-f(x)}^2 = 
1 - \frac{1}{d} \sum_{j=1}^d (y_j-f(x)_j)^2.
\end{equation}
(Here $\n{\cdot}$ denotes the normalized $\ell_2$ norm.)
Thus Alice wants to imitate Bob's strategy, while Bob wants to imitate
the image of Alice's strategy under the function~$f$.

%A first attempt is to use the analytic reduction from BFP to Nash that
%we used in the communication complexity setting.  That is:

%\begin{itemize}
%\item Alice's strategy: $x\in H$.
%\item Bob's strategy: $y\in H$.
%\item Alice's payoff:  $1-\|x-y\|^2$.
%\item Bob's payoff: $1-\|y-f(x)\|^2$.
%\end{itemize}

This reduction is correct in that in every $\eps$-approximate Nash equilibrium of
this game, Alice's and Bob's strategies are concentrated around an
$O(\eps)$-approximate fixed point 
of the given function~$f$
(in the normalized $\ell_2$ norm).  See also the discussion in Section~\ref{ss:mt06}.

The issue is that the reduction is not efficient enough.  
Alice and Bob each have $\Theta((1/\eps)^d)$ pure strategies; since $d
= \Theta(n)$, this is exponential in the size~$n$ of the given \bfp
instance, rather than exponential in $\sqrt{n}$.
%discretization, the number of pure strategies of Alice and Bob is
%$\Theta(1/\eps)^d$, so we get a game of size exponential in $n$ rather
%than in $\sqrt{n}$.
% instead of $2^{\sqrt{n}}$. 
This exponential blow-up in size means that this reduction has no
implications for the problem of computing an approximate Nash equilibrium.

\subsection{Separable Functions}\label{ss:separable}

How can we achieve a blow-up exponential in~$\sqrt{n}$ rather than
in~$n$?  We might guess that the birthday paradox is somehow
involved.  To build up our intuition, we'll discuss at length a
trivial special case of the \bfp problem.  It
turns out that the hard functions used in Theorem~\ref{t:brouwer3} are
in some sense surprisingly close to this trivial case.

For now, we consider only instances~$f$ of \bfp where $f$ is {\em
  separable}. 
That is, $f$ has the form
%\mdframed
%t for now, let's pretend that $f$ has
%the following even simpler structure:
\begin{equation}\label{eq:separable}
f(x_1,\ldots,x_d) = (f_1(x_1),\ldots,f_d(x_d))
\end{equation}
for efficiently computable functions $f_1,\ldots,f_d :
[0,1]\rightarrow [0,1]$.
Separable functions enjoy the ultimate form of ``local decodability''---to
compute the $i$th coordinate of $f(x)$, you only need to know the
$i$th coordinate of~$x$.
Finding a fixed point of a separable function is easy: the problem decomposes
into $d$ one-dimensional fixed point problems (one per coordinate),
and each of these can be solved efficiently by a form of binary
search.  The hard functions used in Theorem~\ref{t:brouwer3} possess
a less extreme form of ``local decodability,'' in that each coordinate
of $f(x)$ can be computed using only a small amount of ``advice'' 
about~$f$ and~$x$ 
(cf., the \ccbfp $\le \eNE$ reduction in Section~\ref{ss:step4}).

\subsection{\bfp $\le \eNE$ (Attempt \#2): Coordinatewise Play}\label{ss:coordinatewise}
%paragraph{Second Attempt: Coordinatewise Play.}

Can we at least compute fixed points of separable functions via
approximate Nash equilibria, using a reduction with only subexponential
blow-up?  The key idea is, instead of Alice and Bob each picking one
of the (exponentially many) points of the discretized hypercube
$H_{\eps}$, each will pick {\em only a single coordinate} of points
$x$ and $y$.  Thus a pure strategy of Alice comprises an index
$i \in [d]$ and a number $x_i \in [0,1]$ that is a multiple of $\eps$,
and similarly Bob chooses~$j \in [d]$ and $y_j \in [0,1]$.  Given
choices $(i,x_i)$ and $(j,y_j)$, Alice's payoff is defined as
\[
  \begin{cases}
   1 - (x_i-y_i)^2     & \quad \text{if $i=j$} \\
    0 & \quad \text{if $i \neq j$} \\
  \end{cases}
\]
and Bob's payoff is
\[
  \begin{cases}
   1 - (y_i-f_i(x_i))^2     & \quad \text{if $i=j$} \\
    0 & \quad \text{if $i \neq j$.} \\
  \end{cases}
\]
Thus Alice and Bob receive payoff~0 unless they ``interact,'' meaning
choose the same coordinate to play in, in which case their payoffs are
analogous to~\eqref{eq:apayoff3} and~\eqref{eq:bpayoff3}.  Note that
Bob's payoff is well defined only because we have assumed that~$f$ is
separable (Bob only knows the coordinate $x_i$ proposed by Alice, but
this is enough to compute the $i$th coordinate of the output of~$f$
and hence his payoff).  Each player has only $\approx \tfrac{d}{\eps}$
strategies, so this is a polynomial-time reduction, with no blow-up.
%
%To reduce the size of the strategy space, we can try to have Alice and
%Bob each play only one coordinate of the vectors $x$ and $y$ at a
%time:
%
%\begin{itemize}
%\item Alice's strategy: an index $i\in [d]$ and $x_i\in [0,1]$.
%\item Bob's strategy: an index $j\in [d]$ and $y_j\in [0,1]$.

%\item Alice's payoff:  $1-(x_i-y_j)^2$ if $i=j$, and 0 otherwise.
%\item Bob's payoff: $1-(y_j-f_i(x_i))^2$ if $i=j$, and 0 otherwise.
%\end{itemize}

%Note that to define Bob's payoff we are using the assumed
%coordinate-by-coordinate decomposition of the function $f$.  
The good news is that (approximate) fixed points give rise to
(approximate) Nash equilibria of this game.  Specifically,
%Now, it can be verified that 
if $\hat{x}=\hat{y}=f(\hat{x})$ is a fixed point of $f$, then the
following is a Nash equilibrium (as you should check): Alice and Bob
pick their coordinates $i,j$ uniformly at random and set
$x_i=\hat{x}_i$ and $y_j=\hat{y}_j$.
% yields a Nash equilibrium of the above game.  
%Similarly, an $\eps$-approximate fixed
%point of $f$ yields an $\eps$-Nash equilibrium of the game.
The problem is that
%Unfortunately, 
the game also has equilibria other than the intended ones, for
example where Alice and Bob choose pure strategies with $i=j$ and
$x_i=y_i=f_i(x_i)$.
%a
%and Bob each use only a single coordinate (e.g. $i=j=1$), or they randomize
%their coordinates uniformly over a strict subset of $[d]$.

\subsection{\bfp $\le \eNE$ (Attempt \#3): Gluing Alth\"ofer Games}\label{ss:althofer}
%\paragraph{Third Attempt: Althofer Games.}

Our second attempt failed because Alice and Bob were not forced to
randomize their play over all~$d$ coordinates.  We can address this
issue with a game-theoretic gadget called an {\em Alth\"ofer
  game}~\cite{A94}.\footnote{Similar ideas have been used previously, including
  in the proofs that computing an $\eps$-approximate Nash equilibrium
  with $\eps$ inverse polynomial in~$n$ is a $\PPAD$-complete problem~\cite{DGP09,CDT09}.}
For a positive and even integer~$k$, this $k \times \binom{k}{k/2}$ game
is defined as follows.
%To fix the above problem, we can use a game-theoretic gadget called
%{\em Althofer games} that forces the players to randomize uniformly
%over a set $[k]$.
%
\begin{itemize}
\item Alice chooses an index $i \in [k]$.
\item Bob chooses a subset $S \sse [k]$ of size~$k/2$.
\item Alice's payoff is 1 if $i\in S$, and -1 otherwise.
\item Bob's payoff is -1 if $i\in S$, and 1 otherwise.
\end{itemize}
For example, here is the payoff matrix for the
$k=4$ case (with only Alice's payoffs shown):
\[
\left( 
\begin{array}{cccccc}
1 & 1 & 1 & -1 & -1 & -1\\
1 & -1 & -1 & -1 & 1 & 1\\
-1 & 1 & -1 & 1 & -1 & 1\\
-1 & -1 & 1 & 1 & 1 & -1\\
\end{array}
\right)
\]
%when either Alice
%or Bob plays uniformly at random, the other player cannot make the
%expected payoff nonzero.
Every Alth\"ofer game is a zero-sum game with value 0: for both
players, choosing a uniformly random strategy guarantees expected
payoff~0.  The following claim proves a robust converse for Alice's
play.  Intuitively, if Alice deviates much from the uniform
distribution, Bob is well-positioned to punish her.\footnote{The
  statement and proof here include a constant-factor improvement, 
due  to Salil Vadhan, 
over those in~\cite{R16}.}

\begin{claim} \label{claim:Althofer} In every $\eps$-approximate Nash
  equilibrium of an Alth\"ofer game, Alice's strategy is
  $\eps$-close to uniformly random in statistical distance (a.k.a.\
  total variation distance).
\end{claim}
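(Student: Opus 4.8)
The plan is to use the fact that every Alth\"ofer game is a zero-sum game of value $0$ for which the uniform distribution $u$ on $[k]$ is a minimax strategy for both players. Fix an $\eps$-approximate Nash equilibrium $(p,q)$, with $p$ Alice's mixed strategy over $[k]$ and $q$ Bob's mixed strategy over the $k/2$-subsets, and write $t_i = \Pr_{S\sim q}[i\in S]$, so that $\sum_{i=1}^k t_i = k/2$ and Alice's expected payoff is $v = 2\langle p,t\rangle - 1$ (while Bob's is $-v$). The first step is to read off the two approximate-best-response inequalities. Since Alice could move all of her mass onto any single coordinate $i$, the equilibrium condition gives $2t_i - 1 \le v + \eps$, i.e.\ $t_i \le \langle p,t\rangle + \tfrac\eps2$ for every $i$; and since Bob could switch to the set $B$ of the $k/2$ coordinates carrying the least $p$-mass, it gives $2p(B) - 1 \ge v - \eps$, i.e.\ $p(B) \ge \langle p,t\rangle - \tfrac\eps2$, where $p(B) = \min_{|S|=k/2} p(S)$.

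The second ingredient is a purely combinatorial lemma: for any probability vector $p$ on $[k]$ (with $k$ even), $\|p-u\|_{\mathrm{TV}} \le 1 - 2p(B)$. To prove it, write $p_i = \tfrac1k + c_i$ with $\sum_i c_i = 0$, so that $\|p-u\|_{\mathrm{TV}} = \sum_{i\,:\,c_i>0} c_i =: \delta$ and, since $|B| = k/2$, $1 - 2p(B) = -2\sum_{i\in B} c_i$. Sorting the $c_i$ in decreasing order makes $B$ the bottom half, and a short estimate — the sum of the $j$ smallest of $n$ nonnegative numbers summing to $\delta$ is at most $\tfrac jn\delta$ — applied either to the nonnegative block $\{-c_i\}$ (when at most $k/2$ of the $c_i$ are positive) or to the positive block $\{c_i\}$ (otherwise) shows that $-\sum_{i\in B} c_i \ge \tfrac12\delta$. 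This gives the lemma.

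Putting the pieces together, $\sum_i t_i = k/2$ forces $\max_i t_i \ge \tfrac12$, so Alice's inequality gives $\langle p,t\rangle \ge \tfrac12 - \tfrac\eps2$, Bob's inequality then gives $p(B) \ge \tfrac12 - \eps$, and the lemma yields $\|p-u\|_{\mathrm{TV}} \le 1 - 2p(B) \le 2\eps$. The remaining point — and, I expect, the only genuinely delicate one — is to improve this to the claimed factor $\eps$. The loss of a factor of two comes from spending an $\tfrac\eps2$ slack in Alice's inequality and another in Bob's, and these two cannot in fact both be saturated. To make this precise I would introduce the per-coordinate slacks $\eta_i := \langle p,t\rangle + \tfrac\eps2 - t_i \ge 0$; one then has the identity $\langle p,\eta\rangle = \tfrac\eps2$ (using $\sum_i p_i = 1$), so $\langle p,t\rangle \ge \tfrac12$ would follow from $\langle u,\eta\rangle \ge \langle p,\eta\rangle$, and one must track how the constraints $\eta_i \ge 0$, $\,t_i = \langle p,t\rangle + \tfrac\eps2 - \eta_i \ge 0$, $\,\sum_i t_i = k/2$, and $p(B) \ge \langle p,t\rangle - \tfrac\eps2$ interact with the shape of $p$ in order to close the gap. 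As a sanity check on the constant, the bound is essentially tight: as $k\to\infty$ one can approach it with distributions $p$ that put a little extra mass on one coordinate and are uniform elsewhere.
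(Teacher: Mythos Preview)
Your argument up through the $2\eps$ bound is correct and is, at its core, the paper's argument. Your combinatorial lemma $\|p-u\|_{\mathrm{TV}} \le 1 - 2p(B)$ is exactly the inequality the paper proves (phrased there as ``Bob's expected payoff from playing the bottom-half set $S$ is at least the statistical distance''); the paper derives it by a direct sorted-sum manipulation under the case $p_{i_{k/2}}\le 1/k$, while you derive it via the averaging bound on the sum of the $j$ smallest of $n$ nonnegative numbers. Both work. To finish, the paper invokes ``Bob cannot achieve a payoff larger than $\eps$ (else Alice would deviate to uniform)''; read carefully, that sentence bounds Bob's \emph{equilibrium} payoff by $\eps$, and one still needs Bob's $\eps$-best-response condition to pass to his best-response payoff $1-2p(B)$, picking up the second $\eps$. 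So the paper's own argument also delivers $2\eps$, just as yours does.

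Stop trying to shave the factor of two: it is genuinely there. For large $k$ take $p_1 = \tfrac1k + c$, $p_j = \tfrac1k - \tfrac{c}{k-1}$ for $j\ge 2$ (so $\|p-u\|_{\mathrm{TV}}=c$), and let $q$ have marginals $t_1 = \tfrac14$, $t_j = \tfrac12 + \tfrac{1}{4(k-1)}$ (this lies in the base polytope of $U_{k/2,k}$, hence is realizable). Then $v = -\tfrac{ck}{2(k-1)}$, Alice's best deviation earns $\tfrac{1}{2(k-1)}$, and Bob's best deviation earns $\tfrac{kc}{k-1}$. The two $\eps$-NE constraints become $\eps \ge \tfrac{1+ck}{2(k-1)}$ and $\eps \ge \tfrac{kc}{2(k-1)}$, both of which permit $c$ arbitrarily close to $2\eps$ as $k\to\infty$. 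So the stated constant $\eps$ in the Claim is off by a factor of two; your ``sanity check'' (Bob uniform, one heavy coordinate) reaches only $\eps$, but letting Bob tilt toward Alice's light coordinates buys the extra factor. Your proposed refinement via the slacks $\eta_i$ cannot succeed, and the heuristic ``the two $\tfrac\eps2$ losses cannot both be saturated'' is simply false.
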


%\Snote{I worked out this proof, and it seems to not require the factor 4 loss in $\eps$ that was stated in the lecture... unless I made a mistake of course}

\begin{proof}
  Suppose that Alice plays strategy $i \in [k]$ with probability~$p_i$.
%the probability that Alice plays $i$ is $p_i$.  Then
  After sorting the coordinates so that
  $p_{i_1}\leq p_{i_2}\leq \cdots \leq p_{i_k}$, Bob's best response
  is to play the subset $S=\{i_1,i_2,\ldots,i_{k/2}\}$.  We must have
  either $p_{i_{k/2}} \leq 1/k$ or $p_{i_{k/2}+1}\geq 1/k$ (or both).
  Suppose that $p_{i_{k/2}} \leq 1/k$; the other case is similar.
Bob's expected payoff from playing $S$ is then:
\begin{eqnarray*}
\sum_{j>k/2} p_{i_j} - \sum_{j\leq k/2} p_{i_j}
&=& \sum_{j>k/2} (p_{i_j}-1/k) + \sum_{j\leq k/2} (1/k-p_{i_j})\\
&=& \sum_{j : p_{i_j}> 1/k} (p_{i_j}-1/k) + \sum_{j > k/2 : p_{i_j}\leq 1/k} (p_{i_j}-1/k) + \sum_{j\leq k/2} (1/k-p_{i_j})\\
&\geq& \sum_{j: p_{i_j}> 1/k} (p_{i_j}-1/k),
\end{eqnarray*} 
where the last inequality holds 
because the $p_{i_j}$'s are sorted in increasing order and
$p_{i_{k/2}} \leq 1/k$.  
The final expression above
equals the statistical distance between Alice's mixed strategy
$\vec{p}$ and the uniform distribution.  
%The last inequality above
%says:
%$\sum_{j\leq k/2} (1/k-p_{i_j}) \geq \sum_{j > k/2 : p_{i_j}\leq 1/k}
%(1/k-p_{i_j}),$
%which holds 
%The claim now follows from noting that
The claim now follows from that fact that Bob
cannot achieve a payoff larger than $\eps$ in any $\eps$-approximate Nash
equilibrium (otherwise, Alice could 
increase her expected payoff by more than~$\eps$
by switching to the uniform distribution).
%would
%increase her
%payoff by more than $\eps$).
\end{proof}
In Claim~\ref{claim:Althofer}, it's important that the loss in statistical
distance (as a function of~$\eps$) is independent of the size~$k$ of
the game.  For example, straightforward generalizations of
rock-paper-scissors fail to achieve the guarantee in
Claim~\ref{claim:Althofer}.
%Simpler ways 
%The key point is that the value of $\eps$ is preserved, in particular
%that there is no loss depending on domain size $k$.  A generalization
%of Rock--Paper--Scissors
%(e.g. Rock--Paper--Lizard--Spock~\cite{RockPaLiSp}) would lose a
%factor of $\Theta(k)$.

\paragraph{Gluing Games.}
We incorporate Alth\"ofer games into our coordinatewise play game
as follows. Let
\begin{itemize}
\item $G_1 = \text{the $\tfrac{d}{\eps} \times \tfrac{d}{\eps}$
    coordinatewise game of Section~\ref{ss:coordinatewise}}$;
\item $G_2 = \text{a $d\times {d \choose d/2}$ Alth\"ofer game;}$ and
\item $G_3 = \text{a ${d \choose d/2} \times d$ Alth\"ofer game, with
    the roles of Alice and Bob reversed.}$
\end{itemize}
Consider the following game, where Alice and Bob effectively play all
three games simultaneously:
\begin{itemize}

\item A pure strategy of Alice comprises an index $i\in [d]$,
a multiple~$x_i$ of $\eps$ in $[0,1]$, and a set $T\subseteq [d]$ of
  size $d/2$.  The interpretation is that she
plays $(i,x_i)$ in $G_1$, $i$ in $G_2$, and $T$ in
  $G_3$.

\item A pure strategy of Bob comprises an index $j\in [d]$,
 a multiple $y_j$ of $\eps$ in $[0,1]$, and a set $S\subseteq [d]$ of
 size $d/2$, 
  interpreted as playing $(j,y_j)$ in $G_1$, $S$ in $G_2$, and $j$ in
  $G_3$.

\item Each player's payoff is a weighted average of their payoffs in the
  three games: $\tfrac{1}{100} \cdot G_1 + 
\tfrac{99}{200}\cdot G_2 + 
\tfrac{99}{200}\cdot G_3$.

\end{itemize}
The good news is that, in every exact Nash equilibrium of the combined
game, Alice and Bob mix uniformly over their choices of~$i$ and~$j$.
Intuitively, because deviating from the uniform strategy can be
punished by the other player at a rate linear in the deviation (Claim~\ref{claim:Althofer}), it is
never worth doing (no matter what happens in~$G_1$).  Given this,
\`a la the McLennan-Tourky reduction (Theorem~\ref{t:mt06}),
the $x_i$'s and $y_j$'s must correspond to a fixed point of~$f$ (for
each~$i$, Alice must set $x_i$ to the center of mass of Bob's
distribution over $y_i$'s, and then Bob must set $y_i = f_i(x_i)$).

% Now it is indeed the case that in any exact Nash equilibrium, both Alice and Bob must be randomizing uniformly over $i$ and $j$ (due to the Althofer games),
% and from this it can be shown that $x$ and $y$ must be pure strategies equal to a fixed point of $f$. 

The bad news is that this argument breaks down for $\eps$-approximate
Nash equilibria with constant~$\eps$.
%However, this breaks down for approximate equilibria.  
The reason is that, even when the distributions of $i$ and $j$ are
perfectly uniform, the two players interact (i.e., choose $i=j$) only with
probability $1/d$.  This means that the contribution of the game $G_1$
to the expected payoffs is at most $1/d \ll \eps$, freeing the players
to choose their $x_i$'s and $y_j$'s arbitrarily.  Thus we need another
idea to force Alice and Bob to interact more frequently.

A second problem is that the sizes of the Alth\"ofer games are too
big---exponential in $d$ rather than in~$\sqrt{d}$.  

\subsection{\bfp $\le \eNE$ (Attempt \#4): Blockwise Play}\label{ss:blockwise}

%\paragraph{Fourth Attempt: Blockwise Play.}

To solve both of the problems with the third attempt, we force Alice
and Bob to play larger sets of coordinates at a time.  Specifically, we
view $[d]$ as a $\sqrt{d} \times \sqrt{d}$ grid, and any
$x,y\in [0,1]^d$ as $\sqrt{d}\times \sqrt{d}$ matrices.  Now Alice and
Bob will play a row and column of their matrices, respectively, and
their payoffs will be determined by the entry where the row and column
intersect.  That is, we replace the coordinatewise game of 
Section~\ref{ss:coordinatewise} with the following {\em blockwise game}:

\begin{itemize}
\item A pure strategy of Alice comprises
an index $i\in \left[\sqrt{d}\right]$ and a row
  $x_{i*}\in [0,1]^{\sqrt{d}}$.  (As usual, every $x_{ij}$ should be a
  multiple of $\eps$.)
\item A pure strategy of Bob comprises
 an index $j\in \left[\sqrt{d}\right]$ and a column $y_{*j}\in [0,1]^{\sqrt{d}}$.
\item Alice's payoff in the outcome $(x_{i*},y_{*j})$ 
is  
\[1-(x_{ij}-y_{ij})^2.
\]
\item Bob's payoff in the outcome $(x_{i*},y_{*j})$ is 
\begin{equation}\label{eq:bpayoff4}
1-(y_{ij}-f_{ij}(x_{ij}))^2.
\end{equation}
\end{itemize}
Now glue this game together with $k\times {k \choose k/2}$ and
$\binom{k}{k/2} \times k$ Alth\"ofer games with $k=\sqrt{d}$, as in
Section~\ref{ss:althofer}.  (For example, Alice's
index~$i \in \left[\sqrt{d}\right]$ is identified with a row in the
first Alth\"ofer game, and now Alice also picks a subset
$S \sse \left[\sqrt{d}\right]$ in the second Alth\"ofer game, in
addition to $i$ and $x_{i*}$.)  This construction yields exactly what
we want: a game of size $\exp(\tO(k)) = \exp(\tO(\sqrt{d}))$ in which
every $\eps$-approximate Nash equilibrium can be easily translated to
a $\delta$-approximate fixed point of $f$ (in the normalized $\ell_2$
norm), where~$\delta$ depends only on~$\eps$.\footnote{The
  $\tO(\cdot)$ notation suppresses logarithmic factors.}\footnote{In
  more detail, in every $\eps$-approximate Nash equilibrium of the
  game, Alice and Bob both randomize nearly uniformly over~$i$
  and~$j$; this is enforced by the Alth\"ofer games as in
  Section~\ref{ss:althofer}.  Now think of each player as choosing its
  strategy in two stages, first the index $i$ or $j$ and then the
  corresponding values $x_{i*}$ or $y_{*j}$ in the row or column.
  Whenever Alice plays~$i$, her best response (conditioned on~$i$) is
  to play $\expect{y_{ij}}$ in every column~$j$, where the expectation
  is over the distribution of~$y_{ij}$ conditioned on Bob choosing
  index~$j$.  In an $\eps$-approximate Nash equilibrium, in most
  coordinates, Alice must usually choose $x_{ij}$'s that are close to
  this best response.  Similarly, for most indices
  $j \in \left[\sqrt{d}\right]$, whenever Bob chooses~$j$, he must
  usually choose a value of $y_{ij}$ that is close to
  $\expect{f_{ij}(x_{ij})}$ (for each~$i$).  
It can be shown that these facts imply
  that Alice's strategy corresponds to a $\delta$-approximate fixed
  point (in the normalized $\ell_2$ norm), where $\delta$ is a
  function of $\eps$ only.}

\subsection{Beyond Separable Functions}\label{ss:nonseparable}

We now know how to use an $\eps$-approximate Nash equilibrium of a
subexponential-size game (with constant~$\eps$) to compute
a $\delta$-approximate fixed point of a function that is separable in
the sense of~\eqref{eq:separable}.
%(see~\eqref{eq:separable} in Section~\ref{ss:separable}).
This is not immediately interesting, because a fixed point of a
separable function is easy to find by doing binary search
independently in each coordinate.
The hard Brouwer functions identified in Theorem~\ref{t:brouwer3} have
lots of nice properties, but they certainly aren't separable.  

Conceptually, the rest of the proof of Theorem~\ref{thm:Aviad}
involves pushing in two directions: first, identifying hard Brouwer
functions that are even ``closer to separable'' than the functions in
Theorem~\ref{t:brouwer3}; and second, extending the reduction in
Section~\ref{ss:blockwise} to accommodate ``close-to-separable''
functions.  We already have an intuitive feel for what the second step
looks like, from Step~4 of our communication complexity lower bound
(Section~\ref{ss:step4} in Solar Lecture~3), where we enlarged the
strategy sets of the players so that they could smuggle ``advice''
about how to decode a hard Brouwer function~$f$ at a given point.  We
conclude the lecture with one key idea for the further simplification
of the hard Brouwer functions in Theorem~\ref{t:brouwer3}.

% However, all of this relied on our assumption that $f$ had the special structure $f(x_1,\ldots,x_d) = (f_1(x_1),\ldots,f_d(x_d))$, which is
% too much to hope for. 
% (Recall that fixed points of such functions is easy, using binary
% search in each coordinate.)
% %as it amounts to finding fixed points of the 1-dimensional functions
% %$f_i : [0,1]\rightarrow [0,1]$.)

%\paragraph{Fifth Attempt: Simplifying the Brouwer Function}

\subsection{{\sc Local \eol}}

Recall the hard Brouwer functions constructed in our communication
complexity lower bound (see Section~\ref{s:ccbfp}), which ``follow the
line'' of an embedding of an \eol instance, as well as the additional
tweaks needed to prove Theorem~\ref{t:brouwer3} (see
Section~\ref{ss:ppadbfp}).  We are interested in the ``local
decodability'' properties of these functions.  That is, if Bob needs
to compute the $j$th coordinate of $f(x)$ (to evaluate the $j$th term
in his payoff in~\eqref{eq:bpayoff3}), how much does he need to know
about~$x$?
%\footnote{Our reductions won't work if Alice is responsible
%  for directly communicating the relevant coordinates of~$f(x)$; her
%  ability to lie about them introduces spurious equilibria.  She must
%  instead communicate advice sufficient for Bob to compute the
%  relevant output coordinate, with the payoffs rigged so that, an an
%  approximate Nash equilibrium, her advice is usually consistent with
%  the rest of her strategy (and~$f$).  For example, the reduction from
%  \ccbfp to $\eNE$ in Section~\ref{ss:step4} follows this approach.}
For a separable function~$f=(f_1,\ldots,f_d)$, he only needs to
know~$x_j$.  For the hard Brouwer functions
in Theorem~\ref{t:brouwer3}, Bob needs to know whether or not~$x$ is
close to an edge (of the embedding of the succinct \eol instance into
the hypercube) and, if so, which edge (or pair of edges, if $x$ is
close to a vertex).  Ultimately, this requires evaluating the
successor circuit~$S$ and predecessor circuit~$P$ of the succinct \eol
instance that defines the hard Brouwer function.
It is therefore in our interest to force~$S$ and~$P$ to be as simple
as possible, subject to the succinct \eol problem remaining
$\PPAD$-complete.  In a perfect world, minimal advice (say, $O(1)$
bits) would be enough to compute $S(v)$ and $P(v)$
from~$v$.\footnote{It is also important that minimal advice suffices
 to translate between points~$x$ of the hypercube and vertices~$v$ of
  the underlying succinct \eol instance (as $f$ is defined on the former,
  while~$S$ and~$P$ operate on the latter).  
%This boils down to
  This can be achieved by using a state-of-the-art locally decodable
  error-correcting code (with query complexity~$d^{o(1)}$, similar to
  that in~\citet{KMRS17}) to embed the vertices into the hypercube (as
  described in Section~\ref{ss:ppadbfp}).  Incorporating the advice
  that corresponds to local decoding into the game produced by the
  reduction results in a further blow-up of $2^{d^{o(1)}}$.  This is
  effectively absorbed by the $2^{\sqrt{d}}$ blow-up that is already
  present in the reduction in Section~\ref{ss:blockwise}.}
%
%We can take this code to state-of-the-art local decodability,
%where~$d^{o(1)}$ input bits sufficient to decode a given output bit 
%
%Recall that the Brouwer function we obtained from the End-of-the-Line %instance $G$ was obtained by starting with an embedding of the complete graph %$K$
%in $H$ and then thinking of $G$ as a subgraph of $K$.   For points $x$ that are not close to the embedding of any vertex or edge of $K$, we have
%$f(x) = x+\delta$ for a fixed displacement vector $\delta$.  So if we can determine that our point $x$ is in this case, then 
%$f(x)$ already has the coordinatewise structure that we want.  On the other hand, if $x$ is close to the embedding of a vertex $u$ of $K$, then 
%we need to compute the predecessor $P(u)$ and successor $S(u)$ in order to determine $f(x)$ (which smoothly interpolates between the embeddings of the 
%edges $(P(u),u)$ and $(u,S(u))$).  And if $x$ is close to the embedding of an edge $e=(u,v)$ of $K$, we need to query $P$ and $S$ to determine whether
%this edge is actually present in the EoL instance $G$.  
%
The following lemma implements this idea.  It 
shows that a variant of the succinct \eol problem, called
{\sc Local \eol}, remains $\PPAD$-complete even when $S$ and~$P$ are
guaranteed to change only~$O(1)$ bits of the input, and 
when~$S$ and~$P$ are $\mathrm{NC}^0$ circuits (and hence each output
bit depends on only~$O(1)$ input bits).
%.  Equivalently, each output bit of~$S$
%and~$P$ depends on only~$O(1)$
%
%Unfortunately, it is not known 
%whether or not the succinct \eol problem is $\PPAD$-complete when the
%predecessor and successor circuits are restricted to lie in
%$\mathrm{NC}^0$.  As a surrogate, \citet{R16} proved the following.
%that EoL is
%PPAD-complete with such $\mathrm{NC}^0$ functions $P$ and $S$.
%However, Rubinstein shows that we can come close:

\begin{lemma}[\citet{R16}]\label{l:local}
The following {\sc Local \eol} problem is $\PPAD$-complete:
%The End-of-the-Line Problem is PPAD-complete for instances $P,S : %\zo^n\rightarrow \zo^n$ such that:
\begin{enumerate}
\item the vertex set~$V$ is a subset of $\zo^n$, with membership
  in~$V$ specified by a given $\mathrm{AC}^0$ circuit;
\item the successor and predecessor circuits $S,P$ are computable in $\mathrm{NC}^0$;
\item for every vertex $v \in V$, $S(v)$ and $P(v)$ differ from $v$
  in $O(1)$ coordinates.
%\item The graph defined by $P$ and $S$ is restricted to a subset $V\subseteq %\zo^n$ specified by an $\mathrm{AC}^0$ circuit given as part of the input.
\end{enumerate}
\end{lemma}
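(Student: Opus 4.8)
Membership in $\PPAD$ is immediate: {\sc Local \eol} is a syntactic special case of the succinct \eol problem of Section~\ref{ss:seol}, since an $\mathrm{NC}^0$ circuit is in particular of polynomial size, and a vertex set $V\sse\zo^n$ given by an $\mathrm{AC}^0$ circuit is accommodated by sending strings outside $V$ to $NULL$. The content is $\PPAD$-hardness, which I would prove by reducing from succinct \eol. Given the polynomial-size circuits $S,P\colon\zo^n\to\zo^n\cup\{NULL\}$ defining a graph $G$ with source $0^n$, the idea is to build a new \eol instance whose graph $G'$ \emph{unfolds} each edge $u\to v$ of $G$ (so $v=S(u)$ and $P(v)=u$) into a long directed sub-path of ``micro-steps'' that mechanically (i)~evaluate $S$ on $u$ gate by gate, (ii)~evaluate $P$ on the resulting string $v$ gate by gate, (iii)~check, one coordinate at a time, that the output of $P(v)$ is $u$, and (iv)~shuffle registers to land on a canonical configuration $D_v$ encoding the bare vertex $v$ with all auxiliary space zeroed, ready for the next unfolding. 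A vertex of $G'$ is a configuration consisting of an $O(1)$-bit phase label, a constant number of $\poly(n)$-bit registers (an input register, scratch tapes for evaluating $S$ and $P$, a register holding the intermediate vertex, a comparison flag), and a unary ``progress pointer,'' i.e.\ a string of the form $1^{t}0^{\ell-t}$ recording how far the current phase has advanced.

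The $\mathrm{NC}^0$ and $O(1)$-locality requirements come from three design choices. First, the circuit topologies of $S$ and $P$ are \emph{hard-wired} into $S'$ and $P'$, so one micro-step is: shift the unary pointer ($1^{t}0^{\ell-t}\mapsto 1^{t+1}0^{\ell-t-1}$, a single bit change, each output bit depending on one input bit); read the $O(1)$ wire values feeding the gate now indicated by the pointer (at fixed positions); and write that gate's value into the pointer-indexed cell. Every output bit of $S'$ then depends on $O(1)$ input bits, and $O(1)$ bits change. Second, the evaluation is \emph{write-once}---each gate cell goes from $0$ to its value exactly once, in topological order---so each micro-step is reversible; ``un-evaluating'' a scratch tape (how we reclaim space, keeping the whole construction within $\poly(n)$ bits rather than growing with the length of $G$'s path) is just the micro-step run backwards, which makes $P'$ the one-step reverse of $S'$, again $\mathrm{NC}^0$. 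Third, every non-local test is spread over a phase: ``this phase is finished'' is read off the last bit of that phase's unary pointer, and the $n$-bit equality $P(v)\stackrel{?}{=}u$ is accumulated one coordinate per micro-step into a single flag bit. The remaining \emph{global} consistency conditions---``the vertex register equals the output recorded on the $S$-scratch tape,'' ``cells past the pointer are zero,'' etc.---are $\mathrm{AC}^0$ predicates, so we take $V$ to be the set of configurations satisfying all of them; ill-formed configurations are simply not vertices, which is what prevents spurious path components.

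It then follows that the forward path of $G'$ out of $D_{0^n}$ traces the path of $G$ out of $0^n$, one unfolded edge at a time, and the construction is symmetric under exchanging $S\leftrightarrow P$. Checking the dead ends: a configuration gets stuck (is a sink of $G'$) exactly when the $S$-evaluation of some $u$ outputs $NULL$, or the $P$-evaluation of $S(u)$ outputs $NULL$ or a string $\neq u$---decoding to solutions of types~(i) and~(iii) of the succinct \eol instance; and a canonical configuration $D_v$ with $v\neq 0^n$ has no consistent predecessor (is a source of $G'$) exactly when $P(v)=NULL$ or $S(P(v))\neq v$---decoding to solutions of types~(ii) and~(iv). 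Since configurations have $\poly(n)$ bits and $S'$, $P'$, and the $\mathrm{AC}^0$ description of $V$ are all computable from $S,P$ in polynomial time, this is a valid polynomial-time reduction.

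The main obstacle is the bookkeeping underlying the preceding two paragraphs: organizing the phases of an unfolded edge so that the \emph{entire} macro-step---including the register shuffling and the reclamation of both scratch tapes---is reversible and returns all auxiliary space to zero, while every micro-step stays $O(1)$-local and $\mathrm{NC}^0$; and then verifying that the sinks and sources of $G'$ are in exact correspondence with the four solution types of the original instance, with the $\mathrm{AC}^0$ consistency predicates shouldering precisely the global checks that the local circuits cannot perform.
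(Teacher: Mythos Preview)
Your proposal is correct and follows essentially the same approach as the paper: reduce from succinct \eol by replacing each edge with a long path of ``micro-steps'' that evaluate $S$ and $P$ one gate at a time on partial computation transcripts, with $V$ taken to be the set of valid transcripts. The paper's sketch is far terser than yours (it defers all details to~\cite[Section 5]{R16}), but the idea---carrying out the circuit computations one gate/line at a time so that only $O(1)$ bits change per step, with the vertex set consisting of valid partial transcripts---is identical.
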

The proof idea is to start from the original circuits~$S$ and~$P$ of a
succinct \eol instance and form circuits~$S'$ and $P'$
%\begin{proof}[Proof idea]  We take the original $P$ and $S$ circuits
%and let %$P'$ and $S'$ be circuits 
that operate on partial computation transcripts,
% of the computation,
carrying out the computations performed by the circuits $S$ or $P$ one
gate/line at a time (with $O(1)$ bits changing in each step of the
computation).  The vertex set $V$ then corresponds to the set of valid
partial computation transcripts.  The full proof is not overly
difficult; see~\cite[Section 5]{R16} for the details.  This reduction
from succinct \eol to {\sc Local \eol}
can be implemented in linear time, so our assumption that the former
problem admits no subexponential-time algorithm carries over to the
latter problem.
%\end{proof}

In the standard succinct \eol problem, every $n$-bit string
$v \in \zo^n$ is a legitimate vertex.  In the {\sc Local \eol}
problem, only elements of $\zo^n$ that satisfy the given
$\mathrm{AC}^0$ circuit are legitimate vertices.  In our reduction, we
need to produce a game that also incorporates checking membership
in~$V$, also with only a $d^{o(1)}$ blow-up in how much of $x$ we need
to access.  This is the reason why \citet{R16} needs to develop
customized PCP machinery in his proof of Theorem~\ref{thm:Aviad}.
These PCP proofs can then be incorporated into the blockwise play game
(Section~\ref{ss:blockwise}),
%where PCP machinery is developed and used by Rubinstein.   It is
%incorporated %into the game 
analogous to how we incorporated a low-cost interactive protocol into the
game in our reduction from \cceol to $\eNE$ in
Section~\ref{ss:step4}.

\part{Lunar Lectures}

%%
%reset chapters for second part

\titleformat{\chapter}[display]
{\normalfont\Large\filcenter\rmfamily\scshape}
{\titlerule[1pt]%
 \vspace{1pt}%
 \titlerule
 \vspace{1ex}%
 \LARGE{Lunar Lecture} \thechapter}
{.5ex}
{\normalfont\large\slshape}

\setcounter{chapter}{0}

\lecture{How Computer Science Has Influenced Real-World Auction Design.\\  Case Study: The 2016--2017 FCC Incentive Auction}

\vspace{1cm}

%% Your lecture LaTeX goes here.

\section{Preamble}

Computer science is changing the way auctions are designed and
implemented.  For over 20 years, the US and other countries have used
{\em spectrum auctions} to sell licenses for wireless spectrum to the
highest bidder.  What's different this decade, and what necessitated a
new auction design, is that in the US the juiciest parts of the
spectrum for next-generation wireless applications are already
accounted for, owned by over-the-air television broadcasters.  This led
Congress to authorize the FCC in the fall of~2012 to design a novel
auction (the {\em FCC Incentive Auction}) that would repurpose
spectrum---procuring licenses from television broadcasters (a
relatively low-value activity) and selling them to parties that would put
them to better use (e.g., telecommunication companies who want to roll
out the next generation of wireless broadband services).
%Television would have low value use
%if you repurposed TV broadcasting to the telecommunications industry.
%There was a bipartisan effort to figure out how to redesign things in
%a fair and effective way.  
Thus the FCC Incentive Auction is really a {\em double auction}, comprising two
stages: a \emph{reverse
  auction}, where the government buys back licenses for spectrum from
their current owners; and then a \emph{forward
  auction}, where the government sells the procured licenses to the
highest bidder.  
Computer science techniques played a crucial role in the design of the
new reverse auction.  The main aspects of the forward auction have
been around a long time; here, theoretical computer science has
contributed on the analysis side, and to understanding when and why
such forward auctions work well.  Sections~\ref{s:reverse}
and~\ref{s:forward} give more details on the reverse and forward parts
of the auction, respectively.

The FCC Incentive Auction finished around the end of March 2017, and
so the numbers are in.  The government spent roughly 10 billion USD in
the reverse part of the auction buying back licenses from television
broadcasters, and earned roughly 20 billion USD of revenue in the forward
auction.  Most of the 10 billion USD profit was used to reduce the US
debt!\footnote{This was the plan all along, which is probably one
  of the reasons the bill didn't have trouble passing a
  notoriously partisan Congress.  Another reason might be the
  veto-proof title of the bill: ``The Middle Class Tax Relief and Job
  Creation Act.''}

%designed in the 1990s, too 

% by they would sell new licenses for these channels,
% ending up doing so for \$20 billion.  A lot of decisions went into
% designing both parts, and theoretical computer science contributed
% significantly to both.  The reverse auction was influenced and
% designed by concepts from computer science, including fast SAT
% solvers.  The forward auction relied on the theoretical computer
% science tool box developed for designing and analyzing auctions.

\section{Reverse Auction}\label{s:reverse}

\subsection{Descending Clock Auctions}

The reverse auction is the part of the FCC Incentive Auction that was
totally new, and where computer science techniques played a crucial
role in the design.
%
% The reverse auction was designed to incentivize users to sell latent
% channels back to the government to be repurposed.  In addition, the
% spectrum is divided into channels, and ownership of these channels can
% vary across locations.  It was advantageous for the government was to
% completely clear a certain number of channels -- they realized that if
% they could do this across multiple locations nationwide, then these
% channels would be much more valuable.  They targeted a fixed number of
% channels to free up.
%
The auction format, proposed by Milgrom and Segal~\cite{MS20}, is
what's called a {\it descending clock auction}.  By design, the
auction is very simple from the perspective of any one participant.
The auction is iterative, and operates in rounds.  In each round of
the auction, each remaining broadcaster is asked a question of the
form: ``Would you or would you not be willing to sell your license for
(say) 1 million dollars?''  The broadcaster is allowed to say ``no,'' with
the consequence of getting kicked out of the auction forevermore 
(the station will keep its license and remain on the air, and will
receive no compensation from the government).  The broadcaster is also
allowed to say ``yes'' and accept the buyout offer.  In the latter
case, the government will not necessarily buy the license for 1
million dollars---in the next round, the broadcaster might get asked the
same question, with a lower buyout price (e.g., 950,000 USD).  If a
broadcaster is still in the auction when it ends (more on how it ends
in a second), then the government does indeed buy their license, at the
most recent (and hence lowest) buyout offer.  
Thus all a broadcaster 
has to do is answer a sequence of ``yes/no'' questions for some decreasing
sequence of buyout offers.  The obvious strategy for a broadcaster is
to formulate the lowest acceptable offer for their license, and to
drop out of the auction once the buyout price drops below this
threshold.

% If you're in the auction, meaning you own a television
% broadcasting license and might be willing to set, 

% designed by Milgrom and Segal~\cite{MS20}.  In each
% round, the bidder (in this case the government) proposes an amount
% (e.g., ``Sell this channel to me for \$10M'').  If the response is
% ``no,'' then the user is kicked out and fails to sell and then no
% longer participates.  If the response is ``yes,'' then the user might
% make \$10M or they may be asked again whether they would accept a
% lower amount (at which point they can refuse, or agree to keep
% playing, including the possibility of accepting this new amount).  If
% the auction ends and a user has not said, ``no,'' then the government
% will buy the object at the last (lowest) amount that was agreed to.

The auction begins with very high buyout offers, so that every
broadcaster would be ecstatic to sell their license at the initial
price.  Intuitively, the auction then tries to reduce the buyout
prices as much as possible, subject to clearing a target amount of
spectrum.  Spectrum is divided into channels which are blocks of 6 MHz
each.  For example, one could target broadcasters assigned to channels
38--51, and insist on clearing 10 out of these~14 channels (60 MHz
overall).\footnote{The FCC Incentive Auction wound up clearing 84 MHz
  of spectrum (14 channels).}
By ``clearing a channel,'' we mean clearing it
{\em nationwide}.  Of course, in the descending clock auction, bidders
will drop out in an uncoordinated way---perhaps the first station to
drop out is channel~51 in Arizona, then channel~41 in western
Massachusetts, and so on.  To clear several channels nationwide without
buying out essentially everybody, it was essential for the government
to use its power to {\em reassign} the channels of the stations that
remain on the air.  Thus while a station that drops out of the auction
is guaranteed to retain its license, it is not guaranteed to retain
its channel---a station broadcasting on channel~51 before the auction
might be forced to broadcast on channel~41 after the auction.

The upshot is that the auction maintains the invariant that
the stations that have dropped out of the auction (and hence remain on
the air) can be assigned channels so that at most a target number of
channels are used (in our example, 4 channels).  This is called the
{\em repacking problem}.  Naturally, two stations with overlapping
broadcasting regions cannot be assigned the same channel (otherwise
they would interfere with each other).  See Figure~\ref{f:stations}.

\begin{figure}[h]
\centering
\includegraphics[width=.8\textwidth]{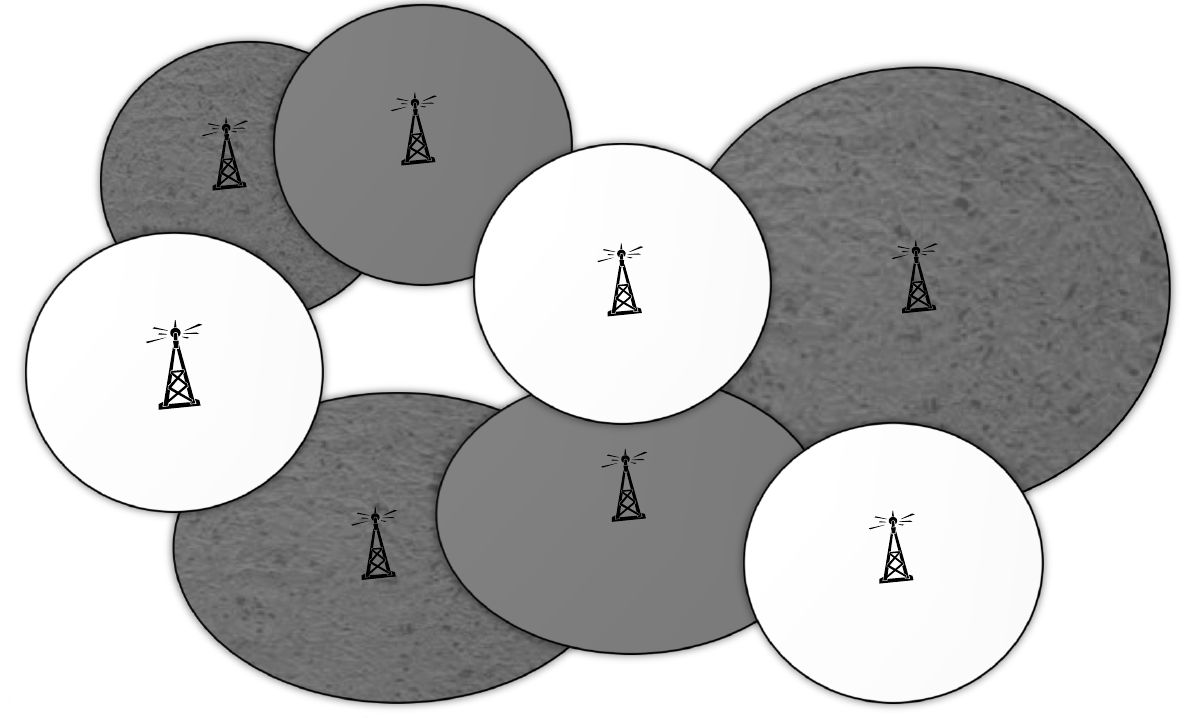}
\caption{Different TV stations with
  overlapping broadcasting areas must be assigned different channels
  (indicated by shades of gray).  Checking whether or not a given
  subset of stations can be assigned to a given number of channels
  without interference is an $\NP$-hard problem.}\label{f:stations}
\end{figure}

% The bidder must initially set the bid very high to incentivize many
% people to participate.  In addition, since another goal was to clear
% the entire spectrum of channels in some range, any user that
% overlapped with many other users (on the same channel in other
% locations) was more valuable.  Consequently a weighted scheme was used
% to prioritize broadcasters according to their values.  The problem
% then became a coloring problem---how cheaply could a set of users be
% bought out so that the remaining users could be repacked into $k$
% channels---essentially a $k$-coloring problem.  The weights were
% assigned according to the degrees of users (i.e., the number of other
% users they overlapped), leaving an $\cclass{NP}$-hard optimization problem.

% A stopping rule was employed, so that when the target number of
% channels could be completely cleared through repacking, then we stop
% and the prices will be as low as possible.  Thus, when someone drops
% out of the auction, they retain the ability to broadcast but might be
% forced to change their channel to enable the repacking.

\subsection{Solving the Repacking Problem}

Any properly trained computer scientist will recognize the repacking
problem as the $\NP$-complete graph coloring problem in
disguise.\footnote{The actual repacking problem was more
  complicated---overlapping stations cannot even be assigned adjacent
  channels, and there are idiosyncratic constraints at the borders
  with Canada and Mexico.  See~\citet{LMS17} for more details.  But
  the essence of the repacking problem really is graph coloring.}  For
the proposed auction format to be practically viable, it must quickly
solve the repacking problem.  Actually, make that thousands of
repacking problems every round of the auction!\footnote{Before the
  auction makes a lower offer to some remaining broadcaster in the
  auction, it needs to check that it would be OK for the broadcaster
  to decline and drop out of the auction.  If a station's dropping out
  would render the repacking problem infeasible, then that station's
  buyout price remains frozen until the end of the auction.}

The responsibility of quickly solving repacking problems fell to a
team led by Kevin Leyton-Brown (see~\cite{FNL17,LMS17}).  The FCC gave
the team a budget of one minute per repacking problem, ideally with
most instances solved within one second.  The team's approach was to
build on state-of-the-art solvers for the satisfiability (SAT)
problem.  As you can imagine, it's straightforward to translate an
instance of the repacking problem into a SAT formula (even with the
idiosyncratic constraints).\footnote{A typical representative instance
  would have thousands of variables and tens of thousands of
  constraints.}  Off-the-shelf SAT solvers did pretty well, but still
timed out on too many representative instances.\footnote{Every time
  the repacking algorithm fails to find a repacking when one exists,
  money is left on the table---the auction has to conservatively leave
  the current station's buyout offer frozen, even though it could have
  safely lowered it.}  Leyton-Brown's team added several new
innovations, including taking advantage of 
problem structure specific to the application and implementing
a number of caching techniques (reusing work
done solving previous instances to quickly solve closely related
new instances).  In the end, they were able to solve more than 99\% of the
relevant repacking problems in under a minute.

Hopefully the high-level point is clear: 
\begin{quote}
without cutting-edge
techniques for solving $NP$-complete problems, {\em the FCC would have
  had to use a different auction format}.
\end{quote}

% For the reassignment, or repacking problem, the goal is to reduce the
% chromatic number.  We need that the conflict graph is $k$-colorable,
% which of course is $\cclass{NP}$-hard.  The algorithms need to run on
% inputs with about 1500 users, and the interactive nature of the
% auction means we need to solve many of these graph coloring problems
% through the various stages of the reverse auction.  If a bidder
% becomes critical, then their price is frozen and we have to construct
% a new graph coloring problem.  These had to be solved in less than a
% minute!

% Kevin Leyton-Brown from UBC found that the best SAT solvers solved
% 92\% of these instances.  If the algorithm times out, he assumes you
% cannot repack (i.e., that the problem is not satisfiable).  Failing to
% solve a SAT instance then corresponds to not decreasing the offer.
% The algorithm uses pre-solvers and caches.  Caching helps by checking
% whether a given instance was already solved.  In practice, in one
% round there were 1500 contacts and 2000 responses.  They needed new
% responses each minute, so good SAT solvers were critical to this
% auction working.

\subsection{Reverse Greedy Algorithms}

One final twist: the novel reverse auction format motivates some basic
algorithmic questions (and thus ideas flow from computer science to
auction theory and back).  We can think of the auction as an
algorithm, a heuristic that tries to maximize the value of the
stations that remain on the air, subject to clearing the target amount
of spectrum.  Milgrom and Segal~\cite{MS20} prove that, ranging over
all ways of implementing the auction (i.e., of choosing the sequences of
descending prices), the corresponding algorithms are exactly the {\em
  reverse greedy algorithms}.\footnote{For example, Kruskal's
  algorithm for the minimum spanning tree problem (start with the
  empty set, go through the edges of the graph from cheapest to most
  expensive, adding an edge as long as it doesn't create a cycle) is a
  standard (forward) greedy algorithm.  The reverse version is: start
  with the entire edge set, go through the edges in reverse sorted
  order, and remove an edge whenever it doesn't disconnect the graph.
  For the minimum spanning tree problem (and more generally for
  finding the minimum-weight basis of a matroid), the reverse greedy
  algorithm is just as optimal as the forward one.  In general (and
  even for e.g.~bipartite matching), the reverse version of a good
  forward greedy algorithm can be bad~\cite{DGR14}.}  This result
gives the first extrinsic reason to study the power and limitations of
reverse greedy algorithms, a research direction explored
by~\citet{DGR14} and~\citet{GMR17}.

%The following theorems give the details of how to implement the reverse auction.

%\begin{theorem}[\citet{MS20}] For the reverse auction to be
%  implemented as a descending iterative auction, it is necessary and
%  sufficient to use a \emph{reverse greedy algorithm}.
%\end{theorem}

%\begin{theorem} [\citet{DGR14}] The reverse version of a good greedy auction %can be bad, but there is a novel reverse greedy auction that is an $O(d)$-approximation from rank $d$ max-weight hypergraph matching.  
%\end{theorem}

\section{Forward Auction}\label{s:forward}

Computer science did not have an opportunity to influence the design
of the forward auction used in the FCC Incentive Auction, which
resembles the formats used over the past 20+ years.  Still, the
theoretical computer science toolbox turns out to be ideally suited
for explaining when and why these auctions work well.\footnote{Much of the discussion 
in Sections~\ref{ss:bad}--\ref{ss:exposure} is 
  from~\cite[Lecture~8]{f13}, which in turn takes inspiration from~\citet{Milgrom:2004jx}.}

\subsection{Bad Auction Formats Cost Billions}\label{ss:bad}

Spectrum auction design is stressful, because small mistakes can be
extremely costly.  One cautionary tale is provided by an auction run
by the New Zealand government in~1990 (before governments had much
experience with auctions).
For sale were~10 essentially identical
national licenses for television broadcasting.
For some reason, lost to the sands of time, the government decided to
sell these licenses by running 10 second-price auctions in parallel.
A {\em second-price} or {\em Vickrey} auction for a single
good is a sealed-bid auction that awards the item to the highest
bidder and charges her the highest bid by someone else (the
second-highest bid overall).  When selling a single item, the Vickrey
auction is often a good solution.  In particular, each bidder has a
dominant strategy (always at least as good as all alternatives), which
is to bid her true maximum willingness-to-pay.\footnote{Intuitively, a second-price auction shades your bid
  optimally after the fact, so there's no reason to try to game
  it.}\footnote{For a more formal treatment of single-item auctions,
  see Section~\ref{ss:basics} in Lunar Lecture~4.}

The nice properties of a second-price auction evaporate if many of
them are run simultaneously.  A bidder can now submit up to one bid in
each auction, with each license awarded to the highest bidder (on that
license) at a price equal to the second-highest bid (on that license).
With multiple simultaneous auctions, it is no longer clear how a
bidder should bid.  For example, imagine you want one of the licenses,
but only one.  How should you bid?  One legitimate strategy is to pick
one of the licenses---at random, say---and go for it.  Another
strategy is to bid less aggressively on multiple licenses, hoping that
you get one at a bargain price, and that you don't inadvertently win
extra licenses that you don't want.  The difficulty is trading off the
risk of winning too many licenses with the risk of winning too few.

The challenge of bidding intelligently in simultaneous
sealed-bid auctions makes the auction format prone to poor outcomes.
%with low welfare and revenue.  For example, suppose there are three
%bidders and two identical items, and each bidder wants only one.  
%The
%obvious extension of the Vickrey auction sells the two licenses
%to the bidders with the highest valuations, each at a price equal to
%the smallest valuation.  
%With simultaneous second-price auctions, if each bidder targets only one
%license, one of the licenses is likely to have only one bidder
%and will be given away for free (or sold at the reserve price).
%
The revenue in the 1990 New Zealand auction was only \$36 million, a paltry
fraction of the projected \$250 million.  
%In contrast, most spectrum
%auctions over the past 20 years have met or exceedes projected
%revenues.  
On one license, the high bid was \$100,000 while the
second-highest bid (and selling price) was \$6!  On another, the high
bid was \$7 million and the second-highest was \$5,000.  To add insult
to injury, the winning bids were made available to the public, who could
then see just how much money was left on the table!  

% The forward auction uses more traditional results in auction theory,
% but small errors can lead to big consequences.  An example was in New
% Zealand, where there are 10 TV stations, all interchangeable, and the
% government wanted to auction them all off.  They used an Instantaneous
% second price (or Vickery) auction.  In a Vickery auction the highest
% bidder gets the object for the second highest bid.  It is known that
% bidders should be truthful when bidding on a single item.

% But what happens when we run 10 different auctions in parallel?  It is
% difficult for bidders to figure out what to do.  Should they risk
% winning several auctions?  Should they bid on some?  Pick one and make
% sure they win that one?  The New Zealand government was hoping to get
% \$25M per channel, but they ended up getting only 15\% of this!  In
% fact, for one channel the high bid was \$500M and the second highest
% was \$6, and according to the rules they ended up making only \$6.
% Instantaneous second price auctions have been used for a long time,
% but when run in parallel there is very high variance and it is
% difficult to predict the outcome.

\subsection{Simultaneous Ascending Auctions}

Modern spectrum auctions are based on {\em simultaneous ascending
  auctions (SAAs)}, following~1993 proposals by McAfee and by Milgrom
and Wilson.
% in 1993 is to
%use \emph{Simultaneous Ascending Auctions} (SAA).  This has been
%widely used, but they present some issues. The general history is
%discussed by~\citet{Milgrom:2004jx}.
You've seen---in the movies, at least---the call-and-response format
of an ascending single-item auction, where an auctioneer asks for
takers at successively higher prices.  Such an auction ends when
there's only one person left accepting the currently proposed price
(who then wins, at this price).
%no
%one accepts the currently proposed price, the winner (if any) is the
%bidder who accepted the previously proposed price, and this previous
%price is the final sale price.  
Conceptually, SAAs are like a bunch of
single-item English auctions being run in parallel in the same room,
with one auctioneer per item.

The primary reason that SAAs work better than sequential or sealed-bid
auctions is {\em price discovery}.  As a bidder acquires better
information about the likely selling prices of licenses, she can
implement mid-course corrections---abandoning licenses for which
competition is fiercer than anticipated, snapping up unexpected
bargains, and rethinking which packages of licenses to assemble.  The
format typically resolves the miscoordination problems that plague
simultaneous sealed-bid auctions.  

% Another bonus of the SAA format is
% that bidders only need to determine their valuations on a need-to-know
% basis.  We've been assuming that valuations are known to bidders at
% the beginning of the auction, but in practice, determining the
% valuation for a bundle of items can be costly, involving research and
% expert advice.  In sharp contrast to direct-revelation mechanisms, a
% bidder can often navigate an SAA with only coarse estimates for most
% valuations and precise estimates for the bundles that matter.

% SAAs are thought to have achieved high social welfare and revenue in
% numerous spectrum auctions.
% This belief is not easy to test, since valuations
% remain unknown after an auction and bids are incomplete and
% potentially non-truthful.  There are a number of sanity
% checks that can be used to argue
% good auction performance.  First, there should
% be little or no resale of items after the auction, and any reselling
% should take place at a price comparable to the auction's selling
% price.  This indicates that speculators did not play a significant
% role in the auction.  Second, similar items should sell for similar
% prices (cf., the Swiss and New Zealand auctions).  Third, revenue
% should meet or exceed projections.  Fourth, there should be evidence
% of price discovery.  For example, prices and provisional winners at
% the mid-point of the auction should be highly correlated with the final
% selling prices and winners.  Finally, the packages assembled by
% bidders should be sensible, such as groups of licenses that are
% adjacent geographically or in frequency range.

\subsection{Inefficiency in SAAs}\label{ss:exposure}

SAAs have two big vulnerabilities.  The first problem is {\em demand
  reduction}, and this is relevant even when items are substitutes.\footnote{Items are substitutes if they provide diminishing returns---having one item only makes others less valuable.  For two
items~$A$ and $B$, for example, the substitutes condition means that
a bidder's value for the bundle of $A$ and $B$ is at most the sum of
her values for $A$ and $B$ individually.
%$$v(AB) \le v(A) + v(B)$.  
In a spectrum auction context, two licenses
for the same area with equal-sized frequency ranges are usually
substitute items.}
Demand reduction occurs when a bidder asks for fewer items
than she really wants, to lower competition and therefore the prices
paid for the items that it gets.

To illustrate, suppose there are two identical items and two bidders.
By the {\em valuation} of a bidder for a given bundle of items, we
mean her maximum willingness to pay for that bundle.
Suppose the first bidder has valuation 10
for one of the items and valuation 20 for both.  The second bidder has
valuation 8 for one of the items and does not want both (i.e., her
valuation remains 8 for both).  The socially optimal outcome is to
give both licenses to the first bidder.
%Giving both items to the first bidder maximizes the welfare, at 20.
Now consider how things play out in an SAA.  
The second bidder would be happy
to have either item at any price less than 8.  Thus, the second bidder
drops out only when the prices of both items exceed~8.  If the first bidder
stubbornly insists on winning both items, her utility is
$20-16=4$.  
An alternative strategy for the first bidder is to simply concede the
second item and never bid on it.  The second bidder takes the
second item and (because she only wants one license) withdraws interest in
the first, leaving it for the first bidder.  Both bidders get their
item essentially for free, and the utility of the first bidder has
jumped to~10.
%.  The first bidder now gets only one license
%(value = 6) but pays nothing (cost = 0)

% If, on the other hand, the first bidder targets just one item,
% then each of the bidders gets one of the items at a near-zero
% price.  The first bidder's utility is then close to 10.  In this example,
% demand reduction leads to a loss of welfare and revenue, relative to
% the VCG mechanism's outcome.  There is ample evidence of demand
% reduction in many spectrum auctions.

%explain terminology? -- not really clear in the literature
%(skipped---wasn't clear extra words were helpful)
The second big problem with SAAs is relevant when items can be
complements, and is called the {\em exposure problem}.\footnote{Items
  are complements if there are synergies between them,
  so that possessing one makes others more valuable.  With two items
  $A$ and $B$, this translates to a bidder's valuation for the bundle of $A$ and
  $B$ exceeding the sum of her valuations for $A$ and $B$ individually.
  Complements arise naturally in wireless spectrum auctions, as some
  bidders want a collection of licenses that are adjacent, either in
  their geographic areas or in their frequency ranges.}
As an example, consider two bidders and two
nonidentical items.  The first bidder only wants both items---they are
complementary items for the bidder---and her valuation is 100 for
them (and 0 for anything else).
The second bidder is willing to pay 75 for either
item but only wants one item.
The socially optimal outcome is to give both items to the first bidder.
%The VCG mechanism would give both items to bidder 1, for a
%welfare of 100, and would generate revenue 75.  I
But in an SAA, the second bidder will not drop out until the price of
both items reaches 75.  The first bidder is in a no-win situation: to
get both items she would have to pay 150, more than her value.  The
scenario of winning only one item for a nontrivial price could be even
worse.  
%In this case, the first bidder would be best off by just
%walking away from the auction.
%On the other hand, if the second bidder's value for each item
%is only 40, then the first bidder should just go for it and outlast
%the second bidder.  But how can the first bidder know which scenario
%is closer to the truth?  The exposure problem makes bidding in an SAA
%difficult for a bidder for whom items are complements, and
%, leading to risk-averse, tentative
%bidding by such bidders.  
%it 
Thus the exposure problem leads to economically inefficient
allocations for two reasons.  First, an overly aggressive bidder might
acquire unwanted items.  Second, an overly tentative bidder might fail
to acquire items for which she has the highest valuation.

\subsection{When Do SAAs Work Well?}\label{ss:when}

If you ask experts who design or consult for bidders in
real-world SAAs, a rough consensus emerges about when they are likely
to work well.

\begin{folklore} Without strong complements, {\em SAAs work pretty
  well.  Demand reduction does happen, but it is not a deal-breaker because
  the loss of efficiency appears to be small.}
\end{folklore}

\begin{folklore}
  With strong complements, {\em simple auctions like SAAs are not good
    enough.  The exposure problem is a deal-breaker because it can
    lead to very poor outcomes (in terms of both economic efficiency
    and revenue).}
\end{folklore}

There are a number of beautiful and useful theoretical results about
spectrum auctions in the economics literature, but none map cleanly to
these two folklore beliefs.  A possible explanation: translating these
beliefs into theorems seems to fundamentally involve approximate
optimality guarantees, a topic that is largely avoided by
economists but right in the wheelhouse of theoretical
computer science.

In the standard model of {\em combinatorial auctions},
there are $n$ bidders (e.g., telecoms) and $m$ items (e.g.,
licenses).\footnote{This model is treated more thoroughly in the next
  lecture (see Section~\ref{s:camodel}).}
Bidder $i$ has a nonnegative valuation $v_i(S)$ for each subset
$S$ of items she might receive.  Note that, in general, 
describing a bidder's valuation function requires $2^m$
parameters.  Each bidder wants to maximize her utility, which is the
value of the items received minus the total price paid for them.
From a social perspective, we'd like to award bundles of items
$T_1,\ldots,T_n$ to the bidders to maximize the {\em social welfare}
$\sum_{i=1}^n v_i(T_i)$.

To make the first folklore belief precise, we need to commit to a
definition of ``without strong complements'' and to a specific auction
format.  We'll focus on simultaneous first-price auctions (S1As),
where each bidder submits a separate bid for each item, for each item
the winner is the highest bidder (on that item), and winning bidders
pay their bid on each item won.\footnote{Similar results hold for
  other auction formats, like simultaneous second-price auctions.
  Directly analyzing what happens in iterative auctions like SAAs when
  there are multiple items appears difficult.}  One relatively
permissive definition of ``complement-free'' is to restrict bidders to
have {\em subadditive valuations}.  This means what it sounds like: if
$A$ and $B$ are two bundles of items, then bidder $i$'s valuation
$v_i(A \cup B)$ for their union should be at most the sum
$v_i(A)+v_i(B)$ of her valuations for each bundle separately.  Observe
that subadditivity is violated in the exposure problem example in Section~\ref{ss:exposure}.

We also need to define what we mean by ``the outcome of an auction''
like S1As.  Remember that bidders are strategic, and will bid to
maximize their utility (value of items won minus the price paid).
Thus we should prove approximation guarantees for the {\em equilibria}
of auctions.  Happily, computer scientists have been working hard
since 1999 to prove approximation guarantees for game-theoretic
equilibria, also known as bounds on {\em the price of
  anarchy}~\cite{KP99,book,RT00}.\footnote{See Section~\ref{ss:poa} of
  the next lecture for a formal definition.}
In the early days, price-of-anarchy
bounds appeared somewhat ad hoc and problem-specific.  Fast forwarding
to the present, we now have a powerful and user-friendly theory for
proving price-of-anarchy bounds, which combine ``extension theorems''
and ``composition theorems'' to build up bounds for complex settings
(including S1As) from bounds for simple settings.\footnote{We will say
  more about this theory in Lunar Lecture~5.  See also \citet{RST17}
  for a recent survey.}  In particular, \citet{FFGL13} proved the
following translation of Folklore Belief \#1.\footnote{To better
  appreciate this result, we note that multi-item auctions like S1As
  are so strategically complex that they have historically been seen
  as unanalyzable.  For example, we have no idea what their equilibria
  look like in general.  Nevertheless, we can prove good approximation
  guarantees for them!}

\begin{theorem}[\citet{FFGL13}]\label{t:ffgl}
When every bidder has a subadditive valuation, every equilibrium of an
S1A has social welfare at least 50\% of the maximum possible.
\end{theorem}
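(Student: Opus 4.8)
The plan is to prove the theorem via the standard \emph{smoothness} recipe for price-of-anarchy bounds, so that a single inequality about one bidder's best possible deviation yields the welfare guarantee simultaneously for pure, mixed, correlated, and Bayes--Nash equilibria (this is precisely the sort of statement the ``extension and composition theorems'' alluded to above are built to deliver). Fix a valuation profile, let $O=(O_1,\dots,O_n)$ be a welfare-maximizing allocation with $\mathrm{OPT}=\sum_i v_i(O_i)$, and for a bid profile $b$ let $p_j(b)=\max_k b_{kj}$ be the first-price of item $j$, so the revenue is $R(b)=\sum_j p_j(b)$, each winner pays her own winning bids, and hence the bookkeeping identity $\mathrm{SW}(b)=\sum_i u_i(b)+R(b)$ holds, where $u_i$ is bidder $i$'s utility and $\mathrm{SW}(b)$ the realized welfare. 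The heart of the argument is the \emph{deviation inequality}: for each bidder $i$ there is a (possibly randomized) bid vector $b_i^{\star}$, depending only on $v_i$ and $O_i$ and \emph{not} on $b_{-i}$, such that for every $b$,
\[
\mathbb{E}\big[u_i(b_i^{\star},b_{-i})\big]\ \ge\ \tfrac12\, v_i(O_i)\ -\ \sum_{j\in O_i} p_j(b).
\]
Granting this, summing over $i$ (and using that $O$ is a partition, so $\sum_i\sum_{j\in O_i}p_j(b)=R(b)$) gives $\sum_i u_i(b_i^{\star},b_{-i})\ge \tfrac12\mathrm{OPT}-R(b)$; at an equilibrium $b$ the equilibrium condition upgrades the left-hand side to $\sum_i u_i(b)$, and then $\mathrm{SW}(b)=\sum_i u_i(b)+R(b)\ge \tfrac12\mathrm{OPT}-R(b)+R(b)=\tfrac12\mathrm{OPT}$, which is the claim. (That $b_i^{\star}$ is oblivious to $b_{-i}$ is exactly what lets this run for correlated and Bayesian equilibria; morally it is also why the constant is $\tfrac12$ rather than the $1$ one would extract if bidder $i$ could react to the realized prices.)

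So everything reduces to constructing $b_i^{\star}$. I would warm up with \emph{fractionally subadditive} (XOS) valuations, which already display the whole mechanism: write $v_i(O_i)=\sum_{j\in O_i}a_j$ with $a$ an additive function supported under $v_i$ (so $v_i(T)\ge\sum_{j\in T}a_j$ for every $T\subseteq O_i$), and let bidder $i$ bid $a_j/2$ on each $j\in O_i$ and $0$ elsewhere. She wins the set $T=\{j\in O_i:\ a_j/2>\max_{k\ne i}b_{kj}\}$, pays $\sum_{j\in T}a_j/2$, and so nets at least $\tfrac12\sum_{j\in T}a_j$; on the complementary items $a_j/2\le p_j(b)$, whence $\sum_{j\in O_i\setminus T}a_j\le 2\sum_{j\in O_i}p_j(b)$ and therefore $\sum_{j\in T}a_j\ge v_i(O_i)-2\sum_{j\in O_i}p_j(b)$, which yields the deviation inequality. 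For a general \emph{subadditive} $v_i$ there is no such additive support, and this is where the one genuinely new ingredient enters: the observation (immediate from subadditivity) that if $R$ is a uniformly random subset of a set $S$ --- each element included independently with probability $\tfrac12$ --- then $\mathbb{E}[v(R)]\ge\tfrac12 v(S)$, since $v(S)\le v(R)+v(S\setminus R)$ and $R$, $S\setminus R$ are identically distributed. I would let $b_i^{\star}$ sample such a random half $R_i\subseteq O_i$ and then bid (randomly) so as to capture value about $v_i(R_i)$ from the items of $R_i$ while paying little, combining this random-halving step with the single-good ``bid half your value'' trick; taking expectations over $R_i$ and over the internal randomness should reproduce the displayed inequality with constant $\tfrac12$.

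The main obstacle is precisely this last construction for non-XOS subadditive valuations: unlike the XOS case, $v_i$ does not decompose into per-item contributions, so one cannot just scale item-wise bids, and one must argue that a carefully randomized simultaneous bid on a random half of $O_i$ still extracts a constant fraction of $v_i(O_i)$ no matter which subset of it she actually ends up winning and regardless of the opponents' prices --- this is the technical core of Feldman--Fu--Gravin--Lucier and where the argument becomes delicate. Two secondary points I would need to handle carefully: (i) verifying that the identity $\mathrm{SW}(b)=\sum_i u_i(b)+R(b)$ and the bound $R(b)\le\mathrm{SW}(b)$ hold under the first-price rule (they do, since a winner's payment equals her own winning bid and, by individual rationality at equilibrium, is at most her value for what she wins); and (ii) invoking the extension theorem in the right form so that the single deviation inequality, checked against arbitrary pure profiles $b_{-i}$, automatically gives the $50\%$ bound for mixed, correlated, and Bayes--Nash equilibria, including the Bayesian subtlety that $b_i^{\star}$ may be built from bidder $i$'s own valuation and the common prior but must be measurable with respect to what she actually knows.
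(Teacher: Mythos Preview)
The paper does not prove this theorem; both times it appears (here and as Theorem~\ref{t:ffgl2}) the reader is referred to~\cite{FFGL13} or to course notes. So there is no in-paper proof to compare against; I will compare your outline to the argument in~\cite{FFGL13}.

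Your smoothness framework and the summation step are exactly how \cite{FFGL13} proceed, and your XOS warm-up is correct and standard. The gap you flag for general subadditive valuations is real, and the fix you sketch does not close it. Sampling a random half $R_i\subseteq O_i$ does not help: $v_i$ restricted to $R_i$ is still merely subadditive, with no per-item decomposition, so you face the identical problem of how to bid on $R_i$ --- only now with a smaller target value. The inequality $\mathbb{E}[v(R)]\ge\tfrac12 v(S)$ is true, but by itself it does not couple the set you win to the amount you pay.

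The missing idea in~\cite{FFGL13} is a minimax step. Because the deviation inequality is linear in bidder $i$'s mixed strategy, by Sion's theorem it suffices to show that for every \emph{distribution} $D$ over price vectors there is a single bid $b$ with $\mathbb{E}_{p\sim D}\big[v_i(\{j\in O_i:b_j>p_j\})-\sum_{j:b_j>p_j} b_j\big]\ge \tfrac12 v_i(O_i)-\mathbb{E}_{p\sim D}\sum_{j\in O_i}p_j$. The deviation that works is: draw an independent sample $p'\sim D$ and bid $b_j=p'_j$ on each $j\in O_i$. The set won is $W=\{j:p'_j>p_j\}$; subadditivity gives $v_i(O_i)\le v_i(W)+v_i(O_i\setminus W)$, and since $(p,p')$ and $(p',p)$ are identically distributed, $\mathbb{E}[v_i(W)]\ge\tfrac12 v_i(O_i)$ (this is where your halving observation enters --- but the random subset is induced by comparing two price draws, not by independent item coin flips). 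The payment on item $j$ is $p'_j\mathbf{1}[p'_j>p_j]\le p'_j$, whose expectation is $\mathbb{E}[p_j]$, giving the required bound. So the piece you are missing is the minimax reduction together with the ``resample a fresh price vector from the adversary's own distribution and bid it back'' trick.
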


One version of Theorem~\ref{t:ffgl} concerns (mixed) Nash equilibria
in the full-information model (in which bidders' valuations are common knowledge),
as studied in the Solar Lectures.  Even here, the bound in
Theorem~\ref{t:ffgl} is tight in the worst case~\cite{CKST16}.
The approximation guarantee in Theorem~\ref{t:ffgl}
holds more generally for {\em Bayes-Nash
  equilibria}, the standard equilibrium notion for games of incomplete
information.\footnote{In more detail, in this model there is a commonly known prior
  distribution over bidders' valuations.  In a Bayes-Nash equilibrium,
  every bidder bids to maximize her expected utility given her
  information at the time: her own valuation, her posterior belief
  about other bidders' valuations, and the bidding strategies (mapping
  valuations to bids) used by the other bidders.  Theorem~\ref{t:ffgl}
  continues to hold for every Bayes-Nash equilibrium of an S1A, as
  long as bidders' valuations are independently (and not necessarily
  identically) distributed.\label{foot:bn}}

% Now we have a modular toolbox to use when designing auctions based on
% various extension theorems: Knowing a simple class are near optimal
% can be used to construct a more complex class.  

%We will use \emph{simultaneous first price auctions} (S1A).  These use
%sealed bids and the highest bidder pays his bid.

%\begin{definition} S1A has a \emph{subadditive valuation} if
%  $v_i(S+T) \leq v_i(S) + v_i(T)$ for all $S, T$.
%\end{definition}

%\begin{theorem}[\citet{FFGL13}]  Every equilibrium of a S1A has welfare at %least half of the best possible.
%\end{theorem}

%\noindent This is not tight, but it seems hard to improve the constant.  

Moving on to the second folklore belief, let's now drop the
subadditivity restriction.  S1As no longer work well.

\begin{theorem}[\citet{HKMN11}]\label{t:hkmn}
When bidders have arbitrary valuations, an S1A can have a mixed Nash
equilibrium with social welfare arbitrarily smaller than the maximum
possible.
\end{theorem}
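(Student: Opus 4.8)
The plan is to exhibit, for each target ratio $t$, an explicit combinatorial auction instance together with an explicit mixed strategy profile, to verify that this profile is a mixed Nash equilibrium of the corresponding simultaneous first-price auction (S1A), and to check that its expected social welfare is at most a $1/t$ fraction of the optimum; letting $t\to\infty$ then gives the theorem. Since Theorem~\ref{t:ffgl} already rules out such a bad equilibrium when all bidders are subadditive, the instance must exploit complementarities — it will be a many-item version of the exposure-problem example from Section~\ref{ss:exposure}.

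Concretely, I would use $m$ items and two kinds of bidders: a few \emph{bundle bidders} whose valuation is $V$ for the grand bundle of all $m$ items and $0$ for every proper subset (so the efficient allocation hands all items to one bundle bidder, giving optimal welfare $V$, with $V$ our large parameter), together with auxiliary unit-demand \emph{price-setting} bidders of small value on single items. The bad equilibrium is a mixed profile in which the bundle bidders and the price-setters randomize their per-item bids so that, with overwhelming probability, no bundle bidder wins all $m$ items; the realized allocation scatters the items among low-value bidders, or leaves bundle bidders holding worthless proper subsets, so the expected welfare is only $O(1)$ (or at worst polylogarithmic in $V$) while the optimum is $V$. The exposure intuition is precisely why such a profile can be an equilibrium: to win all $m$ items a bundle bidder must outbid the randomized competition on every item at once, and the expected cost of doing so, summed over items, can be made to exceed $V$, so losing is a best response.

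There are two things to verify. First, no bundle bidder has a profitable deviation: this reduces to showing that for every bid vector $(b_1,\dots,b_m)\in[0,1]^m$, the probability of winning all items times $V$ is at most the expected total payment $\sum_j b_j\Pr[\text{win item }j]$; a union-bound / AM--GM-style estimate against the equilibrium bid distributions should suffice, and this is where the parameters get tuned. Second, every bid in the support of a price-setter's (and bundle bidder's) mixed strategy must be a best response, i.e.\ the randomizing bidders are \emph{exactly} indifferent over their supports. This second point is the crux: naive product distributions are typically not mutual best responses (a bundle bidder's payoff couples its coordinates through the ``win everything'' event), so the auxiliary bidders and the bid distributions have to be designed hand in hand so that the indifference conditions hold identically in the bid variables. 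Once the profile is verified, computing its expected welfare and comparing to $V$ is routine. The main obstacle is exactly this second step — engineering a genuine mixed equilibrium, rather than merely a bad outcome, that implements the exposure deadlock.
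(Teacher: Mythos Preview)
The paper does not prove Theorem~\ref{t:hkmn}; it is stated as a result of \citet{HKMN11} and left without proof, so there is no in-paper argument to compare your proposal against.

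Your general plan --- an explicit instance built on complementarities (a scaled-up exposure problem), an explicit mixed profile, and direct verification that the profile is a Nash equilibrium with low expected welfare --- is the right kind of argument and is what \citet{HKMN11} actually do. You have also correctly located the real difficulty: establishing exact indifference for every bid in the support of each mixing player, given that a bundle bidder's payoff couples all items through the ``win everything'' event. What you have written, however, is a plan rather than a proof. The ``union-bound / AM--GM'' remark for the bundle bidder's non-deviation is suggestive but not pinned down, and --- as you yourself acknowledge --- you have not exhibited bid distributions that make the randomizing bidders indifferent across their supports. Until those distributions are written out and the best-response conditions verified, there is no proof here, only a (correct) identification of what needs to be done. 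If you want to carry this through, it is cleaner to work with just two bidders (one AND bidder on $m$ items and one unit-demand bidder) and let both mix; the indifference conditions then become a pair of coupled one-dimensional equations for the bid CDFs, which is how the construction in \cite{HKMN11} proceeds.
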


Thus for S1As, the perspective of worst-case approximation confirms
the dichotomy between the cases of substitutes and complements.  But
the lower bound in Theorem~\ref{t:hkmn} applies only to one specific 
auction
format.  Could we do better with a different natural
auction format?
Folklore Belief \#2 asserts the stronger statement that
{\em no} ``simple'' auction works well with general valuations.  This
stronger statement can also be translated into a theorem (using
nondeterministic communication complexity), and this
will be the main subject of Lunar Lecture~2.

%This shows that we need some condition to ensure a constant approximation ratio.  This was subsequently generalized.

\begin{theorem}[\citet{R14}]\label{t:condpoa}
With general valuations, \emph{every} simple auction can have 
an equilibrium with social welfare arbitrarily smaller than the
maximum possible.
\end{theorem}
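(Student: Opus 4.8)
The plan is to run the ``price of anarchy via communication complexity'' reduction: a simple auction all of whose equilibria were near-optimal would yield an impossibly cheap \emph{nondeterministic} protocol for approximately maximizing welfare in a combinatorial auction with general valuations, contradicting the Nisan--Segal-type communication lower bounds for welfare maximization (and their nondeterministic strengthenings). First I would pin down ``simple.'' The choice that makes the argument go through is: the auction consists of a single round of simultaneous messages, in which each of the $n$ bidders sends a string of length $\poly(m,n)$ (here $m$ is the number of items); equivalently, the full-information game $G(\mathcal{A},\mathbf{v})$ induced by the auction $\mathcal{A}$ and a valuation profile $\mathbf{v}$ has at most $N \le 2^{\poly(m,n)}$ pure strategies per player. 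This captures S1As and simultaneous second-price auctions; bounding \emph{both} the message length and the number of rounds matters because otherwise a single pure strategy (a function of an unbounded history) is already too complex to write down. Now fix such an $\mathcal{A}$ and a constant $\rho\ge 1$, and suppose for contradiction that for some sufficiently small constant $\eps>0$, on every profile $\mathbf{v}$ of general valuations, every $\eps$-approximate Nash equilibrium of $G(\mathcal{A},\mathbf{v})$ has expected social welfare at least $\mathrm{OPT}(\mathbf{v})/\rho$.

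The heart of the proof is the nondeterministic protocol. Because $\log N=\poly(m,n)$, the $n$-player analog of Theorem~\ref{t:lmm} (sample from an exact Nash equilibrium and apply Chernoff bounds) furnishes an $\eps$-approximate Nash equilibrium $\hat\sigma$ supported on only $O(n\log N/\eps^2)=\poly(m,n)/\eps^2$ pure profiles, hence with a description of $\poly(m,n)/\eps^2$ bits. On input $\mathbf{v}$, the all-powerful prover writes on the blackboard: $\hat\sigma$; one pure profile in its support whose realized welfare is at least the expected welfare of $\hat\sigma$ (such a profile exists since the maximum dominates the mean); and the allocation $T=(T_1,\dots,T_n)$ that $\mathcal{A}$ outputs on that profile. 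Verification then costs no communication: the rules of $\mathcal{A}$ and the announced $\hat\sigma$ are public, so bidder $i$ can privately simulate $\mathcal{A}$ on any announced strategy profile and, using only her own valuation $v_i$, check both that her component of $\hat\sigma$ is an $\eps$-best response (ranging over all $2^{\poly(m,n)}$ pure deviations is costly in \emph{computation} but free in \emph{communication}) and that $T$ is the genuine output on the announced profile. The only bits exchanged are the $n$ numbers $v_1(T_1),\dots,v_n(T_n)$, from which the players form $\mathrm{SW}(T)=\sum_i v_i(T_i)$. If all checks pass then $\hat\sigma$ is a bona fide $\eps$-approximate Nash equilibrium, so the contradiction hypothesis forces $\mathrm{SW}(T)\ge \mathrm{OPT}(\mathbf{v})/\rho$. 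Thus $\mathcal{A}$'s supposed price-of-anarchy bound would give, with $\poly(m,n)/\eps^2$ bits, a nondeterministic protocol exhibiting and certifying an allocation within a factor $\rho$ of optimal --- exactly the mechanism by which Theorem~\ref{t:lmm} powers Corollary~\ref{cor:lmm1}, now recruited to squeeze a communication certificate out of an equilibrium.

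To finish I would invoke the hardness of approximate welfare maximization: for combinatorial auctions with general valuations there is a gap function $g\to\infty$ (in fact $g$ can be taken as large as $m^{\Omega(1)}$, using $n$ of order $m^{\Omega(1)}$ bidders) such that any nondeterministic protocol producing and certifying an allocation within a factor $g$ of the optimal welfare requires communication super-polynomial in $m$. Choosing the instance size large enough that $g>\rho$ and the required communication dwarfs $\poly(m,n)/\eps^2$ contradicts the protocol above. Hence $\mathcal{A}$ must after all have, on some profile, an $\eps$-approximate Nash equilibrium of welfare below $\mathrm{OPT}/\rho$; since $\rho$ was an arbitrary constant and $\eps$ a fixed small constant, this is the theorem for (approximate) Nash equilibria. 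Running the same argument with a coarse correlated or correlated equilibrium in place of $\hat\sigma$ extends it to those solution concepts; the phrasing for exact mixed Nash equilibria, and a more permissive notion of ``simple'' that allows a bounded number of adaptive rounds, are obtained as in \citet{R14}.

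I expect the main obstacle to be calibrating the definition of ``simple'': it must be narrow enough that the induced game has only $2^{\poly(m,n)}$ pure strategies per player --- so that Theorem~\ref{t:lmm} yields a polynomially short sparse equilibrium and the certificate stays within the hardness budget --- yet broad enough to be a meaningful statement about ``all'' natural formats, which is exactly why the one-round/short-message restriction (and the extra work needed for genuinely adaptive auctions like ascending formats) appears. The other delicate point is obtaining a sufficiently strong \emph{nondeterministic} (not merely deterministic) lower bound for welfare maximization with general valuations, with a gap $g$ that grows without bound, and then balancing $\eps$, $\rho$, and $g$ so that the final numerical contradiction fires for every fixed constant $\rho$.
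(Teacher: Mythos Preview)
Your core idea matches the paper's: use a sparse $\eps$-approximate Nash equilibrium (via the Lipton--Markakis--Mehta sampling argument) as a short nondeterministic certificate, have each bidder privately verify her own best-response condition, and then contradict Nisan's exponential lower bound for approximate welfare maximization with general valuations. This is exactly the engine behind the paper's Theorem~\ref{t:condpoa2}.

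However, your protocol as written has a soundness bug, and it stems from the step where the prover ``picks one pure profile in the support whose realized welfare is at least the expected welfare.'' Nothing in your verification forces the prover to pick such a profile. A dishonest prover can write down a perfectly valid $\eps$-NE $\hat\sigma$ and then name a \emph{below-average} pure profile from its support; the players will verify the equilibrium and verify that $T$ is the auction's output on that profile, but $\mathrm{SW}(T)$ can be far below $\mathrm{OPT}/\rho$. This breaks exactly the implication you need. Relatedly, your invocation of the lower bound is phrased as ``producing and certifying an allocation within a factor $g$ of optimal,'' which is the \emph{easy} (0-input) direction nondeterministically; the hard direction in Theorem~\ref{thm_nisan} is certifying \emph{1-inputs}, i.e., that the maximum welfare is small.

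The paper fixes both issues in one stroke: rather than going through a single pure profile, the prover writes down each bidder's \emph{expected} welfare contribution $\expect[\hat\sigma]{v_i(\cdot)}$, and bidder $i$ privately verifies that number (she can compute it from $\hat\sigma$ and her own $v_i$). The players then accept iff the total expected welfare is at most $W^*/\alpha$. On 1-inputs this always succeeds; on 0-inputs every valid $\eps$-NE has expected welfare at least $W^*/\rho > W^*/\alpha$ by the assumed POA bound, so no prover message is accepted. This is a nondeterministic protocol for the 1-inputs of \wm, directly contradicting Theorem~\ref{thm_nisan}. If you replace your ``pick a good pure profile'' step with this expected-welfare verification, your argument becomes essentially the paper's. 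Two smaller calibration points: the paper's notion of ``simple'' is sub-doubly-exponential strategies per player (not $2^{\poly(m,n)}$), which is the correct threshold for the contradiction against an $\exp(\Omega(m/k^2))$ lower bound; and the result is stated for $\eps$-approximate Nash equilibria with $\eps$ at least inverse-polynomial, not for exact Nash equilibria.
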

The definition of ``simple'' used in Theorem~\ref{t:condpoa} is quite
generous: it requires only that the number of strategies available to
each player is {\em sub-doubly-exponential} in the number of
items~$m$.  For example, running separate single-item auctions
provides each player with only an exponential (in~$m$) number of strategies
(assuming a bounded number of possible bid values for each item).  Thus
Theorem~\ref{t:condpoa} makes use of the theoretical computer science
toolbox to provide solid footing for Folklore Belief \#2.

\lecture{Communication Barriers to
  Near-Optimal Equilibria}

\vspace{1cm}

%% Your lecture LaTeX goes here.

%\section{Introduction}
This lecture is about the 
%(multi-party, number-in-hand)
communication complexity of the welfare-maximization problem in
combinatorial auctions and its implications for the price of anarchy
of simple auctions.  
Section~\ref{s:camodel} defines the model, Section~\ref{s:cclb} proves lower bounds for
nondeterministic communication protocols, and
Section~\ref{s:condpoa} gives a black-box translation of these lower
bounds to equilibria of simple auctions.  In particular,
Section~\ref{s:condpoa} provides the proof of Theorem~\ref{t:condpoa}
from last lecture.  
Section~\ref{s:open} concludes with
%describes a theory of
%optimal simple auctions, and offers 
a juicy open problem on the topic.\footnote{Much of this lecture is drawn
  from~\cite[Lecture 7]{w15}.}

\section{Welfare Maximization in Combinatorial Auctions}\label{s:camodel}

%to see  how communication 
%lower bounds can be used to reason about the ``economic efficiency"
%of %auctions (in a formal sense to be defined below). 
Recall from Section~\ref{ss:when} the basic setup in the study of
combinatorial auctions.
\begin{enumerate}

\item There are $k$ players.  (In a spectrum auction, these are the telecoms.)

\item There is a set $M$ of $m$ items.  (In a spectrum auction, these
  are the licenses.)

\item Each player~$i$ has a {\em valuation} $v_i:2^M \rightarrow
  \R_+$.  The number $v_i(T)$ indicates $i$'s value, or willingness
  to pay, for the items~$T \subseteq M$.
The valuation is the private input of player $i$, meaning that $i$ knows
  $v_i$ but none of the other $v_j$'s.  (I.e., this is a
  number-in-hand model.)
We assume that $v_i(\emptyset) = 0$ and that the valuations are
{\em monotone}, meaning $v_i(S) \le v_i(T)$ whenever $S \subseteq T$.  (The
more items, the better.)
To avoid bit complexity issues, we'll also assume that all of the
$v_i(T)$'s are integers with description length polynomial in $k$ and
$m$.
We sometimes impose additional restrictions on the valuations to study
special cases of the general problem.

\end{enumerate}
Note that we may have more than two players---more than just Alice
and Bob.  (For example, you might want to think of $k$ as $\approx m^{1/3}$.)
Also note that the description length of a player's valuation is
exponential in the number of items $m$.  

% We have $k$ players (bidders) and $m$ indivisible items in the market, where each player $i\in[k]$ holds a private \emph{valuation function} 
% $V_i : 2^{[m]} \mapsto \R^{+}$ on all possible bundles (subsets) of items he might receive from the ground set $[m]$. We assume here 
% that the $V_i$'s are monotone functions (i.e., $V_i(\emptyset)=0$), since players can always drop items for free (this assumption is sometimes 
% called the ``free disposal" assumption). For now we consider arbitrary monotone functions, though there are natural monotone classes we will 
% restrict to later (e.g., sub-additive valuations, sub-modular, single-minded etc.). 

In the {\em welfare-maximization problem}, the goal is to partition
the items $M$ into sets $T_1,\ldots,T_k$ to maximize, at least
approximately, the social welfare 
\begin{equation}\label{eq:welfare}
\sum_{i=1}^k v_i(T_i),
\end{equation}
using communication polynomial in $k$ and $m$.  Note this amount of
communication is logarithmic in the sizes of the private inputs.
Maximizing social welfare~\eqref{eq:welfare} is the most commonly studied
objective in combinatorial auctions, and it is the one we
will focus on in this lecture.

% One could define several measures of ``economic efficiency" on the outcomes of an auction (e.g., welfare, revenue etc.). Here we will be concerned 
% with the most standard measure of \emph{social welfare} (SW), hence the goal in the setup above is to have a low communication protocol 
% (ideally, using $poly(m,k)$ bits of communication) that allows the auctioneer to compute a $k$-partition (allocation) $T=T_1,\ldots , T_k$ which 
% maximizes \[  SW := \max_{\text{$T$ is a $k$-partition of $[m]$}}  \sum_{i=1}^k V_i(T_i) .   \]

%%%%%%%%%%%%%%%%%%%%%%%%%%%%%%%%%%%%%%%%%%%%%%%%%%%%%%%%%%%%
%%%%%%%%%%%%%%%%%%%%%%%%%%%%%%%%%%%%%%%%%%%%%%%%%%%%%%%%%%%%

\section{Communication Lower Bounds for Approximate Welfare Maximization}\label{s:cclb}

This section studies the communication complexity of computing an
approximately welfare-maximizing allocation in a combinatorial
auction.  For reasons that will become clear in
Section~\ref{s:condpoa}, we are particularly interested in the
problem's nondeterministic communication complexity.\footnote{For
  basic background on nondeterministic multi-party communication
  protocols, see~\citet{KN96} or~\citet{w15}.}

\subsection{Lower Bound for General Valuations}

We begin with a result of Nisan \cite{Nis02} showing that, alas,
computing even a very weak approximation of the welfare-maximizing
allocation requires exponential communication.
To make this precise, it is convenient to turn the
optimization problem of welfare maximization into a decision problem.
In the \wm[k] problem, 
the goal is to correctly identify inputs that
fall into one of the following two cases:
\begin{itemize}

\item [(1)] Every partition $(T_1,\ldots,T_k)$ of the items has
  welfare at most~1.

\item [(0)] There exists a partition $(T_1,\ldots,T_k)$ of the items
  with welfare at least $k$.

\end{itemize}
Arbitrary behavior is permitted on inputs that fail to satisfy
either~(1) or~(0).
Clearly, communication lower bounds for \wm[k] apply 
to the more general
problem of obtaining a better-than-$k$-approximation of the
maximum welfare.\footnote{Achieving a $k$-approximation is trivial:
  every player communicates her value~$v_i(M)$ for the whole set of
  items, and the entire set of items is awarded to the bidder with the
  highest value for them.}

\begin{theorem}[\cite{Nis02}] \label{thm_nisan}
The nondeterministic communication complexity of \wm[k] is\\ $\exp \{
\Omega(m/k^2) \}$, where $k$ is the number of players and $m$ is the number of items.
\end{theorem}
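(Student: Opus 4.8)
The plan is to prove the bound by a monochromatic–rectangle counting argument that exploits the fact that a valuation is an exponentially large object: we will encode into each player's valuation an \emph{arbitrary} subset of a universe of size $2^{\Theta(m/k^2)}$, so that the (doubly‑exponential) hardness of certifying set–disjointness over that universe transfers to \wm[k]. Concretely, a nondeterministic protocol for \wm[k] corresponds to a cover of the case‑(1) inputs (welfare $\le 1$) by combinatorial rectangles $\mathcal R_1\times\cdots\times\mathcal R_k$, each of which contains no case‑(0) input (welfare $\ge k$); the nondeterministic communication complexity is $\log_2$ of the minimum number of such rectangles. So it suffices to exhibit a family $\mathcal S$ of case‑(1) inputs, of doubly‑exponential size $2^{2^{\Theta(m/k^2)}}$, such that no valid rectangle contains $k$ distinct members of $\mathcal S$.

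The basic gadget is two‑player. On a block of $2n$ items $a_1,b_1,\dots,a_n,b_n$, and for $x\in\zo^n$, let $S_x:=\{a_\ell: x_\ell=0\}\cup\{b_\ell:x_\ell=1\}$; then $S_x\cap S_y=\emptyset$ iff $y=\bar x$ (the bitwise complement), and $S_x,S_{\bar x}$ partition the block. A player whose valuation on this block is ``$1$ iff $T$ contains $S_x$ for some $x$ in a set $X\sse\zo^n$'' therefore encodes a $2^n$‑bit string $X$, and two such players $i,j$ can be \emph{simultaneously} satisfied by disjoint bundles inside the block precisely when some $x$ has $x\in X_i$ and $\bar x\in X_j$ (``$i,j$ agree''); otherwise at most one of them can be satisfied there. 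To manufacture the ``$1$ versus $k$'' gap I would partition $M$ into the $\binom{k}{2}$ blocks $B_{\{i,j\}}$, one per pair, each of size $2n$ with $n=\Theta(m/k^2)$, and set $v_i(T)=1$ iff for \emph{every} $j\ne i$ the restriction $T\cap B_{\{i,j\}}$ contains one of player $i$'s designated bundles there (an AND over $j$ of ORs, so $v_i$ is $\{0,1\}$‑valued and monotone). One then checks that the maximum welfare of such an instance equals the size of the largest $I\sse[k]$ in which all pairs agree (every singleton is satisfiable because all $X_i^{\{i,j\}}$ will be nonempty): if every pair disagrees the welfare is $1$, whereas if every pair agrees, giving $S_x$ to $i$ and $S_{\bar x}$ to $j$ in each block $B_{\{i,j\}}$ realizes welfare $k$.

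For $\mathcal S$ I would take, for each set $A$ in a maximum antichain of the Boolean lattice on $\zo^n$ (e.g.\ the middle layer, so $|\mathcal S|=\binom{2^n}{2^{n-1}}=2^{2^n-\Theta(n)}$), the instance $\mathrm{inst}(A)$ in which, on block $B_{\{i,j\}}$ with $i<j$, the smaller index $i$ uses $X_i^{\{i,j\}}=A$ and $j$ uses $X_j^{\{i,j\}}=\{\bar y:y\in\zo^n\setminus A\}$. A one‑line check shows $X_i^{\{i,j\}}$ and $\{\bar x:x\in X_j^{\{i,j\}}\}=\zo^n\setminus A$ are disjoint, so every pair disagrees and $\mathrm{inst}(A)$ is a case‑(1) input. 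The crux — and the step I expect to need the most care — is a \emph{diagonal mixing} argument: if a rectangle $\mathcal R_1\times\cdots\times\mathcal R_k$ contains $k$ distinct members $\mathrm{inst}(A^{(1)}),\dots,\mathrm{inst}(A^{(k)})$ of $\mathcal S$, then letting player $\ell$ take its coordinate from $\mathrm{inst}(A^{(\ell)})$ yields an input still in the rectangle, and in that input the pair $\{i,j\}$ with $i<j$ agrees iff $A^{(i)}\setminus A^{(j)}\ne\emptyset$, i.e.\ $A^{(i)}\not\sse A^{(j)}$; since the $A^{(\ell)}$ are distinct elements of an antichain, \emph{every} pair agrees, so this diagonal input has welfare $k$. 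Although other mixtures of the $\mathrm{inst}(A^{(\ell)})$ may violate the \wm[k] promise, the diagonal one does not — it is a genuine case‑(0) input sitting inside an allegedly monochromatic rectangle, a contradiction.

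Hence every rectangle in a valid cover meets $\mathcal S$ in at most $k-1$ points, so at least $|\mathcal S|/(k-1)$ rectangles are needed, and the nondeterministic communication complexity is at least $\log_2|\mathcal S|-\log_2 k=2^n-\Theta(n)-O(\log k)=\exp\{\Omega(m/k^2)\}$ (vacuous when $m\lesssim k^2$, where $\exp\{\Omega(m/k^2)\}=O(1)$). What is left is routine: verifying monotonicity and the exact welfare values of the two input types, confirming $2n\binom{k}{2}\le m$ is consistent with $n=\Theta(m/k^2)$, and absorbing the $\Theta(n)$ and $\log k$ terms into the exponent. The only genuinely delicate point is the one flagged above: the $\binom{k}{2}$‑block ``AND'' structure is forced by the need for a $1$‑versus‑$k$ welfare gap, and one must verify that the antichain/diagonal construction simultaneously keeps every $\mathrm{inst}(A)$ in the case‑(1) region and makes every diagonal mixture land in the case‑(0) region.
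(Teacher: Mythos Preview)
Your proof is correct, but it follows a genuinely different route from the paper's. The paper reduces from $k$-party \mdisj: it fixes (via the probabilistic method) an \emph{intersecting family} of $t=\exp\{\Omega(m/k^2)\}$ partitions $P^1,\ldots,P^t$ of $M$ into $k$ parts, with the property that $P^j_i\cap P^\ell_{i'}\neq\emptyset$ whenever $i\neq i'$ and $j\neq\ell$; player $i$ then sets $v_i(T)=1$ iff $T\supseteq P^j_i$ for some $j\in S_i$, where $(S_1,\ldots,S_k)$ is the \mdisj input. Total disjointness of the $S_i$'s forces welfare $\le 1$ (two happy players $i,i'$ would need $j\in S_i$ and $\ell\in S_{i'}$ with $P^j_i\cap P^\ell_{i'}=\emptyset$, hence $j=\ell$, contradicting $S_i\cap S_{i'}=\emptyset$), while a common element $r$ gives welfare $k$ via the partition $P^r$. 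The bound then follows from the $\Omega(t/k)$ nondeterministic lower bound for \mdisj.

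Your approach, by contrast, is a direct fooling-family argument with no black-box reduction: you slice $M$ into $\binom{k}{2}$ pairwise blocks rather than use global $k$-way partitions, and you build an explicit doubly-exponential family indexed by the middle layer of $2^{\{0,1\}^n}$ rather than appeal to \mdisj. The paper's route is more modular (any improved \mdisj bound would propagate automatically) and its gadget is conceptually a single object, though obtained non-constructively; your route is self-contained and fully explicit, and the antichain/diagonal-mixing step is a clean $k$-party analogue of the two-party fooling-set technique --- the observation that $k$ distinct antichain elements are needed (not two) because with fewer colors some pair $i<j$ would receive the same $A$ and hence disagree is exactly right. Both arguments arrive at the same $\exp\{\Omega(m/k^2)\}$ bound.
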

This lower bound is exponential in~$m$, provided that $m =
\Omega(k^{2+\eps})$ for some $\eps > 0$.  Since communication
complexity lower bounds apply even to players who cooperate perfectly,
this impossibility result holds even when all of the (tricky)
incentive issues are ignored.

\subsection{The \mdisj Problem}

The plan for the proof of Theorem~\ref{thm_nisan} is to reduce a
multi-party version of the \disj problem to the \wm[k] problem.
There is some ambiguity about how to define a version of \disj for three or
more players.  For example, suppose there are three players, and
among the three possible pairings of them, two have disjoint sets
while the third have intersecting sets.  Should this count as a
``yes'' or ``no'' instance?  We'll skirt this issue by worrying only
about unambiguous inputs, that are either ``totally disjoint'' or
``totally intersecting.''

Formally, in the \mdisj problem, each of the $k$ players $i$ holds an
input $\bfx_i \in \zo^n$.  (Equivalently, a set $S_i \subseteq
\{1,2,\ldots,n\}$.)  The task is to correctly identify inputs that
fall into one of the following two cases:
\begin{itemize}

\item [(1)] ``Totally disjoint,'' with $S_i \cap S_{i'} = \emptyset$ for
every $i \neq i'$.

\item [(0)] ``Totally intersecting,'' with $\cap_{i=1}^k S_i \neq
  \emptyset$.

\end{itemize}
When $k=2$, this is the standard \disj problem.  When $k > 2$, there
are inputs that are neither 1-inputs nor 0-inputs.  We let protocols
off the hook on such ambiguous inputs---they can answer ``1'' or
``0'' with impunity.

The following communication complexity lower bound for \mdisj is
credited to Jaikumar Radhakrishnan and Venkatesh Srinivasan
in~\cite{Nis02}.  (The proof is elementary, and for completeness is
given in Section~\ref{s:mdisj}.)
\begin{theorem}\label{t:mdisj}
The nondeterministic communication complexity of \mdisj, with $k$
players with $n$-bit inputs, is $\Omega(n/k)$.
\end{theorem}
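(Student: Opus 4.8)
The plan is to recast the nondeterministic communication complexity of \mdisj\ as a combinatorial covering question and then settle that question by a volume-counting argument; the factor $1/k$ will come out of the count itself rather than being a deficiency of the method. First I would invoke the standard dictionary for nondeterministic protocols in the number-in-hand blackboard model (see \citet{KN96} or \citet{w15}): a protocol of cost $c$ for the promise problem \mdisj\ induces a family of at most $2^c$ ``combinatorial boxes'' $A_1 \times \cdots \times A_k$, where each $A_i$ is a collection of subsets of $[n]=\{1,\dots,n\}$ (the inputs player~$i$ might hold), whose union contains every ``totally disjoint'' input and none of which contains a ``totally intersecting'' input. So it suffices to prove that any such cover must use $2^{\Omega(n/k)}$ boxes.

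The structural step is a clean description of the \emph{legal} boxes, i.e.\ the boxes that contain no $0$-input. For a box $A_1 \times \cdots \times A_k$, let $U_i = \bigcup_{S \in A_i} S \subseteq [n]$ be the set of coordinates that player~$i$ touches somewhere in the box. The box contains a totally intersecting tuple precisely when there is a coordinate $c$ and, for every player $i$, a set $S_i \in A_i$ with $c \in S_i$ --- that is, precisely when $c \in \bigcap_i U_i$; hence the box is legal iff $\bigcap_{i=1}^k U_i = \emptyset$. Given this, I would count. On one hand, the number of totally disjoint tuples $(S_1,\dots,S_k)$ is exactly $(k+1)^n$: each coordinate lies in at most one of the $S_i$, which is $k+1$ possibilities per coordinate ($c$ in none, or in exactly one of the $k$ sets), chosen independently across the $n$ coordinates. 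On the other hand, a single legal box contains at most $k^n$ totally disjoint tuples: such a tuple must satisfy $S_i \subseteq U_i$ for all $i$, so a coordinate $c$ can belong only to sets $S_i$ with $c \in U_i$, and because $\bigcap_i U_i = \emptyset$ there are at most $k-1$ such indices, leaving at most $1 + (k-1) = k$ independent ``fates'' for $c$. Dividing, any legal cover uses at least $(k+1)^n / k^n = (1+1/k)^n$ boxes, so
\[
c \;\ge\; n \log_2\!\left(1 + \tfrac{1}{k}\right) \;=\; \Omega\!\left(\frac{n}{k}\right),
\]
using $\log_2(1+x) = \Theta(x)$ for $x \in (0,1]$, which is exactly the claimed bound.

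I do not expect a genuine obstacle in carrying this out --- both counting estimates are elementary and the legal-box characterization is immediate --- so the point worth stressing is rather \emph{why} one needs a counting argument instead of the usual two-party-style fooling set. For $k \ge 3$ the fooling-set approach breaks down: any coordinatewise mixture of two totally disjoint $k$-tuples still forces at least two players to retain disjoint sets, so no such mixture is ever totally intersecting, and one therefore cannot certify that two $1$-inputs must lie in different legal boxes. The volumetric count sidesteps this, at the price of the $1/k$ factor: a legal box is only a $(1+1/k)^{-n}$ fraction smaller than the full set of totally disjoint inputs, so exponentially many boxes are forced, but with the modest exponent $\Theta(n/k)$. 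It is precisely this $1/k$ loss that, fed through the reduction from \mdisj\ to \wm[k], produces the $m/k^2$ exponent in Theorem~\ref{thm_nisan}.
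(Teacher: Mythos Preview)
Your proposal is correct and is essentially the same argument as the paper's proof in Section~\ref{s:mdisj}: the paper's ``ineligible player for each coordinate'' claim is exactly your statement that $\bigcap_i U_i = \emptyset$ (a coordinate $\ell$ has ineligible player $i$ iff $\ell \notin U_i$), and both proofs then count $(k+1)^n$ totally disjoint tuples against at most $k^n$ per legal box to obtain the $(1+1/k)^n$ lower bound on the cover size. Your packaging via the sets $U_i$ is slightly more compact, and your closing remarks on why fooling sets fail are a nice addition, but the substance is identical.
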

This nondeterministic lower bound is for verifying a 1-input.  (It is
easy to verify a 0-input---the prover just suggests the index of an
element $r$ in $\cap_{i=1}^k S_i$.)%
%the validity of which is easily
%checked privately by each of the players.)
\footnote{In proving
  Theorem~\ref{thm_nisan},
% our application in Section~\ref{s:condpoa}, 
we'll be interested in the
case where $k$ is much smaller than $n$, such as $k = \Theta(\log
n)$.  Intuition might suggest that the lower bound should be
$\Omega(n)$ rather than $\Omega(n/k)$, but this is incorrect---a
slightly non-trivial argument shows that Theorem~\ref{t:mdisj} is
tight for nondeterministic protocols
(for all small enough $k$, like $k = O(\sqrt{n})$).  
This factor-$k$ difference won't matter for our applications, however.}

%We will prove the theorem by reduction to a promise version of the $k$-party %NIH (number-in-hand) Set Disjointness problem on 
%$n$-bit strings, denoted $\mathsf{DISJ}_{n,k}$, where the players are only %required to distinguish between the following extreme cases: 
%\begin{itemize}
%\item The player's sets are \emph{pairwise} disjoint, i.e., $S_i \cap S_{i'} %= \emptyset$ for all $i,i' \in [k]$.   
%\item There is some element $j \in [n]$ that belongs to all players,i .e., %$\bigcap_{i\in k} S_i \neq \emptyset$. 
%\end{itemize}

%We shall use the following well-known theorem of Radhakrishnan and Srinivasan. 
%\begin{theorem}
%The (deterministic) $k$-party (NIH) communication complexity of %$\mathsf{DISJ}_{n,k}$ is $\Omega(n/k)$. 
%\end{theorem}

\subsection{Proof of Theorem~\ref{thm_nisan}}

The proof of Theorem~\ref{thm_nisan} relies on Theorem~\ref{t:mdisj} and a
combinatorial gadget.  We construct this gadget using the
probabilistic method.  Consider $t$ random
partitions $P^1,\ldots,P^t$ of $M$, where $t$ is a parameter to be
defined later.  By a random partition $P^j = (P^j_1,\ldots,P^j_k)$, we
mean that each of the $m$ items is assigned
to exactly one of the
$k$ players, independently and uniformly at random.

We are interested in the probability that two classes of different
partitions intersect: for all $i \neq i'$ and $j \neq \ell$, 
because the probability that a given item is assigned to $i$ in
$P^j$ and also to $i'$ in $P^{\ell}$ is $\tfrac{1}{k^2}$, we
have
\[
\prob{P^j_i \cap P^{\ell}_{i'} = \emptyset} = 
\left( 1 - \frac{1}{k^2} \right)^m \le e^{-m/k^2}.
\]
Taking a Union Bound over the $k$ choices for $i$ and $i'$ and the $t$
choices for $j$ and $\ell$, we have
\begin{equation}\label{eq:int}
\prob{\exists i \neq i', j \neq \ell \text{ s.t.\ } P^j_i \cap
  P^{\ell}_{i'} = \emptyset} \le k^2t^2e^{-m/k^2}.
\end{equation}
Call $P^1,\ldots,P^t$ an {\em intersecting family} if $P^j_i \cap
P^{\ell}_{i'} \neq \emptyset$ whenever $i \neq i'$, $j \neq \ell$.
By~\eqref{eq:int}, the probability that our random experiment fails to
produce an intersecting family is less than~1 provided $t <
\tfrac{1}{k}e^{m/2k^2}$.  The following lemma is immediate.
\begin{lemma}\label{l:int}
For every $m,k \ge 1$,
there exists an intersecting family of partitions $P^1,\ldots,P^t$
with $t = \exp \{ \Omega(m/k^2) \}$.
\end{lemma}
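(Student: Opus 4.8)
The plan is to prove Lemma~\ref{l:int} by the probabilistic method, completing the computation already begun in the two displays preceding the lemma. First I would fix a value of $t$ (to be chosen at the end) and sample partitions $P^1,\dots,P^t$ independently at random, where each $P^j=(P^j_1,\dots,P^j_k)$ assigns every one of the $m$ items to one of the $k$ players uniformly and independently. The goal is to show that with positive probability the sampled family is \emph{intersecting}, meaning $P^j_i\cap P^\ell_{i'}\neq\emptyset$ for all $i\neq i'$ and $j\neq\ell$; so the event I need to control is the complementary ``bad'' event that some such pair of classes is disjoint.

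The key step is the union bound already recorded in~\eqref{eq:int}. For a fixed $i\neq i'$ and $j\neq\ell$, a given item is placed into $P^j_i$ and simultaneously into $P^\ell_{i'}$ with probability $1/k^2$, and these events are independent across items, so $\prob{P^j_i\cap P^\ell_{i'}=\emptyset}=(1-1/k^2)^m\le e^{-m/k^2}$. Summing over the at most $k^2$ choices of the ordered pair $(i,i')$ and the at most $t^2$ choices of the ordered pair $(j,\ell)$ shows that the bad event has probability at most $k^2t^2e^{-m/k^2}$.

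It then remains only to choose $t$ so that $k^2t^2e^{-m/k^2}<1$; any $t<\tfrac1k e^{m/(2k^2)}$ works, and in particular one may take $t=\exp\{\Omega(m/k^2)\}$ (rounding down, and noting that a single partition is trivially an intersecting family, which handles the degenerate small-$m$, large-$k$ cases where the floor is~$1$). For this $t$ the random experiment succeeds with positive probability, hence an intersecting family of this size exists. There is no real obstacle here: every ingredient---the per-pair disjointness probability, the union bound, and the choice of~$t$---is already present in the surrounding text, so the argument is essentially bookkeeping. The only point worth a sentence of care is checking that the $\Omega(\cdot)$ in the exponent is legitimate for \emph{all} $m,k\ge 1$ and not merely in an asymptotic regime.
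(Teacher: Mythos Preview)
Your proposal is correct and follows exactly the argument the paper gives: the paper lays out the probabilistic construction, the per-pair disjointness bound, and the union bound~\eqref{eq:int} in the paragraphs immediately preceding the lemma, then declares the lemma ``immediate.'' Your write-up simply fills in the one remaining sentence about choosing $t<\tfrac{1}{k}e^{m/(2k^2)}$, which the paper also states explicitly.
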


A simple combination of Theorem~\ref{t:mdisj} and Lemma~\ref{l:int}
now proves Theorem~\ref{thm_nisan}.

%\vspace{.1in}
\noindent
\begin{proof}%{Theorem}{thm_nisan}
(of Theorem~\ref{thm_nisan})
The proof is a reduction from \mdisj.
Fix $k$ and $m$.  (To be interesting,~$m$ should be significantly
bigger than $k^2$.)  
Let $(S_1,\ldots,S_k)$ denote an input to \mdisj
with $t$-bit inputs, where $t = \exp \{ \Omega(m/k^2) \}$ is the same
value as in Lemma~\ref{l:int}.  We can assume that the players have
coordinated in advance on an intersecting family of $t$ partitions of a
set~$M$ of $m$ items.  Each player~$i$ uses this family and her input
$S_i$ to form the following valuation:
\[
v_i(T) = 
\left \{
\begin{array}{cl}
1 & \text{if $T \supseteq P^j_i$ for some $j \in S_i$}\\
0 & \text{otherwise.}
\end{array}
\right.
\]
That is, player~$i$ is either happy (value~1) or unhappy (value~0),
and is happy if and only if she receives all of the items in the
corresponding class $P^j_i$ of some partition $P^j$ with index~$j$
belonging to its input to \mdisj.
The valuations $v_1,\ldots,v_k$ define an input to \wm[k].
Forming this input requires no communication between the players.

Consider the case where the input to \mdisj is a 1-input, with $S_i
\cap S_{i'} = \emptyset$ for every $i \neq i'$.  We claim that the
induced input to \wm[k] is a 1-input, with maximum welfare at most~1.
To see this, consider a partition $(T_1,\ldots,T_k)$ in which some
player~$i$ is happy (with $v_i(T_i) = 1$).  For some $j \in S_i$,
player $i$ receives all the items in $P^j_i$.  Since $j \not\in S_{i'}$
for every $i' \neq i$, the only way to make a second player~$i'$ happy
is to give her all the items in $P^{\ell}_{i'}$ in some other partition
$P^{\ell}$ with $\ell \in S_{i'}$ (and hence $\ell \neq j$).  Since
$P^1,\ldots,P^t$ is an intersecting family, this is impossible ---
$P^j_i$ and $P^{\ell}_{i'}$ overlap for every $\ell \neq j$.

When the input to \mdisj is a 0-input, with an element $r$ in the
mutual intersection $\cap_{i=1}^k S_i$, we claim that the induced
input to \wm[k] is a 0-input, with maximum welfare at least~$k$.  This
is easy to see: for $i=1,2,\ldots,k$, assign the items of $P^r_i$ to
player~$i$.  Since $r \in S_i$ for every $i$, this makes all $k$
players happy.
%\footnote{A detail: we are using that $P^j_i \neq
%  \emptyset$ for every $i$ and $j$ in an intersecting family of
%  partitions.}

This reduction shows that a (deterministic, nondeterministic, or
randomized) protocol for \wm[k] yields one for \mdisj (with $t$-bit
inputs) with the same communication.  We conclude that the
nondeterministic communication complexity of \wm[k] is
$\Omega(t/k) = \exp \{ \Omega(m/k^2) \}$.
\end{proof}

\subsection{Subadditive Valuations}

To an algorithms person, Theorem~\ref{thm_nisan} is depressing,
as it rules out any non-trivial positive results.
A natural idea is to seek positive results by imposing additional
structure on players' valuations.  Many such restrictions have been
studied.  We consider here the case of {\em subadditive} valuations
(see also Section~\ref{ss:when} of the preceding lecture),
where each $v_i$ satisfies $v_i(S \cup T) \le v_i(S) + v_i(T)$ for every
pair $S,T \subseteq M$.  

Our reduction in Theorem~\ref{thm_nisan} easily implies a weaker
inapproximability result for welfare maximization with subadditive
valuations.  Formally, 
define the \wm[2] problem as that of
identifying inputs that
fall into one of the following two cases:
\begin{itemize}

\item [(1)] Every partition $(T_1,\ldots,T_k)$ of the items has
  welfare at most~$k+1$.

\item [(0)] There exists a partition $(T_1,\ldots,T_k)$ of the items
  with welfare at least $2k$.

\end{itemize}
Communication lower bounds for \wm[2] apply also to the more general
problem of obtaining a better-than-$2$-approximation of the maximum
social welfare.

\begin{theorem}[\citet{DNS05}]\label{t:wm2}
The nondeterministic communication complexity of \wm[2] is $\exp \{
\Omega(m/k^2) \}$, even when all players have subadditive valuations.
\end{theorem}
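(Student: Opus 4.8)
The plan is to follow the proof of Theorem~\ref{thm_nisan} almost verbatim, changing only the valuations produced by the reduction so that they become subadditive while the two cases remain separated --- now by a factor close to~$2$ rather than~$k$. First I would fix the same intersecting family of partitions $P^1,\dots,P^t$ of the item set $M$ supplied by Lemma~\ref{l:int}: each $P^j=(P^j_1,\dots,P^j_k)$ splits $M$ into $k$ nonempty blocks, $P^j_i\cap P^\ell_{i'}\neq\emptyset$ whenever $i\neq i'$ and $j\neq\ell$, and $t=\exp\{\Omega(m/k^2)\}$. Given an \mdisj input $(S_1,\dots,S_k)$ with $S_i\subseteq[t]$, player~$i$ forms (with no communication) the valuation
$$v_i(T)\;=\;\max_{j\in S_i}\Bigl(\,\mathbf{1}\bigl[T\cap P^j_i\neq\emptyset\bigr]\;+\;\mathbf{1}\bigl[T\supseteq P^j_i\bigr]\Bigr),$$
so player~$i$ is worth~$2$ if she receives a full ``relevant'' block $P^j_i$ with $j\in S_i$, worth~$1$ if she merely touches one, and~$0$ otherwise. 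The one verification that is not automatic is that this is a legal subadditive instance: for a fixed nonempty $B$, the map $T\mapsto\mathbf{1}[T\cap B\neq\emptyset]+\mathbf{1}[T\supseteq B]$ is monotone, and a brief case analysis shows it is subadditive; since a maximum of monotone subadditive functions is monotone and subadditive, each $v_i$ is a valid subadditive valuation, and it takes integer values of constant description length.

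The two directions of the reduction are then short. If $(S_1,\dots,S_k)$ is totally intersecting, pick $r\in\bigcap_i S_i$ and allot the block $P^r_i$ to player~$i$: this is a genuine partition of $M$ and $v_i(P^r_i)=2$ for every~$i$, so the social welfare is $2k$, a $0$-input of \wm[2]. If $(S_1,\dots,S_k)$ is totally disjoint, fix any partition $(T_1,\dots,T_k)$. At most one player can attain value~$2$: if $i\neq i'$ both received full relevant blocks $P^j_i$ and $P^\ell_{i'}$ --- necessarily with $j\in S_i$, $\ell\in S_{i'}$, hence $j\neq\ell$ by disjointness --- then $P^j_i\cap P^\ell_{i'}\neq\emptyset$ by the intersecting-family property, contradicting $T_i\cap T_{i'}=\emptyset$. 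Every other player contributes at most~$1$, so the welfare is at most $2+(k-1)=k+1$, a $1$-input of \wm[2]. Hence a nondeterministic (or deterministic, or randomized) protocol for \wm[2] restricted to subadditive instances yields one of the same cost for \mdisj on $t$-bit inputs, which by Theorem~\ref{t:mdisj} costs $\Omega(t/k)=\exp\{\Omega(m/k^2)\}$.

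The only real design decision --- and the reason the inapproximability factor necessarily drops from~$k$ to~$2$ --- is the ``$2$ for a full block, $1$ for a partial touch'' profile. A valuation worth~$2$ on a full block but~$0$ on everything smaller is just a rescaled single-minded valuation of Theorem~\ref{thm_nisan}, and it is \emph{not} subadditive, since a block splits into two halves of value~$0$. Subadditivity forces the value of a nonempty sub-block up to at least~$1$, and that extra~$1$ is precisely what the adversary in the totally-disjoint case collects simultaneously for (almost) all $k$ players, raising the attainable welfare from~$1$ in Theorem~\ref{thm_nisan} to~$k+1$ here. I expect no genuine obstacle beyond pinning down this profile and checking its subadditivity; once that is done, the rest is a line-by-line reuse of the proof of Theorem~\ref{thm_nisan}.
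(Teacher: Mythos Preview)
Your proof is correct and follows essentially the same route as the paper: modify the reduction in Theorem~\ref{thm_nisan} by lifting the $0$--$1$ valuations to subadditive ones, thereby preserving the \mdisj reduction at the cost of dropping the gap from $k$ to $2$. The only difference is the specific lift. The paper simply adds~$1$ to $v_i(T)$ for every nonempty $T$; this is trivially subadditive (any nonempty $A,B$ satisfy $v_i(A\cup B)\le 2\le v_i(A)+v_i(B)$), and the welfare bounds $2k$ and $k+1$ follow immediately from the original argument. Your ``$2$ for a full block, $1$ for a partial touch, $0$ otherwise'' profile also works and your subadditivity check is correct, but it is more elaborate than necessary --- the paper's uniform $+1$ shift avoids the case analysis for $g_B$ and the observation that a pointwise maximum of subadditive functions is subadditive.
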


This theorem follows from a modification of the proof of
Theorem~\ref{thm_nisan}.  The 0-1 valuations used in that proof are
not subadditive, but they can be made subadditive by adding~1 to each
bidder's valuation~$v_i(T)$ of each non-empty set~$T$.  The social
welfare obtained in inputs corresponding to 1- and 0-inputs of \mdisj
become $k+1$ and $2k$, respectively, and this completes the proof of
Theorem~\ref{t:wm2}.
%ruling out a
%better-than-2-approximation with subexponential communication.

% %\begin{proof}
% Picking up where the reduction in the proof of Theorem~\ref{t:wm2}
% left off, every player~$i$ adds~1 to its valuation
% for every non-empty set of items.
% Thus, the previously 0-1 valuations become 0-1-2
% valuations that are only~0 for the empty set.  Such functions always
% satisfy the subadditivity condition ($v_i(S \cup T) \le v_i(S) +
% v_i(T)$).  1-inputs and 0-inputs of \mdisj now become 1-inputs and
% 0-inputs of \wm[2], respectively.  The communication complexity lower
% bound follows.
%\end{proof}

There is also a quite non-trivial 
deterministic and polynomial-communication protocol
that guarantees a 2-approximation of the social welfare when bidders
have subadditive valuations~\citep{F06}.

% As an immediate corollary, we get that any $k$-party protocol that approximates the SW of an auction to within a multiplicative 
% constant $< k$, must solve $\mathsf{DISJ}_{t,k}$ have communication $\Omega(t/k) = \exp(m/k^2)$, which completes the proof 
% of Theorem \ref{thm_nisan}. 

%\begin{corollary}[Communication Lower Bound for 2-Approximation w.r.t %Subadditive Bidders]
%Any k-party protocol that provides a 2-approximation to $m$-item, $k$-bidder %auctions with \emph{subadditive valuations} 
%(i.e., $V_i(T\cup S) \leq V_i(T) +V_i(S)$), must have $\exp(m/k^2)$ %communication. 
%\end{corollary}

%%%%%%%%%%%%%%%%%%%%%%%%%%%%%%%%%%%%%%%%%%%%%%%%%%%%%%%%%%%%
%%%%%%%%%%%%%%%%%%%%%%%%%%%%%%%%%%%%%%%%%%%%%%%%%%%%%%%%%%%%

\section{Lower Bounds on the Price of Anarchy of Simple Auctions}\label{s:condpoa}

The lower bounds of the previous section show that every protocol for
the welfare-maximization problem that interacts with the players and
then explicitly computes an allocation has either a bad approximation
ratio or high communication cost.  Over the past decade, many
researchers have considered shifting the work from the protocol to the
players, by analyzing the equilibria of simple auctions.  Can such
equilibria bypass the communication complexity lower bounds proved in
Section~\ref{s:cclb}?  The answer is not obvious, because equilibria
are defined non-constructively, and not through a low-cost
communication protocol.

\subsection{Auctions as Games}

What do we mean by a ``simple'' auction?  For example, recall the {\em
  simultaneous first-price auctions (S1As)} introduced in
Section~\ref{ss:when} of the preceding lecture.  Each player~$i$
chooses a strategy $b_{i1},\ldots,b_{im}$, with one bid per
item.\footnote{To keep the game finite, let's agree that each bid has
  to be an integer between 0 and some known upper bound $B$.}  Each
item is sold separately in parallel using a ``first-price
auction''---the item is awarded to the highest bidder on that item,
with the selling price equal to that bidder's bid.\footnote{In the
  preceding lecture we mentioned
 the {\em Vickrey} or {\em second-price} auction,
  where the winner does not pay their own bid, but rather the highest
  bid by someone else (the second-highest overall).  We'll stick with
  S1As for simplicity, but similar results are known for simultaneous
  second-price auctions, as well.}  The payoff of a player~$i$ in a given
outcome (i.e., given a choice of strategy for each player) is then her
utility:
%To specify the utility functions, we assume that each player $i$ has a
%valuation $v_i$ as in Section~\ref{s:wm}.  We define
\[
%u_i(\actions) = 
\underbrace{v_i(T_i)}_{\text{value of items won}} -
\underbrace{\sum_{j \in T_i} b_{ij}}_{\text{price paid for them}},
\]
where $T_i$ denotes the items on which $i$ is the highest bidder
(given the bids of the others).  

Bidders strategize already in a first-price auction for a single
item---a bidder certainly doesn't want to bid her actual valuation
(this would guarantee utility~0), and instead will ``shade'' her bid
down to a lower value.  (How much to shade is a tricky question, and
depends on what the other bidders are doing.)  Thus it makes sense to
assess the performance of an auction by its equilibria.  As usual, a
Nash equilibrium comprises a (randomized) strategy for each player, so
that no player can unilaterally increase her expected payoff through a
unilateral deviation to some other strategy (given how the other
players are bidding).

% Note that the utility of a bidder
% depends both on its own action and those of the other bidders.
% Having specified the players, their actions, and their utility
% functions, we see that an S1A is an example of a game.

% We now turn to view combinatorial auctions from a \emph{strategic} point-of-view, rather than a pure optimization perspective. 
% Consider a simultaneous first-price auction (S1A) in the combinatorial auctions setup, in which bidders submit sealed-envelope 
% bids for each item ($b_{i,j}$), the auctioneer collects the bids, allocates each item $i\in[m]$ to the highest bidder, and charges her 
% the bid she submitted. 

% In such auction, one would expect bidders to strategize, i.e., behave selfishly -- indeed, if a bidder submits a bid which is equal to her true valuation for 
% the item (or bundle), then clearly she cannot gain any profit/utility from the auction so in such scenario, we expect bidders 
% to ``shade" their bids by decreasing values. From this perspective, a combinatorial auction is nothing but a $k$-player strategic 
% game, with payoffs 
%\[   \Pi_i(\{b_{i,j}\}) :=  V_i(T_i) - \sum_{j\in T_i} b_{i,j}  ,  \] 
%where $\{b_{i,j}\}_{i\in[k],j\in [m]}$ is the bidding profile of the bidders, %and $T_i$ is the set of items for which bidder $j$'s bid was maximal. 

\subsection{The Price of Anarchy}\label{ss:poa}

So how good are the equilibria of various auction games, such as S1As?  To
answer this question, we use an analog of the approximation ratio,
adapted for equilibria.  Given a game~$G$ (like an S1A) and a nonnegative
maximization objective function~$f$ on the outcomes (like the social
welfare), \citet{KP99} defined
the {\em price of anarchy (POA)} of~$G$
as the ratio between the
objective function value of an optimal solution, and that of the worst
equilibrium:
\[ 
\mathsf{PoA}(G):= \frac{f(OPT(G))}{\min_{\text{$\rho$ is an
      equilibrium of $G$}} f(\rho)},
\] 
%where~$G$ denotes a game, $f$ denotes a (maximization) objective
%function, and
where $OPT(G)$ denotes the optimal outcome of~$G$ (with respect
to~$f$).\footnote{If~$\rho$ is a probability distribution over outcomes, as in
a mixed Nash equilibrium, then
$f(\rho)$ denotes the expected value of~$f$ w.r.t.~$\rho$.}
Thus the price of anarchy of a game quantifies the inefficiency of
selfish behavior.\footnote{Games generally have multiple
  equilibria.  Ideally, we'd like an approximation guarantee that
  applies to {\em all} equilibria, so that we don't need to worry
  about which one is reached---this is the point of the POA.}
The POA of a game and a maximization objective function is always at
least~1.  We can identify ``good performance'' of a system
with strategic participants as having a POA close to~1.\footnote{One
  caveat is that it's not
  always clear that a system will reach an equilibrium in a
  reasonable amount of time.  A natural way to resolve this issue
is to relax
  the notion of equilibrium enough so that it become relatively
  easy to reach an equilibrium.  See Lunar Lecture~5 for more on
  this point.}

The POA depends on the choice of
equilibrium concept.  For example, the POA  with respect
to approximate Nash equilibria can only be worse (i.e., bigger) than for
exact Nash equilibria (since there are only more of the
former).

%If the equilibrium involves randomization, as with mixed 
%strategies, then we consider its expected objective function value.

%Similarly, for approximate Nash equilibria, we define 
%\[ \mathsf{PoA}_\epsilon(G):= \frac{SW(OPT(G))}{\min_{\text{$\mu$ is an %$\epsilon$-ANE of $G$}} SW(\mu)}. \] 
%Note that $\mathsf{PoA}_\epsilon(G) \leq \mathsf{PoA}(G)$ always holds. 
%\end{definition}

\subsection{The Price of Anarchy of S1As}

As we saw in Theorem~\ref{t:ffgl} of the preceding lecture, the
equilibria of simple auctions like S1As can be surprisingly
good.\footnote{The first result of this type, for simultaneous
  second-price auctions and bidders with submodular valuations, is due
  to \citet{CKS16}.}
We restate that result here.\footnote{For a proof, see the original
  paper~\cite{FFGL13} or course notes by the author~\cite[Lecture
  17.5]{w14}.}
\begin{theorem}[\citet{FFGL13}]\label{t:ffgl2}
In every S1A with subadditive bidder valuations, the POA is at most~2.
\end{theorem}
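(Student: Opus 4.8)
The plan is to use the now-standard "smoothness"-style argument for price of anarchy bounds, specialized to simultaneous first-price auctions. Let $(v_1,\dots,v_k)$ be the bidder valuations, let $(O_1,\dots,O_k)$ be a welfare-maximizing allocation, and fix any (mixed) Nash equilibrium with bid profile $\mathbf{b} = (\mathbf{b}_1,\dots,\mathbf{b}_k)$. The key move is to exhibit, for each player $i$, a \emph{deviation} bid vector that is good enough to certify a welfare bound, and then sum the resulting inequalities over all players.

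First I would construct the deviation. For bidder $i$, consider the strategy of bidding, on each item $j \in O_i$, exactly half of the current highest competing bid $p_j := \max_{i' \ne i} b_{i'j}$ (and zero on all items outside $O_i$). One of two things happens for each $j \in O_i$: either this bid wins item $j$ (in which case $i$ pays at most $p_j/2$... actually at most her own bid $p_j/2$), or it loses, in which case someone else was bidding more than $p_j/2$ on item $j$, so the \emph{revenue} collected on item $j$ at equilibrium is at least $p_j/2$. The subtlety is that bidder $i$ does not get to choose \emph{which} of these outcomes occurs item-by-item in a way she controls, so the clean statement is: the bundle $W_i$ of items in $O_i$ that $i$ wins under this deviation satisfies $v_i(W_i) \ge v_i(O_i) - \sum_{j \in O_i} p_j/2$ by subadditivity (dropping an item $j$ from a bundle decreases value by at most $v_i(\{j\}) \le$ the price someone is willing to pay, hence $\le p_j/2$ when $i$ loses it), while her payment is at most $\sum_{j \in O_i} p_j/2$. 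Hence her utility from this deviation is at least $v_i(O_i) - \sum_{j \in O_i} p_j$. Since $\mathbf{b}$ is a Nash equilibrium, her equilibrium utility $u_i(\mathbf{b})$ is at least the expected utility of this deviation, giving $u_i(\mathbf{b}) \ge v_i(O_i) - \mathbb{E}\big[\sum_{j \in O_i} p_j\big]$.

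Then I would sum over $i$. The left side is $\sum_i u_i(\mathbf{b})$, which equals the equilibrium social welfare minus the total revenue, so it is at most the equilibrium welfare $\mathrm{SW}(\mathbf{b})$. The right side is $\mathrm{SW}(OPT) - \mathbb{E}\big[\sum_i \sum_{j \in O_i} p_j\big] = \mathrm{SW}(OPT) - \mathbb{E}\big[\sum_j p_j\big]$, using that the $O_i$ partition the items. Now $\sum_j p_j \le \sum_j (\text{highest bid on } j) = \text{revenue}(\mathbf{b}) \le \mathrm{SW}(\mathbf{b})$, since in a first-price auction the winner pays her own bid and no bidder bids above her value at equilibrium (bidding above value is dominated). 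Combining, $\mathrm{SW}(\mathbf{b}) \ge \mathrm{SW}(OPT) - \mathrm{SW}(\mathbf{b})$, i.e.\ $\mathrm{SW}(\mathbf{b}) \ge \tfrac12 \mathrm{SW}(OPT)$, which is the claim.

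The main obstacle — and the place where subadditivity is genuinely used and where care is needed — is making the deviation argument rigorous in the \emph{randomized} equilibrium setting: the competing bids $p_j$ are random (they depend on the other players' mixed strategies), so the deviation "bid $p_j/2$ on $j$" is really a randomized deviation measurable with respect to $i$'s information, and one must check it is a legitimate strategy and push the expectations through correctly. A second delicate point is the "dropped item" estimate $v_i(O_i \setminus L) \ge v_i(O_i) - \sum_{j \in L} v_i(\{j\})$ for the lost set $L$, which is exactly subadditivity (applied inductively), together with the bound $v_i(\{j\}) \le p_j$ on lost items — which holds because if $i$ values $j$ at more than $p_j$ she would deviate to bid $p_j + 1$ and win it profitably, contradicting equilibrium; getting the factor to land at $p_j/2$ rather than $p_j$ is what the "bid half the competing price" trick buys, and it is the reason the bound is $2$ and not $4$. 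I would refer to \cite{FFGL13} (and \cite[Lecture 17.5]{w14}) for the fully careful accounting of the epsilons and the measurability details.
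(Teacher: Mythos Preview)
The paper itself does not prove this theorem; it only restates it and refers to \cite{FFGL13} and \cite[Lecture 17.5]{w14}. Judged on its own, your argument has a genuine gap at the deviation step. You propose that bidder~$i$ bid $p_j/2$ on each $j \in O_i$, where $p_j := \max_{i' \ne i} b_{i'j}$ is the highest \emph{competing} bid. But $p_j/2 < p_j$ whenever $p_j > 0$, so under this deviation bidder~$i$ \emph{loses every item} on which there is any competition; your case split ``either this bid wins or it loses'' collapses to ``it always loses,'' and nothing is gained. Relatedly, the bound $v_i(\{j\}) \le p_j$ you invoke for lost items is unjustified: your argument ``otherwise $i$ would bid $p_j+1$ and win $j$ profitably'' ignores both that $i$'s equilibrium utility may already exceed $v_i(\{j\}) - p_j$, and that the \emph{marginal} value of adding $j$ to whatever $i$ currently wins can be far smaller than $v_i(\{j\})$ under subadditivity.

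The argument you are reaching for is the one that works for \emph{XOS} (fractionally subadditive) valuations: take a supporting additive function $a^*$ with $v_i(O_i) = a^*(O_i)$ and $v_i(S) \ge a^*(S)$ for all $S$, and have $i$ bid $a^*_j/2$ on each $j \in O_i$. This bid is \emph{independent} of the competing prices, so the case split is genuine, and one obtains $u_i \ge v_i(O_i)/2 - \sum_{j \in O_i} p_j$, after which your wrap-up goes through. Subadditive valuations, however, need not admit additive supports, and this is precisely where \cite{FFGL13} does real work: their deviation has bidder~$i$ \emph{sample} an independent copy $b'_{-i}$ of the others' equilibrium bids and bid the resulting maximum prices on $O_i$; a symmetry argument combined with subadditivity ($v_i(T) + v_i(O_i \setminus T) \ge v_i(O_i)$ for the random set $T$ of items won) then shows that in expectation this deviation yields utility at least $v_i(O_i)/2$ minus the expected competing prices. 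That sampling idea is what is missing from your sketch.
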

This result is particularly impressive because achieving an
approximation factor of~2 for the welfare-maximization problem with
subadditive bidder valuations by any means (other than brute-force
search) is not easy (see~\cite{F06}).

As mentioned last lecture,
a recent result shows that the analysis of~\cite{FFGL13} is tight.
\begin{theorem}[\citet{CKST16}]\label{t:ckst14}
The worst-case POA of S1As with subadditive bidder valuations is at
least~2.
\end{theorem}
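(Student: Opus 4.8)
The plan is to construct, for each $\delta>0$, a combinatorial auction with subadditive bidder valuations such that the induced S1A has a (mixed) Nash equilibrium whose expected social welfare is at most a $(\tfrac12+\delta)$ fraction of the optimal welfare; sending $\delta\to 0$ then gives a worst-case price of anarchy of at least $2$, matching the upper bound in Theorem~\ref{t:ffgl2}. The first thing to notice is \emph{where} the factor $2$ must come from: the POA of S1As is strictly below $2$ for fractionally subadditive (XOS) valuations, so any extremal instance must use valuations that are subadditive but not XOS. The cleanest building block is a cardinality threshold — on a ground set of $2k$ items, $v(S)=\lceil |S|/k\rceil$ is subadditive (the ceiling function is subadditive) but not XOS (a symmetric valuation is XOS only if it is concave in $|S|$, and this one is not, since its marginal value jumps back up at the $(k{+}1)$st item).

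Using this as a primitive, I would build a symmetric instance with $k$ players in which (i) the optimal allocation hands one player a large bundle and each remaining player a small but non-trivial bundle, so that $\mathrm{OPT}\approx k$; while (ii) there is a symmetric mixed profile in which each player randomizes over which ``region'' of items to compete for, the regions overlapping enough — à la the intersecting family of partitions of Lemma~\ref{l:int} — that two players pursuing different regions cannot both win theirs, and the bid distribution is shaped so that it is never worth a player's while to bid high enough to guarantee a win. The net effect is that only about half the players end up with a valuable bundle, so the equilibrium welfare is $\approx k/2$ and the ratio tends to $2$.

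Concretely, the key steps in order would be: (1) pin down the valuations, and verify monotonicity and subadditivity (one line of arithmetic) as well as non-XOS-ness (so the example cannot be dismissed by the stronger XOS bound); (2) compute $\mathrm{OPT}$ by a short combinatorial argument; (3) write down the candidate equilibrium — exactly as in a single-item first-price auction, a player's strategy will be an \emph{atomless} distribution over bid vectors (a uniform choice of target region together with a carefully shaped distribution over the common bid level on that region), not a pure strategy, and the shading distribution has to be reverse-engineered so that the player's expected utility is constant across its support; (4) verify the equilibrium conditions, which by symmetry reduce to a handful of scalar inequalities asserting that no player gains by bidding higher on her region, by bidding on items outside it to collect the low-cardinality value, or by switching regions; (5) compute the equilibrium welfare and let $k\to\infty$ (so the overlap/collision probability tends to $1$), driving the ratio to $2$.

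The main obstacle is steps (3)–(4): producing a genuine Nash equilibrium rather than an approximate one. First-price equilibria live on continuous bid supports, and the competition a player faces on a given item is itself a random variable depending on the other players' region choices; arranging that the indifference condition holds across the whole support of the shading distribution, simultaneously for every item the player might bid on, is the delicate part. A natural fallback, if an exact equilibrium proves unwieldy, is to exhibit an $\epsilon$-Nash equilibrium with welfare ratio $2-o(1)$, which already shows the POA with respect to approximate Nash equilibria is at least $2$; upgrading this to exact Nash equilibria, as in the statement of Theorem~\ref{t:ckst14}, is precisely what the careful bid-shading construction is for.
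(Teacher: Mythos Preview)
The paper does not prove Theorem~\ref{t:ckst14}; it merely cites~\cite{CKST16} and describes the proof in one sentence as ``an ingenious explicit construction'' of subadditive valuations together with a Nash equilibrium whose welfare is half of optimal. So there is no detailed argument in the paper to compare your construction against. What the paper \emph{does} supply is an alternative route to the $\eps$-approximate version: combining the nondeterministic communication lower bound for subadditive welfare maximization (Theorem~\ref{t:wm2}) with the black-box extraction theorem (Theorem~\ref{t:condpoa2}) yields POA${}\ge 2$ for $\eps$-approximate Nash equilibria of any sub-doubly-exponential auction, with no explicit construction at all. Your ``natural fallback'' is exactly this, and the paper makes the same caveat you do about the exact-vs.-approximate gap.

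Your explicit-construction plan is in the right spirit and your diagnostic observations are correct: the XOS bound is $e/(e-1)<2$, so any tight instance must be subadditive-but-not-XOS, and your threshold valuation $v(S)=\lceil |S|/k\rceil$ is a legitimate such primitive (covering a $(k{+}1)$-set by its $k$-subsets with weight $1/k$ witnesses the failure of fractional subadditivity). But what you have written is a plan, not a proof. The instance is not specified (``I would build a symmetric instance\ldots''), the equilibrium is not written down, and you yourself flag steps~(3)--(4) as the obstacle. That obstacle is the entire content of the theorem: producing a concrete mixed profile and verifying \emph{all} deviations---including bidding on arbitrary subsets of items, not just ``one's own region''---is where the work lies, and nothing in the proposal indicates how the indifference/no-deviation conditions close. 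The invocation of intersecting families \`a la Lemma~\ref{l:int} is suggestive but not obviously relevant here; that lemma is about packing-type infeasibility for welfare, whereas what you need is a bid distribution that makes high bids unprofitable while keeping expected welfare low. Until the equilibrium is exhibited and checked, this remains a sketch of where a proof might live rather than a proof.
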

The proof of Theorem~\ref{t:ckst14} is an ingenious explicit
construction---the authors exhibit a choice of subadditive bidder
valuations and a Nash equilibrium of the corresponding S1A so that the
welfare of this equilibrium is only half of the maximum possible.
One reason that proving results like Theorem~\ref{t:ckst14} is
challenging is that it can be difficult to solve for a (bad) equilibrium
of a complex game like a S1A.

% The natural economic solution concept for predicting the outcome
% (e.g., SW) of such game is, of course, Nash equilibria.  The goal of
% the reminder of this section is to show how (nondeterministic)
% communication lower bounds can be used to prove a lower bound on the
% ``efficiency" obtained by Nash equilibria of auctions compared to the
% optimal allocation. This concept is formalized by the following
% notion.

% %\begin{definition}[Price of Anarchy, \cite{KP99}]
% The \emph{Price of Anarchy} of a game $G$ (e.g.,, a combinatorial auction) is the ratio between the social welfare obtained 
% by the optimal outcome (allocation) of the game, compared with that obtained by the worst-case equilibrium: 
%\[ \mathsf{PoA}(G):= \frac{SW(OPT(G))}{\min_{\text{$\mu$ is a Nash %equilibrium of $G$}} SW(\mu)}. \] 
%Similarly, for approximate Nash equilibria, we define 
%\[ \mathsf{PoA}_\epsilon(G):= \frac{SW(OPT(G))}{\min_{\text{$\mu$ is an %$\epsilon$-ANE of $G$}} SW(\mu)}. \] 
%Note that $\mathsf{PoA}_\epsilon(G) \leq \mathsf{PoA}(G)$ always holds. 
%\end{definition}

%Perhaps surprisingly, in some cases it is possible to show that the PoA of %auctions is very good, e.g., 
%\cite{FFGL13} showed that the PoA of combinatorial auctions with %\emph{subadditive} bidders is at most $2$ 
%(and by \cite{CKST14} this is tight). 

\subsection{Price-of-Anarchy Lower Bounds from Communication Complexity}

Theorem~\ref{t:ffgl2} motivates an obvious question: can we do better?
Theorem~\ref{t:ckst14} implies that the analysis in~\cite{FFGL13}
cannot be improved, but can we reduce the POA by considering a
different auction?  Ideally, the auction would still be ``reasonably
simple'' in some sense.  Alternatively, perhaps no ``simple'' auction
could be better than S1As?  If this is the case, it's not clear how to
prove it directly---proving lower bounds via explicit constructions
auction-by-auction does not seem feasible.

Perhaps it's a clue that the POA upper bound of~2 for S1As
(Theorem~\ref{t:ffgl2}) gets stuck at the same threshold for which
there is a lower bound for protocols that use polynomial communication
(Theorem~\ref{t:wm2}).  It's not clear, however, that a lower bound
for low-communication protocols has anything to do with equilibria.
Can we extract a low-communication protocol from an equilibrium?

\begin{theorem}[\citet{R14}]\label{t:condpoa2}
  Fix a class $\V$ of possible bidder valuations.  Suppose that, for
  some $\alpha \ge 1$, there is no nondeterministic protocol with
  subexponential (in $m$) communication for the 1-inputs of the
  following promise version of the welfare-maximization problem with
  bidder valuations in~$\V$:
\begin{itemize}

\item [(1)] Every allocation has welfare at most $W^*/\alpha$.

\item [(0)] There exists an allocation with welfare at least $W^*$.

\end{itemize}
Let $\eps$ be bounded below by some inverse polynomial function of $k$
and $m$.
Then, for every auction with sub-doubly-exponential (in $m$) strategies
per player, the worst-case POA of $\eps$-approximate Nash equilibria with bidder valuations in $\V$
is at least $\alpha$.
\end{theorem}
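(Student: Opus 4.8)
The plan is to prove the contrapositive. Suppose the worst-case POA of $\eps$-approximate Nash equilibria, over valuation profiles in $\V$, is some $\beta < \alpha$; I will build a nondeterministic protocol that certifies the $(1)$-inputs of the stated promise problem using communication subexponential in $m$, contradicting the hypothesis. Fix the auction, which equips each of the $k$ players with a strategy set of size $S$ that is sub-doubly-exponential in $m$, so $\log S = 2^{o(m)}$; fix valuations $v_1,\dots,v_k \in \V$ held privately in the number-in-hand fashion. Since the $v_i(T)$'s are integers of $\poly(k,m)$ bit length and (discarding dominated bids) the bid bound $B$ may be taken $\poly(m)$, each player's utility function in the auction game takes values in an interval of width $R = \poly(m)$.

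First I would produce the witness. By Nash's theorem (Theorem~\ref{t:nash}) the finite auction game has an exact Nash equilibrium, and by the $k$-player version of Lipton--Markakis--Mehta sparsification (Theorem~\ref{t:lmm}, applied after rescaling utilities to $[0,1]$: sample $O(R^2(\log S + \log k)/\eps^2)$ pure strategies i.i.d.\ from each player's equilibrium mixed strategy, then apply Chernoff and a union bound) there is an $\eps$-approximate Nash equilibrium $\sigma = (\sigma_1,\dots,\sigma_k)$ in which each $\sigma_i$ is uniform over a multiset of only $L = 2^{o(m)}$ pure strategies. The prover writes $\sigma$ on the blackboard, at cost $kL\lceil \log_2 S\rceil = 2^{o(m)}$ bits. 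The verification is: (a)~each player $i$, from her private $v_i$ and the public $\sigma$, checks privately and with no communication --- unbounded computation is free in communication complexity --- that $\sigma_i$ is an $\eps$-best response to $\sigma_{-i}$, i.e.\ that none of her $S$ pure deviations raises her expected utility by more than $\eps$; (b)~each player $i$ computes privately her expected value $w_i := \Exp_{\mathbf{s}\sim\sigma}[v_i(T_i(\mathbf{s}))]$, a rational with denominator $L^k$; (c)~the players run the trivial $O(k)$-round protocol passing partial sums to compute $w := \sum_i w_i$ and test whether $w \le W^*/\alpha$; (d)~the protocol accepts iff every check in~(a) succeeds and the test in~(c) holds. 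The total communication is $2^{o(m)} + \poly(k,m) = 2^{o(m)}$.

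Next I would check correctness. On a $(1)$-input every allocation has welfare at most $W^*/\alpha$, so in particular the welfare of $\sigma$, being an average over allocations, is at most $W^*/\alpha$, and since $\sigma$ is a genuine $\eps$-NE the checks in~(a) all succeed; hence this witness is accepted. On a $(0)$-input, suppose some witness $\tau$ were accepted: success in~(a) forces $\tau$ to be an $\eps$-NE, and since a $(0)$-input has an allocation of welfare at least $W^*$ we have $f(OPT) \ge W^*$, so by $\mathrm{POA} \le \beta < \alpha$ the welfare of $\tau$ exceeds $f(OPT)/\alpha \ge W^*/\alpha$, contradicting the test in~(c); hence no witness is accepted on a $(0)$-input. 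This yields a legal nondeterministic protocol certifying the $(1)$-inputs with communication $2^{o(m)}$, the required contradiction; in particular, specializing via Theorem~\ref{thm_nisan} (where one may take $\alpha$ as large as $k$) recovers Theorem~\ref{t:condpoa}.

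The step I expect to be the main obstacle is keeping the witness short. An $\eps$-NE of the auction game could a priori require a description of length comparable to $S$ itself, which is far too large; this is exactly where the sparsification of Theorem~\ref{t:lmm} is essential --- it shrinks each player's support to $\tO((\log S)/\eps^2)$, up to the $\poly(m)$ rescaling factor --- and it is exactly where the hypothesis that $S$ is merely sub-doubly-exponential is used, so that $\log S$, and hence the entire witness, is subexponential in $m$. A secondary point to get right is that a nondeterministic verifier computes for free, so the expensive best-response sweep (over all $S$ deviations, each an expectation over $L^{k-1}$ joint pure profiles) costs no communication; only the $O(k)$-round welfare aggregation and the final accept/reject aggregation contribute to the communication bound, and both are polynomial in $k$ and $m$.
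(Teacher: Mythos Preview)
Your proof is correct and follows essentially the same approach as the paper's. The one cosmetic difference is that the paper has the prover write down each player's expected welfare contribution $\expect{v_i(S)}$ as part of the advice (which each player then privately verifies), whereas you have each player compute her own $w_i$ and then aggregate via a short deterministic protocol; both are fine and yield the same communication bound.
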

Theorem~\ref{t:condpoa2} says that lower bounds for nondeterministic
protocols carry over to all ``sufficiently simple'' auctions, where
``simplicity'' is measured by the number of strategies available
to each player.  These POA lower bounds follow automatically from
communication complexity lower bounds, and do not require any new
explicit constructions.

To get a feel for the simplicity constraint, note that
S1As with integral bids between 0 and~$B$ have $(B+1)^m$ strategies per
player---singly exponential in~$m$.  On the other hand, in a
``direct-revelation'' auction, where each bidder is allowed to
submit a bid on each bundle $S \subseteq M$ of items, each player has a
doubly-exponential (in $m$) number of strategies.\footnote{Equilibria can
  achieve the optimal welfare in a direct-revelation auction, so some
  bound on the number of strategies is
  necessary in Theorem~\ref{t:condpoa2}.}
%  See the Exercises for further details.}

The POA lower bound promised by Theorem~\ref{t:condpoa2} is only for
approximate Nash equilibria; since the POA is a worst-case measure and
the set of $\eNE$ is nondecreasing with $\eps$, this is weaker than a
lower bound for exact Nash equilibria.  It is an open question whether
or not Theorem~\ref{t:condpoa2} holds also for the POA of exact Nash
equilibria.\footnote{Arguably, Theorem~\ref{t:condpoa2} is good enough
  for all practical purposes---a POA upper bound that holds for exact
  Nash equilibria and does not hold (at least approximately) for
  approximate Nash equilibria with very small $\eps$ is too brittle to
  be meaningful.}

Theorem~\ref{t:condpoa2} has a number of interesting corollaries.
First, consider the case where $\V$ is the set of subadditive
valuations. 
Since S1As have only a singly-exponential (in $m$) 
number of strategies per
player, Theorem~\ref{t:condpoa2} applies to them.  Thus, 
combining it with Theorem~\ref{t:wm2} recovers the POA lower bound of
Theorem~\ref{t:ckst14}---modulo the exact vs.\ approximate Nash equilibria
issue---and shows the optimality of the upper bound in
Theorem~\ref{t:ffgl2} without an explicit construction.
Even more interestingly, this POA lower bound of~2 
%(for subadditive bidder valuations) 
applies not only to S1As, but more generally to all
auctions in which each player has a sub-doubly-exponential number of
strategies.  
Thus, S1As are in fact {\em optimal} among the class of
all such auctions when bidders have subadditive valuations
(w.r.t.\ the worst-case POA of $\eps$-approximate Nash equilibria).

We can also take $\V$ to be the set of all (monotone) valuations, and
then combine Theorem~\ref{t:condpoa2} with Theorem~\ref{thm_nisan} to
deduce that no ``simple'' auction gives a non-trivial (i.e.,
better-than-$k$) approximation for general bidder valuations.  We
conclude that with general valuations, complexity is essential to any
auction format that offers good equilibrium guarantees.  This
completes the proof of Theorem~\ref{t:condpoa} from the preceding
lecture and formalizes the second folklore belief in
Section~\ref{ss:when}; we restate that result here.
\begin{theorem}[\cite{R14}]
With general valuations, \emph{every} simple auction can have
equilibria with social welfare arbitrarily worse than the maximum
possible.
\end{theorem}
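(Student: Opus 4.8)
The plan is to obtain the statement as an immediate corollary of the reduction in Theorem~\ref{t:condpoa2} together with Nisan's communication lower bound for general valuations (Theorem~\ref{thm_nisan}). Take $\V$ to be the class of all monotone valuations (with integer values of polynomial description length), so that no structural restriction is imposed on the bidders. The whole argument is then a matter of plugging Theorem~\ref{thm_nisan} into the hypothesis of Theorem~\ref{t:condpoa2} and unwinding the parameters.

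First I would put Theorem~\ref{thm_nisan} into exactly the promise form demanded by Theorem~\ref{t:condpoa2}. Fix the number of players~$k$ and set the number of items to $m = k^{3}$, so that $m/k^{2} = k$ and the lower bound $\exp\{\Omega(m/k^{2})\} = \exp\{\Omega(k)\} = \exp\{\Omega(m^{1/3})\}$ is genuinely exponential in~$m$; in particular there is no nondeterministic protocol with sub-exponential (in~$m$) communication that certifies the $1$-inputs of \wm[k]. Recalling the definition of \wm[k], this says precisely that, with $W^{*} = k$ and approximation factor $\alpha = k$, no sub-exponential nondeterministic protocol can separate instances in which every allocation has welfare at most $W^{*}/\alpha = 1$ from instances in which some allocation has welfare at least $W^{*} = k$. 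This is exactly the hypothesis of Theorem~\ref{t:condpoa2} for this $\V$ and this $\alpha$.

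Applying Theorem~\ref{t:condpoa2} (taking, say, $\eps = 1/\poly(k,m)$, which is permitted) then yields: for every auction in which each player has a sub-doubly-exponential (in~$m$) number of strategies, the worst-case price of anarchy of $\eps$-approximate Nash equilibria, over bidder valuations in~$\V$, is at least $\alpha = k$. Since $k = m^{1/3} \to \infty$ as the number of items grows, this price of anarchy is unbounded, i.e., for every such ``simple'' auction there is a choice of general valuations and an $\eps$-approximate Nash equilibrium whose social welfare is an arbitrarily small fraction of the optimum. That establishes the theorem.

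The main obstacle, I expect, is not this corollary itself — which is essentially bookkeeping — but the ingredient Theorem~\ref{t:condpoa2}, whose proof (stated above, and given in full in this section) is where the real work lies: one must show that from an auction with few strategies per player together with a near-worst equilibrium one can \emph{extract} a cheap nondeterministic communication protocol for the promise problem, thereby converting a purely game-theoretic no-go into a communication-complexity no-go. Once that extraction is in hand, Theorems~\ref{thm_nisan} and~\ref{t:wm2} immediately supply the concrete hardness instances (general and subadditive valuations, respectively).
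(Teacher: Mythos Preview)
Your proposal is correct and follows essentially the same approach as the paper: take $\V$ to be all monotone valuations, feed Nisan's nondeterministic lower bound (Theorem~\ref{thm_nisan}) into the hypothesis of Theorem~\ref{t:condpoa2} with $\alpha = k$, and conclude that the worst-case POA of $\eps$-approximate Nash equilibria is at least~$k$ for every simple auction, hence unbounded. Your explicit choice $m = k^{3}$ to make the bound genuinely exponential in~$m$ is a helpful detail that the paper leaves implicit.
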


\subsection{Proof of Theorem~\ref{t:condpoa2}}

Presumably, the proof of Theorem~\ref{t:condpoa2} extracts a
low-communication protocol from a good POA bound.  The hypothesis of
Theorem~\ref{t:condpoa2} offers the clue that we should be looking to
construct a nondeterministic protocol.  So what could we use an
all-powerful prover for?  We'll see that a good role for the prover is
to suggest a Nash equilibrium to the players.

Unfortunately, it can be too expensive for the prover to write down
the description of a Nash equilibrium, even in S1As.  Recall that a mixed strategy is a
distribution over pure strategies, and that each player has an exponential (in
$m$) number of pure strategies available in a S1A.  Specifying a Nash equilibrium thus
requires an exponential number of probabilities.  To circumvent this
issue, we resort to approximate Nash equilibria, which are guaranteed
to exist even if we
restrict ourselves to distributions with small descriptions.  We proved
this for two-player games in Solar Lecture~1 (Theorem~\ref{t:lmm});
the same argument works for games with any number of players.
\begin{lemma}[\citet{LMM03}]\label{l:lmm}
For every $\eps > 0$ and every game with $k$ players with strategy sets
$A_1,\ldots,A_k$, there exists an $\eps$-approximate Nash equilibrium with description length
polynomial in $k$, $\log (\max_{i=1}^k |A_i|)$, and~$\tfrac{1}{\eps}$.
\end{lemma}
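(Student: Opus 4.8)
Proof proposal for Lemma~\ref{l:lmm} (existence of small-description $\eps$-approximate Nash equilibria in $k$-player games).

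The plan is to follow exactly the sampling argument used in the proof idea of Theorem~\ref{t:lmm}, generalizing from two players to $k$ players. First I would invoke Nash's theorem (Theorem~\ref{t:nash}, in its general form) to fix an exact Nash equilibrium $(\sigma_1^*,\ldots,\sigma_k^*)$ of the given game. The idea is then a probabilistic ``sparsification'': for each player $i$, independently sample $s = \Theta\!\left(\tfrac{\log(\max_j |A_j|) + \log k}{\eps^2}\right)$ pure strategies i.i.d.\ (with replacement) from $\sigma_i^*$, and let $\hat\sigma_i$ be the empirical distribution of this multi-set (probabilities equal to sample frequencies). Each $\hat\sigma_i$ has description length $O(s \cdot \log |A_i|)$, which is polynomial in $k$, $\log(\max_j|A_j|)$, and $\tfrac1\eps$; so it suffices to show that with positive probability the profile $(\hat\sigma_1,\ldots,\hat\sigma_k)$ is an $\eps$-approximate Nash equilibrium.

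The key step is a Chernoff/union-bound argument. For a fixed player $i$ and a fixed pure strategy $a \in A_i$, the expected payoff $u_i(a; \hat\sigma_{-i})$ to $a$ against the sampled profile of the others is an average of independent bounded random variables whose expectation (over the sampling) equals $u_i(a;\sigma_{-i}^*)$; here it is convenient to first condition on nothing and observe that $u_i(a;\hat\sigma_{-i})$ is determined by the $(k-1)s$ samples drawn for the other players, and is an average of $s^{k-1}$ (or, more cleanly, can be analyzed coordinate-by-coordinate as a sum of $(k-1)s$ independent contributions after expanding the product) terms in $[-1,1]$. A standard concentration bound gives that, except with probability $\exp(-\Omega(\eps^2 s))$, we have $|u_i(a;\hat\sigma_{-i}) - u_i(a;\sigma^*_{-i})| \le \eps/2$ simultaneously. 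Taking a union bound over all $k \cdot \max_j|A_j|$ player-strategy pairs, our choice of $s$ makes the total failure probability strictly less than $1$. On the complement event: since every pure strategy in the support of $\sigma_i^*$ is an exact best response to $\sigma^*_{-i}$, every pure strategy played with positive probability by $\hat\sigma_i$ has payoff within $\eps/2$ of the optimum against $\sigma^*_{-i}$, hence within $\eps$ of the optimum against $\hat\sigma_{-i}$ (using the $\eps/2$ closeness twice, once for the played strategy and once for the best response). This is the well-supported $\eps$-approximate Nash condition, which implies the ordinary $\eps$-approximate Nash condition.

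The main obstacle I anticipate is handling the concentration cleanly when $k > 2$: unlike the bimatrix case, $u_i(a;\hat\sigma_{-i})$ is a \emph{multilinear} function of $k-1$ independently sampled empirical distributions, so it is not literally an average of i.i.d.\ terms. The fix is to write $u_i(a;\hat\sigma_{-i}) = \tfrac{1}{s^{k-1}}\sum u_i(a; b_1^{(t_1)},\ldots)$ over all tuples of sampled indices and note this is a (non-product) average of identically distributed, bounded, but \emph{not} independent summands; one then appeals to a Hoeffding-type bound for such U-statistic-like averages, or more simply observes that it is a deterministic function of $(k-1)s$ independent samples each of which changes the value by at most $O(1/s)$, so bounded-differences (McDiarmid) gives concentration $\exp(-\Omega(\eps^2 s / (k-1)))$. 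Absorbing the extra $(k-1)$ factor into $s$ — which only costs another polynomial factor in $k$ — preserves the claimed description-length bound. Everything else is routine, and the statement follows.
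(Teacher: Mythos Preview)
Your proposal is correct and follows exactly the approach the paper intends: the paper does not give a separate proof of Lemma~\ref{l:lmm} but simply remarks that ``the same argument'' as the proof idea of Theorem~\ref{t:lmm} works for $k$ players. Your identification of the multilinearity wrinkle for $k>2$ and the McDiarmid fix (each of the $(k-1)s$ samples affects $u_i(a;\hat\sigma_{-i})$ by at most $O(1/s)$) is the right way to make that remark rigorous, and the resulting $\poly(k)$ overhead in~$s$ is harmless for the claimed description-length bound.
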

In particular, every game with a sub-doubly-exponential number of
strategies admits an approximate Nash equilibrium with subexponential
description length.

We now proceed to the proof of Theorem~\ref{t:condpoa2}.

%\vspace{\baselineskip}
%\noindent
\begin{proof}
  (of Theorem~\ref{t:condpoa2}) Fix an auction with at most $A$
  strategies per player, and a value for
  $\eps = \Omega(1/\poly(k,m))$.  Assume that, no matter what the
  bidder valuations $v_1,\ldots,v_k \in \V$ are, the POA of
  $\eps$-approximate Nash equilibria of the auction is at most
  $\rho < \alpha$.  We will show that $A$ must be doubly-exponential
  in $m$.

Consider the following nondeterministic protocol for verifying a
1-input of the welfare-maximization problem---for convincing the $k$
players that every allocation has welfare at most $W^*/\alpha$.  See
also Figure~\ref{f:condpoa}.  The
prover writes on a publicly visible blackboard an $\eps$-approximate
Nash equilibrium
$(\sigma_1,\ldots,\sigma_k)$ of the auction,
with description length polynomial in $k$, $\log A$, and
$\tfrac{1}{\eps} = O(\poly(k,m))$ as guaranteed by Lemma~\ref{l:lmm}.
The prover also writes down the expected welfare contribution
$\expect{v_i(S)}$ of each bidder~$i$ in this equilibrium.

\begin{figure}
\centering
\includegraphics[width=.6\textwidth]{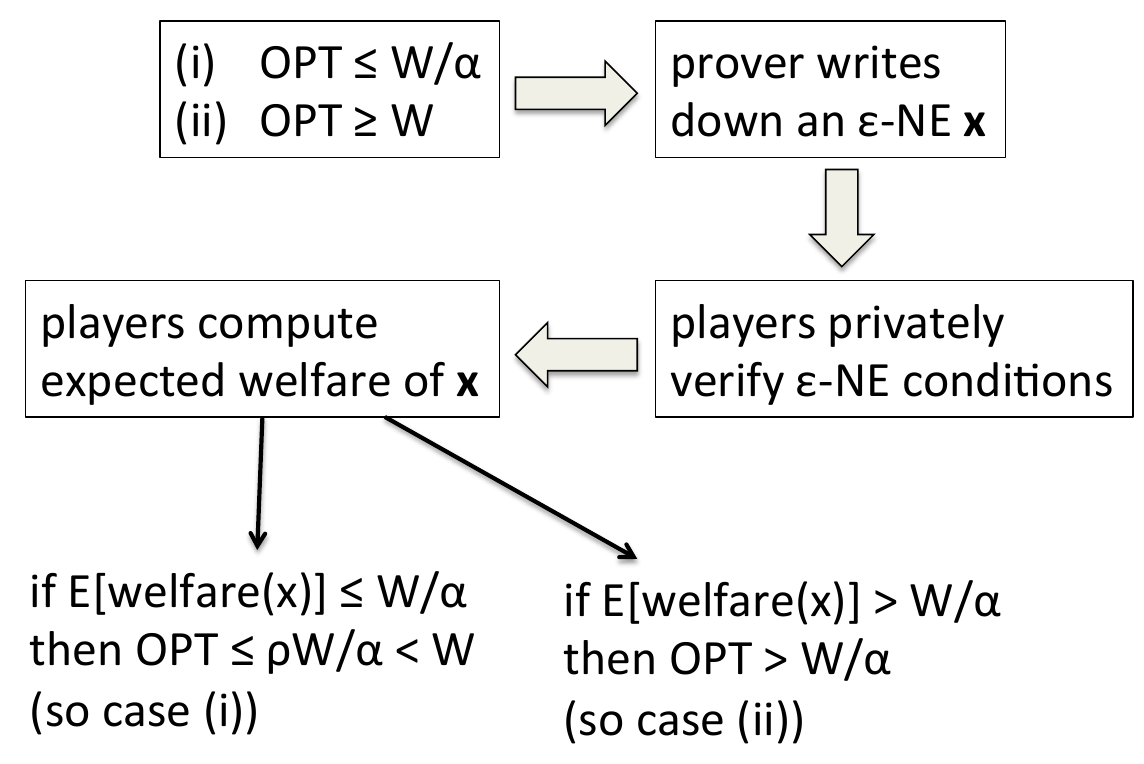}
\caption[Proof of Theorem~\ref{t:condpoa2}]{Proof of Theorem~\ref{t:condpoa2}.  How to extract a
  low-communication nondeterministic protocol from a good
  price-of-anarchy bound.}\label{f:condpoa}
\end{figure}

Given this advice, each player~$i$ verifies that $\sigma_i$ is indeed
an $\eps$-approximate best response to the other $\sigma_j$'s and that her
expected welfare is as claimed when all players play the mixed strategies
$\sigma_1,\ldots,\sigma_k$.  Crucially, player~$i$ is fully equipped
to perform both of these checks without any communication---she knows
her valuation $v_i$ (and hence her utility in each outcome of the
game) and the mixed strategies used by all players, and this is all
that is needed to verify her $\eps$-approximate Nash equilibrium
conditions and 
compute her expected contribution to the social welfare.\footnote{These
  computations may take a super-polynomial amount of time, but they
  do not contribute to the protocol's cost.}  Player~$i$
accepts if and only if the prover's advice passes these two tests, and
if the expected welfare of the equilibrium is at most~$W^*/\alpha$.

For the protocol correctness, consider first the case of a 1-input,
where every allocation has welfare at most $W^*/\alpha$.  If the prover
writes down the description of an arbitrary $\eps$-approximate Nash equilibrium and the appropriate
expected contributions to the social welfare, then all of the players
will accept (the expected welfare is obviously at most $W^*/\alpha$).
We also need to argue that, for the case of a 0-input---where
some allocation has welfare at least $W^*$---there is no proof that causes
all of the players to accept.  We can assume that the prover writes
down an $\eps$-approximate Nash equilibrium and its correct expected welfare~$W$, as otherwise at
least one player will reject.  Because the maximum-possible welfare is
at least $W^*$ and (by assumption) the POA of $\eps$-approximate Nash equilibria is at most $\rho <
\alpha$, the expected welfare of the given $\eps$-approximate Nash equilibrium must satisfy $W \ge
W^*/\rho > W^*/\alpha$.  The players will reject such a proof, so we
can conclude that the protocol is correct.  
Our assumption then implies that the protocol has communication cost
exponential in~$m$.
Since the cost of the protocol is polynomial in $k$, $m$, and $\log A$, 
$A$ must be doubly exponential in $m$.
\end{proof}
Conceptually, the proof of Theorem~\ref{t:condpoa2} argues
that, when the POA of $\eps$-approximate Nash equilibria is small,
every $\eps$-approximate Nash equilibrium  provides a
privately verifiable proof of a good upper bound
on the maximum-possible welfare.  When such upper bounds require large
communication, the equilibrium description length (and hence the
number of available strategies) must be large.

\section{An Open Question}\label{s:open}

% Theorem~\ref{thm_cc_poa} suggests an avenue of attack on proving lower bounds on the $\mathsf{PoA}$ of games. 
% A particularly interesting problem is understanding the $\mathsf{PoA}$ of combinatorial auctions with 
% \emph{submodular valuations}  (i.e., where bidders' valuations satisfy $V_i(T \cup \{j\}) - V_i(T) \leq V_i(S \cup \{j\}) - V_i(S) $ for any $S\subseteq T$ 
% and $j\notin T$). The $\mathsf{PoA}$ of S1A with submodular bidders is exactly $e/(e-1) \approx 1.58$ (\cite{ST13,CKST14}). It is an open question
% whether or not there is a simple auction with a smaller worst-case $\mathsf{PoA}$. 
% The best lower bound known on the nondeterministic SW problem with respect to simple auctions, and hence by Theorem~\ref{thm_cc_poa}, 
% on the $\mathsf{PoA_\epsilon}$ of simple auctions, is $2e/(2e-1)\approx 1.23$ (\cite{DV13}). 
% Interestingly, \cite{FV10} showed a polynomial-communication protocol that provides a slightly better 
% $\approx e/(e-1) - e^{-10}$ approximation to the SW of submodular combinatorial auctions, but it is not known whether 
% this better upper bound can also be realized as the $\mathsf{PoA}$ of a simple auction. This is an intriguing open problem.

While Theorems~\ref{t:wm2}, \ref{t:ffgl2}, and~\ref{t:condpoa2}
pin down the best-possible POA achievable by simple auctions with
subadditive bidder valuations, open questions remain for
other valuation classes.  For example, a valuation~$v_i$ is {\em
  submodular} if it satisfies
\[
v_i(T \cup \{j\}) - v_i(T) \le v_i(S \cup \{j\}) - v_i(S)
\]
for every $S \subseteq T \subset M$ and $j \notin T$.  This is a
``diminishing returns'' condition for set functions.  Every monotone
submodular function is also subadditive, so welfare-maximization with
the former valuations is only easier than with the latter.

The worst-case POA of S1As is exactly $\tfrac{e}{e-1} \approx 1.58$
when bidders have submodular valuations.  The upper bound was proved
by~\citet{ST13}, the lower bound by~\citet{CKST16}.  It is an open
question whether or not there is a simple auction with a smaller
worst-case POA.  The best lower bound known---for nondeterministic
protocols and hence, by Theorem~\ref{t:condpoa2}, for the POA of
$\eps$-approximate Nash equilibria of simple auctions---is
$\tfrac{2e}{2e-1} \approx 1.23$~\cite{DV13}.  Intriguingly, there is an upper
bound (very slightly) better than $\tfrac{e}{e-1}$ for
polynomial-communication protocols~\citep{FV06}---can this better
upper bound also be realized as the POA of a simple auction?  What is
the best-possible approximation guarantee, either for
polynomial-communication protocols or for the POA of simple auctions?
Resolving this question would require either a novel auction format
(better than S1As), a novel lower bound technique (better than
Theorem~\ref{t:condpoa2}), or both.

\section{Appendix: Proof of Theorem~\ref{t:mdisj}}\label{s:mdisj}

The proof of Theorem~\ref{t:mdisj} proceeds in three easy steps.

\vspace{.1in}
\noindent
\textbf{Step 1:}
{\em Every nondeterministic protocol with communication cost $c$ induces
a cover of the 1-inputs of $M(f)$ by at most $2^c$
monochromatic boxes.}  
%  a partition of 
%  $M(f)$ into at most $2^c$ monochromatic boxes.}
By ``$M(f)$,'' we mean the $k$-dimensional array in which the $i$th
dimension is indexed by the possible inputs of player~$i$, and an
array entry contains the value of the function~$f$ on the
corresponding joint input.  By a ``box,'' we mean the $k$-dimensional
generalization of a rectangle---a subset of inputs that can be
written as a product $A_1 \times A_2 \times \cdots \times A_k$.
By ``monochromatic,'' we mean a box that does not contain both a
1-input and a 0-input.  (Recall that for the \mdisj problem there are
also inputs that are neither~1 nor~0---a monochromatic box can contain any
number of these.)
The proof of this step is the same as the standard one for the
two-party case (see e.g.~\cite{KN96}).

% in the two-party case.  We just
% run the protocol and keep track of the joint inputs that are
% consistent with the transcript.  The box of all inputs is consistent
% with the empty transcript, and the box structure is preserved
% inductively: when player $i$ speaks, it narrows down the remaining
% possibilities for the input $\bfx_i$, but has no effect on the possible
% values of the other inputs.  Thus every transcript corresponds to a
% box, with these boxes partitioning $M(f)$.
% Since the protocol's output is constant over such a box and the protocol
% computes $f$, all of the boxes it induces are monochromatic with
% respect to $M(f)$.

% Similarly, every nondeterministic protocol with communication cost $c$
% (for verifying 1-inputs)
% induces a cover of the 1-inputs of $M(f)$ by at most $2^c$
% monochromatic boxes. 

\vspace{.1in}
\noindent
\textbf{Step 2:}
{\em The number of $1$-inputs in $M(f)$ is $(k+1)^n$.}
%This step and the next are easy generalizations of our second proof 
%of our nondeterministic communication complexity lower bounds for
%\disj (from Section~\ref{ss:udisj}): first we lower bound the
%number of 1-inputs,
%then we
%upper bound the number of 1-inputs that can coexist in a single
%1-box.  
In a $1$-input $\minputs$, for every 
coordinate $\ell$, at most one of the $k$ inputs has a 1 in the $\ell$th
coordinate.  This yields $k+1$ options for each of the $n$ coordinates,
thereby generating a total of $(k+1)^n$ 1-inputs.

\vspace{.1in}
\noindent
\textbf{Step 3:}
{\em The number of $1$-inputs in a monochromatic box is at most
  $k^n$.}  Let $B = A_1 \times A_2 \times \cdots \times A_k$ be a 1-box.
The key claim here is: for each coordinate $\ell=1,\ldots,n$, there is
a player $i \in \{1,\ldots,k\}$ such that, for every input $\bfx_i \in
A_i$, the $\ell$th coordinate of $\bfx_i$ is 0.  That is, to each
coordinate we can associate an ``ineligible player'' that, in this box,
never has a 1 in that coordinate.  This is easily seen by
contradiction: otherwise, there exists a coordinate $\ell$ such that,
for every player $i$, there is an input $\bfx_i \in A_i$ with a~1 in the
$\ell$th coordinate.  As a box, $B$ contains the input
$\minputs$.  But this is a 0-input, contradicting the assumption that
$B$ is a 1-box.

The claim implies the stated upper bound.  Every 1-input of $B$ can be
generated by choosing, for each coordinate $\ell$, an assignment of at
most one ``1'' in this coordinate to one of the $k-1$ eligible players
for this coordinate.  With only $k$ choices per coordinate, there are
at most $k^n$ 1-inputs in the box~$B$.

\vspace{.1in}
\noindent
\textbf{Conclusion:}
Steps~2 and~3 imply that covering the 1s of the $k$-dimensional
array of the \mdisj function requires at least $(1+\tfrac{1}{k})^n$
1-boxes.  By the discussion in Step~1, this implies a lower bound of
$n \log_2 (1 + \tfrac{1}{k}) = \Theta(n/k)$ on the nondeterministic
communication complexity of the \mdisj function (and output~1).  This
concludes the proof of Theorem~\ref{t:mdisj}.
%\footnote{Randomized
%  protocols with two-sided error also require communication
%$\Omega(n/k)$ to solve \mdisj~\citep{G09,CKS03}, although we don't use
%this fact here.}

\lecture{Why Prices Need Algorithms}

\vspace{1cm}

%% Your lecture LaTeX goes here.

%\section{Introduction}

You've probably heard about ``market-clearing prices,'' which equate
the supply and demand in a market.  When are such prices
guaranteed to exist?  In the classical setting with divisible goods
(milk, wheat, etc.), market-clearing prices exist under reasonably
weak conditions~\cite{AD54}.  But with indivisible goods (houses,
spectrum licenses, etc.), such prices may or may not exist.  As you
can imagine, many papers in the economics and operations research
literatures study necessary and sufficient conditions for existence.
The punchline of today's lecture, based on joint work with Inbal
Talgam-Cohen~\cite{priceeq}, is that computational complexity
considerations in large part govern whether or not market-clearing
prices exist in a market of indivisible goods.  This is cool and
surprising because the question (existence of equilibria) seems to
have nothing to do with computation (cf., the questions studied in the
Solar Lectures).

\section{Markets with Indivisible Items}

The basic setup is the same as in the preceding lecture, when we were
studying price-of-anarchy bounds for simple combinatorial auctions (Section~\ref{s:camodel}).
To review, there are $k$ players, a set $M$ of $m$ items, and each
player~$i$ has a valuation $v_i:2^M \rightarrow \R_+$ describing her
maximum willingness to pay for each bundle of items.  For simplicity,
we also assume that $v_i(\emptyset)=0$ and that $v_i$ is monotone
(with $v_i(S) \le v_i(T)$ whenever $S \subseteq T$).  As in last
lecture, we will often vary the class~$\V$ of allowable valuations to
make the setting more or less complex.

\subsection{Walrasian Equilibria}

Next is the standard definition of ``market-clearing prices'' in a
market with multiple indivisible items.
\begin{definition}[\bf{Walrasian Equilibrium}]\label{d:we}
A {\em Walrasian equilibrium} is an allocation $S_1,\ldots,S_k$ of the
items of $M$ to the players and nonnegative prices $p_1,p_2,...,p_m$
for the items such that:
%) and the allocation ($S_1,S_2,...,S_k$) of items to $k$ buyers are in %Walrasian Equilibrium (WE) when 
\begin{itemize}
\item [(W1)] All buyers are as happy as possible with their respective
  allocations, given   the prices:
for every~$i=1,2,\ldots,k$,
$S_i\in \text{argmax}_T \{ v_i(T)-\sum_{j\in T} p_j \}$.
\item [(W2)] Feasibility: $S_i \cap S_j = \emptyset$ for $i \neq j$.
\item [(W3)] The market clears: for every $j \in M$, $j \in S_i$ for
  some~$i$.\footnote{The  most common definition of a Walrasian
    equilibrium asserts instead that an item~$j$ is not awarded to any
    player only if~$p_j=0$.  With monotone valuations, there is no harm
    in insisting that every item is allocated.}
%i.e. $\forall j\in[m], \exists i\in [k]$ s.t. $j\in S_i$.
%\item No item is sold more than once i.e. $S_i\cap S_{i'}=\phi$ $\forall i,i'\in[k]$.
\end{itemize}
\end{definition}
Note that $S_i$ might be the empty set, if the prices are high enough
for~(W1) to hold for player~$i$.  Also, property~(W3) is crucial for
the definition to be non-trivial (otherwise set $p_j = +\infty$ for
every $j$).

%For example, consider a market with only one item.

Walrasian equilibria are remarkable: even though each player optimizes
independently (modulo tie-breaking) and gets exactly what she wants,
somehow the global feasibility constraint is respected.

% Setting market-clearing prices is an important and difficult task for the seller. The seller wants to sell all his items while a buyer would pick the subset of items that maximizes her utility. In a utopian world, prices would be set such that buyers choose disjoint but exhaustive subset of items to buy. In this lecture, we see some results on when can we find such prices.
%\subsection{Walrasian equilibrium}
%We have $k$ buyers and a set of $m$ indivisible items. Let $v_i$ be the %valuation vector for the $i^{th}$ buyer where $v_i(S)$ denotes her valuation %for getting all the items in $S\subseteq [m]$. We assume that $\forall i, %v_i(\phi)=0$ and $v_i$ is monotone, i.e., $v_i(T)\ge v_i(S)\forall S\subseteq T$. Next, we give a formal definition for the seller's utopian world.
%
%\end{definition}
%Next theorem proves that a Walrasian Equilibrium gives you welfare %maximization for free.

\subsection{The First Welfare Theorem}

Recall from last lecture that the {\em social welfare} of an
allocation~$S_1,\ldots,S_k$ is defined as $\sum_{i=1}^k v_i(S_i)$.
Walrasian equilibria automatically maximize the social welfare, a
result known as the ``First Welfare Theorem.''
%\footnote{When
%  economists talks about markets being efficient, this is generally
%  the kind of statement that they have in mind.}

\begin{theorem}[\bf{First Welfare Theorem}]\label{t:fwt}
If the prices $p_1,p_2,\ldots,p_m$ and allocation
$S_1,S_2,\ldots,S_k$ of items constitute a Walrasian equilibrium, then
%are in WE, then $
\[
(S_1,S_2,...,S_k)\in\textnormal{argmax}_{(T_1,T_2,...,T_k)}\sum_{i=1}^k
v_i(T_i),
\] 
where $(T_1,\ldots,T_k)$ ranges over all feasible allocations
(with $T_i \cap T_j = \emptyset$ for $i \neq j$).
%Here, $T_i\subseteq [m] \forall i$ and $T_i\cap T_j=\phi\forall i,j$. 
\end{theorem}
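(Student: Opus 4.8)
The plan is to prove the First Welfare Theorem by a short ``swap'' argument: take the Walrasian allocation $(S_1,\dots,S_k)$ with prices $p_1,\dots,p_m$, take any competing feasible allocation $(T_1,\dots,T_k)$, and show $\sum_i v_i(S_i) \ge \sum_i v_i(T_i)$ by comparing utilities player-by-player and then summing, exploiting the fact that both allocations are partitions so that the total price collected is the same in both.

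Here are the steps in order. First, fix an arbitrary feasible allocation $(T_1,\dots,T_k)$ with $T_i \cap T_j = \emptyset$ for $i \ne j$. Second, apply the optimality condition (W1) to each player $i$: since $S_i \in \operatorname{argmax}_T \{v_i(T) - \sum_{j \in T} p_j\}$, in particular $v_i(S_i) - \sum_{j \in S_i} p_j \ge v_i(T_i) - \sum_{j \in T_i} p_j$. Third, sum this inequality over all $i = 1,\dots,k$ to get
\[
\sum_{i=1}^k v_i(S_i) - \sum_{i=1}^k \sum_{j \in S_i} p_j \ \ge\ \sum_{i=1}^k v_i(T_i) - \sum_{i=1}^k \sum_{j \in T_i} p_j.
\]
Fourth, evaluate the two price terms. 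By (W2) and (W3), the sets $S_1,\dots,S_k$ form a partition of $M$, so $\sum_{i=1}^k \sum_{j \in S_i} p_j = \sum_{j \in M} p_j$. For the $T$-side, feasibility of $(T_1,\dots,T_k)$ gives disjointness, so $\sum_{i=1}^k \sum_{j \in T_i} p_j = \sum_{j \in \bigcup_i T_i} p_j \le \sum_{j \in M} p_j$ since prices are nonnegative. Substituting, the $-\sum_{j\in M} p_j$ on the left and the (at least as negative) $T$-price term on the right combine to yield $\sum_i v_i(S_i) \ge \sum_i v_i(T_i)$, which is exactly the claim. (If one adopts the convention that every item is allocated in $(T_1,\dots,T_k)$ as well, the two price terms are literally equal and no monotonicity/nonnegativity slack is even needed.)

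There is essentially no main obstacle here — the result is a one-line convexity-free argument, and the only thing to be careful about is the bookkeeping on the price terms, namely that (W3) is what forces the $S$-side price sum to equal the full $\sum_{j \in M} p_j$ (this is precisely why (W3) is flagged as ``crucial'' in Definition~\ref{d:we}), and that nonnegativity of prices handles the case where $(T_1,\dots,T_k)$ happens to leave some items unallocated. A remark worth including: the argument uses nothing about the structure of the valuation class $\V$ — monotonicity and $v_i(\emptyset) = 0$ are not even needed — so the First Welfare Theorem holds completely generally whenever a Walrasian equilibrium exists; the real content of the theory (and the subject of the rest of the lecture) is the converse-flavored question of \emph{when} such equilibria exist.
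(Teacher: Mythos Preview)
Your proof is correct and is essentially identical to the paper's own proof: apply (W1) player-by-player against the competing allocation, sum, and use (W2)--(W3) plus nonnegativity of prices to cancel the price terms. The only cosmetic difference is that the paper phrases it as a comparison against a welfare-maximizing allocation $(S^*_1,\dots,S^*_k)$ rather than an arbitrary feasible $(T_1,\dots,T_k)$, but the inequalities and bookkeeping are the same.
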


If one thinks of a Walrasian equilibrium as the natural outcome of a
market, then
Theorem~\ref{t:fwt} can be interpreted as saying ``markets are
efficient.''\footnote{Needless to say, much blood and ink have been spilled
  over this interpretation over the past couple of centuries.}  There
are many versions of the ``First Welfare Theorem,'' and all have this flavor.

\noindent 
\begin{proof}
Let $(S^*_1,\ldots,S^*_k)$ denote a welfare-maximizing feasible
allocation.  We can apply property~(W1) of Walrasian equilibria to
obtain 
\[
v_i(S_i) - \sum_{j \in S_i} p_j  \ge                           
v_i(S^*_i) - \sum_{j \in S^*_i} p_j
\]
for each player~$i=1,2,\ldots,k$.  Summing over~$i$, we have
\begin{equation}\label{eq:we}
\sum_{i=1}^k v_i(S_i) - 
\sum_{i=1}^k \left( \sum_{j \in S_i} p_j \right) \ge    
\sum_{i=1}^k v_i(S^*_i) - \sum_{i=1}^k \left( \sum_{j \in S^*_i} p_j\right).
\end{equation}
Properties~(W2) and~(W3) imply that the second term on the left-hand
side of~\eqref{eq:we} equals the sum $\sum_{j=1}^m p_j$ of all the
item prices.  Since $(S^*_1,\ldots,S^*_n)$ is a feasible allocation,
each item is awarded at most once and hence the second term on the
right-hand side is at most $\sum_{j=1}^m p_j$.  Adding $\sum_{j=1}^m
p_j$ to both sides gives
\[
\sum_{i=1}^k v_i(S_i) \ge \sum_{i=1}^k v_i(S^*_i),
\]
which proves that the allocation $(S_1,\ldots,S_k)$ is also welfare-maximizing.
\end{proof}

% As the prices and allocations are in WE, for a given allocation $(T_1,T_2,...,T_k)$, we have $\forall i$ \[v_i(S_i)-\sum_{j\in S_i}p_j\ge v_i(T_i)-\sum_{j\in T_i}p_j \] Adding the inequalities gives \[\sum_iv_i(S_i)-\sum_jp_j\ge \sum_iv_i(T_i)-\sum_i\sum_{j\in T_i}p_j\] As $\sum_jp_j\ge \sum_i\sum_{j\in T_i}p_j$, $ \sum_iv_i(S_i)\ge \sum_iv_i(T_i)$ for all allocations $(T_1,T_2,...,T_k)$.
%\end{proof}

\subsection{Existence of Walrasian Equilibria}

The First Welfare Theorem says that Walrasian equilibria are great
when they exist.  But when do they exist?
\begin{example}
Suppose $M$ contains only one item.  
Consider the allocation that awards the item to the player~$i$ with the
highest value for it, and a price that is between player $i$'s value and
the highest value of some other player (the second-highest overall).
This is a Walrasian equilibrium: the price is low enough that
bidder~$i$ prefers receiving the item to receiving nothing, and high
enough that all the other bidders prefer the opposite.  A simple case
analysis shows that these are all of the Walrasian equilibria.
\end{example}

%\subsubsection{Examples/Non-Examples of WE}
%\begin{itemize}
%\item For selling a single item, any price between the second highest and the %highest bid and allocation of the item to the highest bidder are in WE. 
%\item 
\begin{example}\label{ex:nonex}
Consider a market with two items, $A$ and $B$.
Suppose the valuation of the first player is
%Seller wants to sell 2 items $A,B$. Let the valuations of 2 buyers be as %follows:
\[ v_1(T) =
  \begin{cases}
   3     & \quad \text{for } T=\{A,B\} \\
    0 & \quad \text{otherwise} \\
  \end{cases}
\]
and that of the second player is
\[ v_2(T) =
  \begin{cases}
  2    & \quad \text{for } T \neq \emptyset\\
%\text{ is } A \text{ or } B\\
    0 & \quad \text{otherwise.} \\
  \end{cases}
\] 
The first bidder is called a ``single-minded'' or ``AND'' bidder, and
is happy only if she gets both items.  The second bidder is called a
``unit-demand'' or ``OR'' bidder, and effectively wants only one of
the items.\footnote{More formally, a {\em unit-demand} valuation $v$
  is characterized by nonnegative values $\{ \alpha_j \}_{j \in M}$, with
  $v(S) = \max_{j \in S} \alpha_j$ for each $S \sse M$.  Intuitively, a
  bidder with a unit-demand valuation throws away all her items except
her favorite.}

We claim that there is no Walrasian equilibrium in this market.  From
the First Welfare Theorem, we know that such an equilibrium must
allocate the items to maximize the social welfare, which in this case
means awarding both items to the first player.  For the second player
to be happy getting neither item, the price of each item must be at
least~2.  But then the first player pays~4 and has negative utility,
and would prefer to receive nothing.
\end{example}

%It's easy to see that there is no WE for these valuations.
%\end{itemize}
These examples suggest a natural question: under what conditions is a
Walrasian equilibrium guaranteed to exist?  There is a well-known
literature on this question in economics (e.g.~\cite{KC82,GS99,M00});
here are the highlights.
\begin{enumerate}

\item If every player's valuation~$v_i$ satisfies the ``gross
  substitutes (GS)'' condition, then a Walrasian equilibrium is
  guaranteed to exist.  We won't need the precise definition of the~GS
  condition in this lecture.  GS valuations are closely related to
%more-or-less correspond to
  weighted matroid rank functions, and hence are a subclass of the
  submodular valuations defined at the end of last lecture in
  Section~\ref{s:open}.\footnote{A weighted matroid rank function~$f$
    is defined using a matroid $(E,\mathcal{I})$ and nonnegative
    weights on the elements~$E$, with $f(S)$ defined as the maximum
    weight of an independent set (i.e., a member of $\mathcal{I}$)
    that lies entirely in the subset~$S \sse E$.}
A unit-demand (a.k.a.\ ``OR'') valuation, like that of the second player in
Example~\ref{ex:nonex}, satisfies the GS condition (corresponding to
the 1-uniform matroid).  It follows that single-minded (a.k.a.\
``AND'') valuations, like that of the first player in
Example~\ref{ex:nonex}, do not in general satisfy the GS condition
(otherwise the market in Example~\ref{ex:nonex} would have a Walrasian equilibrium).

\item If $\V$ is a class of valuations that contains all unit-demand
  valuations and also some valuation that violates the GS condition,
  then there is a market with valuations in $\V$ that does not possess
  a Walrasian equilibrium.

\end{enumerate}
These results imply that GS valuations are a maximal class of
valuations subject to the guaranteed existence of Walrasian
equilibria.  These results do, however, leave open the possibility of
guaranteed existence for classes $\V$ that contain non-GS valuations
but not all unit-demand valuations, and a number of recent papers in
economics and operations research have pursued this
direction (e.g.~\cite{BLN13,COP15,COP17,SY06}).  
All of the non-existence results in this line
of work use explicit constructions, like in Example~\ref{ex:nonex}.

\section{Complexity Separations Imply Non-Existence of Walrasian
  Equilibria}\label{s:priceeq}

\subsection{Statement of Main Result}

Next we describe a completely different approach to ruling out the
existence of Walrasian equilibria, based on complexity theory rather
than explicit constructions.  The main result is the following.
\begin{theorem}[\citet{priceeq}]\label{t:priceeq}
Let $\V$ denote a class of valuations.  Suppose the
welfare-maximization problem for $\V$ does not reduce to the
utility-maximization problem for $\V$.  Then, there exists a market
with all player valuations in $\V$ that has no Walrasian equilibrium.
\end{theorem}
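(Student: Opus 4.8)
The plan is to prove the contrapositive: assuming that \emph{every} market with player valuations in $\V$ admits a Walrasian equilibrium, I will exhibit an algorithm that solves the welfare-maximization problem for $\V$ using only polynomially many calls to a utility-maximization (``demand'') oracle for valuations in $\V$, which is exactly what it means for welfare-maximization to reduce to utility-maximization. The engine of the reduction is the First Welfare Theorem (Theorem~\ref{t:fwt}): a Walrasian allocation is automatically welfare-maximizing, so it suffices to compute one; the assumption that equilibria always exist is what makes such a computation possible.

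The first step is to notice that a utility-maximization query --- given a valuation $v_i \in \V$ and a price vector $p$, return some $S_i \in \argmax_T\{v_i(T)-\sum_{j\in T}p_j\}$ --- is precisely a separation oracle for the dual of the configuration LP relaxation of welfare-maximization (primal variables $x_{i,S}\ge 0$ with $\sum_S x_{i,S}\le 1$ for each player and $\sum_{i,\,S\ni j}x_{i,S}\le 1$ for each item; dual variables $u_i,p_j\ge 0$ with $u_i+\sum_{j\in S}p_j\ge v_i(S)$ for all $i,S$). Hence, by the ellipsoid method, I can solve this LP --- compute its optimal value $W^\star_{\mathrm{LP}}$ and an optimal dual solution $(u^\star,p^\star)$ --- with polynomially many demand queries (the standard argument of Nisan and Segal). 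The second step invokes the classical fact of Bikhchandani and Mamer that a market has a Walrasian equilibrium if and only if the configuration LP has no integrality gap, in which case the Walrasian price vectors are exactly the optimal dual solutions and the Walrasian allocations are exactly the integral optimal primal solutions. Since by assumption our instance has a Walrasian equilibrium, $W^\star_{\mathrm{LP}}$ already equals the maximum social welfare, and $p^\star$ is a vector of Walrasian prices.

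It remains to recover, from the prices $p^\star$, an actual Walrasian allocation --- pairwise-disjoint bundles $S_i$, one demanded by each player at $p^\star$, that together cover $M$ --- using only further demand queries. One knows such a selection exists; the plan is to pin it down incrementally, for instance by processing the items of $M$ one at a time and, for each item, using the LP (again solvable with demand queries) to test which player can be permanently assigned that item while a fractional optimum still exists on the remaining items, or alternatively by breaking ties inside the demand correspondences with a fixed lexicographic rule and re-solving a restricted LP. Each such decision costs polynomially many demand queries, for a polynomial total; combining the three steps yields the reduction and completes the contrapositive. I would also instantiate the theorem afterward, as the paper presumably does, by plugging in concrete classes $\V$ for which a communication or query separation between welfare- and utility-maximization is known.

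The main obstacle I anticipate is precisely this last rounding step: the configuration polytope is not integral in general, so one cannot simply read an allocation off an arbitrary optimal vertex, and the demand correspondence at $p^\star$ may contain exponentially many bundles per player with no succinct structure, so extracting a disjoint covering selection with few demand queries needs a genuine argument. A secondary point to nail down is the exact model of ``reduces to'': whether the demand oracle may be queried on valuations other than $v_1,\dots,v_k$ (it should not need to be, but the LP-solving step must be checked against whatever the formal definition permits), and whether the welfare-maximization problem asks for the optimal allocation or merely its value --- the value falls out of the first step alone, so only the allocation version forces the rounding step.
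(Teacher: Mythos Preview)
Your proposal is essentially the paper's proof: the contrapositive via ellipsoid on the dual of the configuration LP with the demand oracle as separation (the paper's Step~1), plus the Bikhchandani--Mamer characterization that a Walrasian equilibrium exists iff the configuration LP has an integral optimum (the paper's Step~2). For primal recovery the paper uses the standard reduced-primal trick---retain only the polynomially many primal variables corresponding to dual constraints generated during ellipsoid and solve that small LP directly---rather than your item-by-item rounding, and it is comparably informal about the final integrality extraction you flag as the main obstacle.
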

In other words, a necessary condition for the guaranteed existence of
Walrasian equilibria is that welfare-maximization 
is no harder than utility-maximization.  This connects a
purely economic question (when do equilibria exist?) to a purely
algorithmic one.

To fill in some of the details in the statement of
Theorem~\ref{t:priceeq}, by ``does not reduce to,'' we mean that there
is no polynomial-time Turing reduction from the former problem to the
latter.  By ``the welfare-maximization problem for $\V$,'' we mean the
problem of, given player valuations~$v_1,\ldots,v_k \in \V$, computing
an allocation that maximizes the social welfare
$\sum_{i=1}^k v_i(S_i)$.\footnote{For concreteness, think about the case
  where every valuation~$v_i$ has a succinct description and can be
  evaluated in polynomial time.
Analogous
  results hold when an algorithm has only oracle access to the
  valuations.}  By ``the utility-maximization problem for $\V$,'' we
mean the problem of, given a valuation $v \in \V$ and nonnegative
prices $p_1,\ldots,p_m$, computing a utility-maximizing bundle
$S \in \textnormal{argmax}_{T \subseteq M} \{v(T) - \sum_{j \in T}
p_j\}$.

The utility-maximization problem, which involves only one player, can
generally only be easier than the multi-player welfare-maximization
problem.   Thus the two problems either have the same
computational complexity, or welfare-maximization is strictly harder.
Theorem~\ref{t:priceeq} asserts that whenever the second case holds,
Walrasian equilibria need not exist.  

% Then the question arises: For what valuations, does the Walrasian Equilibrium exist? We know that when all the valuations for the buyers are \emph{gross substitutes}, then there exists a WE. Whereas when the valuations lie in any proper superset of gross substitutes, then there does not exist a WE. 

%Next theorem uses complexity theory as a tool to answer the existence of a WE.
%\begin{theorem}\label{thm:WE}
%Let $V$ be the class of valuations. Each buyer has valuation vector in $V$. If welfare maximization problem does not reduce to utility maximization problem for class $V$, then there exists a market with valuations in $V$ with no WE.

%\end{theorem}
%Before we delve into the proof, let's remind ourselves of the utility %maximization problem. We are given $v\in V$ and prices $p_1,...,p_m$, and want to output $S\in \text{argmax}_T(v(T)-\sum_{j\in T}p_j)$.

\subsection{Examples}

Before proving Theorem~\ref{t:priceeq}, let's see how to apply it.
For most natural valuation classes~$\V$, a properly trained
theoretical computer scientist can identify the complexity of the
utility- and welfare-maximization problems in a matter of minutes.

\begin{example}[AND Valuations]\label{ex:and}
Let $\V_m$ denote the class of ``AND'' valuations for markets where
$|M|=m$.   That is, each $v \in \V_m$ has the following form, for some
$\alpha \ge 0$ and $T \subseteq M$:
%Following example gives evidence for why we think utility
%maximization is easier than welfare maximization. Let $V$ be the set
%of AND bidders i.e. 
%$\exists \alpha_i, S_i$ s.t. 
\[v(S)= \begin{cases}
  \alpha   & \quad \text{if } S\supseteq T\\
    0 & \quad \text{otherwise.} \\
  \end{cases}\]
The utility-maximization problem for~$\V_m$ is trivial: for a single
player with an AND valuation with parameters $\alpha$ and $T$, the
better of $\emptyset$ or $T$ is a utility-maximizing bundle.  The
welfare-maximization problem for~$\V_m$ is essentially set packing and
is $\NP$-hard (with $m \rightarrow \infty$).\footnote{For example, given
  an instance $G=(V,E)$ of the {\sc Independent Set} problem, take $M=E$,
  make one player for each vertex $i \in V$, and give player~$i$ an
  AND valuation with parameters $\alpha = 1$ and $T$ equal to the
  edges that are incident to~$i$ in $G$.}  We conclude that the
welfare-maximization problem for~$\V$ does not reduce to the
utility-maximization problem for~$\V$ (unless $\ptime = \NP$).
Theorem~\ref{t:priceeq} then implies that, assuming $\ptime \neq \NP$,
there are markets with AND valuations that do not have any Walrasian
equilibria.\footnote{It probably seems weird to have a conditional
  result ruling out equilibrium existence.  A conditional
  non-existence result can of course be made unconditional through an
  explicit example.  A proof that the welfare-maximization problem for
  $\V$ is $\NP$-hard will generally suggest candidate markets to check
  for non-existence.

The following analogy may help: consider computationally tractable
linear programming relaxations of $\NP$-hard optimization problems.  
Conditional on $\ptime \neq \NP$, such relaxations cannot be exact
(i.e., have no integrality gap)
for all instances.  $\NP$-hardness
proofs generally suggest instances that can be used to prove directly
(and unconditionally) that a particular linear programming relaxation
has an integrality gap.\label{foot:gap}}
%
% conditional on $\ptime \neq NP$, no $\NP$-hard optimization
%   problem admits an exact and computationally tractable linear
%   programming relaxation.  That is, $\ptime \neq \NP$ rules out the
%   guaranteed existence of such relaxations.  (As we'll see in
%   Section~\ref{s:priceeqpf}, this analogy is not as superficial as it
%   seems.)
% Just as $NP$-hardness proofs generally suggest instances of the
% problem with a
% non-trivial integrality gap
% }
\end{example}

Of course, Example~\ref{ex:nonex} already shows, without any
complexity assumptions, that markets with AND bidders do not generally
have Walrasian equilibria.\footnote{Replacing the OR bidder in
  Example~\ref{ex:nonex} with an appropriate pair of AND bidders
  extends the example to markets with only AND bidders.}
Our next example addresses a class of valuations for which the
status of Walrasian equilibrium existence was not previously known.

\begin{example}[Capped Additive Valuations]
A {\em capped additive} valuation~$v$ is parameterized by $m+1$
numbers $c, \alpha_1,\alpha_2,\ldots,\alpha_m$ and is defined as
\[
v(S) = \min \left\{ c, \sum_{j \in S} \alpha_j \right\}.
\]
The $\alpha_j$'s indicate each item's value, and $c$ the ``cap'' on
the maximum value that can be attained.  Capped additive valuations
were proposed in~\citet{LLN06} as a natural subclass of submodular
valuations, and have been studied previously from a
welfare-maximization standpoint.

Let $\V_{m,d}$ denote the class of capped additive valuations in
markets with $|M|=m$ and with $c$ and $\alpha_1,\ldots,\alpha_m$
restricted to be positive integers between~1 and~$m^d$.  (Think of $d$
as fixed and $m \rightarrow \infty$.)  
A Knapsack-type dynamic programming algorithm shows that the
utility-maximization problem for~$\V_{m,d}$ can be solved in
polynomial time (using that $c$ and the $\alpha_j$'s are polynomially
bounded).  For~$d$ a sufficiently large constant, however, the
welfare-maximization problem for $\V_{m,d}$ is $\NP$-hard (it includes
the strongly $\NP$-hard Bin Packing problem).  Theorem~\ref{t:priceeq}
then implies that, assuming $\ptime \neq \NP$, there are markets with
valuations in~$\V_{m,d}$ with no Walrasian equilibrium.
\end{example}

   % To maximize utility for the $i^{th}$ buyer, we output $\phi$ or $S_i$ depending on the prices and hence, utility maximization is trivial. Whereas, welfare maximization is NP-Hard.

%\section{Interlude: Cultural Differences}
%
%xxx
%
%Quite often, there are interesting cultural differences between different fields.
%For example, here are some perhaps surprising cultural differences between
%theoretical computer science and (theoretical) economics.
%
%\begin{itemize}
%  \item Econ students do not usually work on research with their advisor.
%  \item Econ students make one job paper.
%  \item Econ theory does not have ``big open problems''.
%  \item Econ theory focuses more on the right modeling.
%  \item ``New techniques'' are regarded with suspicion.
%\end{itemize}

\section{Proof of Theorem~\ref{t:priceeq}}\label{s:priceeqpf}

\subsection{The Plan}

Here's the plan for proving Theorem~\ref{t:priceeq}.
Fix a class $\V$ of valuations, and assume that a Walrasian
equilibrium exists in every market with player valuations in $\V$.  We
will show, in two steps, that the welfare-maximization problem for
$\V$ (polynomial-time Turing) reduces to the utility-maximization
problem for $\V$.

\vspace{.5\baselineskip}
\noindent
\textbf{Step 1: } The ``fractional'' version of the
welfare-maximization problem for $\V$ reduces to the
utility-maximization problem for $\V$.

\vspace{.5\baselineskip}
\noindent
\textbf{Step 2: } A market admits a Walrasian equilibrium if and only
if the 
%optimal objective function values of  to the fractional and integral
fractional welfare-maximization problem has an optimal integral solution.
%problems coincide.
(We'll only need the ``only if'' direction.)

\vspace{.5\baselineskip}
Since every market with valuations in $\V$ admits a Walrasian
equilibrium (by assumption), these two steps imply that the integral
welfare-maximization problem reduces to utility-maximization.

\subsection{Step 1: Fractional Welfare-Maximization Reduces to
  Utility-Maximization}

This step is folklore, and appears for example in \citet{NS06}.
%
% %\begin{proof}
%   Our first step is to show that fractional welfare maximization reduces to
%   utility maximization. Our second step is to show that for welfare maximization,
%   the optimal fractional solution is integer if and only if there exists a WE.
 % 
%  We now proceed with the first step. Fix $V$, and assume that WE always exist.
%
Consider the following linear program (often called the {\em
  configuration LP}),
with one variable $x_{iS}$ for each player~$i$ and bundle~$S \subseteq 2^M$:
%which we think of as
%indicators for if bidder $i$ gets bundle $S$.
  \begin{align*}
    \max & \sum_{i=1}^k \sum_{S \subseteq M} v_i(S)x_{iS} \\
    \text{s.t.} & \sum_{i=1}^k \sum_{S \subseteq M \,:\, j \in S} x_{iS} \le
           1 \qquad \text{for $j=1,2,\ldots,m$} \\
      & \sum_{S \subseteq M} x_{iS} = 1 \qquad \text{for $i=1,2,\ldots,k$.} \\
  \end{align*}
The intended semantics are
\[
x_{iS} = \begin{cases}
  1   & \quad \text{if $i$ gets the bundle $S$}\\
    0 & \quad \text{otherwise.} \\
  \end{cases}\]
The first set of constraints enforces that each item is awarded only
once (perhaps fractionally), and the second set enforces that every
player receives one bundle (perhaps fractionally).  Every feasible
allocation induces a 0-1 feasible solution to this linear program
according to the intended semantics, and the objective function value
of this solution is exactly the social welfare of the allocation.

This linear program has an exponential (in $m$) number of variables.
The good news is that it has only a polynomial number of constraints.
This means that the dual linear program will have a polynomial number of
variables and an exponential number of constraints, which is right in
the wheelhouse of the ellipsoid method. 
%(modulo a separation oracle).

  % Note that this LP has exponentially many variables, but only polynomially many
  % constraints. To solve it, it will help to swap to the dual, where there are
  % only polynomially many variables, and we can get around the exponentially many
  % constraints with a separation oracle.
Precisely, the dual linear program is:
  \begin{align*}
    \min &\sum_{i=1}^k u_i + \sum_{j=1}^m p_j \\
    \text{s.t.} \quad & u_i + \sum_{j \in S} p_j \ge v_i(S) \qquad \text{for
    all $i=1,2,\ldots,k$ and $S \subseteq M$}\\
      & p_j \ge 0 \qquad \text{for $j=1,2,\ldots,m$}, \\
  \end{align*}
where $u_i$ and $p_j$ correspond to the primal constraints that
bidder~$i$ receives one bundle and that item~$j$ is allocated at most once, respectively.

Recall that the ellipsoid method~\cite{K79} can solve a linear program
in time polynomial in the number of variables, as long as there is a
polynomial-time {\em separation oracle} that can verify whether or not
a given point is feasible and, if not, produce a violated constraint.
For the dual linear program above, this separation oracle boils down
to solving the following problem: for each player~$i=1,2,\ldots,k$,  
%  Now, notice that our desired separation oracle is the following: for each
%  player $i$, 
check that 
\[
u_i \ge \max_{S \subseteq M} \left[ v_i(S) - \sum_{j \in S} p_j
\right].
\]
But this reduces immediately to the utility-maximization problem for
$\V$!
%  which can just be solved by utility maximization. Hence if we can solve utility
%  maximization, we can solve fractional welfare maximization (by strong LP
%  duality).
Thus the ellipsoid method can be used to solve the dual linear program
to optimality, using a polynomial number of calls to a
utility-maximization oracle.  The optimal solution to the original
fractional welfare-maximization problem can then be efficiently
extracted from the optimal dual solution.\footnote{In more detail,
  consider the (polynomial number of) dual constraints generated by
  the ellipsoid method when solving the dual linear program.  Form a
  reduced version of the original primal problem, retaining only the
  (polynomial number of) variables that correspond to this subset of
  dual constraints.  Solve this polynomial-size reduced version of the primal
  linear program using your favorite polynomial-time linear
  programming algorithm.}
 
\subsection{Step 2: Walrasian Equilibria and Exact Linear Programming Relaxations}\label{ss:step2}

We now proceed with the second step, which is based on \citet{BM97} and
follows from strong linear programming duality.  Recall
from linear programming theory (see e.g.~\cite{chvatal}) that a pair
of primal and dual feasible solutions are both optimal if and only if
the ``complementary slackness'' conditions hold.\footnote{If you've
  never seen or have forgotten about complementary slackness, there's
  no need to be afraid.  To derive them, just write down the usual
  proof of weak LP duality (which is a chain of inequalities), and
  back out the conditions under which all the inequalities hold with
  equality.}
These conditions
assert that every non-zero decision variable in one of the linear
programs corresponds to a tight constraint in the other.  For our
primal-dual pair of linear programs, these conditions are:
%Recall that
%  primal-dual solutions are both optimal iff complimentary slackness conditions
%  hold. For our LP-dual pair, the complementary slackness conditions are:
\begin{itemize}

 \item [(i)] $x_{iS} > 0$ implies that $u_i = v_i(S) - \sum_{j \in S} p_j$
   (i.e., only utility-maximizing bundles are used);

 \item [(ii)] $p_j > 0$ implies that $\sum_i \sum_{S : j \in S} x_{iS} = 1$
   (i.e., item $j$ is not fully sold only if it is worthless).
\end{itemize}

Comparing the definition of Walrasian equilibria
(Definition~\ref{d:we}) with conditions~(i) and~(ii), we see that a
0-1 primal feasible solution $\bfx$ (corresponding to an allocation)
and a dual solution $\boldp$ (corresponding to item prices) constitute
a Walrasian equilibrium if and only if the complementary slackness
conditions hold (where~$u_i$ is understood to be set to
$\max_{S \subseteq M} v_i(S) - \sum_{j \in S} p_j$).  Thus a Walrasian
equilibrium exists if and only if there is a feasible 0-1 solution to
the primal linear program and a feasible solution to the dual linear
problem that satisfy the complementary slackness conditions, which in
turn holds if and only if the primal linear program has an optimal 0-1
feasible solution.\footnote{This argument re-proves the First Welfare
  Theorem (Theorem~\ref{t:fwt}).  It also proves the Second Welfare
  Theorem, which states that for every welfare-maximizing allocation,
  there exist prices that render it a Walrasian equilibrium---any
  optimal solution to the dual linear program furnishes such prices.}
We conclude that a Walrasian equilibrium exists if and only if the
fractional welfare-maximization problem has an optimal integral
solution.  This completes the proof of Theorem~\ref{t:priceeq}.
%\footnote{Recall our analogy with integrality gaps of linear
%  programs (footnote~\ref{foot:gap}).  $\NP$-hardness implies an
%  integrality gap (assuming $\ptime \neq \NP$), and we now see that
%  integrality gaps and non-existence of Walrasian equilibria go
%  together.}
% optimal solutions to the fractional and integral
%welfare-maximization problems coincide.

%  The upshot is that $(z, p)$ is a WE iff $(z, p)$ satisfies complementary
%  slackness. The latter statement holds iff $z$ is an (integer) optimal %solution
%  to the LP and $p$ is an optimal solution to the dual.
%\end{proof}

\section{Beyond Walrasian Equilibria}

For valuation classes~$\V$ that do not always possess Walrasian
equilibria, is it possible to define a more general notion of
``market-clearing prices'' so that existence is guaranteed?  For
example, what if we use prices that are more complex than item prices?
This section shows that complexity considerations provide an
explanation of why interesting generalizations of Walrasian equilibria
have been so hard to come by.

Consider a class~$\V$ of valuations, and a class~$\P$ of {\em pricing
  functions}.  A pricing function, just like a valuation, is a
function~$p:2^M \rightarrow \R_+$ from bundles to nonnegative
numbers.  The item prices $p_1,\ldots,p_m$ used to define Walrasian
equilibria correspond to additive pricing functions, with $p(S) =
\sum_{j \in S} p_j$.
The next definition articulates the appropriate generalization of
Walrasian equilibria to more general classes of pricing functions.
\begin{definition}[Price Equilibrium]\label{d:pe}
A {\em price equilibrium} (w.r.t.\ pricing functions $\P$) is an
allocation $S_1,\ldots,S_k$ of the
items of $M$ to the players and a pricing function $p \in \P$
such that:
%) and the allocation ($S_1,S_2,...,S_k$) of items to $k$ buyers are in %Walrasian Equilibrium (WE) when 
\begin{itemize}
\item [(P1)] All buyers are as happy as possible with their respective
  allocations, given   the prices:
for every~$i=1,2,\ldots,k$,
$S_i\in \text{argmax}_T \{v_i(T)-p(T)\}$.
\item [(P2)] Feasibility: $S_i \cap S_j = \emptyset$ for $i \neq j$.
\item [(P3)] Revenue maximizing, given the prices: 
$(S_1,S_2,...,S_k)\in\textnormal{argmax}_{(T_1,T_2,...,T_k)}
\{ \sum_{i=1}^k p(T_i) \}$.
\end{itemize}
\end{definition}
Condition~(P3) is the analog of the market-clearing condition~(W3) in
Definition~\ref{d:we}.  It is not enough to assert that all items are
sold, because with a general pricing function, different ways of
selling all of the items can lead to different amounts of revenue.  
Under conditions~(P1)--(P3), the First Welfare Theorem
(Theorem~\ref{t:fwt}) still holds, with essentially the same proof,
and so every price equilibrium maximizes the social welfare.

For which choices of valuations~$\V$ and pricing functions~$\P$ is
Definition~\ref{d:pe} interesting?  
Ideally, the following properties should hold.
\begin{enumerate}

\item Guaranteed existence: for every set~$M$ of items and valuations
  $v_1,\ldots,v_k \in \V$, there exists a price equilibrium with
  respect to $\P$.

\item Efficient recognition: there is a polynomial-time algorithm
  for checking whether or not a given allocation
  and pricing function constitute a price equilibrium.  This boils down
  to assuming that utility-maximization (with respect to~$\V$
  and~$\P$) and revenue-maximization (with respect to~$\P$) are
  polynomial-time solvable problems (to check~(W1) and~(W3),
  respectively).
%\footnote{One could require the
%stronger property 
%    that a price equilibrium can be computed (and not merely verified) in
%    polynomial time.  Requiring only efficient verifiability strengthens
%the consequent negative results.}

%\item Non-triviality: $\P$ is not the same as (or bigger than)
%  $\V$.  This condition a
\item %More powerful than item prices: m
Markets with valuations in $\V$
  do not always have a Walrasian equilibrium.  (Otherwise, why bother
  generalizing item prices?)

\end{enumerate}

We can now see why there are no known natural choices of~$\V$ and~$\P$
that meet these three requirements.  The first two requirements imply
that the welfare-maximization problem belongs to $\NP \cap \coNP$.  To
certify a lower bound of~$W^*$ on the maximum social welfare, one can
exhibit an allocation with social welfare at least~$W^*$.  To certify
an upper bound of~$W^*$, one can exhibit a price equilibrium that has
welfare at most~$W^*$---this is well defined by the first condition,
efficiently verifiable by the second condition, and correct by the
First Welfare Theorem.

Problems in $(\NP \cap \coNP) \setminus \ptime$ appear to be rare,
especially in combinatorial optimization.
The preceding paragraph gives a heuristic argument that interesting
generalizations of Walrasian equilibria are possible only for
valuation classes for which welfare-maximization is polynomial-time
solvable.  
For every natural such class known, the linear programming relaxation
in Section~\ref{s:priceeqpf} has an optimal integral solution; in this
sense, solving the configuration LP appears to be a ``universal algorithm''
for polynomial-time welfare-maximization.  But the third requirement
asserts that a Walrasian equilibrium does not always exist in markets
with valuations in~$\V$ and so, by the second step of the proof of
Theorem~\ref{t:priceeq} (in Section~\ref{ss:step2}),
there are markets for which
the configuration LP sometimes has only fractional optimal solutions.

The upshot is that interesting generalizations of Walrasian equilibria
appear possible only for valuation classes  where a non-standard
algorithm is necessary and sufficient to solve the
welfare-maximization problem in polynomial time.
It is not clear if there are any natural valuation classes 
for which this algorithmic barrier can be overcome.%
%that break this algorithmic barrier.%
%that any natural such classes exist.
\footnote{See~\cite[Section 5.3.2]{priceeq} for an unnatural
  such class.}

%% Please fill out the details below
\lecture{The Borders of Border's Theorem}

\vspace{1cm}

%% Your lecture LaTeX goes here.

%\section{Introduction}

Border's theorem~\cite{B91} is a famous result in auction theory about
the design space of single-item auctions, and it provides an explicit
linear description of the single-item auctions that are ``feasible''
in a certain sense.
Despite the theorem's fame, there have been few generalizations of it.
This lecture, based on joint work with Parikshit Gopalan and Noam
Nisan~\cite{GNR18}, uses complexity theory to explain why:
%shows that complexity theory provides a convincing explanation of this: 
if there {\em were} significant generalizations of
Border's theorem, the polynomial hierarchy would collapse!

% describes the design space of single-item auctions, and gives surprisingly simple conditions under which an auction is feasible in a certain sense.  As we will see, it implies that this type of feasibility is in $\coNP$.  Gopalan, Nisan, and Roughgarden~\cite{GNR18} used this fact to show that, unless the polynomial hierarchy collapses, Border's theorem cannot be generalized to various other types of auctions, including those with public projects and multiple-item auctions where bidders want to win certain packages of items.  

\section{Optimal Single-Item Auctions}

\subsection{The Basics of Single-Item Auctions}\label{ss:basics}

Single-item auctions have made brief appearances in previous lectures;
let's now study the classic model, due to \citet{V61}, in earnest.
There is a single seller of a single item.
There are $n$ bidders, and each bidder~$i$ has a valuation~$v_i$ for
the item (her maximum willingness to pay).  Valuations are {\em
  private}, meaning that~$v_i$ is known a priori to bidder~$i$ but not
to the seller or the other bidders.  Each bidder wants to maximize the
value obtained from the auction ($v_i$ if she
wins, 0 otherwise) minus the price she has to pay.  In the presence of
randomization (either in the input or internal to the auction), we
assume that bidders are risk-neutral, meaning they act to maximize
their expected utility.

This lecture is our only one on the classical {\em Bayesian}
model of auctions, which can be viewed as a form of average-case
analysis.  The key assumption is that each valuation~$v_i$ is drawn
from a distribution~$F_i$ that is known to the seller and possibly the
other bidders.  The actual realization~$v_i$ remains unknown to
everybody other than bidder~$i$.
For simplicity we'll work with
discrete distributions, and let~$V_i$ denote the support of~$F_i$ and~$f_i(v_i)$ the probability that
bidder~$i$'s valuation is~$v_i \in V_i$.  Typical examples include
(discretized versions of) the uniform distribution, the lognormal
distribution, the exponential distribution, and power-law
distributions.  We also assume that bidders' valuations are stochastically independent.

When economists speak of an ``optimal auction,'' they usually mean
the auction that maximizes the seller's expected revenue with respect
to a known prior distribution.\footnote{One advantage of assuming a
  distribution over inputs is that there is an unequivocal way to
  compare the performance of different auctions (by their expected
  revenues),   and hence an unequivocal way to define an optimal
  auction.  
%With no prior distribution, since o
One auction generally earns more revenue than another on some
inputs and less on others, so in the absence of a prior distribution,
it's not clear which one to prefer.}
Before identifying optimal auctions, we need to formally define the design space.
%
% A single-item auction has one item and $n$ bidders.  Each bidder $i$ has a valuation $v_i$ for the item.  We will work in a ``Bayesian'' setting where each $v_i$ is drawn independently from some probability distribution $f_i(v) = \Pr[v_i=v]$ with finite support $V_i$.  In a sealed bid auction, each bidder submits a bid $b_i$ to the auctioneer, and the auction takes place in a single round.  These bids are not necessarily truthful, so we do not require that $b_i = v_i$.  
%
The auction designer needs to decide who wins and how much they pay.
Thus the designer must define two (possibly randomized) functions of
the bid vector $\vec{b}$: an \emph{allocation rule} $\vec{x}(\vec{b})$
which determines which bidder wins the item, where $x_i=1$ and if $i$
wins and $x_i=0$ otherwise, and a \emph{payment rule}
$\vec{p}(\vec{b})$ where $p_i$ is how much $i$ pays.
%\footnote{An allocation-payment
%  rule pair is most naturally interpreted as a sealed-bid auction.
%A simple simulation argument known as the ``Revelation Principle'' implies
%that, for the kinds of questions that we'll study, there is no loss of
%generality in restricting attention to sealed-bid auctions.}
%  For the 
%Note that we can always simulate an auction with a
%  with a truthful one where each bidder submits their true value $v_i$
%  to a proxy who bids on their behalf.  This is called the
%  \emph{revelation principle}.}
We impose the constraint that whenever bidder~$i$ bids~$b_i$, the expected
payment $\expect{p_i(\vec{b})}$ of the bidder is at most $b_i$ times the
probability $x_i(\vec{b})$ that she wins.  (The randomization is over
the bids by the other bidders and any randomness internal to the
auction.)  This participation constraint
%, often called ``interim individual rationality (IIR),'' 
ensures that a bidder who does not
overbid will obtain nonnegative expected utility from the auction.
(Without it, an auction could just charge $+\infty$ to every bidder.)
The {\em revenue} of an auction on the bid vector $\vec{b}$ is
$\sum_{i=1}^n p_i(\vec{b})$.

%We demand that the auction fulfills two \emph{participation
%  constraints}:\footnote{Often called ``individual rationality.''}
%\begin{itemize}
%\item only the winner has a nonzero payment, i.e., if $x_i=0$ then $p_i=0$;
%\item the winner pays at most her bid, i.e., $p_i \le b_i$.
%\end{itemize}
%The participation constraints guarantee that a bidder who does not
%overbid will obtain nonnegative utility from the auction.
%
%The utility or payoff to player $i$ is $v_i x_i(\vec{b}) - p_i(\vec{b})$, %namely their value of the item (if they win it) minus their payment.  The %revenue to the auctioneer is $\sum_i p_i$, the total payments they receive.

For example, in the {\em Vickrey} or {\em second-price auction}, the
allocation rule awards the item to the highest bidder, and the payment
rule charges the second-highest bid.  This auction is
\emph{(dominant-strategy) truthful},
meaning that for each bidder, truthful bidding (i.e., setting
$b_i=v_i$) is a \emph{dominant strategy}  that maximizes
her utility no matter what the other bidders do.  
%(Proving this is an easy exe
With such a truthful auction, there is no need to assume that the
distributions~$F_1,\ldots,F_n$ are known to the bidders.
%In contrast, in a traditional first-price auction where the high
%bidder wins and pays their bid, players have an incentive to ``shade''
%their bid, just outbidding the second-highest bidder.  
The beauty of
the Vickrey auction is that it delegates underbidding to the
auctioneer, who determines the optimal bid for the winner on their
behalf.

A {\em first-price auction} has the same allocation rule as a
second-price auction (give the item to the highest bidder), but the 
payment rule charges the winner her bid.
Bidding truthfully in a first-price auction guarantees zero utility,
so strategic bidders will underbid.  Because bidders do not have
dominant strategies---the optimal amount to underbid depends on the bids
of the others---it is non-trivial to reason about the outcome of
a first-price auction.  
%Because first-price auctions have no dominant strategies, we
%require an appropriate equilibrium concept to reason about them.  
The traditional solution is to assume that the distributions
$F_1,\ldots,F_n$ are known in advance to the bidders, and to consider
Bayes-Nash equilibria.  Formally, a {\em strategy} of a bidder~$i$ in
a first-price auction is a predetermined plan for bidding---a function
$b_i(\cdot)$ that maps a valuation~$v_i$ to a bid~$b_i(v_i)$ (or a
distribution over bids).  The semantics are: ``when my valuation is
$v_i$, I will bid $b_i(v_i)$.''  We assume that bidders'
strategies are common knowledge, with bidders' valuations (and hence
induced bids) private as usual.  A strategy profile
$b_1(\cdot),\cdots,b_n(\cdot)$ is a {\em Bayes-Nash equilibrium} if
every bidder always bids optimally given her information---if for
every bidder~$i$ and every valuation~$v_i$, the bid~$b_i(v_i)$
maximizes~$i$'s expected utility, where the expectation is with
respect to the distribution over the bids of other bidders induced
by~$F_1,\ldots,F_n$ and their bidding strategies.\footnote{Straightforward
  exercise: if there are~$n$ bidders with valuations drawn i.i.d.\
  from the uniform distribution on $[0,1]$, then setting
  $b_i(v_i) = \tfrac{n-1}{n} \cdot v_i$ for every~$i$ and~$v_i$ yields
  a Bayes-Nash equilibrium.}
Note that the set of Bayes-Nash equilibria of an auction generally
depends on the prior distributions $F_1,\ldots,F_n$.
%, in addition
%to the auction format and bidders' private valuations.
%the choice of auction format {\em and} on 

An auction is called {\em Bayesian incentive compatible (BIC)} if truthful
bidding (with $b_i(v_i)=v_i$ for all~$i$ and $v_i$) is a Bayes-Nash
equilibrium.  That is, as a bidder, if all other bidders bid
truthfully, then you also want to bid truthfully.
A second-price auction is BIC, while a first-price
auction is not.\footnote{The second-price auction is in fact {\em
    dominant-strategy incentive compatible (DSIC)}---truthful bidding
    is a dominant strategy for every bidder, not merely a Bayes-Nash
    equilibrium.}
However, for every choice of $F_1,\ldots,F_n$,
there is a BIC auction that is equivalent to the first-price
auction.  
Specifically: given bids $a_1,\ldots,a_n$, implement the
outcome of the first-price auction with bids
$b_1(a_1),\ldots,b_n(a_n)$, where $b_1(\cdot),\ldots,b_n(\cdot)$
denotes a Bayes-Nash equilibrium of the first-price auction (with 
prior distributions $F_1,\ldots,F_n$).  
Intuitively, this auction makes the following pact with each bidder:
``you promise to tell me your true valuation, and I promise to bid
on your behalf as you would in a Bayes-Nash equilibrium.''
More generally, this simulation argument shows that for {\em every}
auction $A$, distributions $F_1,\ldots,F_n$, and Bayes-Nash equilibrium of
$A$ (w.r.t.\ $F_1,\ldots,F_n$), there is a BIC auction $A'$ whose
(truthful) outcome (and hence expected revenue) matches that of the
chosen Bayes-Nash equilibrium of $A$.  
This result is known as the {\em Revelation Principle.}
This principle implies that, to identify
an optimal auction, there is no loss of generality in restricting to
BIC auctions.\footnote{Of course, non-BIC auctions like first-price
  auctions are still useful in practice.  For example, the description
  of the first-price auction does not depend on bidders'
  valuation distributions $F_1,\ldots,F_n$ and can be deployed
  without knowledge of them.  This is not the case for the simulating
  auction.
%, as the
%  Bayes-Nash equilibrium of a first-price auction is
%  distribution-dependent.}
}

\subsection{Optimal Auctions}\label{ss:m81}

In optimal auction design, the goal is to identify an expected
revenue-maximizing auction, as a function of the prior distributions
$F_1,\ldots,F_n$.  For example, suppose that $n=1$, and we restrict
attention to truthful auctions.
%\footnote{A consequence of the theory
%  described in this section is that,
%whatever the distributions~$F_1,\ldots,F_n$, there exists an expected
%revenue-maximizing auction that is truthful.  So there is actually no
%loss in generality in restricting to truthful auctions.}
The only truthful auctions are take-it-or-leave-it offers (or a
randomization over such offers).  That is, the selling price must be
independent of the bidder's bid, as any dependence would result in
opportunities for the bidder to game the auction.
The optimal truthful auction is then the take-it-or-leave-it offer at
the price~$r$ that maximizes
%take-it-or-leave-it offer, which is 
\[
\underbrace{r}_{\text{revenue of a sale}} \cdot
\underbrace{(1-F(r))}_{\text{probability of a sale}},
\]
where $F$ denotes the bidder's valuation distribution.
Given a distribution~$F$, it is usually a simple matter to solve for the
best~$r$.  An optimal offer price is called a {\em
  monopoly price} of the distribution $F$.  
For example, if~$F$ is the
uniform distribution on $[0,1]$, then the monopoly price is
$\tfrac{1}{2}$.

Myerson~\cite{myerson} gave a complete solution to the optimal
single-item auction design problem, in the form of a generic compiler
that takes as input prior distributions~$F_1,\ldots,F_n$ and outputs a
closed-form description of the optimal auction for~$F_1,\ldots,F_n$.
The optimal auction is particularly easy to interpret in
the symmetric case, in which bidders' valuations
are drawn i.i.d.\ from a common distribution~$F$.  Here, the optimal
auction is simply a second-price auction with a reserve price~$r$ equal
to the monopoly price of~$F$ (i.e., an eBay auction with a suitably
chosen opening bid).\footnote{Intuitively, a reserve price of $r$ acts
as  an extra bid of~$r$ submitted by the seller.  In a second-price
auction with a reserve price, the winner is the highest
  bidder who clears the reserve (if any).  The winner (if any) pays
  either the reserve price or the second-highest bid, whichever is
  higher.}\footnote{Technically, this statement holds under a mild ``regularity''
  condition on the distribution~$F$, which holds for all of the most
  common parametric distributions.}
For example, with any number~$n$ of bidders
with valuations drawn i.i.d.\ from the uniform distribution on
$[0,1]$, the optimal single-item auction is a second-price auction
with a reserve price of $\tfrac{1}{2}$.
This is a pretty amazing confluence of theory and
practice---we optimized over the space of all imaginable auctions (which
includes some very strange specimens), and discovered that the
theoretically optimal auction format is one that is already in
widespread use!\footnote{In particular, 
%while we optimize over all
%  BIC auctions, 
there is always an optimal
  auction in which truthful bidding is a dominant strategy (as opposed to
  merely being a BIC auction).  This is
  also true in the asymmetric case.}

Myerson's theory of optimal auctions extends to the asymmetric case where
bidders have different distributions (where the optimal auction is
no longer so simple), and also well beyond single-item
auctions.\footnote{The theory applies more generally to
  ``single-parameter problems.'' 
These include problems
in which in each outcome a bidder is either a ``winner'' or a
``loser'' (with multiple winners allowed),
and each bidder~$i$ has a
private valuation~$v_i$ for winning (and value~0 for losing).\label{foot:sparam}}
The books by \citet{hartline} and the author~\cite[Lectures
3 and 5]{f13} describe this theory from a computer science perspective.

% studied optimal auction design, where the auctioneer knows the distributions $f_i$ and wants to maximize their expected revenue.  In the symmetric case $f_i=f$ for all $i$, he showed that the optimum is achieved by the system now used by eBay, i.e., a second-price auction with an opening bid or ``reserve price'' which depends on $f$.  (Vickrey and Myerson were both Nobel prize winners in economics.)

\section{Border's Theorem}

\subsection{Context}

Border's theorem identifies a tractable description of {\em all}
BIC single-item auctions, in the form of a polytope in polynomially
many variables.  (See Section~\ref{ss:basics} for the definition of a
BIC auction.)  This goal is in some sense more ambitious than
merely identifying the optimal auction; with this tractable
description in hand, one can efficiently compute the optimal auction
for any given set $F_1,\ldots,F_n$ of prior distributions.

Economists are interested in Border's theorem because it can be used
to extend the reach of Myerson's optimal auction theory
(Section~\ref{ss:m81}) to more general settings, such as the case of
risk-adverse bidders studied by \citet{MR84}.
\citet{M84} conjectured the precise result that was proved
by~\citet{B91}.
%For further applications and interpretations in economics,
%see~\cite{G13,HR14,M11}.
Computer scientists have used Border's theorem for orthogonal
extensions to Myerson's theory, like computationally tractable
descriptions of the expected-revenue maximizing auction in settings
with multiple non-identical items~\cite{A+19,CDW12}.  While there is no
hope of deriving a closed-form solution to the optimal auction design
problem with risk-adverse bidders or with multiple items, Border's
theorem at least enables an efficient algorithm for computing a
description of an optimal auction (given descriptions of the prior
distributions).

\subsection{An Exponential-Size Linear Program}

As a lead-in to Border's theorem, we show how to formulate the space
of BIC single-item auctions as an (extremely big) linear program.
The decision variables of the linear program encode the
allocation and payment rules of the auction (assuming truthful
bidding, as appropriate for BIC auctions).
There is one variable $x_i(\vec{v}) \in [0,1]$ that describes the
probability (over any randomization in the auction) that bidder~$i$
wins the item when bidders' valuations (and hence bids) are $\vec{v}$.
Similarly, $p_i(\vec{v}) \in \R_+$ denotes the expected payment made
by bidder~$i$ when bidders' valuations are $\vec{v}$.

Before describing the linear program, we need some odd but useful
notation (which is standard in game theory and microeconomics).
\begin{mdframed}[style=offset,frametitle={Some Notation}]
For an $n$-vector $\vec{z}$ and a coordinate~$i \in [n]$,
let~$\vec{z}_{-i}$ denote the $(n-1)$-vector obtained by removing the
$i$th component from $\vec{z}$.  We also identify~$(z_i,\vec{z}_{-i})$
with $\vec{z}$.
\end{mdframed}
Also, recall that $V_i$ denotes the possible valuations of bidder~$i$, and
that we assume that this set is finite.

Our linear program will have three sets of constraints.  The first set
enforces the property that truthful bidding is in fact a Bayes-Nash
equilibrium (as required for a BIC auction).  
For every bidder~$i$,
possible valuation $v_i \in V_i$ for~$i$, and possible false bid $v'_i
\in V_i$,
\begin{equation}\label{eq:bic1}
\underbrace{v_i \cdot \expect[\vec{v}_{-i} \sim \vec{F}_{-i}]{x_i(\vec{v})}
- \expect[\vec{v}_{-i} \sim
\vec{F}_{-i}]{p_i(\vec{v})}}_{\text{expected utility of truthful bid $v_i$}}
\ge
\underbrace{v_i \cdot \expect[\vec{v}_{-i} \sim \vec{F}_{-i}]{x_i(v'_i,\vec{v}_{-i})}
- \expect[\vec{v}_{-i} \sim
\vec{F}_{-i}]{p_i(v'_i,\vec{v}_{-i})}}_{\text{expected utility of
  false bid $v'_i$}}.
\end{equation}
The expectation is over both the randomness in $\vec{v}_{-i}$ and
internal to the auction.  Each of the expectations in~\eqref{eq:bic1}
expands to a sum over all possible $\vec{v}_{-i} \in \vec{V}_{-i}$,
weighted by the probability $\prod_{j \neq i} f_j(v_j)$.  Because all of
the $f_j(v_j)$'s are numbers known in advance, each of these
constraints is linear (in the $x_i(\vec{v})$'s and $p_i(\vec{v})$'s).

The second set of constraints encode the participation 
constraints from Section~\ref{ss:basics}, also known as the {\em
  interim individually rational (IIR)} constraints.
For every bidder~$i$ and
possible valuation~$v_i \in V_i$,
\begin{equation}\label{eq:iir1}
v_i \cdot \expect[\vec{v}_{-i} \sim \vec{F}_{-i}]{x_i(\vec{v})}
- \expect[\vec{v}_{-i} \sim
\vec{F}_{-i}]{p_i(\vec{v})} \ge 0.
\end{equation}
The final set of constraints assert that, with probability~1, the item is
sold to at most one bidder: for every $\vec{v} \in \vec{V}$,
\begin{equation}\label{eq:feas1}
\sum_{i=1}^n x_i(\vec{v}) \le 1.
\end{equation}

By construction, feasible solutions to the linear
system~\eqref{eq:bic1}--\eqref{eq:feas1} correspond to the allocation
and payment rules of BIC auctions with respect to the distributions
$F_1,\ldots,F_n$.  This linear program has an exponential number of
variables and constraints, and is not immediately useful.

\subsection{Reducing the Dimension with Interim Allocation Rules}

Is it possible to re-express the allocation and payment rules of BIC
auctions with a small number of decision variables?  Looking at the
constraints~\eqref{eq:bic1} and~\eqref{eq:iir1}, a natural idea is to use
only the decision variables 
$\{ y_i(v_i) \}_{i \in [n], v_i \in V_i}$
and
$\{ q_i(v_i) \}_{i \in [n], v_i \in V_i}$,
with the intended semantics that
%We can rewrite the inequalities~\eqref{eq:exp-participation} %and~\eqref{eq:exp-truthful} in terms of these expectations.  Define
\[
y_i(v_i) = \expect[\vec{v}_{-i}]{x_i(v_i,\vec{v}_{-i})}
\quad \text{and} \quad
q_i(v_i) = \expect[\vec{v}_{-i}]{p_i(v_i,\vec{v}_{-i})}.
\]
In other words, $y_i(v_i)$ is the probability that bidder $i$ 
wins when she bids $v_i$, and $q_i(v_i)$ is the expected amount that
she pays; these were the only quantities that actually mattered
in~\eqref{eq:bic1} and~\eqref{eq:iir1}.
(As usual, the expectation is over both the randomness
in $\vec{v}_{-i}$ and internal to the auction.)
In auction theory, the $y_i(v_i)$'s are called an {\em interim
  allocation rule}, the $q_i(v_i)$'s an {\em interim payment
  rule}.\footnote{Auction theory generally thinks about three
  informational scenarios: {\em ex ante}, where each bidder knows the prior
  distributions but not even her own valuation; {\em interim},
  where each bidder knows her own valuation but not those of the others;
  and {\em ex post}, where all of the bidders know everybody's
  valuation.  Bidders typically choose their bids at the interim stage.}

There are only $2 \sum_{i=1}^n |V_i|$ such decision variables, far
fewer than the $2 \prod_{i=1}^n |V_i|$ variables
in~\eqref{eq:bic1}--\eqref{eq:feas1}.
We'll think of the $|V_i|$'s (and hence the number of decision variables) 
as polynomially bounded.  For example,~$V_i$ could be the multiples of
some small $\eps$ that lie in some bounded range like $[0,1]$.

We can then express the BIC constraints~\eqref{eq:bic1} in terms of
this smaller set of variables by
\begin{equation}\label{eq:bic2}
\underbrace{v_i \cdot y_i(v_i)
- q_i(v_i)}_{\text{expected utility of truthful bid $v_i$}}
\ge
\underbrace{v_i \cdot y_i(v'_i)
- q_i(v'_i)}_{\text{expected utility of
  false bid $v'_i$}}
\end{equation}
for every bidder~$i$ and $v_i,v'_i \in V_i$.  Similarly, the IIR
constraints~\eqref{eq:iir1} become
\begin{equation}\label{eq:iir2}
v_i \cdot y_i(v_i) - q_i(v_i) \ge 0
\end{equation}
for every bidder~$i$ and $v_i \in V_i$.

%Then~\eqref{eq:exp-participation} and~\eqref{eq:exp-truthful} become
%\begin{gather}
%q_i(v_i) \le v_i y_i(v_i) \quad \text{for all $i$ and all $v_i \in V_i$} %\nonumber  \\
%v_i y_i(v_i) - q_i(v_i) \ge v_i y_i(\hat{v}_i) - q_i(\hat{v}_i) \quad %\text{for all $i$ and all $v_i, \hat{v}_i \in V_i$} \, .
%\label{eq:lp-small}
%\end{gather}
%There are only $2\sum_i |V_i| = O(n)$ of these variables, giving us a much %smaller system of inequalities.

Just one problem.  What about the feasibility
constraints~\eqref{eq:feas1}, which reference
the individual
$x_i(\vec{v})$'s and not merely their expectations?
The next definition articulates what feasibility means for an
interim allocation rule.

\begin{definition}[Feasible Interim Allocation Rule]
An interim allocation rule $\{ y_i(v_i) \}_{i \in [n], v_i \in V_i}$
is {\em feasible} if there exist nonnegative values for $\{
x_i(\vec{v}) \}_{i \in   [n], \vec{v} \in \vec{V}}$ such that
\[
\sum_{i=1}^n x_i(\vec{v}) \le 1
\] 
for every $\vec{v}$ (i.e., the
$x_i(\vec{v})$'s constitute a feasible allocation rule), and
\[
y_i(v_i) = \underbrace{\sum_{\vec{v}_{-i} \in \vec{V}_{-i}} \left(
    \prod_{j  \neq     i} f_j(v_j) \right) \cdot x_i(v_i,\vec{v}_{-i})}_{\expect[\vec{v}_{-i}]{x_i(v_i,\vec{v}_{-i})}}
\]
for every $i \in [n]$ and $v_i \in V_i$ (i.e., the intended semantics
are respected).
\end{definition}
In other words, the feasible interim allocation rules are exactly the
projections (onto the $y_i(v_i)$'s) of the feasible (ex post)
allocation rules.

The big question is: 
%can we express the interim feasibility constraint
%in terms of the reduced set of decision variables?
how can we translate interim feasibility 
into our new, more economical vocabulary?\footnote{In
  principle, we know this is possible.  The feasible (ex post)
  allocation rules form a polytope, the projection of a polytope is
  again a polytope, and every polytope can be described by a finite
  number of linear inequalities.  So the real question is
whether or not there's a
{\em computationally useful} description of interim feasibility.}
As we'll see, Border's theorem~\cite{B91} provides a crisp
and computationally useful solution.

\subsection{Examples}

To get a better feel for the issue of checking the feasibility of an
interim allocation rule, let's consider a couple of examples.
A necessary condition for interim feasibility is that the item is
awarded to at most one bidder in expectation (over the randomness in
the valuations and internal to the auction):
%Since each item $j$ is allocated to at most one bidder in every
%valuation profile, we certainly have the following necessary
%condition for $\iallocs$ to be feasible:
\begin{equation}\label{eq:nec}
\sum_{i=1}^n \underbrace{\sum_{v_i \in V_i} f_i(v_i)
  y_{i}(v_i) }_{\prob{\text{$i$ wins}}} \le 1.
\end{equation}
Could this also be a sufficient condition?  That is, is every
interim allocation rule $\{ y_i(v_i) \}_{i \in [n], v_i \in V_i}$ that
satisfies~\eqref{eq:nec} induced by a bona fide (ex post) allocation
rule?

\begin{example}\label{ex:ex1}
Suppose there are $n=2$ bidders.
Assume that $v_1,v_2$ are independent and each is equally likely
to be 1 or 2.
Consider the interim allocation rule given by 
\begin{equation}\label{eq:ex1}
y_1(1) = \tfrac{1}{2},
y_1(2) = \tfrac{7}{8},
y_2(1) = \tfrac{1}{8}, \text{ and }
y_2(2) = \tfrac{1}{2}.
\end{equation}
Since $f_i(v) = \tfrac{1}{2}$ for all $i=1,2$ and $v=1,2$,
the necessary condition in~\eqref{eq:nec} is satisfied.
Can you find an (ex post) allocation rule that induces this
interim rule?  Answering this question is
much like solving a Sudoku or KenKen puzzle---the goal is to fill in
the table entries in Table~\ref{t:blank} so that each row sums to at
most 1 (for feasibility) and that the
constraints~\eqref{eq:ex1} are satisfied.  For example, the average of
the top two entries in the first column of Table~\ref{t:blank} should
be $y_1(1) = \tfrac{1}{2}$.
In this example, there are a number of such solutions; one is shown in
Table~\ref{t:ex1}.  Thus, the given interim allocation rule is feasible.
%$\iallocs$ is feasible.
\end{example}

\begin{table}
\centering
\begin{tabular}{|c|c|c|}
\hline
$(v_1, v_2)$ & $x_1(v_1, v_2)$ & $x_2(v_1, v_2)$\\
\hline
$(1,1)$ & & \\
$(1,2)$ & & \\
$(2,1)$ & & \\
$(2,2)$ & & \\
\hline
\end{tabular}
\caption{Certifying feasibility of an interim allocation rule is
  analogous to filling in the table entries while respecting 
constraints on the sums of certain subsets of entries.}
\label{t:blank}
\end{table}

\begin{table}
\centering
\begin{tabular}{|c|c|c|}
\hline
$(v_1, v_2)$ & $x_1(v_1, v_2)$ & $x_2(v_1, v_2)$\\
\hline
$(1,1)$ & 1& 0\\
$(1,2)$ & 0& 1\\
$(2,1)$ & 3/4& 1/4\\
$(2,2)$ & 1& 0\\
\hline
\end{tabular}
\caption{One solution to Example~\ref{ex:ex1}.}
\label{t:ex1}
\end{table}

\begin{example}\label{ex:ex2}
Suppose we change the interim allocation rule to
\begin{equation*}%\label{eq:ex1}
y_1(1) = \tfrac{1}{4},
y_1(2) = \tfrac{7}{8},
y_2(1) = \tfrac{1}{8}, \text{ and }
y_2(2) = \tfrac{3}{4}.
\end{equation*}
The necessary condition~\eqref{eq:nec} remains satisfied.  Now,
however, the interim rule is not feasible.  One way to see this is to note
that $y_1(2) = \tfrac{7}{8}$ implies that $x_1(2,2) \ge \tfrac{3}{4}$ 
and hence $x_2(2,2) \le \tfrac{1}{4}$.  Similarly, $y_2(2) =
\tfrac{3}{4}$ implies that $x_2(2,2) \ge \tfrac{1}{2}$, a contradictory
constraint.
\end{example}

The first point of Examples~\ref{ex:ex1} and~\ref{ex:ex2} is that it
is not trivial to check whether or not a given interim allocation rule
is feasible---the problem corresponds to solving a big linear system of
equations and inequalities.  The second point is that~\eqref{eq:nec}
is not a sufficient condition for feasibility.  In hindsight, trying
to summarize the exponentially many ex post feasibility
constraints~\eqref{eq:feas1} with a single interim
constraint~\eqref{eq:nec} seems naive.  Is there a larger set of
linear constraints---possibly an exponential number---that
characterizes interim feasibility?

% Hwever, there is a problem.  Namely, when does a feasible solution to this new system~\eqref{eq:lp-small} correspond to a feasible solution of the original variables in~\eqref{eq:lp-big}?  This is reminiscent of the question of when there is a joint distribution consistent with a given set of marginals.  As an exercise, suppose $n=2$, $V_1=V_2=\{1,2\}$, and $f_1$ and $f_2$ are uniform.  Is there a set of allocations $\vec{x}(v_1,v_2)$ such that
%\begin{align*}
%y_1(1) &= \frac{x_1(1,1)+x_1(1,2)}{2} = \frac{1}{2} \, , \\
%y_1(2) &= \frac{x_1(2,1)+x_1(2,2)}{2} = \frac{7}{8} \, , \\
%y_2(1) &= \frac{x_2(1,1)+x_2(2,1)}{2} = \frac{1}{8} \, , \\
%y_2(2) &= \frac{x_2(1,2)+x_2(2,2)}{2} = \frac{1}{2} \, ?
%\end{align*}
%and which satisfy $x_1 \ge 0$, $x_2 \ge 0$, and $x_1+x_2 \le 1$ for all $v_1, v_2$?  Note that we don't object to fractional values of $x_i$; these correspond to a randomized allocation rule where $i$ wins with probability $x_i$.

\subsection{Border's Theorem}

Border's theorem states that a collection of ``obvious'' necessary
conditions for interim feasibility are also sufficient.
To state these conditions,
assume for notational convenience
that the valuation sets $V_1,\ldots,V_n$ are disjoint.\footnote{This is
without loss of generality, since we can simply ``tag'' each valuation
$v_i \in V_i$ with the ``name'' $i$ (i.e., view each $v_i \in V_i$
as the set $\{ v_i,i\}$).}
Let $\{ x_i(\vec{v}) \}_{i \in [n], \vec{v} \in \vec{V}}$
be a feasible (ex post) allocation rule and $\{ y_i(v_i) \}_{i \in
  [n], v_i \in V_i}$ the induced (feasible) interim allocation rule.
Fix for each bidder $i$ a set $S_i \sse V_i$ of valuations.
Call the valuations $\cup_{i=1}^n S_i$ the {\em distinguished}
valuations. 
Consider first the probability, over the random valuation profile
$\vec{v} \sim \vec{F}$ and any coin flips of the ex post allocation rule,
that the winner of the auction (if any) has a distinguished valuation.
By linearity of expectations, this probability can be expressed in
terms of the interim allocation rule:
\begin{equation}\label{eq:lhs}
\sum_{i=1}^n \sum_{v_i \in S_i} f_i(v_i) y_i(v_i).
\end{equation}
The expression~\eqref{eq:lhs} is linear in the $y_i(v_i)$'s.

The second quantity we study is the probability, over $\vec{v} \sim
\vec{F}$, that there is a bidder with a distinguished valuation.  This has
nothing to do with the allocation rule, and is a function of the prior
distributions only:
\begin{equation}\label{eq:rhs}
1 - \prod_{i=1}^n \left( 1 - \sum_{v_i \in S_i} f_i(v_i)  \right).
\end{equation}
Because there can only be a winner with a distinguished valuation if there is
a bidder with a distinguished valuation, the quantity in~\eqref{eq:lhs} can
only be less than~\eqref{eq:rhs}.  Border's theorem asserts that these
conditions, ranging over all choices of $S_1 \sse V_1,\ldots,S_n \sse
V_n$, are also sufficient for the feasibility of an interim allocation
rule.

\begin{theorem}[Border's theorem~\cite{B91}]\label{t:border}
An interim allocation rule $\{ y_i(v_i) \}_{i \in [n], v_i \in V_i}$
is feasible if and only if for 
every choice $S_1 \sse V_1,\ldots,S_n \sse V_n$ of distinguished
valuations,
\begin{equation}\label{eq:border}
\sum_{i=1}^n \sum_{v_i \in S_i} f_i(v_i)y_i(v_i)
\le
1 - \prod_{i=1}^n \left( 1 - \sum_{v_i \in S_i} f_i(v_i) \right).
\end{equation}
\end{theorem}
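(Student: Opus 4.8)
I would prove the two directions separately. Necessity is immediate from the probabilistic reading given just above Theorem~\ref{t:border}: if $\{y_i(v_i)\}$ is induced by a feasible ex post rule $\{x_i(\vec v)\}$ and $S_1 \sse V_1,\dots,S_n \sse V_n$ are fixed, then the left-hand side of~\eqref{eq:border} is the probability, over $\vec v \sim \vec F$ and the auction's internal randomness, that the winner (if any) has a valuation in $\bigcup_i S_i$, while the right-hand side is the probability that \emph{some} bidder draws a valuation in her $S_i$ (the product form of the complementary event uses independence of the $v_i$'s). The former event is contained in the latter, so~\eqref{eq:border} holds; this takes two or three lines. The substance is sufficiency, which I would obtain by reducing the existence of an ex post rule inducing a given interim rule to a max-flow/min-cut statement on a finite bipartite network, and then observing that the min-cut condition is literally the family~\eqref{eq:border}.

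\textbf{Sufficiency via a flow network.} Write $\pi(\vec v) = \prod_{j=1}^n f_j(v_j)$ for the probability of profile $\vec v$. An ex post rule inducing $\{y_i(v_i)\}$ exists if and only if there are numbers $w_i(\vec v) := \pi(\vec v)\,x_i(\vec v) \ge 0$ with $\sum_i w_i(\vec v) \le \pi(\vec v)$ for every $\vec v$, and $\sum_{\vec v_{-i} \in \vec V_{-i}} w_i(v_i,\vec v_{-i}) = f_i(v_i)\,y_i(v_i)$ for every $(i,v_i)$ with $f_i(v_i)>0$. I would encode this as an $s$--$t$ flow problem: a node $s$, one node per profile $\vec v \in \vec V$, one node per pair $(i,v_i)$, and a node $t$; capacity $\pi(\vec v)$ on $s \to \vec v$; capacity $+\infty$ on $\vec v \to (i,v_i)$ whenever the $i$th coordinate of $\vec v$ is $v_i$; and capacity $f_i(v_i)\,y_i(v_i)$ on $(i,v_i) \to t$. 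Flow conservation at $\vec v$ together with the bound $\pi(\vec v)$ on its incoming edge automatically gives $\sum_i x_i(\vec v) \le 1$ (so an explicit ``no sale'' option is unnecessary), and a feasible family $\{w_i(\vec v)\}$ is exactly an $s$--$t$ flow saturating every sink edge. Hence sufficiency is equivalent to the claim that the maximum flow value equals $F := \sum_{i=1}^n \sum_{v_i \in V_i} f_i(v_i)\,y_i(v_i)$.

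\textbf{The min-cut computation.} By max-flow/min-cut it suffices to show that, under Border's hypotheses, every $s$--$t$ cut has capacity at least $F$ (the max flow never exceeds $F$, since the sink edges carry total capacity $F$). Because the edges $\vec v \to (i,v_i)$ have infinite capacity, a finite cut must place every neighbor $(i,v_i)$ of a source-side profile $\vec v$ on the source side as well; so a finite cut is determined by the set $B$ of pairs placed on the sink side, every profile that \emph{hits} $B$ (some coordinate $(i,v_i)\in B$) is forced to the sink side, and moving every $B$-avoiding profile to the source side only decreases the capacity. For this canonical cut the capacity is
\[
\prob{\vec v \text{ hits } B} \;+\; \Big( F - \sum_{(i,v_i)\in B} f_i(v_i)\,y_i(v_i) \Big),
\]
so ``capacity $\ge F$'' reads $\sum_{(i,v_i)\in B} f_i(v_i)\,y_i(v_i) \le \prob{\vec v \text{ hits } B}$. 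Writing $S_i = \{v_i : (i,v_i)\in B\}$, the left side is $\sum_i \sum_{v_i \in S_i} f_i(v_i)\,y_i(v_i)$, and by independence $\prob{\vec v \text{ hits } B} = 1 - \prod_i\big(1 - \sum_{v_i \in S_i} f_i(v_i)\big)$. So ranging over all $B$ recovers exactly the inequalities~\eqref{eq:border}: Border's conditions hold iff the min cut is $\ge F$ iff the desired ex post rule exists.

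\textbf{Main obstacle.} Essentially all the work is in choosing the network so that feasible interim rules correspond to flows saturating the sink and so that minimal cuts have the clean combinatorial description above; once that is set up, the cut computation is a few lines. The two points I would be most careful about are (i) verifying that the infinite-capacity edges genuinely force that structure on minimal cuts, and (ii) the role of independence of the $v_i$'s, which is precisely what makes $\prob{\vec v \text{ hits } B}$ factor into the product on the right-hand side of~\eqref{eq:border} — the theorem fails without it. Finiteness of the $V_i$'s is what lets me invoke max-flow/min-cut (equivalently, LP duality or Hall's theorem for transportation problems) directly, with no limiting argument.
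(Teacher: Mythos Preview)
Your proposal is correct and follows essentially the same max-flow/min-cut argument as the paper. The only cosmetic difference is that the paper adds an explicit ``no winner'' node (with capacity $1 - F$ to $t$) so that the target flow value is~$1$, whereas you omit it and target the total sink capacity~$F$ directly; the two formulations are trivially equivalent, and your min-cut analysis matches the paper's.
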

Border's theorem can be derived from the max-flow/min-cut theorem
(following~\cite{B07,CKM13}); we include the proof in
Section~\ref{s:borderpf} for completeness.

% We can define a simple set of necessary conditions for feasibility.  First choose a subset $S_i \subseteq V_i$ of valuations for each $i$.  Say that bidder $i$ is \emph{special} if $v_i \in S_i$.  Then the probability that the winner is special is at most the probability that there is at least one special bidder.  This gives
%\begin{equation}
%\label{eq:special}
%\sum_i \sum_{v_i \in S_i} f_i(v_i) y_i(v_i) \le 1 - \prod_i \left( 1 - %\sum_{v_i \in S_i} f_i(v_i) \right) \, , 
%\end{equation}
%where on the left we used the fact that the events that $i$ wins are disjoint for different $i$, and on the right we used the independence of the $v_i$.  

Border's theorem yields an explicit description as a linear system of
the feasible interim allocation rules induced by BIC single-item auctions.
To review, this linear system is
\begin{align}
\label{eq:b1}
v_i \cdot y_i(v_i) - q_i(v_i)
&\ge
v_i \cdot y_i(v'_i) - q_i(v'_i)
& \forall i \text { and } v_i, v_i' \in V_i
\\
\label{eq:b2}
v_i \cdot y_i(v_i) - q_i(v_i)
&\ge 0
& \forall i \text { and } v_i \in V_i
\\
\label{eq:b3}
\sum_{i=1}^n \sum_{v_i \in S_i} f_i(v_i) y_i(v_i)
&\le
1 - \prod_{i=1}^n \left( 1 - \sum_{v_i \in S_i} f_i(v_i)
\right)
& \forall S_1 \sse V_1,\ldots,S_n \sse V_n.
\end{align}
For example, optimizing the objective function
\begin{equation}\label{eq:rev}
\max \sum_{i=1}^n f_i(v_i) \cdot q_i(v_i)
\end{equation}
over the linear system~\eqref{eq:b1}--\eqref{eq:b3} computes the
expected revenue of an optimal BIC single-item auction for the
distributions $F_1,\ldots,F_n$.

% Border's Theorem states that these conditions are sufficient as well.  Since it is easy to check that~\eqref{eq:special} holds for any particular choice of subsets $\{S_i\}$, this implies that feasibility is in $\coNP$.  Note the similarity with Hall's Theorem: the \textsc{Bipartite Matching} problem corresponds to a linear program of exponential size, but it has an $\NP$ separation oracle.  Since the expected revenue can be written as, say, 
%\[
%\sum_{i,v_i} f_i(v_i) q_i(v_i) \, , 
%\]
%we can compute the maximum expected revenue in $\PP^\NP$ using the ellipsoid %method.

The linear system~\eqref{eq:b1}--\eqref{eq:b3} has only a polynomial
number of variables (assuming the $|V_i|$'s are polynomially bounded),
but it does have an exponential number of constraints of the form~\eqref{eq:b3}. 
One solution is to use the ellipsoid method, as the linear system does
admit a polynomial-time separation
oracle~\cite{A+19,CDW12}.\footnote{This
  is not immediately obvious, as the max-flow/min-cut argument in
  Section~\ref{s:borderpf} involves an exponential-size graph.}
Alternatively, \citet{A+19} provide a polynomial-size extended
formulation of the polytope of feasible interim allocation rules (with
a polynomial number of additional decision variables and only
polynomially many constraints).  In any case, we conclude that there
is a computationally tractable description of the feasible interim
allocation rules of BIC single-item auctions.

\section{Beyond Single-Item Auctions: A Complexity-Theoretic Barrier}

Myerson's theory of optimal auctions (Section~\ref{ss:m81}) extends
beyond single-item auctions to all ``single-parameter'' settings (see
footnote~\ref{foot:sparam} for discussion and Section~\ref{ss:pp} for two examples).  Can Border's
theorem be likewise extended?  There are analogs of Border's theorem
in settings modestly more general than single-item auctions, including
$k$-unit auctions with unit-demand bidders~\cite{A+19,CDW12,CKM13},
and approximate versions of Border's theorem exist fairly
generally~\cite{CDW12,CDW12b}.  Can this state-of-the-art be improved
upon?  
%Can we provide computationally useful exact extensions of Border's theorem?  
We next use complexity theory to develop evidence for a negative answer.
\begin{theorem}[\citet{GNR18}]\label{t:gnr15}
(Informal) There is no exact Border's-type theorem for settings
significantly more general than the known special cases (unless $\PH$
collapses). 
\end{theorem}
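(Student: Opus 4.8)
The plan is to reformulate ``Border's-type theorem'' as a statement about the complexity of a single concrete decision problem, and then to exhibit a modestly generalized auction environment for which that problem is $\sharpP$-hard.

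First I would fix the right notions. Generalize single-item auctions to single-parameter settings specified by a downward-closed family $\mathcal{F} \subseteq 2^{[n]}$ of feasible winner-sets, presented either explicitly or by a polynomial-time membership oracle; this captures $k$-unit auctions, matroid environments, matching/assignment environments, and so on. As in the single-item case, an interim allocation rule $\{y_i(v_i)\}$ is \emph{feasible} for $(\mathcal{F},F_1,\dots,F_n)$ if it is the projection of some ex post rule whose per-profile support lies in $\mathcal{F}$; call the associated decision problem $\textsc{IntFeas}(\mathcal{F})$. I will say that $\mathcal{F}$ admits a \emph{Border's-type theorem} if its feasible interim region is cut out exactly by a family of linear inequalities in the $y_i(v_i)$'s, each of which is verifiable in polynomial time given $(F_1,\dots,F_n)$ and a succinct index for the inequality --- equivalently, if $\textsc{IntFeas}(\mathcal{F})$ has polynomial-size, polynomial-time-checkable certificates of infeasibility, i.e.\ $\textsc{IntFeas}(\mathcal{F}) \in \coNP$. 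Theorem~\ref{t:border} and its known extensions show that the single-item, $k$-unit, and unit-demand settings all satisfy this, in fact with a genuine polynomial-time separation oracle, so the definition is not vacuous. The target, in contrapositive form, is then: if $\textsc{IntFeas}(\mathcal{F})$ is $\sharpP$-hard under polynomial-time Turing reductions for some such ``simple'' $\mathcal{F}$, then no Border's-type theorem for $\mathcal{F}$ can exist unless $\PH$ collapses. Indeed, combining $\sharpP$-hardness with $\textsc{IntFeas}(\mathcal{F}) \in \coNP$ and the standard fact (Toda's theorem) that $\PH \subseteq \ptime^{\sharpP}$ yields $\PH \subseteq \ptime^{\coNP} = \Delta_2^p$, hence the collapse; under the stronger formalization where a Border's-type theorem gives a polynomial-time separation oracle, one gets $\textsc{IntFeas}(\mathcal{F}) \in \ptime$ via the ellipsoid method and hence the even stronger collapse $\ptime = \ptime^{\sharpP}$.

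The main work --- and the main obstacle --- is constructing $\mathcal{F}$ and proving $\sharpP$-hardness of $\textsc{IntFeas}(\mathcal{F})$. I would reduce from a canonical $\sharpP$-complete problem, e.g.\ computing the permanent of a $0/1$ matrix (counting perfect matchings in a bipartite graph). The guiding intuition mirrors the right-hand side of \eqref{eq:border}: in the single-item case the quantity that governs interim feasibility, $1 - \prod_{i}(1 - \sum_{v_i \in S_i} f_i(v_i))$, is a simple inclusion--exclusion over an \emph{independent} product distribution, whereas over a combinatorial $\mathcal{F}$ the corresponding ``how much probability mass can feasibly be routed onto distinguished valuations'' quantity becomes a sum over feasible configurations of $\mathcal{F}$, and for a matching-type $\mathcal{F}$ that sum is exactly a permanent. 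Concretely, given a bipartite graph I would build an environment whose feasible winner-sets encode its matchings, choose prior distributions so that the feasibility question for one carefully chosen interim rule --- located by binary search on a threshold parameter, which is why the reduction is Turing rather than many-one --- pins down the number of perfect matchings, and then invoke the equivalence of separation and optimization so that efficient separation over the interim polytope would evaluate the permanent. Care is required to keep $\mathcal{F}$ a genuinely small generalization of the tractable cases (otherwise the theorem degenerates into an obvious remark about intractable environments), to discretize the supports $V_i$ with only polynomially many points of polynomially bounded bit-complexity, and to check that the reduction makes only polynomially many oracle calls. The delicate point throughout is lining up the two ``borders'': making the environment rich enough that interim feasibility is counting-hard, yet tame enough that one would have expected a Border-style linear description, so that the impossibility is sharp and surprising.
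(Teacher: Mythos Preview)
Your approach is structurally correct and close to the paper's, but the paper routes through a different intermediate problem and this difference is worth noting.

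You propose to show that $\textsc{IntFeas}(\mathcal{F})$ itself is $\sharpP$-hard (via binary search on a parametric family of interim rules), and then derive the collapse from $\textsc{IntFeas}\in\coNP$. The paper instead takes one more step: from $\textsc{IntFeas}\in\coNP$ it uses the ellipsoid method to conclude that \emph{computing the maximum expected revenue of a feasible BIC auction} lies in $\ptime^{\NP}$, and it is this optimal-revenue problem that it shows to be $\sharpP$-hard. The two formulations are interreducible by optimization/separation duality, so neither is wrong; but the revenue formulation buys cleaner reductions, since the optimal revenue is a single scalar that one can arrange to equal (a known function of) a $\sharpP$-complete count, avoiding the need to manufacture a parametric family of interim points and binary-search over their feasibility.

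One genuine narrowing in your setup: by restricting to downward-closed winner-set families $\mathcal{F}\subseteq 2^{[n]}$ you exclude the paper's \emph{public project} environment, where feasibility forces $x_1(\vec v)=\cdots=x_n(\vec v)$ (the feasible ``winner sets'' are $\{\emptyset,[n]\}$, not downward-closed). The paper treats public projects alongside the matching environment, reducing from counting Knapsack solutions rather than the permanent; this is arguably the more striking example, since Myerson's optimal-auction theory still applies and the optimal auction is even polynomial-time implementable --- only computing \emph{how much revenue it earns} is $\sharpP$-hard. Your matching-based plan corresponds to the paper's second example and is fine as far as it goes, but you should drop the downward-closed assumption (or at least note that it is not needed) so as not to rule out the simplest setting in which the theorem bites.
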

We proceed to defining what we mean by ``significantly more general''
and a ``Border's-type theorem.''

\subsection{Two Example Settings}\label{ss:pp}

The formal version of Theorem~\ref{t:gnr15} conditionally rules out
``Border's-type theorems'' for several specific settings that are
representative of what a more general version of Border's theorem
might cover.  We mention two of these here (more are in~\cite{GNR18}).

In a {\em public project} problem, there is a binary decision to make:
whether or not to undertake a costly project (like building a new
school).  
Each bidder~$i$ has a private valuation~$v_i$ for the outcome where
the project is built, and valuation~0 for the outcome where it is not.
If the project is built, then everyone can use it.
In this setting, feasibility means that all bidders receive the
same allocation: $x_1(\vec{v}) = x_2(\vec{v}) = \cdots = x_n(\vec{v}) \in
[0,1]$ for every valuation profile~$\vec{v}$.

In a {\em matching} problem, there is a set $M$ of items, and each
bidder is only interested in receiving a specific pair~$j,\ell \in M$
of items.  (Cf., the AND bidders of the preceding lecture.)  For each
bidder, the corresponding pair of items is common knowledge, while the
bidder's valuation for the pair is private as usual.  Feasible
outcomes correspond to (distributions over) matchings in the graph
with vertices~$M$ and edges given by bidders' desired pairs.

The public project and matching problems are both ``single-parameter''
problems (i.e., each bidder has only one private parameter).  As such,
Myerson's optimal auction theory (Section~\ref{ss:m81}) can be used to
characterize the expected revenue-maximizing auction.  Do these
settings also admit analogs of Border's theorem?

\subsection{Border's-Type Theorems}

What do we actually mean by a ``Border's-type theorem?''  
Because we aim to prove impossibility results, we should adopt a
definition that is as permissive as possible.
Border's theorem (Theorem~\ref{t:border}) gives a characterization of
the feasible interim allocation rules of a single-item auction as
the solutions to a finite system of linear inequalities.  This by
itself is not impressive---the set is a polytope, and 
as such is guaranteed to have such a characterization.
The appeal of Border's theorem is that the
characterization uses only the ``nice''
linear inequalities in~\eqref{eq:border}.
Our ``niceness'' requirement is that the characterization use only
linear inequalities that can be efficiently recognized and tested.
This is a weak necessary condition for such a characterization to be
computationally useful.

\begin{definition}[Border's-Type Theorem]\label{d:gbt}
A {\em
  Border's-type theorem} holds for an auction design setting if, for every
instance of the setting (specifying the number of bidders and
their prior distributions, etc.),
there is a system of linear inequalities such that the following
properties hold.
\begin{enumerate}

\item (Characterization) The feasible solutions of the linear system
  are precisely the feasible interim allocation rules of the instance.

\item (Efficient recognition) There is a polynomial-time algorithm
  that can decide whether or not a given linear inequality (described
  as a list of coefficients) belongs to
  the linear system.

\item (Efficient testing)
The bit complexity of each linear inequality is polynomial in the
description of the instance.
(The number of inequalities can be exponential.)

%\item (Efficient testing) There is a polynomial-time algorithm that,
%  given as input a 
% description of an environment $\env \in Setting$,
%an encoding of a linear inequality of $\L(\env)$, 
%and a point $\x \in \RR^n$, decides whether or not the point satisfies
%the inequality.

\end{enumerate}
\end{definition}
For example, consider the original Border's theorem, for single-item 
auctions (Theorem~\ref{t:border}).
%The right-hand side of an inequality of the form~\eqref{eq:border} can
%be computed (from the description of $\env$) in polynomial time,
%and each inequality has polynomial description length.
The recognition problem is straightforward: the left-side
of~\eqref{eq:border} encodes the $S_i$'s, from which the right-hand
side can be computed and checked in polynomial time.
It is also evident that every inequality
in~\eqref{eq:border} has a polynomial-length description.%
%The testing problem is trivial: 
%given a linear inequality of the form~\eqref{eq:border}
%and an alleged feasible interim allocation
%rule $\iallocs$, one just computes and compares the two sides of the
%inequality.
\footnote{The characterization in Theorem~\ref{t:border} and the
  extensions in~\cite{A+19,CDW12,CKM13} have additional features not
  required or implied by Definition~\ref{d:gbt}, such as a
  polynomial-time separation oracle (and even a compact extended
  formulation in the single-item case~\cite{A+19}).  The impossibility
  results in Section~\ref{ss:imp} rule out analogs of Border's theorem
  that merely satisfy Definition~\ref{d:gbt}, let alone these stronger
  properties.}

\subsection{Consequences of a Border's-Type Theorem}

The high-level idea behind the proof of Theorem~\ref{t:gnr15} is to
show that a Border's-type theorem puts a certain computational problem
low in the polynomial hierarchy, and then to show that this problem is
$\sharpP$-hard for the public project and matching settings defined in
Section~\ref{ss:pp}.\footnote{Recall that Toda's theorem~\cite{toda}
  implies that a $\sharpP$-hard problem is contained in the polynomial
  hierarchy only if $\PH$ collapses.}  The computational problem is:
given a description of an instance (including the prior
distributions), compute the maximum-possible expected revenue that can
be obtained by a feasible and BIC auction.\footnote{Sanity check: this
  problem turns out to be polynomial-time solvable in the setting of  single-item auctions~\cite{GNR18}.}

What use is a Border's-type theorem?  For starters,
it implies that the problem of
testing the feasibility of an interim allocation rule is 
in $\coNP$.
%\footnote{A \gbt does not obviously imply an efficient
%  nondeterministic separation oracle, since it's not clear how to verify
%  that a interim allocation rule without checking all 
To prove the infeasibility of such a rule,
one simply exhibits an inequality of the characterizing linear system
that the rule fails to satisfy.
Verifying this failure
reduces to the recognition and testing problems, which
by Definition~\ref{d:gbt} are polynomial-time
solvable.
\begin{proposition}\label{prop:conp}
If a Border's-type theorem holds for an auction design setting,
then the membership problem for the polytope of feasible interim
allocation rules belongs to $\coNP$.
\end{proposition}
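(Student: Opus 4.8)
The plan is to prove the contrapositive-flavored statement that the \emph{complement} of the membership problem --- namely, certifying that a given (rational) interim allocation rule is \emph{infeasible} --- lies in $\NP$, which is exactly the assertion that membership is in $\coNP$. First I would fix an arbitrary instance of the auction design setting (the number of bidders, the supports $V_i$, the probabilities $f_i(v_i)$, all of polynomially bounded description) and invoke the hypothesized Border's-type theorem to obtain the associated linear system; call it $\mathcal{L}$. By the characterization property in Definition~\ref{d:gbt}, an interim allocation rule $\{y_i(v_i)\}_{i,v_i}$ is feasible precisely when it satisfies every inequality of $\mathcal{L}$, so it is infeasible precisely when it \emph{violates} at least one inequality of $\mathcal{L}$.

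This immediately suggests the $\NP$ witness for infeasibility: a single violated inequality of $\mathcal{L}$, presented as its list of coefficients. The verifier then performs two checks. (i) It runs the polynomial-time recognition algorithm guaranteed by property~2 of Definition~\ref{d:gbt} to confirm that the proffered inequality genuinely belongs to $\mathcal{L}$. (ii) It substitutes the input rule into that inequality and checks by direct arithmetic that the inequality is violated. The verifier accepts (``infeasible'') iff both checks succeed. Soundness and completeness are then exactly the characterization property of Definition~\ref{d:gbt}: if the rule is infeasible there is some violated inequality of $\mathcal{L}$ --- and by property~3 it has bit complexity polynomial in the instance size, hence is a legitimate polynomial-length witness --- while if the rule is feasible it violates no inequality of $\mathcal{L}$, so no candidate can pass check~(ii).

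I do not expect a serious obstacle here, since the argument is essentially a direct unwinding of Definition~\ref{d:gbt}; the only thing that requires care is the bookkeeping on parameters. One must confirm that the recognition algorithm of property~2 runs in time polynomial in the size of the instance \emph{and} of the candidate inequality (not in the number of inequalities, which may be exponential), that property~3 indeed bounds the bit complexity of each inequality polynomially in the instance description, and that the input interim allocation rule is itself specified with polynomially bounded bit complexity --- so that writing down the witness, running the recognition test, and evaluating the violation all fit within the polynomial time budget of the $\NP$ machine. With these routine size estimates in hand, the $\NP$ upper bound for infeasibility, and hence the $\coNP$ upper bound for membership, follows.
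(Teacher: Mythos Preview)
Your proposal is correct and follows essentially the same approach as the paper: the paper's argument is precisely that a violated inequality of the characterizing system serves as a certificate of infeasibility, with verification reducing to the recognition and testing properties of Definition~\ref{d:gbt}. Your write-up is, if anything, more explicit about the parameter bookkeeping than the paper's one-paragraph sketch.
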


% We now forge a connection between the existence of \gbts and the
% computational complexity of natural optimization problems.
% For a setting $\Setting$, the $\optrev(\Setting)$ problem is: given a
% description of an environment $\env \in \Setting$, compute the
% expected revenue earned by the optimal BIC and IIR mechanism.
% The main result of this section shows that a \gbt exists for a setting
% only when it is relatively tractable to solve exactly
% the $\optrev$ problem.

Combining Proposition~\ref{prop:conp} with the ellipsoid method puts
the problem of computing the maximum-possible expected revenue in
$\ptime^{\NP}$.

\begin{theorem}\label{t:main}
  If a Border's-type theorem holds for an auction design setting, then
  the maximum expected revenue of a feasible BIC auction can be
  computed in $\ptime^{\NP}$.
\end{theorem}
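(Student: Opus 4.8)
The plan is to recognize the maximum expected revenue of a feasible BIC auction as the optimal value of an explicit but exponentially large linear program, and then to solve that linear program by the ellipsoid method equipped with a separation oracle that we implement in $\ptime^{\NP}$.

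First I would fix an instance of the setting and recall the linear program whose optimum is the desired quantity. Its decision variables are the interim allocation rule $\{y_i(v_i)\}$ and interim payment rule $\{q_i(v_i)\}$, of which there are only polynomially many (using the standing assumption that the $|V_i|$'s are polynomially bounded). Its constraints are the BIC inequalities~\eqref{eq:b1} and the IIR inequalities~\eqref{eq:b2} --- polynomially many in total, each of polynomial bit complexity --- together with the requirement that $\{y_i(v_i)\}$ be a \emph{feasible} interim allocation rule; its objective is the linear revenue functional~\eqref{eq:rev}, whose coefficients $f_i(v_i)$ are given in the instance. Call the resulting polytope $P$. It is nonempty (the all-zero rule lies in it) and contained in a box of polynomial bit complexity (each $y_i(v_i)\in[0,1]$ and each $q_i(v_i)$ is bounded by the largest valuation), so by the standard Gr\"otschel--Lov\'asz--Schrijver machinery it suffices to exhibit a separation oracle for $P$: from such an oracle the ellipsoid method produces, in time polynomial in the instance size and the number of oracle calls, an \emph{exact} optimal vertex of $P$, because every facet-defining inequality of $P$ and the objective have polynomial bit complexity.

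Next I would build the separation oracle, and this is where the $\NP$ oracle is used. Given a candidate point $(y,q)$, checking the polynomially many explicit constraints~\eqref{eq:b1}--\eqref{eq:b2} takes polynomial time and returns one of them if violated. To test whether $\{y_i(v_i)\}$ is interim-feasible, invoke the hypothesized Border's-type theorem: the feasible rules are exactly the solutions of a linear system $\mathcal{L}$ each of whose inequalities has polynomial bit complexity (efficient testing) and membership in which is decidable in polynomial time (efficient recognition). Hence the predicate ``some inequality of $\mathcal{L}$ is violated by $y$'' is in $\NP$ --- a witness is the coefficient list of such an inequality, of polynomial length, verified to lie in $\mathcal{L}$ by the recognition algorithm and checked against $y$ in polynomial time --- which is precisely Proposition~\ref{prop:conp}. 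One $\NP$ query decides interim feasibility; if $y$ is infeasible, I recover an actual violated inequality by prefix search, fixing the bits of its description one at a time and asking the $\NP$ oracle at each step whether some violated inequality of $\mathcal{L}$ extends the current prefix (still an $\NP$ question, and the search terminates by the polynomial bound on inequality length). This costs polynomially many further $\NP$ queries and yields a valid separating hyperplane. Thus every separation call runs in $\ptime^{\NP}$; since the ellipsoid method makes polynomially many such calls before isolating the optimal vertex of $P$, the whole computation lies in $\ptime^{\NP}$, and evaluating~\eqref{eq:rev} at that vertex gives the maximum expected revenue.

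The main obstacle is bridging the gap between the purely syntactic guarantees of Definition~\ref{d:gbt} --- efficient recognition of, and polynomial bit complexity for, individual inequalities --- and the semantic object the ellipsoid method needs, namely a separation oracle that produces a violated inequality on demand. Definition~\ref{d:gbt} deliberately does not hand us such an oracle (a polynomial-time one would typically collapse the whole problem into $\ptime$), so it must be synthesized by search-to-decision against the $\NP$ oracle as above; one also has to check that all bit complexities align so that the ellipsoid method returns an exact optimum rather than an approximation, and dispose of the usual degeneracies (lower-dimensional or very thin $P$) by the standard perturbation-and-rounding arguments. These are routine once the separation oracle is in place, but they are exactly the points that need care.
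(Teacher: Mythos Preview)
Your proposal is correct and follows essentially the same approach as the paper: set up the interim LP with variables $y_i(v_i),q_i(v_i)$, the constraints~\eqref{eq:b1}--\eqref{eq:b2}, and the Border's-type feasibility system, then run the ellipsoid method with a separation oracle implemented in $\ptime^{\NP}$ via Proposition~\ref{prop:conp}. The paper handles the decision-to-search issue for producing a violated inequality by citing~\cite[P.189]{S86} in a footnote, whereas you spell out the prefix-search argument explicitly; otherwise the arguments coincide.
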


\begin{proof}
  We compute the optimal expected revenue of a BIC auction via linear
  programming, as follows.  The decision variables are the same
  $y_i(v_i)$'s and $q_i(v_i)$'s as
  in~\eqref{eq:b1}--\eqref{eq:b3}, and we retain the BIC
  constraints~\eqref{eq:b1} and the IIR constraints~\eqref{eq:b2}.  By
  assumption, we can replace the single-item interim feasibility
  constraints~\eqref{eq:b3} with a linear system that satisfies the
  properties of Definition~\ref{d:gbt}.
%, where by assumption and Proposition~\ref{prop:conp} the \mem
%problem can be solved in polynomial time.  
  The maximum expected revenue of a feasible BIC auction can then be
  computed by optimizing a linear objective function (in the
  $q_i(v_i)$'s, as in~\eqref{eq:rev}) subject to these constraints.
  Using the ellipsoid method~\cite{K79}, this can be accomplished with
  a polynomial number of invocations of a separation oracle (which
  either verifies feasibility or exhibits a violated constraint).
  Proposition~\ref{prop:conp} implies that we can implement this
  separation oracle in $\coNP$, and thus compute the maximum expected
  revenue of a BIC auction in~$\ptime^{\NP}$.\footnote{One detail:
    Proposition~\ref{prop:conp} only promises solutions to the
    ``yes/no'' question of feasibility, while a separation oracle
    needs to produce a violated constraint when given an infeasible
    point.  But under mild conditions (easily satisfied here), an
    algorithm for the former problem can be used to solve the latter
    problem as well~\cite[P.189]{S86}.}
\end{proof}

\subsection{Impossibility Results from Computational Intractability}\label{ss:imp}

Theorem~\ref{t:main} concerns the problem of computing the maximum
expected revenue of a feasible BIC auction, given a description of an
instance.  It is easy to classify the complexity of this problem in
the public project and matching settings introduced in
Section~\ref{ss:pp} (and several other settings, see~\cite{GNR18}).

\begin{proposition}\label{prop:pp}
Computing the maximum expected revenue of a feasible BIC auction of a
public project instance is a $\sharpP$-hard problem.
\end{proposition}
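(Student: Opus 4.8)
The plan is to exhibit a polynomial-time Turing reduction from $\#\textsc{Knapsack}$ --- given positive integers $w_1,\dots,w_n$ and a bound $W$, count the subsets $S\subseteq[n]$ with $\sum_{i\in S}w_i\le W$, which is $\sharpP$-complete --- to the problem of computing the maximum expected revenue of a feasible BIC public-project auction.

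The starting point is what Myerson's single-parameter theory (Section~\ref{ss:m81}) says about this quantity. For any BIC auction with allocation rule $\vec{x}$, the expected revenue equals the expected virtual surplus $\expect[\vec{v}]{\sum_i \varphi_i(v_i)\,x_i(\vec{v})}$, where $\varphi_i$ is bidder $i$'s discrete virtual-value function, and conversely every monotone allocation rule is implementable by a BIC auction via the Myerson payment formula (which also respects the IIR constraints when the lowest type in each support has value $0$). In the public-project setting, feasibility forces $x_i(\vec{v})=x(\vec{v})$ for all $i$, so the revenue is $\expect{x(\vec{v})\cdot\Phi(\vec{v})}$ with $\Phi(\vec{v})=\sum_i\varphi_i(v_i)$. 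If the prior distributions are \emph{regular}, each $\varphi_i$ is nondecreasing, so the pointwise-optimal rule $x(\vec{v})=\mathbf{1}[\Phi(\vec{v})>0]$ is monotone and hence BIC-implementable. Therefore the maximum expected revenue of a public-project instance is exactly $\expect[\vec{v}]{\max\{0,\Phi(\vec{v})\}}$, and it remains to encode $\#\textsc{Knapsack}$ into such an expectation.

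First I would normalize the knapsack instance: if $W\ge\sum_j w_j$ the answer is $2^n$; otherwise, using the identity $\sum_{i\in S}w_i\le W\iff\sum_{i\notin S}w_i\ge\sum_j w_j-W$ together with doubling all weights and the bound, one obtains an equivalent instance in which all weights are even and $W+1\le\tfrac12\sum_j w_j$. Now build the following public-project instance. Each knapsack weight $w_i$ gives a ``main'' bidder with the two-point valuation distribution $\Pr[v_i=w_i]=\Pr[v_i=0]=\tfrac12$; the discrete virtual-value formula yields $\varphi_i(w_i)=w_i$ and $\varphi_i(0)=-w_i$. Add one deterministic ``threshold'' bidder whose valuation is always $c_0:=\sum_j w_j-2W\ge 0$, so its virtual value is the constant $c_0$. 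If $S$ denotes the (uniformly random) set of main bidders who draw their high value, then $\Phi=c_0+\sum_{i\in S}w_i-\sum_{i\notin S}w_i=2\big(\sum_{i\in S}w_i-W\big)$, so the optimal revenue of this instance equals $2^{1-n}g(W)$, where $g(W):=\sum_{S}\max\{0,\sum_{i\in S}w_i-W\}$. Hence one oracle call recovers $g(W)$ exactly, and another (on the instance with bound $W+1$, whose constant is still nonnegative by the normalization) recovers $g(W+1)$. Since $g(W)-g(W+1)=|\{S:\sum_{i\in S}w_i>W\}|$, the knapsack count is $N=2^n-(g(W)-g(W+1))$. All valuations and probabilities have polynomially many bits, so this is a polynomial-time Turing reduction, proving $\sharpP$-hardness.

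I expect the main work to be in two places. The first is verifying that Myerson's machinery applies cleanly: the public-project setting is genuinely single-parameter, its feasibility constraint ($x_1=\dots=x_n$) is exactly what collapses virtual-surplus maximization to the single pointwise comparison ``$\Phi>0$,'' and restricting to regular two-point and deterministic distributions sidesteps ironing and keeps the revenue-equals-virtual-surplus identity exact (the boundary term vanishes because each support's lowest type has value $0$, and the one-type threshold bidder's revenue $c_0\Pr[\text{built}]$ matches $\expect{\varphi_0 x}$). The second is the elementary but fiddly bookkeeping in the normalization step --- the complementation trick, the doubling to remove parity ties between ``$\le W$'' and ``$<W+1$,'' and checking that the threshold constant $c_0$ stays nonnegative for both oracle queries (with the degenerate case $c_0=0$ handled by simply omitting that bidder). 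Everything else --- the two-point virtual-value computation, the linearity giving $\text{OPT}=2^{1-n}g(W)$, and the finite difference $g(W)-g(W+1)$ --- is routine.
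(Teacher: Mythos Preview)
Your proposal is correct and follows essentially the same approach the paper indicates: a reduction from $\#\textsc{Knapsack}$ that exploits Myerson's characterization, which in the public-project setting collapses optimal revenue to $\expect{\max\{0,\sum_i\varphi_i(v_i)\}}$, i.e., a threshold (Chow-parameter) computation over the hypercube. The paper only sketches this in one sentence plus a footnote; your two-point virtual-value gadget, deterministic threshold bidder, and finite-difference trick $g(W)-g(W+1)$ are a clean way to fill in the details, and your checks on regularity, monotonicity, nonnegative payments, and the normalization ensuring $c_0\ge 0$ for both oracle calls are all sound.
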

Proposition~\ref{prop:pp} is a straightforward reduction from the
$\sharpP$-hard problem of computing the number of feasible solutions
to an instance of the \textsc{Knapsack} problem.\footnote{An aside for
  aficionados of the analysis of Boolean functions:
  Proposition~\ref{prop:pp} is essentially equivalent to the
  $\sharpP$-hardness of checking whether or not given Chow parameters
  can be realized by some bounded function on the hypercube.
  See~\cite{GNR18} for more details on the surprisingly strong
  correspondence between Myerson's optimal auction theory (in the
  context of public projects) and the analysis of Boolean functions.}

\begin{proposition}\label{prop:match}
Computing the maximum expected revenue of a feasible BIC auction of a
matching instance is a $\sharpP$-hard problem.
\end{proposition}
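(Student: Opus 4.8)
The plan is to combine Myerson's single-parameter optimal auction theory with a reduction from a canonical $\sharpP$-complete counting problem. Since the matching setting is single-parameter (footnote~\ref{foot:sparam}), Myerson's theory applies, and the maximum expected revenue of a feasible BIC auction equals the expected optimal ironed virtual welfare, $\mathbb{E}_{\vec v \sim \vec F}\!\left[\max_{x \in P} \sum_{i} \bar{\varphi}_i(v_i)\, x_i\right]$, where $P$ is the feasibility polytope of the setting and $\bar{\varphi}_i$ is the ironed virtual value function of $F_i$. In a matching instance, $P$ is the matching polytope of the graph $H$ whose vertices are the items and whose edges are the bidders' desired pairs. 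Since $P$ is the convex hull of the indicator vectors of all matchings of $H$ (partial matchings included), maximizing the linear objective over $P$ is the same as computing a maximum-weight matching of $H$ with edge weights $\bar{\varphi}_i(v_i)$, where negative-weight edges are simply dropped. So, modulo the polynomial-time computation inside the expectation, the quantity in Proposition~\ref{prop:match} is exactly $\mathbb{E}_{\vec v}[\textsc{MaxWtMatching}(H, \bar{\varphi}(\vec v))]$, and all of the hardness lives in the exponential sum over valuation profiles.

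Next I would pick the bidders' distributions so that this expectation exposes a hard counting problem. Start from a bipartite graph $H$ and, for each edge-bidder $e$, use the two-point valuation distribution that puts probability $p$ on $1$ and $1-p$ on $0$. A direct computation of discrete virtual values shows this distribution is regular — so no ironing is needed — with virtual value $1$ at the valuation $1$ and $-p/(1-p) < 0$ at the valuation $0$. Hence in the optimal auction only ``active'' edges (whose bidder drew valuation $1$) are ever allocated, and each active edge has weight exactly $1$, so $\textsc{MaxWtMatching}(H, \bar{\varphi}(\vec v)) = \nu(H_S)$, the matching number of the subgraph $H_S$ on the active edge set $S \subseteq E(H)$. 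Therefore an oracle for Proposition~\ref{prop:match} computes $P_H(p) := \mathbb{E}_S[\nu(H_S)] = \sum_{S \subseteq E(H)} p^{|S|}(1-p)^{|E(H)| - |S|}\, \nu(H_S)$, a univariate polynomial in $p$ of degree at most $|E(H)|$. Evaluating it at $|E(H)| + 1$ distinct rational values of $p$ (one auction instance each) and Lagrange-interpolating recovers all coefficients of $P_H$; combining this with the analogous computations on minors of $H$ obtained by deleting vertices (which caps the matching number and changes only the ``lower-order'' terms in a controlled way) should let me isolate the coefficient that counts perfect matchings of $H$, i.e.\ $\mathrm{perm}(H)$. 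Since computing the permanent of a $0/1$ matrix is $\sharpP$-complete (Valiant), this yields $\sharpP$-hardness of the revenue problem, and the constructed distributions plainly have polynomial-size descriptions, so the reduction runs in polynomial time.

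The main obstacle is precisely this last decoupling step: the optimal auction computes a \emph{maximum} over matchings, not a sum, so writing $\nu(H_S) = \sum_{k \ge 1} \mathbf{1}[\nu(H_S) \ge k]$ reveals that $P_H(p)$ entangles the $k$-matching counts $m_k(H)$ with contributions from subsets $S$ of small matching number, and peeling these off for large $k$ amounts to counting small-support edge subsets by matching number — not obviously tractable. I would attack this via finite differences and interpolation in the global parameter $p$ together with auxiliary instances on vertex-deleted minors of $H$ (to suppress large matchings and remove the pollution inductively); if that proves unwieldy, the fallback is to emulate the public-project argument behind Proposition~\ref{prop:pp} by engineering a gadget graph in which the feasibility of matchings simulates a knapsack constraint, so that the expected optimal virtual welfare directly reads off a $\sharpP$-hard quantity such as $\#\{I : \sum_{i \in I} a_i \ge B\}$. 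Either route, once carried through with care about the epsilons and the bit complexity of the distributions, completes the proof of Proposition~\ref{prop:match}.
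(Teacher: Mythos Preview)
Your setup is right and matches the paper's intent: Myerson reduces optimal BIC revenue to the expected maximum-weight matching with virtual-value edge weights, and with your two-point distributions this becomes $P_H(p)=\mathbb{E}[\nu(H_S)]$. The paper itself gives no argument beyond the sentence ``a straightforward reduction from the $\sharpP$-hard \textsc{Permanent} problem,'' deferring everything to \cite{GNR18}; so your target (Permanent) is the intended one.

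The gap is exactly where you flag it, and it is real rather than cosmetic. Writing $P_H(p)=\sum_{k\ge 1}\Pr[\nu(H_S)\ge k]$, the term $\Pr[\nu\ge n]$ does have leading coefficient $\mathrm{perm}(H)$ at order $p^n$, but every lower term $\Pr[\nu\ge k]$ for $k<n$ also contributes at order $p^n$ (via inclusion--exclusion over $k$-matchings whose union has $n$ edges), and these contributions do not cancel under interpolation alone. Your plan to ``cap the matching number'' by deleting vertices does not do what you need: $\nu$ on a vertex-deleted minor is \emph{not} $\min\{\nu(H_S),k\}$---for instance, if the active edges $S$ happen to be incident only to the deleted vertex, $\nu$ on the minor is $0$ while $\min\{\nu(H_S),k\}$ is positive---so deleting vertices perturbs all levels of the sum simultaneously rather than peeling off the top one. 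Neither this route nor the unspecified knapsack-gadget fallback is carried far enough to be a proof, and the closing assertion that ``either route, once carried through with care, completes the proof'' is not justified. Since the paper omits the construction entirely, you should consult \cite{GNR18} for the actual reduction rather than rely on the heuristic decoupling.
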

Proposition~\ref{prop:match} is a straightforward reduction from the
$\sharpP$-hard {\sc Permanent} problem.

We reiterate that Myerson's optimal auction theory applies to the
public project and matching settings, and in particular gives a
polynomial-time algorithm that outputs a description of an optimal
auction (for given prior distributions).  Moreover, the optimal
auction can be implemented as a polynomial-time algorithm.  Thus
it's not hard to figure out what the optimal auction is, nor to
implement it---what's hard is figuring out exactly how much 
revenue it makes on average!

Combining Theorem~\ref{t:main} with Propositions~\ref{prop:pp}
and~\ref{prop:match} gives the following corollaries, which indicate
that there is no Border's-type theorem significantly more general than
the ones already known.
\begin{corollary}\label{cor:pp}
If $\sharpP \not\subseteq \PH$, then there is no Border's-type theorem
for the setting of public projects.
\end{corollary}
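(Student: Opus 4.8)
The plan is to derive Corollary~\ref{cor:pp} as an immediate consequence of Theorem~\ref{t:main} and Proposition~\ref{prop:pp}, by contradiction. Suppose a Border's-type theorem (in the sense of Definition~\ref{d:gbt}) holds for the public project setting. Then Theorem~\ref{t:main} applies to that setting, so the problem of computing, given a description of a public project instance (in particular the priors $F_1,\ldots,F_n$), the maximum expected revenue obtainable by a feasible BIC auction lies in $\ptime^{\NP}$. On the other hand, Proposition~\ref{prop:pp} says that this very problem is $\sharpP$-hard. The goal is to combine these two facts to contradict the hypothesis $\sharpP \not\subseteq \PH$.

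The second step is the routine complexity-class bookkeeping. Since $\sharpP$-hardness is with respect to polynomial-time Turing reductions, and the class of functions computable in polynomial time with an $\NP$ oracle, $\mathsf{FP}^{\NP}$, is closed under polynomial-time Turing reductions, membership of a $\sharpP$-hard problem in $\mathsf{FP}^{\NP}$ forces every $\sharpP$ function to lie in $\mathsf{FP}^{\NP}$. Because $\ptime^{\NP} \subseteq \PH$ (indeed $\ptime^{\NP} = \Delta_2^{\mathrm p}$), this places all of $\sharpP$ inside (the functional closure of) the polynomial hierarchy, i.e.\ $\mathsf{P}^{\sharpP} \subseteq \PH$ --- equivalently, $\sharpP \subseteq \PH$ in the sense of the hypothesis. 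This contradicts $\sharpP \not\subseteq \PH$, completing the argument. As a sanity check against Toda's theorem, observe that this conclusion does not contradict $\PH \subseteq \mathsf{P}^{\sharpP}$; it only rules out the reverse inclusion, which is precisely the collapse whose avoidance we want to leverage.

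There is essentially no hard step in the corollary itself --- all the content is packed into the two cited ingredients, which we are entitled to assume. The one place worth double-checking is that the public project setting genuinely fits the hypotheses of Theorem~\ref{t:main}: its interim feasibility polytope is well defined, the BIC constraints~\eqref{eq:b1} and the IIR constraints~\eqref{eq:b2} are setting-independent and remain linear in the $y_i(v_i)$'s and $q_i(v_i)$'s, and a Border's-type characterization --- if one existed --- would slot in place of the single-item feasibility constraints~\eqref{eq:b3}, so the ellipsoid-method argument behind Theorem~\ref{t:main} goes through verbatim. With that observed, the contrapositive reading is clean: the $\sharpP$-hardness of revenue evaluation for public projects is an unconditional obstruction to any computationally testable linear description of the feasible interim allocation rules, unless the polynomial hierarchy collapses.
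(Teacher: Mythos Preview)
Your proposal is correct and follows exactly the paper's approach: the corollary is obtained by combining Theorem~\ref{t:main} with Proposition~\ref{prop:pp}, noting (via Toda's theorem) that a $\sharpP$-hard problem landing in $\ptime^{\NP}\subseteq\PH$ would force $\sharpP\subseteq\PH$. The paper states this combination in one sentence without spelling out the complexity-class bookkeeping; your added detail on $\mathsf{FP}^{\NP}$ and the applicability of Theorem~\ref{t:main} to the public project setting is accurate but more than the paper itself provides.
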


\begin{corollary}\label{cor:match}
If $\sharpP \not\subseteq \PH$, then there is no Border's-type theorem
for the matching setting.
\end{corollary}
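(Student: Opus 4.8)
The plan is to prove Corollary~\ref{cor:match} by contraposition, chaining together the two machine‑ingredients already in hand---Theorem~\ref{t:main} (a Border's‑type theorem collapses the max‑revenue computation into $\ptime^{\NP}$) and Proposition~\ref{prop:match} ($\sharpP$‑hardness of that computation in the matching setting)---and then invoking Toda's theorem to convert ``$\sharpP$‑hard problem sits low in the hierarchy'' into ``$\PH$ collapses.'' Concretely, suppose for contradiction that a Border's‑type theorem (in the sense of Definition~\ref{d:gbt}) holds for the matching setting. Then Theorem~\ref{t:main} applies verbatim: the problem of computing the maximum expected revenue of a feasible BIC auction of a matching instance lies in $\ptime^{\NP}$, which is $\Delta_2^p$ and hence contained in $\PH$.

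Next I would combine this with Proposition~\ref{prop:match}, which says exactly this revenue‑computation problem is $\sharpP$‑hard under polynomial‑time Turing reductions. Hardness plus containment in $\ptime^{\NP}$ gives $\sharpP \subseteq \ptime^{\ptime^{\NP}} = \ptime^{\NP} = \Delta_2^p$. By Toda's theorem~\cite{toda}, $\PH \subseteq \ptime^{\sharpP}$, so $\PH \subseteq \ptime^{\Delta_2^p} = \Delta_2^p$; in particular $\sharpP \subseteq \Delta_2^p \subseteq \PH$. This contradicts the hypothesis $\sharpP \not\subseteq \PH$, so no Border's‑type theorem for the matching setting can exist. (The identical template, with Proposition~\ref{prop:pp} in place of Proposition~\ref{prop:match}, gives Corollary~\ref{cor:pp}; the only thing that changes is the underlying $\sharpP$‑hard counting problem.) I would also remark that the only place the structural restrictions of Definition~\ref{d:gbt} are used is inside Theorem~\ref{t:main}: the ``efficient recognition'' and ``efficient testing'' clauses are precisely what put membership in the feasibility polytope into $\coNP$ (Proposition~\ref{prop:conp}), and feeding that $\coNP$ separation oracle to the ellipsoid method is the routine plumbing that yields the $\ptime^{\NP}$ bound.

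Given that Theorem~\ref{t:main} and Proposition~\ref{prop:match} may both be assumed, the corollary itself has essentially no obstacle---it is a three‑line complexity‑class chase. The genuinely substantive step, were one to want a self‑contained treatment, is Proposition~\ref{prop:match}: the reduction from the $\sharpP$‑hard \textsc{Permanent} problem. The intended construction takes a bipartite graph $G$ whose permanent is to be computed, creates one bidder per edge of $G$ (with the bidder's common‑knowledge ``desired pair'' being the two endpoints of that edge, so that feasible allocations are exactly distributions over matchings of $G$), and equips each bidder with a carefully chosen two‑point valuation distribution; Myerson's optimal‑auction characterization then makes the optimal BIC revenue equal to an expected maximum‑weight matching over a random subgraph of $G$, and one recovers $\mathrm{perm}(G)$ as an affine function of these revenues by polynomial interpolation over the bidders' success probability. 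I would treat that gadget as the main obstacle and would follow the construction of \citet{GNR18} rather than reinventing it; everything downstream of it---the ellipsoid argument inside Theorem~\ref{t:main}, the containment bookkeeping, and the application of Toda's theorem---is standard.
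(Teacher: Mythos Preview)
Your proposal is correct and matches the paper's approach exactly: the paper simply states that the corollary follows from ``combining Theorem~\ref{t:main} with Propositions~\ref{prop:pp} and~\ref{prop:match},'' with a footnote invoking Toda's theorem to convert containment of a $\sharpP$-hard problem in $\PH$ into a collapse. Your contrapositive chase through $\ptime^{\NP}$ is precisely that combination spelled out, and your side remarks on where Definition~\ref{d:gbt} is actually used and on the \textsc{Permanent} reduction behind Proposition~\ref{prop:match} are accurate elaborations beyond what the paper bothers to write down.
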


\section{Appendix: A Combinatorial Proof of Border's
  Theorem}\label{s:borderpf}

\begin{figure}
\centering
\subfloat[]{\includegraphics[scale=.9]{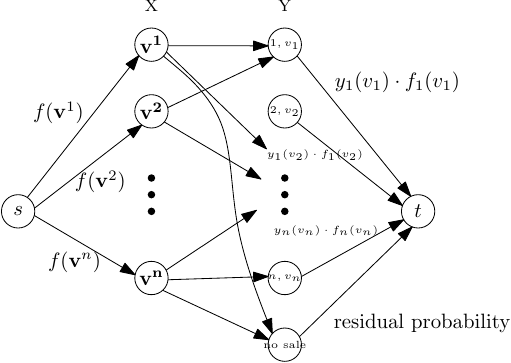}}
\qquad
\subfloat[]{\includegraphics[scale=.9]{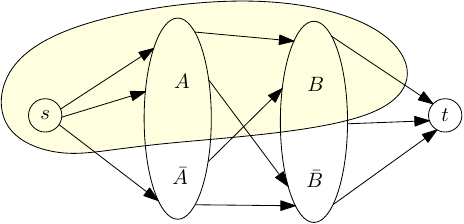}}
\caption{The max-flow/min-cut proof of Border's theorem.}
\label{f:border}
\end{figure}

\begin{proof}
(of Theorem~\ref{t:border})
We have already argued the ``only if'' direction, and now prove the
converse. The proof is by the max-flow/min-cut theorem---given the
statement of the theorem and this hint, the proof writes itself.

Suppose the interim
allocation rule $\{ y_i(v_i) \}_{i \in [n], v_i \in V_i}$ satisfies~\eqref{eq:border} for
every $S_1 \sse V_1,\ldots,S_n \sse V_n$.  
Form a
four-layer $s$-$t$ directed flow network $G$ as follows
(Figure~\ref{f:border}(a)). 
The first layer is the 
source $s$, the last the sink $t$.
In the second layer $X$, vertices correspond to valuation
profiles $\vec{v}$.  We abuse notation and refer to vertices of $X$ by the
corresponding valuation profiles.
There is an arc $(s,\vec{v})$ for every $\vec{v} \in X$, with capacity
$\prod_{i=1}^n f_i(v_i)$.  Note that the total capacity of these edges is~1.

In the third layer $Y$, vertices correspond to winner-valuation
pairs; there is also one additional ``no winner'' vertex.
We use $(i,v_i)$ to denote the vertex representing the event
that bidder $i$ wins the item and also has valuation $v_i$.  
For each $i$ and $v_i \in V_i$, there is an arc
$((i,v_i),t)$ with capacity $f_i(v_i) y_i(v_i)$.
There is also an arc from the ``no winner'' vertex to $t$, with
capacity $1 - \sum_{i=1}^n \sum_{v_i \in V_i} 
f_i(v_i) y_i(v_i) $.\footnote{If $\sum_{i=1}^n
  \sum_{v_i \in V_i} f_i(v_i) y_i(v_i) > 1$, then
 the interim allocation rule is clearly infeasible (recall~\eqref{eq:nec}).  
Alternatively, this would violate Border's
  condition for the choice $S_i = V_i$ for all $i$.}

Finally, each vertex $\vec{v} \in X$ has $n+1$ outgoing arcs, all with
infinite capacity, to the vertices
$(1,v_1),(2,v_2),$ $\ldots,$ $(n,v_n)$ of $Y$ and also to the ``no
winner'' vertex.

By construction, $s$-$t$ flows of $G$ with value~1 correspond to ex
post allocation rules with induced interim allocation rule
$\{ y_i(v_i) \}_{i \in [n], v_i \in V_i}$, with $x_i(\vec{v})$ equal
  to the amount of flow on the
arc $(\vec{v},(i,v_i))$ times $(\prod_{i=1}^n f_i(v_i))^{-1}$.  
%Flows
%of value~1 must
%saturate all the arcs incident to $t$, which is equivalent to having
%induced interim allocation rule $\iallocs$.
%
% To show that there exists a flow with value~1, it suffices to show
% that every $s$-$t$ cut has value at least~1 (by the max-flow/min-cut
% theorem).  So fix an $s$-$t$ cut.  Let this cut include the vertices
% $A$ from $X$ and $B$ from $Y$ (Figure~\ref{f:border}(b)). 
% For each bidder $i$, define $S_i \sse V_i$ as the possible valuations
% of $i$ that are {\em not} represented

To show that there exists a flow with value~1, it suffices to show
that every $s$-$t$ cut has value at least~1 (by the max-flow/min-cut
theorem).  So fix an $s$-$t$ cut.  Let this cut include the vertices
$A$ from $X$ and $B$ from $Y$.  Note that all arcs from $s$ to $X \setminus
A$ and from $B$ to $t$ are cut (Figure~\ref{f:border}(b)). 
For each bidder $i$, define $S_i \sse V_i$ as the possible valuations
of $i$ that are {\em not} represented among the valuation profiles
in $A$.   Then, for every valuation profile $\vec{v}$ containing at least
one distinguished valuation, the arc $(s,\vec{v})$ is cut.  The total
capacity of these arcs is the right-hand side~\eqref{eq:rhs} of
Border's condition.

Next, we can assume that every vertex of the form $(i,v_i)$ with
$v_i \notin S_i$ is in $B$, as otherwise an (infinite-capacity) arc
from $A$ to $Y \setminus B$ is cut.  Similarly, unless
$A = \emptyset$---in which case the cut has value at least~1 and we're
done---we can assume that the ``no winner'' vertex lies in $B$.  Thus,
the only edges of the form $((i,v_i),t)$ that are not cut involve a
distinguished valuation $v_i \in S_i$.  It follows that the total
capacity of the cut edges incident to $t$ is at least 1 minus the
left-hand size~\eqref{eq:lhs} of Border's condition.  Given our
assumption that~\eqref{eq:lhs} is at most~\eqref{eq:rhs}, this $s$-$t$
cut has value at least 1.  This completes the proof of Border's
theorem.
\end{proof}

\lecture{Tractable Relaxations of Nash Equilibria}

\vspace{1cm}

\section{Preamble}

Much of this monograph is about impossibility
results for the efficient computation of exact and approximate Nash
equilibria.  
How should we respond to such rampant
computational intractability?  What should be the message to
economists---should they change the way they do economic analysis in
some way?\footnote{Recall the discussion in Section~\ref{ss:whocares}
  of Solar Lecture~1: a critique of a widely used
  concept like the Nash equilibrium is not particularly helpful unless
  accompanied by a proposed alternative.}

One approach, familiar from coping with $\NP$-hard problems, is to
look for tractable special cases.  For example, Solar Lecture~1 proved
tractability results for two-player zero-sum games.
Some interesting tractable generalizations of zero-sum games have been
identified (see~\cite{CCDP} for a recent example), and polynomial-time
algorithms are also known for some relatively narrow classes of games
(see e.g.~\cite{graphicalgames}).  Still, for the lion's share of
games that we might care about, no polynomial-time algorithms for
computing exact or approximate Nash equilibria are known.

A different approach, which has been more fruitful, is to continue to
work with general games and look for an {\em equilibrium concept} that
is more computationally tractable than exact or approximate Nash
equilibria.  The equilibrium concepts that we'll consider---the
correlated equilibrium and the coarse correlated equilibrium---were
originally invented by game theorists, but computational complexity
considerations are now shining a much brighter spotlight on them.

Where do these alternative equilibrium concepts come from?  They
arise quite naturally from the study of uncoupled dynamics, which we
last saw in Solar Lecture~1.

\section{Uncoupled Dynamics Revisited}

Section~\ref{s:uncoupled} of Solar Lecture~1 introduced uncoupled
dynamics in the context of two-player games.  In this lecture we work
with the analogous setup for a general number~$k$ of players.
We use $S_i$ to denote the (pure) strategies of player~$i$, $s_i \in
S_i$ a specific strategy, $\sigma_i$ a mixed strategy, $\vec{s}$
and $\vec{\sigma}$ for profiles (i.e., $k$-vectors) of pure and mixed
strategies, and $u_i(\vec{s})$ for player~$i$'s payoff in the
outcome~$\vec{s}$.
\begin{mdframed}[style=offset,frametitle={Uncoupled Dynamics
    ($k$-Player Version)},nobreak=true]
At each time step $t=1,2,3,\ldots$:
\begin{enumerate}

\item Each player~$i=1,2,\ldots,k$ simultaneously chooses a mixed
  strategy $\sigma_i^t$ over~$S_i$ as a function only of her own
  payoffs and the strategies chosen by players in the first $t-1$ time steps.

\item Every player observes all of the strategies~$\vec{\sigma}^{t}$ chosen
  at time~$t$.

\end{enumerate}
\end{mdframed}
``Uncoupled'' refers to the fact that each player initially knows only
her own payoff function~$u_i(\cdot)$, while ``dynamics'' means a
process by which players learn how to play in a game.

One of the only positive algorithmic results that we've seen
concerned {\em smooth fictitious play (SFP)}.  The $k$-player
version of SFP is as follows.
\begin{mdframed}[style=offset,frametitle={Smooth Fictitious Play
    ($k$-Player Version)}] 
\textbf{Given: } parameter family $\{ \eta^t \in [0,\infty) \,:\,
t=1,2,3,\ldots\}$.

\vspace{\baselineskip}

At each time step $t=1,2,3,\ldots$:
\begin{enumerate}
\item Every player~$i$ simultaneously chooses the mixed strategy $\sigma_i^t$
by playing each strategy $s_i$ with
  probability proportional to $e^{\eta^t\pi^t_i}$, where $\pi^t_i$ is
  the time-averaged expected payoff player~$i$ would have earned by
  playing~$s_i$ at every previous time step.  Equivalently, $\pi^t_i$ is
  the expected payoff of
  strategy $s_i$ when the other players' strategies $\vec{s}_{-i}$ are
  drawn from the joint distribution $\tfrac{1}{t-1} \sum_{h=1}^{t-1}
  \vec{\sigma}^h_{-i}$.\tablefootnote{Recall from last lecture that
    for an $n$-vector $\vec{z}$ and a coordinate~$i \in [k]$,
$\vec{z}_{-i}$ denotes the $(k-1)$-vector obtained by removing the
$i$th component from $\vec{z}$, and we identify~$(z_i,\vec{z}_{-i})$
with $\vec{z}$.}

\item Every player observes all of the strategies~$\vec{\sigma}^{t}$ chosen
  at time~$t$.

\end{enumerate}
\end{mdframed}
A typical choice for the $\eta_t$'s is $\eta_t \approx \sqrt{t}$.

In Theorem~\ref{t:sfp} in Solar Lecture~1 we proved that, in an $m
\times n$
two-player zero-sum game, after $O(\log (m+n)/\eps^2)$ time steps, the
empirical distributions of the two players constitute an
$\eps$-approximate Nash equilibrium.\footnote{Recall the proof idea:
  smooth fictitious play corresponds to running the vanishing-regret
  ``exponential weights'' algorithm (with reward vectors induced by
  the play of others), and in a two-player zero-sum game, the
  vanishing-regret guarantee (i.e., with time-averaged payoff at least
  that   of the best fixed action in hindsight, up to~$o(1)$ error)
  implies the $\eps$-approximate Nash equilibrium condition.}
An obvious question is: what is the outcome of a logarithmic number of
rounds of smooth fictitious play in a non-zero-sum game?  Our
communication complexity lower bound in Solar Lectures~2 and~3 implies
that it cannot in general be an $\eps$-approximate Nash equilibrium.
Does it have some alternative economic meaning?  The answer to this
question turns out to be closely related to some classical
game-theoretic equilibrium concepts, which we discuss
next.

% We have seen that the concept of Nash Equilibrium is a useful one. However, 
% this concept 
%  has the drawback that the
% computation of such equilibria is probably an intractable problem. In this
% lecture we will modify  the definition, introducing tractable relaxations of NE.

\section{Correlated and Coarse Correlated Equilibria}\label{s:ce}

\subsection{Correlated Equilibria}

The correlated equilibrium is a well-known equilibrium concept defined
by \citet{A74}.  We define it,
then explain the standard semantics, and then offer an
example.\footnote{This section
draws from~\cite[Lecture 13]{f13}.}
\begin{definition}[Correlated Equilibrium]\label{d:ce}
A joint distribution 
$\rho$ on the set $S_1 \times \cdots \times S_k$ of
outcomes of a game is a
{\em correlated equilibrium} if for every player $i \in
\{1,2,\ldots,k\}$, strategy $s_i \in S_i$, and 
deviation $s'_i \in S_i$,
\begin{equation}\label{eq:ce}
\expect[\vec{s} \sim \rho]{u_i(\vec{s}) \,|\, s_i} \ge 
\expect[\vec{s} \sim \rho]{u_i(s_i',\vec{s}_{-i}) \,|\, s_i}.
\end{equation}
\end{definition}
Importantly, the distribution $\rho$ in Definition~\ref{d:ce} need
not be a product distribution; in this sense, the strategies chosen by
the players are correlated.  The Nash equilibria of a game correspond to
the correlated equilibria that are product distributions. 

The usual interpretation of a correlated equilibrium
involves a trusted third party.  
The distribution $\rho$ over
outcomes is publicly known.  The trusted third party samples an outcome
$\vec{s}$ according to $\rho$.  For each player $i=1,2,\ldots,k$,
the trusted third party privately suggests the strategy $s_i$ to
$i$.  The player $i$ can follow the suggestion $s_i$, or not.  At the
time of decision making, a player~$i$ knows the distribution
$\rho$ and one component~$s_i$ of the realization $\vec{s}$, and
accordingly has a posterior distribution on others' suggested
strategies~$\vec{s}_{-i}$.  With these semantics, the correlated
equilibrium condition~\eqref{eq:ce} requires that every player
maximizes her expected payoff by playing the suggested strategy
$s_i$.  The expectation is conditioned on $i$'s
information---$\rho$ and $s_i$---and assumes that other players
play their recommended strategies $\vec{s}_{-i}$.

Definition~\ref{d:ce} is a bit of a mouthful.  But
you are intimately familiar with a good example of a
correlated equilibrium that is not a mixed Nash equilibrium---a
traffic light!
Consider the following two-player game, with each matrix entry listing
the payoffs of the row and column players in the corresponding outcome:
\begin{center}
\begin{tabular}{|c|c|c|}
  \hline
  % after \\: \hline or \cline{col1-col2} \cline{col3-col4} ...
   & Stop & Go  \\ \hline
  Stop & 0,0 & 0,1 \\
  Go & 1,0 & -5,-5  \\
  \hline
\end{tabular}
\end{center}
%There is a modest cost (1) for waiting and a large cost (5) for
%getting into an accident.
%If the other player is stopping at an
%intersection, then you would rather go and get on with it.  The
%worst-case scenario, of course, is that both players go at the same
%time and get into an accident.  
This game has two pure Nash equilibria, the outcomes (Stop, Go) and
(Go, Stop).  Define $\rho$ by randomizing uniformly between
these two Nash equilibria.
This is not a product distribution over the game's four
outcomes, so it cannot correspond to a
Nash equilibrium of the game.  It is, however, a correlated
equilibrium.\footnote{For example, consider the row
player.  If the trusted third party (i.e., the traffic light) recommends
the strategy ``Go'' (i.e., is green), then the row player knows that
the column player was recommended ``Stop'' (i.e., has a red light).
Assuming the column player plays her recommended strategy and stops
at the red light, the best strategy for the row player is to follow her
recommendation and to go.}

% Lecture~\ref{lec:swap} proves that, unlike MNE, CE are computationally
% tractable.  
% \index{correlated equilibrium!tractability}
% There are even
% %One proof goes through linear programming.  More
% %interesting, and the focus of our lectures, is the fact that there are
% distributed learning algorithms that quickly guide the history of
% joint play to the set of CE.
% Thus bounding the {\em POA of CE}, defined as the
% ratio~\eqref{eq:poamne} with ``MNE'' replaced by ``CE,'' provides a
% meaningful equilibrium performance guarantee.

\subsection{Coarse Correlated Equilibria}

The outcome of smooth fictitious play in non-zero-sum
games relates to a still more permissive equilibrium concept, the {\em
  coarse correlated equilibrium}, which was first studied by \citet{MV78}.
\begin{definition}[Coarse Correlated Equilibrium]\label{d:cce}
A joint distribution $\rho$ on the set $S_1 \times \cdots \times S_k$ of
outcomes of a game is a
{\em coarse correlated equilibrium} if for every player $i \in
\{1,2,\ldots,k\}$ and
every unilateral deviation $\strat'_i \in S_i$,
\begin{equation}\label{eq:cce}
\expect[\vec{s} \sim \rho]{u_i(\vec{s}) } \ge 
\expect[\vec{s} \sim \rho]{u_i(s_i',\vec{s}_{-i})}.
\end{equation}
\end{definition}
The condition~\eqref{eq:cce} is the same as that for
the Nash equilibrium (Definition~\ref{d:ne}),
except without the restriction that $\rho$ is a
product distribution.
In this condition,
%In the equilibrium condition~\eqref{eq:cce}, 
when a player $i$ contemplates a deviation~$s_i'$, she knows only the
distribution~$\rho$ and {\em not} the component $s_i$ of the
realization.  That is, a coarse correlated equilibrium only protects
against unconditional unilateral deviations, as opposed to the
unilateral deviations conditioned on~$s_i$ that are addressed in
Definition~\ref{d:ce}.  It follows that every correlated equilibrium
is also a coarse correlated equilibrium (Figure~\ref{f:venn}).

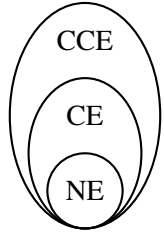
\begin{figure}
\begin{center}
  \begin{tikzpicture}[thick,scale=.5]
    \draw (0,1) ellipse (1cm and 1cm);
    \node at (0,1) {NE};
    \draw (0,2) ellipse (1.5cm and 2cm); 
    \node at (0,3) {CE};
    \draw (0,3) ellipse (2cm and 3cm); 
    \node at (0,5) {CCE};
  \end{tikzpicture}
\end{center}
\caption{The relationship between Nash equilibria (NE), correlated
  equilibria (CE), and coarse correlated equilibria (CCE).  Enlarging
the set of equilibria increases computational tractability but
decreases predictive power.}\label{f:venn}
\end{figure}

As you would expect, {\em$\eps$-approximate} correlated and coarse
correlated equilibria are defined by adding a ``$-\eps$'' to the
right-hand sides of~\eqref{eq:ce} and~\eqref{eq:cce}, respectively.
We can now answer the question about smooth fictitious play in
general games: the time-averaged history of
joint play under smooth fictitious play
converges to the set of coarse correlated equilibria.

\begin{proposition}[SFP Converges to CCE]\label{prop:cce}
  For every $k$-player game in which every player has at most $m$
  strategies, after $T=O((\log m)/\epsilon^2)$ time steps of
  smooth fictitious play, the time-averaged history of play
  $\tfrac{1}{T} \sum_{t=1}^T \vec{\sigma}^t$ is an
  $\epsilon$-approximate coarse correlated equilibrium.
\end{proposition}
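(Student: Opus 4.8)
The plan is to reuse the two-step architecture behind Theorem~\ref{t:sfp}, only dropping the second step. In that proof one first identifies smooth fictitious play with a collection of Exponential Weights instantiations (one per player) and then, \emph{in the zero-sum case}, converts the vanishing-regret guarantee into the approximate-minimax (hence approximate-Nash) condition. For general games the second step is unavailable, but it is also unnecessary: the vanishing-regret inequalities are, essentially by definition, the defining inequalities of an approximate coarse correlated equilibrium, and this holds for every game.

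First I would set up the reduction to online decision-making. Fix a player~$i$ and, for each time step~$t$, define the reward vector $r_i^t : S_i \to [-1,1]$ by
\begin{equation}\label{eq:cce-reward}
r_i^t(s_i) \;=\; \expect[\vec{s}_{-i} \sim \vec{\sigma}^t_{-i}]{u_i(s_i,\vec{s}_{-i})} \;=\; \sum_{\vec{s}_{-i}} \Big(\prod_{j \neq i}\sigma_j^t(s_j)\Big)\, u_i(s_i,\vec{s}_{-i}),
\end{equation}
a deterministic function of the other players' strategies $\vec{\sigma}^t_{-i}$ at time~$t$, lying in $[-1,1]$ by the payoff normalization. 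Exactly as in the ``Smooth Fictitious Play (Rephrased)'' box of Solar Lecture~1 (now with $k$ players instead of two), the mixed strategy $\sigma_i^t$ prescribed by SFP is precisely the distribution output at step~$t$ by the Exponential Weights algorithm run on the reward sequence $r_i^1,\dots,r_i^{t-1}$; the fact that SFP is phrased with the time-averaged payoffs $\pi_i^t$ rather than cumulative rewards just rescales the learning-rate parameter $\eta^t$. From player~$i$'s point of view the other $k-1$ players jointly play the role of the adversary, which chooses $\vec{\sigma}^t_{-i}$ as a function of the past mixed strategies; since the EW regret bound (Theorem~\ref{t:noregret}) holds against an arbitrary, possibly adaptive adversary, it applies here. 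With at most~$m$ strategies per player, Corollary~\ref{cor:noregret} then gives, after $T = \Theta((\log m)/\eps^2)$ time steps,
\begin{equation}\label{eq:cce-regret}
\frac{1}{T}\sum_{t=1}^T \expect[s_i \sim \sigma_i^t]{r_i^t(s_i)} \;\ge\; \frac{1}{T}\max_{s_i' \in S_i}\sum_{t=1}^T r_i^t(s_i') \;-\; \eps
\end{equation}
for every player~$i$.

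Next I would unwind the definitions to see that \eqref{eq:cce-regret} \emph{is} the approximate-CCE condition. Let $\rho = \tfrac{1}{T}\sum_{t=1}^T \vec{\sigma}^t$ denote the time-averaged history, viewed as the distribution on $S_1 \times \cdots \times S_k$ that picks $t$ uniformly from $\{1,\dots,T\}$ and then samples the product distribution $\vec{\sigma}^t$. By linearity of expectation, $\expect[\vec{s}\sim\rho]{u_i(\vec{s})} = \tfrac{1}{T}\sum_t \expect[\vec{s}\sim\vec{\sigma}^t]{u_i(\vec{s})} = \tfrac{1}{T}\sum_t \expect[s_i\sim\sigma_i^t]{r_i^t(s_i)}$, which is the left-hand side of \eqref{eq:cce-regret}; and for any deviation $s_i' \in S_i$, $\expect[\vec{s}\sim\rho]{u_i(s_i',\vec{s}_{-i})} = \tfrac{1}{T}\sum_t \expect[\vec{s}_{-i}\sim\vec{\sigma}^t_{-i}]{u_i(s_i',\vec{s}_{-i})} = \tfrac{1}{T}\sum_t r_i^t(s_i')$. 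Hence \eqref{eq:cce-regret} says exactly that $\expect[\vec{s}\sim\rho]{u_i(\vec{s})} \ge \expect[\vec{s}\sim\rho]{u_i(s_i',\vec{s}_{-i})} - \eps$ for all~$i$ and all $s_i' \in S_i$, i.e.\ $\rho$ is an $\eps$-approximate coarse correlated equilibrium in the sense of Definition~\ref{d:cce} with the ``$-\eps$'' slack.

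The only part requiring care is the bookkeeping in the middle step: checking that the reward vectors in \eqref{eq:cce-reward} really stay in $[-1,1]$ (this is where the normalization of payoffs enters), that treating the co-players as an adaptive adversary does not weaken the EW guarantee, that player~$i$'s conceptual sampling of a pure action $a_i^t \sim \sigma_i^t$ is irrelevant since $r_i^t$ does not depend on it (so the expected-regret bound reduces to the deterministic inequality \eqref{eq:cce-regret}), and that SFP's empirical-average formulation of $\pi_i^t$ matches the exponential-weights update after rescaling $\eta^t$. None of this is deep, but all of it must be said for the reduction to be airtight; everything after it is just linearity of expectation.
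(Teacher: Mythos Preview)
Your proposal is correct and follows essentially the same approach as the paper. The paper's argument is terse---it simply invokes Corollary~\ref{cor:noregret} to get the per-player regret bound and then asserts that this regret guarantee ``is equivalent to the conclusion of Proposition~\ref{prop:cce} (as you should check)''---whereas you actually carry out that check by unwinding the definitions and verifying via linearity of expectation that the regret inequality~\eqref{eq:cce-regret} coincides with the $\eps$-CCE condition for the time-averaged distribution~$\rho$.
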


Proposition~\ref{prop:cce} follows straightforwardly from the definition of
$\eps$-approximate coarse correlated equilibria and the vanishing
regret guarantee of smooth fictitious play that we proved in
Solar Lecture~1.  Precisely, by Corollary~\ref{cor:noregret} of that
lecture, after $O((\log m)/\epsilon^2)$ time steps of smooth
fictitious play, every player has at most~$\eps$ regret (with
respect to the best fixed strategy in hindsight, see
Definition~\ref{d:regreta} in Solar Lecture~1).  This regret guarantee
is equivalent to the conclusion of Proposition~\ref{prop:cce} (as you
should check).

What about correlated equilibria?  While the time-averaged history of
play in smooth fictitious play does not in general converge to the
set of correlated equilibria, \citet{FV97} and \citet{HM00} show that
the time-averaged play of
other reasonably simple types of uncoupled dynamics 
is guaranteed to be an $\eps$-correlated equilibrium after a
%do converge to the set of correlated equilibria in a 
polynomial (rather than logarithmic) number of time steps.

% The tractability of the two equilibrium concepts follows
%from the next two results:

%\begin{theorem}
%After $O(\frac{\log m}{\epsilon^2})$ time steps, Smooth Fictitious Play %reaches
%an $\epsilon$-CCE (with high probability).
%\end{theorem}

%\begin{theorem}[Foster,Vohra 97]
%Uncoupled dynamics converge  to $\epsilon$-CE in polynomially many steps.
%\end{theorem}

\section{Computing an Exact Correlated or Coarse Correlated Equilibrium}\label{s:exact}

\subsection{Normal-Form Games}

Solar Lecture~1 showed that approximate Nash equilibria of two-player
zero-sum games can be learned (and hence computed) efficiently
(Theorem~\ref{t:sfp}).  Proposition~\ref{prop:cce} and the extensions
in~\cite{FV97,HM00} show analogs of this result for approximate
correlated and coarse correlated equilibria of general games.  
Solar Lecture~1 also
showed that an exact Nash equilibrium of a two-player zero-sum game
can be computed in polynomial time by linear programming
(Corollary~\ref{cor:zerosum}).  Is the same true for an exact
correlated or coarse correlated equilibrium of a general game?

Consider first the case of coarse correlated equilibria, and introduce
one decision variable~$x_{\vec{s}}$ per outcome~$\vec{s}$ of the game,
representing the probability assigned to $\vec{s}$ in a joint
distribution $\rho$.  The feasible solutions to the following linear
system are then precisely the coarse correlated equilibria of the
game:
\begin{align}\label{eq:cce1}
\sum_{\vec{s}} u_i(\vec{s})x_{\vec{s}} \ge
\sum_{\vec{s}} u_i(s_i',\vec{s}_{-i})x_{\vec{s}} & \quad\quad\text{for every $i \in
                                                        [k]$ and $s'_i
                                                       \in S_i$}\\
\label{eq:cce2}
\sum_{\vec{s} \in \vec{S}} x_{\vec{s}} = 1 &\\
\label{eq:cce3}
x_{\vec{s}} \ge 0 & \quad\quad\text{for every $\vec{s} \in \vec{S}$.}
\end{align}
%To obtain the first set of inequalities, we have multiplied both sides
%of the equilibrium condition~\eqref{eq:ce} by the (nonnegative)
%probability~$\sum_{\vec{s} \in \vec{S} \,:\, s_i=j} x_{\vec{s}}$
%that~$i$'s sampled strategy is~$j$.\footnote{We might be multiplying by~0
%If $x_{\vec{s}} = 0$ for every outcome~$\vec{s}$ with $s_i=j$, 
Similarly, correlated equilibria are captured by the following linear
system:
% (as noted by Gilboa and Zemel~\cite{GZ89}):
\begin{align}\label{eq:ce1}
\sum_{\vec{s} \,:\, s_i = j} u_i(\vec{s})x_{\vec{s}} \ge
\sum_{\vec{s} \,:\, s_i = j} u_i(s_i',\vec{s}_{-i})x_{\vec{s}} & \quad\quad\text{for
                                                        every $i \in
                                                        [k]$ and $j,s_i'
                                                       \in S_i$}\\
\label{eq:ce2}
\sum_{\vec{s} \in \vec{S}} x_{\vec{s}} = 1 &\\
\label{eq:ce3}
x_{\vec{s}} \ge 0 & \quad\quad\text{for every $\vec{s} \in \vec{S}$.}
\end{align}

The following proposition is immediate.
\begin{proposition}[\citet{GZ89}]\label{prop:lp}
An exact correlated or coarse correlated equilibrium of a game can be
computed in time polynomial in the number of outcomes of the game.
\end{proposition}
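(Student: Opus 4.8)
The plan is to observe that both equilibrium notions are exactly the feasible solutions of explicit linear systems whose size is polynomial in the number $N := \prod_{i=1}^k |S_i|$ of outcomes, to check that these systems are feasible, and then to invoke a polynomial-time linear programming algorithm.

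First I would simply count. The linear system~\eqref{eq:cce1}--\eqref{eq:cce3} has one decision variable $x_{\vec{s}}$ per outcome, hence $N$ variables, and $\sum_{i=1}^k |S_i| + 1 + N \le O(N)$ constraints (one incentive constraint per player/deviation pair, the normalization, and the nonnegativity constraints); likewise the system~\eqref{eq:ce1}--\eqref{eq:ce3} for correlated equilibria has $N$ variables and $O(N)$ constraints. Every coefficient appearing in these systems is either $0$, $1$, or an entry $u_i(\vec{s})$ of a payoff function, so (under the usual convention that payoffs have bit-complexity polynomial in the input) each system has description length polynomial in $N$. By construction (the very derivation of~\eqref{eq:ce1}--\eqref{eq:ce3} and~\eqref{eq:cce1}--\eqref{eq:cce3} from Definitions~\ref{d:ce} and~\ref{d:cce}), the feasible points of the first system are precisely the coarse correlated equilibria and those of the second are precisely the correlated equilibria.

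Second, I would argue feasibility, so that ``find a feasible point'' is a well-posed task. By Nash's theorem (Theorem~\ref{t:nash}, in its general $k$-player form) the game has a Nash equilibrium $\vec{\sigma}$; the product distribution $\rho$ with $x_{\vec{s}} = \prod_{i=1}^k \sigma_i(s_i)$ satisfies the correlated equilibrium conditions~\eqref{eq:ce} and hence also the coarse correlated ones~\eqref{eq:cce} (cf.\ Figure~\ref{f:venn}), so it is a feasible point of both linear systems. It remains only to produce \emph{some} feasible point, which I would do with the ellipsoid method (or any polynomial-time linear programming algorithm), whose running time is polynomial in the number of variables and the bit-complexity of the coefficients, i.e.\ polynomial in $N$. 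Applied to~\eqref{eq:cce1}--\eqref{eq:cce3} this yields an exact coarse correlated equilibrium, and applied to~\eqref{eq:ce1}--\eqref{eq:ce3} an exact correlated equilibrium.

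There is essentially no obstacle: all the work has already been done in writing down the correct linear systems, and the remaining steps are the boilerplate ``polynomial-size LP feasibility is in $\ptime$.'' The only point worth flagging in the write-up is that ``polynomial in the number of outcomes'' is the appropriate yardstick precisely because, for a game given explicitly in normal form, $N$ is (up to a factor of $k$) the input size; for games with many players, where $N$ is exponential in a succinct description, Proposition~\ref{prop:lp} does \emph{not} furnish a genuinely efficient algorithm, which motivates the separate treatment of succinctly represented games.
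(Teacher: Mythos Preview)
Your proposal is correct and matches the paper's approach: the paper simply states that the proposition is ``immediate'' once the linear systems~\eqref{eq:cce1}--\eqref{eq:cce3} and~\eqref{eq:ce1}--\eqref{eq:ce3} have been written down, and you have spelled out exactly the details (polynomial size of the LP, feasibility via Nash's theorem, polynomial-time LP solvability) that justify this. Your closing remark about the normal-form caveat also mirrors the paper's discussion following the proposition.
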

More generally, any linear function 
(such as the sum of players' expected payoffs) 
can be optimized over the set of correlated or coarse correlated
equilibria in time polynomial in the number of outcomes.

For games described in {\em normal form}, with each player~$i$'s
payoffs $\{ u_i(\vec{s}) \}_{\vec{s} \in \vec{S}}$ given explicitly in
the input, Proposition~\ref{prop:lp} provides an algorithm with
running time polynomial in the input size.  However, the number of
outcomes of a game scales exponentially with the number~$k$ of
players.\footnote{This fact should provide newfound appreciation for
  the distributed learning algorithms that compute an approximate
  coarse correlated equilibrium (in Proposition~\ref{prop:cce}) and an
  approximate correlated equilibrium (in~\cite{FV97,HM00}), where the
  total amount of computation is only {\em polynomial} in $k$ (and in
  $m$ and $\tfrac{1}{\eps}$).}  The computationally interesting
multi-player games, and the multi-player games that naturally arise in
computer science applications, are those with a {\em succinct
  description}.  Can we compute an exact correlated or coarse
correlated equilibrium in time polynomial in the size of a game's
description?

\subsection{Succinctly Represented Games}

For concreteness, let's look at one well-studied example of a class of
succinctly represented games: {\em graphical games}~\cite{KLS01,KM03}.
A graphical game is described by an undirected graph~$G=(V,E)$, with
players corresponding to vertices, and a local payoff matrix for each
vertex.  The local payoff matrix for vertex~$i$ specifies~$i$'s payoff
for each possible choice of its strategy and the strategies chosen by
its neighbors in~$G$.  By definition, the payoff of a player is
independent of the strategies chosen by non-neighboring players.  When
the graph~$G$ has maximum degree~$\Delta$, the size of the game
description is exponential in~$\Delta$ but polynomial in the
number~$k$ of players.  The most interesting cases are when
$\Delta=O(1)$ or perhaps $\Delta=O(\log k)$.  In these cases, the 
number of outcomes (and hence the size of the game's normal-form description)
is exponential in the size of the succinct description of the game,
and solving the linear system~\eqref{eq:cce1}--\eqref{eq:cce3}
or~\eqref{eq:ce1}--\eqref{eq:ce3} does not result in a
polynomial-time algorithm.

We next state a result showing that, quite generally, an exact
correlated (and hence coarse correlated) equilibrium of a succinctly
represented game can be computed in polynomial time.  
%We do need some
%kind of assumption about the game representation to preclude
%, as otherwise it
%would be easy to encode an $\NP$-complete problems with players'
%payoffs.\footnote{For example, imagine that players correspond to
%  Boolean variables, strategies to ``true'' or ``false,'' and with
%  payoffs depending on whether or not a particular SAT formula is satisfied.}
The key assumption is that the following {\sc Expected Utility} problem
can be solved in time polynomial in the size of the game's
description.\footnote{Some kind of assumption is necessary to preclude
  baking an $\NP$-complete problem into the game's description.}
\begin{mdframed}[style=offset,frametitle={The {\sc Expected Utility} Problem}]
Given a succinct description of a player's 
payoff function~$u_i$ and mixed
strategies~$\sigma_1,\ldots,\sigma_k$ for all of the players, compute
the player's expected utility:
\[
\expect[\vec{s} \sim \vec{\sigma}]{u_i(\vec{s})}.
\]
\end{mdframed}
For most of the succinctly represented multi-player games that come up
in computer science applications, the {\sc Expected Utility} problem
can be solved in polynomial time.  For example, in a graphical game it
can be solved by brute force---summing over the entries in player~$i$'s
local payoff matrix, weighted by the probabilities in the given mixed
strategies.  This algorithm takes time exponential in~$\Delta$ but
%polynomial in~$k$, and hence is
polynomial in the size of the game's
succinct representation.

Tractability of solving the {\sc Expected Utility} problem is a
sufficient condition for the tractability of computing an exact
correlated equilibrium.
\begin{theorem}[\citet{PR08,JL15}]\label{t:pr}
There is a polynomial-time Turing reduction from the problem of
computing a correlated equilibrium of a succinctly described game to
the {\sc Expected Utility} problem.
\end{theorem}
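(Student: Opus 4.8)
The plan is to follow the ``ellipsoid against hope'' strategy of \citet{PR08}. Write the correlated-equilibrium conditions as the feasibility linear program~\eqref{eq:ce1}--\eqref{eq:ce3}. This program has one variable $x_{\vec{s}}$ per outcome (exponentially many in the number of players $k$), but only $\sum_i |S_i|^2$ incentive constraints, which is polynomial in the size of the succinct description. For bookkeeping, encode the $(i,j,s_i')$ incentive constraint as $\sum_{\vec{s}} c_{i,j,s_i'}(\vec{s})\,x_{\vec{s}} \le 0$, where $c_{i,j,s_i'}(\vec{s}) = \mathbf{1}[s_i=j]\,(u_i(s_i',\vec{s}_{-i}) - u_i(\vec{s}))$. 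Dualizing, a certificate of primal infeasibility is a vector of nonnegative multipliers $\{y_{i,j,s_i'}\}$, one per incentive constraint, such that the function $f_{\vec{y}}(\vec{s}) := \sum_i \sum_{j,s_i'} y_{i,j,s_i'}\,c_{i,j,s_i'}(\vec{s})$ is strictly positive on every outcome $\vec{s}$. So the goal splits into two parts: (a) show constructively that no such certificate exists (re-deriving that a correlated equilibrium exists); and (b) reuse that construction to produce an actual correlated equilibrium with polynomially many calls to {\sc Expected Utility}.

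For part (a) I would prove the Hart--Schmeidler-type lemma: for every $\vec{y} \ge 0$ there is a \emph{product} distribution $\vec{\sigma} = (\sigma_1,\dots,\sigma_k)$ with $\mathbf{E}_{\vec{s}\sim\vec{\sigma}}[f_{\vec{y}}(\vec{s})] \le 0$ (in fact $=0$). The construction: for each player $i$, interpret the multipliers $y_{i,j,s_i'}$ as transition weights of a Markov chain on $S_i$ (normalize outgoing weights and add self-loops so each row sums to one), and let $\sigma_i$ be a stationary distribution of this chain. Substituting $\vec{\sigma}$ into $f_{\vec{y}}$ and grouping terms by player, the stationary flow-balance identity $\sum_j \sigma_i(j)\,y_{i,j,\ell} = \sigma_i(\ell)\sum_{j'} y_{i,\ell,j'}$ makes player $i$'s contribution telescope to exactly zero; summing over $i$ gives $\mathbf{E}_{\vec{\sigma}}[f_{\vec{y}}] = 0$. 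Since this rules out $f_{\vec{y}} > 0$ everywhere, the primal LP is feasible.

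For part (b), run the ellipsoid method on the (as it turns out, infeasible) dual feasibility program ``$\vec{y} \ge 0$, $\sum y = 1$, $f_{\vec{y}}(\vec{s}) \ge \epsilon$ for all $\vec{s}$''. The separation oracle, given a candidate $\vec{y}$, computes the product distribution $\vec{\sigma}$ from the lemma (one linear-system solve per player for the stationary distributions), and must exhibit a single outcome $\vec{s}$ with $f_{\vec{y}}(\vec{s}) \le 0$. This is where {\sc Expected Utility} enters: $\mathbf{E}_{\vec{s}\sim\vec{\sigma}}[f_{\vec{y}}(\vec{s})]$ is a fixed linear combination of quantities $\mathbf{E}_{\vec{s}\sim\vec{\sigma}}[u_i(\cdot)]$ (with one coordinate's mixed strategy possibly replaced by a point mass $s_i'$), each of which is one {\sc Expected Utility} query, so expectations of $f_{\vec{y}}$ under product distributions are computable in polynomial time. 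The method of conditional expectations---fix $s_1$ to whichever pure strategy keeps $\mathbf{E}[f_{\vec{y}}] \le 0$, then $s_2$, and so on, using $\sum_i |S_i|$ oracle calls---then extracts a pure outcome $\vec{s}$ with $f_{\vec{y}}(\vec{s}) \le 0$, the required violated constraint. Because no feasible dual point exists, the ellipsoid method halts after polynomially many iterations, having used only polynomially many cutting planes $\vec{s}^{(1)},\dots,\vec{s}^{(N)}$, i.e.\ having certified that the dual restricted to those $N$ outcomes is infeasible; by LP duality the primal restricted to the same $N$ columns is feasible, and solving that polynomial-size linear program yields a correlated equilibrium supported on polynomially many outcomes.

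I expect the main obstacle to be the separation step in part (b): it is not enough to know non-constructively that $\min_{\vec{s}} f_{\vec{y}}(\vec{s}) \le 0$; we must efficiently \emph{name} such an $\vec{s}$ despite an exponential outcome space, and the only primitive available is evaluating expected utilities under mixed-strategy profiles. Marrying the stationary-distribution lemma with the method of conditional expectations is precisely what lets an {\sc Expected Utility} oracle accomplish this. A secondary technical point is the $\epsilon$-slack bookkeeping: one must choose $\epsilon$ small enough (and bound the numbers involved) so that the ellipsoid method provably terminates and so that infeasibility of the slackened restricted dual still yields a genuinely feasible exact restricted primal, which is then solved by any polynomial-time linear programming algorithm. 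Together these steps make only polynomially many {\sc Expected Utility} calls, giving the claimed polynomial-time Turing reduction (and, as in \citet{JL15}, the same scheme extends to optimizing linear objectives over correlated equilibria).
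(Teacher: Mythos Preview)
Your proposal is correct and follows essentially the same approach as the paper's sketch: the ``ellipsoid against hope'' on the infeasible dual, with a separation oracle built from Markov-chain stationary distributions and \textsc{Expected Utility} queries, followed by solving the reduced primal. The paper's sketch stays at a higher level (it just says the oracle ``reduces to solving a polynomial number of instances of the \textsc{Expected Utility} problem and computing the stationary distribution of a polynomial number of polynomial-size Markov chains''), whereas you spell out the Hart--Schmeidler stationary-distribution lemma and the method-of-conditional-expectations derandomization to extract a pure violating outcome; the latter is exactly the \cite{JL15}-style fix that makes the ellipsoid bookkeeping clean, so your extra detail is on target rather than a different route.
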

Theorem~\ref{t:pr} applies to a long list of succinctly described
games that have been studied in the computer science literature, with
graphical games serving as one example.\footnote{For the specific case
  of graphical games, \citet{KKLO03} were the first to develop a
  polynomial-time algorithm for computing an exact correlated
  equilibrium.}

The starting point of the proof of Theorem~\ref{t:pr} is the
exponential-size linear system~\eqref{eq:ce1}--\eqref{eq:ce3}.  We
know that this linear system is feasible (by Nash's Theorem, since the
system includes all Nash equilibria).  With exponentially many
variables, however, it's not clear how to efficiently compute a
feasible solution.  The dual linear system, meanwhile, has a
polynomial number of variables (corresponding to the constraints
in~\eqref{eq:ce1}) and an exponential number of inequalities
(corresponding to game outcomes).  By Farkas's Lemma---or,
equivalently, strong linear programming duality (see
e.g.~\cite{chvatal})---we know that this dual linear system is
infeasible. 

The key idea is to run the ellipsoid algorithm~\cite{K79} on the
infeasible dual linear system---called the ``ellipsoid against hope''
in~\cite{PR08}.  A polynomial-time separation oracle must produce,
given an alleged solution (which we know is infeasible), a violated
inequality.  It turns out that this separation oracle reduces to
solving a polynomial number of instances of the {\sc Expected Utility}
problem (which is polynomial-time solvable by assumption) and
computing the stationary distribution of a polynomial number of
polynomial-size Markov chains (also polynomial-time solvable, e.g.~by
linear programming).  The ellipsoid against hope terminates after a
polynomial number of invocations of its separation oracle, necessarily
with a proof that the dual linear system is infeasible.  To recover a
primal feasible solution (i.e., a correlated equilibrium), one can
retain only the primal decision variables corresponding to the
(polynomial number of) dual constraints generated by the separation
oracle, and solve directly this polynomial-size reduced version of the
primal linear system.\footnote{As a bonus, this means that the
  algorithm will output a ``sparse'' correlated equilibrium, with
  support size polynomial in the size of the game description.}

\section{The Price of Anarchy of Coarse Correlated Equilibria}

\subsection{Balancing Computational Tractability with Predictive Power}

We now understand senses in which Nash equilibria are computationally
intractable (Solar Lectures~2--5) while correlated equilibria are
computationally tractable (Sections~\ref{s:ce} and~\ref{s:exact}).
From an economic perspective, these results suggest that it could be
prudent
%more meaningful 
to study the correlated equilibria of a game, rather 
than restricting attention only to its Nash equilibria.\footnote{This is not a totally
  unfamiliar idea to economists.
%, even ignoring computational   considerations..  
According to \citet{SV02}, Roger
  Myerson, winner of the 2007 Nobel Prize in Economics, asserted that 
``if there is intelligent life on other planets, in a majority of them,
they would have discovered correlated equilibrium before Nash
equilibrium.''}

Passing from Nash equilibria to the larger set of correlated
equilibria is a two-edged sword.  Computational tractability
increases, and with it the plausibility that actual game play will
conform to the equilibrium notion.  But whatever criticisms we had
about the Nash equilibrium's predictive power (recall
Section~\ref{ss:whocares} in Solar Lecture~1),
they are even more severe
for the correlated equilibrium (since there are only more of them).
The worry is that games typically have far too many correlated
equilibria to say anything interesting about them.  Our final order of
business is to dispel this worry, at least in the context of
price-of-anarchy analyses.
%, approximation guarantees for Nash equilibria
%typically hold more generally for all correlated (and even coarse
%correlated) equilibria.

Recall from Lunar Lecture~2 that the {\em price of 
  anarchy (POA)} is defined as the ratio between the
objective function value of an optimal solution, and that of the worst
equilibrium:
\[ 
\mathsf{PoA}(G):= \frac{f(OPT(G))}{\min_{\text{$\rho$ is an
      equilibrium of $G$}} f(\rho)},
\] 
where~$G$ denotes a game, $f$ denotes a maximization objective
function (with $f(\rho) = \expect[\vec{s} \sim \rho]{f(\vec{s})}$ when
$\rho$ is a probability distribution), and $OPT(G)$ is the optimal
outcome of~$G$ with respect to~$f$.  Thus the POA of a game is always
at least~1, and the closer to~1, the better.

The POA of a game depends on the choice of equilibrium concept.
Because it is defined with respect to the worst equilibrium, the POA
only degrades as the set of equilibria grows larger.  Thus, the POA
with respect to coarse correlated equilibria is only worse (i.e.,
larger) than that with respect to correlated equilibria, which in
turn is only worse than the POA with respect to Nash equilibria
(recall Figure~\ref{f:venn}).

The hope is that there's a ``sweet spot'' equilibrium
concept---permissive enough to be computationally tractable, yet
stringent enough to allow good worse-case approximation guarantees.
Happily, the coarse correlated equilibrium is just such a sweet spot!

\subsection{Smooth Games and Extension Theorems}

After the first ten years of price-of-anarchy analyses (roughly
1999-2008), it was clear to researchers in the area that many such
analyses across different application domains share a common
architecture (in routing games, facility location games, scheduling
games, auctions, etc.).  The concept of ``proofs of POA bounds that
follow the standard template'' was made precise in the theory of smooth
games~\cite{robust}.\footnote{The formal definition is a bit technical, and we
won't need it here.  Roughly, it requires that the best-response
condition is invoked in an equilibrium-independent way and that a certain
restricted type of charging argument is used.}\footnote{There are
several important precursors to this theory, including~\citet{B+08},
\citet{CK05b}, and Vetta~\cite{V02}.  See~\cite{robust} for a detailed
history.}
One can then define the {\em robust
  price of anarchy} of a game as the best (i.e., smallest) bound on the
game's POA
that can be proved by following the standard template.

The proof template formalized by smooth games superficially appears 
relevant only for the POA with respect to {\em pure} Nash
equilibria, as the definition involves no randomness (let alone
correlation).  The good news is that the template's simplicity makes
it relatively easy to use.  One would expect the bad news to be that
bounds on the POA of more permissive equilibrium concepts require
different proof techniques, and that the corresponding POA bounds
would be much worse.  Happily, this is not the case---every POA bound proved
using the canonical template automatically applies not only to the
pure Nash equilibria of a game, but more generally to all of the
game's coarse correlated equilibria (and hence all of its correlated
and mixed Nash equilibria).\footnote{Smooth games and the ``extension
  theorem'' in Theorem~\ref{t:robust} are the starting point for the
  modular and user-friendly toolbox for proving POA bounds in complex
  settings mentioned in Section~\ref{ss:when} of Lunar Lecture~1.
  Generalizations of this theory to incomplete-information games (like
  auctions) and to the composition of smooth games (like
  simultaneous single-item auctions) lead to good
  POA bounds for simple auctions~\cite{ST13}.
%, including the guarantee
%  from~\cite{FFGL13} stated in Theorem~\ref{t:ffgl2} of Lunar
%  Lecture~2.  
(These generalizations also brought
  together two historically separate subfields of algorithmic game
  theory, namely algorithmic mechanism design and price-of-anarchy analyses.)
See   \cite{RST17} for a user's guide to    this toolbox.}
\begin{theorem}[\citet{robust}]\label{t:robust}
In every game, the POA with respect to coarse correlated equilibria is
bounded above by its robust POA.
\end{theorem}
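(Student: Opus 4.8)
The plan is to unpack what ``the standard template'' and hence ``robust POA'' actually mean, and then observe that the template invokes the equilibrium hypothesis only in a way that a coarse correlated equilibrium already satisfies. Concretely, for a game $G$ with the welfare objective $f(\vec{s}) = \sum_{i=1}^k u_i(\vec{s})$, a proof of a POA bound ``following the canonical template'' is exactly the exhibition of constants $\lambda > 0$ and $\mu \ge 0$ for which $G$ is $(\lambda,\mu)$-\emph{smooth}: for every outcome $\vec{s}$ and every welfare-optimal outcome $\vec{s}^* = OPT(G)$,
\[
\sum_{i=1}^k u_i(s_i^*,\vec{s}_{-i}) \ \ge\ \lambda\, f(\vec{s}^*) - \mu\, f(\vec{s}).
\]
The robust POA of $G$ is then, by definition, the infimum of $\tfrac{1+\mu}{\lambda}$ over all witnessing pairs $(\lambda,\mu)$ (and it is $+\infty$ if no finite witness exists, in which case the theorem is vacuous). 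So the first step is just to record this reformulation; the substance is the accompanying ``extension theorem.''

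For the extension, fix any coarse correlated equilibrium $\rho$ of $G$ and a welfare-optimal outcome $\vec{s}^*$. The key observation is that, for each player $i$, the $i$th coordinate $s_i^*$ of $\vec{s}^*$ is a \emph{single fixed} pure strategy, hence a legal unconditional unilateral deviation of precisely the type that the coarse correlated equilibrium condition~\eqref{eq:cce} rules out as profitable. Invoking that condition with $s_i' = s_i^*$ for each $i$ and summing over the $k$ players gives
\[
f(\rho) \ =\ \sum_{i=1}^k \expect[\vec{s}\sim\rho]{u_i(\vec{s})} \ \ge\ \sum_{i=1}^k \expect[\vec{s}\sim\rho]{u_i(s_i^*,\vec{s}_{-i})} \ =\ \expect[\vec{s}\sim\rho]{\textstyle\sum_{i=1}^k u_i(s_i^*,\vec{s}_{-i})}.
\]
Now apply the $(\lambda,\mu)$-smoothness inequality \emph{pointwise} inside the expectation and use linearity of expectation (noting that $f(\vec{s}^*)$ is a constant): the right-hand side is at least $\expect[\vec{s}\sim\rho]{\lambda f(\vec{s}^*) - \mu f(\vec{s})} = \lambda f(\vec{s}^*) - \mu\, f(\rho)$. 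Rearranging, $(1+\mu)\, f(\rho) \ge \lambda\, f(\vec{s}^*)$, i.e.\ $f(\rho) \ge \tfrac{\lambda}{1+\mu}\, f(OPT(G))$. Hence $\mathsf{PoA}(G)$ with respect to coarse correlated equilibria is at most $\tfrac{1+\mu}{\lambda}$; taking the infimum over all witnessing $(\lambda,\mu)$ yields exactly the robust POA bound claimed.

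The proof is short, and the ``hard part'' is conceptual rather than computational: one must recognize that the canonical charging argument---which is superficially phrased for a single deterministic pure outcome---invokes the equilibrium hypothesis solely through the fixed deviations $\vec{s}\mapsto (s_i^*,\vec{s}_{-i})$, and that these deviations are \emph{unconditional}, so the coarse correlated equilibrium condition suffices; one needs neither the stronger conditional condition of a correlated equilibrium nor the product structure of a Nash equilibrium. A secondary point to get right is the bookkeeping that makes ``a distribution over outcomes cost nothing'': the smoothness inequality holds outcome-by-outcome, $f$ is linear in $\vec{s}$, and $f(\vec{s}^*)$ does not depend on $\vec{s}$, so the whole chain of inequalities survives taking $\expect[\vec{s}\sim\rho]{\cdot}$. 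Finally, if the price-of-anarchy objective is something other than $\sum_i u_i$ (a cost objective, say, where the analogous bound reads $\lambda/(1-\mu)$ with $\mu<1$), the identical argument applies once smoothness is stated relative to that objective; only the constant in the rearrangement changes.
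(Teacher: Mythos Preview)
The paper does not actually prove this theorem; it is stated with a citation to \cite{robust}, and the paper explicitly declines even to give the formal definition of smoothness (``the formal definition is a bit technical, and we won't need it here''). So there is no in-paper proof to compare against.

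That said, your argument is correct and is precisely the standard proof from the cited reference: define $(\lambda,\mu)$-smoothness, invoke the CCE condition~\eqref{eq:cce} with the fixed unconditional deviation $s_i' = s_i^*$ for each player, sum over players, apply smoothness pointwise under the expectation, and rearrange. Your emphasis on why CCE (rather than CE or NE) suffices---namely that the deviations $s_i^*$ are unconditional---is exactly the conceptual point the paper gestures at informally. One small wording issue: the usual definition of $(\lambda,\mu)$-smoothness requires the inequality to hold for \emph{all} pairs of outcomes $(\vec{s},\vec{s}^*)$, not just when $\vec{s}^*$ is optimal; your weaker version suffices for the POA bound, but if you are going to reconstruct the definition you may as well state the standard one.
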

For $\eps$-approximate coarse correlated equilibria---as guaranteed by
a logarithmic number of rounds of smooth fictitious play
(Proposition~\ref{prop:cce})---the POA bound in Theorem~\ref{t:robust}
degrades by an additive $O(\eps)$ term.

% As we have seen, the problem of
% computing an equilibrium goes from being hard (under standard
% complexity assumptions) to being easy.  
% The worry is that 
% On the other hand,
% enlarging the set of equilibria decreases
% only make it computationally easier to compute an equilibrium,

%  upshot of the polynomial-time
% algorithms in Section~\ref{s:ce} (for approximate correlated
% equilibria) and Section~\ref{s:exact} (for exact correlated
% equilibria) is 

% However, there is still something to worry about, maybe there are too many 
% equilibria for the concepts to be interesting (in the sense of Price of Anarchy
% for example).

% Recall that the Price of Anarchy (PoA) is defined as
%\[{\rm PoA}=\frac{\hbox{\rm Objective function of best outcome}}
%{\hbox{\rm Objective function of worst equilibrium}}.\]
%Of course one can define PoA for any equilibrium concept, but then 
%PoA keeps getting
%worse and worse as the equilibrium type is generalized.

% The analysis of most PoA results between say 1999 and 2008 have the same 
% architecture. They can be be proven in a uniform way using  a  
% notion of ``smooth game''. Interestingly
% PoA bounds proven in this way apply to the relaxations of equilibria.

%\begin{theorem}
%The PoA bounds proven using the ``smooth game template'' (robust PoA), extend
%to the relaxed concepts of NE and it holds:
%\[ \hbox{{\rm PoA of CCE}}\leq \hbox{{\rm robust PoA}}.\]
%\end{theorem}

\newpage

\end{document}